\crefname{equation}{}{}
\newcommand{\CDR}{\mbox{\sc cdr}\xspace}
\newcommand{\ignore}[1]{}
\newcommand{\A}{{\cal A}}
\renewcommand{\iff}{\Leftrightarrow}
\newcommand{\condone}[1]{\mathbbm 1_{#1?}}
\newcommand{\LA}{{\sf LA}\xspace}
\newcommand{\PA}{{\sf PA}\xspace}
\newcommand{\Rgeq}{\R_{\geq 0}}
\newcommand{\Qgeq}{\Q_{\geq 0}}
\newcommand{\goesto}[1]{\xrightarrow{#1}}
\newcommand{\trule}[3]{\tuplesmall {#1 , #2, #3}}
\newcommand{\lang}[1]{L(#1)}
\newcommand{\languntimed}[1]{L^{\textrm{un}}(#1)}
\newcommand{\PDA}{\mbox{\sc pda}\xspace}
\newcommand{\TPDA}{\mbox{\sc tpda}\xspace}
\newcommand{\PDTA}{\mbox{\sc ptda}\xspace}
\newcommand{\dtPDA}{dt\PDA}
\newcommand{\TA}{\mbox{\sc ta}\xspace}
\newcommand{\RTA}{\mbox{\sc rta}\xspace}
\newcommand{\TRPDA}{\mbox{\sc trpda}\xspace}
\newcommand{\NFA}{\mbox{\sc nfa}\xspace}
\newcommand{\SRTA}{\mbox{\sc srta}\xspace}
\newcommand{\true}{\mathbf{true}}
\newcommand{\false}{\mathbf{false}}
\newcommand{\N}{\mathbb N}
\newcommand{\Z}{\mathbb Z}
\newcommand{\Q}{\mathbb Q}
\newcommand{\R}{\mathbb R}
\newcommand{\I}{\mathbb I}
\newcommand{\op}{\mathsf{op}}
\newcommand{\ops}{\mathsf{ops}}
\newcommand{\elapse}{\mathsf{elapse}}
\newcommand{\pushop}[2]{\push(#1 : #2)}
\newcommand{\popop}[2]{\pop(#1 : #2)}
\newcommand{\rewrite}{\mathsf{rewrite}}
\newcommand{\rewriteop}[3]{{\rewrite(#1 \to #2 : #3)}}
\newcommand{\fract}[1]{\left\{#1\right\}}
\renewcommand{\P}{\mathcal P}
\newcommand{\GG}{\mathcal G}
\newcommand{\QQ}{\mathcal Q}
\newcommand{\restrict}[2]{\left.#1\right|_{#2}}
\renewcommand{\vec}[1]{\overline{#1}\@ifnextchar{^}{\,}{}\@ifnextchar{'}{\,}{}}
\newcommand{\sem}[1]{\llbracket#1\rrbracket}
\newcommand{\floor}[1]{\lfloor#1\rfloor}
\newcommand{\id}{\mathsf{id}}
\newcommand{\separator}{\;|\;}
\newcommand{\tuple}[1]{\left\langle#1\right\rangle}
\newcommand{\tuplesmall}[1]{\langle#1\rangle}
\newcommand{\eqv}[1]{\equiv_{#1}}
\newcommand{\eqvs}{(\eqv m)_{m\in\N}}
\newcommand{\reach}[2]{\stackrel{#1}{\leadsto}_{#2}} 
\newcommand{\freach}[2]{\stackrel{#1}{\dashrightarrow}_{#2}} 
\newcommand{\freachAE}[2]{\stackrel[\forall\exists]{#1}{\dashrightarrow}\!\!\mbox{}_{#2}\;}
\newcommand{\ReachSet}[1]{\mathsf{Reach}_{#1}}
\newcommand{\from}{\leftarrow}
\newcommand{\reset}{\mathsf{reset}}
\newcommand{\resetop}[1]{\reset(#1)}
\newcommand{\test}{\mathsf{test}}
\newcommand{\testop}[1]{\test(#1)}
\newcommand{\readletter}{\mathsf{read}}
\newcommand{\readop}[1]{\readletter(#1)}
\newcommand{\Resets}[1]{\mathsf{Reset}(#1)}
\newcommand{\push}{\mathsf{push}}
\newcommand{\pop}{\mathsf{pop}}
\newcommand{\cp}{\mathsf{copy}}
\newcommand{\psicopy}{\psi_\cp}
\newcommand{\psipop}{\psi_\pop}
\newcommand{\X}{\mathtt X}
\newcommand{\Y}{\mathtt Y}
\newcommand{\ZZ}{\mathtt Z}
\newcommand{\T}{\mathtt T}
\newcommand{\U}{\mathtt U}
\newcommand{\V}{\mathtt V}
\newcommand{\x}{x}
\newcommand{\y}{y}
\newcommand{\z}{z}
\renewcommand{\L}{\mathtt L}
\newcommand{\set}[1]{\left\{ #1 \right\}}
\newcommand{\setof}[2]{\set{#1 \; \middle| \; #2}}
\newif\ifstartedinmathmode
\newcommand*{\st}{
  \relax\ifmmode\startedinmathmodetrue\else\startedinmathmodefalse\fi
  \ifstartedinmathmode{\;\cdot\;}\else{s.t.~}\fi%
}
\newcommand{\wlg}{w.l.o.g.~}
\newcommand{\wrt}{w.r.t.~}
\newcommand{\cf}{c.f.~}
\newcommand{\card}[1]{|{#1}|}
\newcommand{\PI}[2]{\text{\sc pi}(#1)_{#2}}
\newcommand{\project}[2]{\pi_{#1}(#2)}
\newcommand{\tick}[1]{\checkmark_{\!\!#1}}
\def\topbotatom#1{\hbox{\hbox to 0pt{$#1\bot$\hss}$#1\top$}}
\newcommand{\EXPTIME}{{\sf EXPTIME}\xspace}
\newcommand{\PTIME}{{\sf PTIME}\xspace}
\newcommand{\PSPACE}{{\sf PSPACE}\xspace}
\newcommand{\APSPACE}{{\sf APSPACE}\xspace}
\newcommand{\NLOGSPACE}{{\sf NL}\xspace}
\newcommand{\ALOGSPACE}{{\sf ALOGSPACE}\xspace}
\newcommand{\NP}{{\sf NP}\xspace}
\newcommand{\stkout}[1]{\ifmmode\text{\sout{\ensuremath{#1}}}\else\sout{#1}\fi}
\DeclareMathAlphabet{\mathcalligra}{T1}{calligra}{m}{n}
\Crefname{section}{Sec.}{Sec.}
\Crefname{figure}{Fig.}{Fig.}
\newtheorem{theorem}{Theorem}
\newtheorem{lemma}[theorem]{Lemma}
\newtheorem{fact}[theorem]{Fact}
\newtheorem{example}{Example}
\newtheorem{corollary}[theorem]{Corollary}
\newtheorem{remark}[theorem]{Remark}
\let\oldsection\section
\let\oldsubsection\subsection
\journal{Journal of Computer and System Sciences}
\begin{document}

\begin{frontmatter}

  \title{Reachability relations of timed pushdown automata\tnoteref{titlefn}}

  \tnotetext[titlefn]{This is an extended version of \cite{ClementeLasota:ICALP:2018}.}

  \author{Lorenzo Clemente\fnref{author1}}
  \fntext[author1]{Partially supported by Polish NCN grant 2016/21/B/ST6/01505.}
  \ead[author1]{clementelorenzo@gmail.com}
  \ead[author1]{https://mimuw.edu.pl/~lclemente/}


  \author{Sławomir Lasota\fnref{author2}}
  \fntext[author2]{Partially supported by the European Research Council (ERC) project Lipa under the EU’s Horizon 2020 research and innovation programme (grant agreement No. 683080).}
  \ead[author2]{sl@mimuw.edu.pl}
  \ead[author2]{https://www.mimuw.edu.pl/~sl/}

  \address{Wydział Matematyki, Informatyki i Mechaniki, University of Warsaw, Poland}

  \begin{abstract}
    Timed pushdown automata (\TPDA) are an expressive formalism
    combining recursion with a rich logic of timing constraints.
    We prove that reachability relations of \TPDA are expressible in linear arithmetic,
    a rich logic generalising Presburger arithmetic and rational arithmetic.
    The main technical ingredients are a novel quantifier elimination result for clock constraints (used to simplify the syntax of \TPDA transitions),
    the use of clock difference relations to express reachability relations of the fractional clock values,
    and an application of Parikh's theorem to reconstruct the integral clock values.
  \end{abstract}

  \begin{keyword}
    Timed automata \sep timed pushdown automata \sep reachability relation \sep clock difference relations \sep quantifier elimination
  \end{keyword}
  
\end{frontmatter}

\nolinenumbers



\section{Introduction}

Timed automata (\TA) are one of the most studied models of reactive timed systems.
They extend classical finite automata with real-valued clocks which can be reset and compared by inequality constraints.
The fundamental algorithmic result in the field is Alur and Dill's proof of decidability (and in fact \PSPACE-completeness) of the reachability problem for \TA \cite{AD94},
for which they were awarded the Church Award in 2016 \cite{church:award}.
This result paved the way to the automatic verification of timed systems,
leading to industrial-strength tools such as UPPAAL \cite{Behrmann:2006:UPPAAL4} and KRONOS \cite{Yovine:KRONOS:1997}.
To this day, the reachability problem is a central algorithmic question which is the focus of intense research,
as testified by recent works such as \cite{HerbreteauSrivathsanWalukiewicz:IC:2016,AkshayGastinKrishna:LMCS:2018,GastinMukherjeeSrivathsan:CONCUR:2018,GastinMukherjeeSrivathsan:CAV:2019,GovindHerbreteauSrivathsanWalukiewicz:arXiv:2019}.

In certain applications, such as in parametric verification, deciding reachability between individual pairs of configurations is insufficient,
and one needs to construct the more general \emph{(binary) reachability relation},
i.e., the possibly infinite set of all pairs of configurations $(c, d)$
s.t.~there is an execution from $c$ to $d$.
The reachability relation for \TA has been shown to be effectively expressible in hybrid linear arithmetic with rational and integer sorts in a variety of works
\cite{ComonJurski:TA:1999,Dima:Reach:TA:LICS02,KrcalPelanek:TM:FSTTCS:2005,QuaasShirmohammadiWorrell:LICS:2017}.
This line of research very recently culminated in an extremely succinct and elegant proof \cite{FranzleQuaasShirmohammadiWorrell:2019}
expressing the \TA reachability relation as an existential formula of exponential size.
Since hybrid logic is decidable (and in fact its existential fragment is \NP-complete),
this yields an alternative proof of decidability of the reachability problem.
In the case of 1 clock \TA, one can even obtain a formula of polynomial size \cite{ClementeHofmanTotzke:CONCUR:2019},
yielding an optimal \NP algorithm for deciding simultaneous reachability in families of 1 clock \TA (which is \NP-hard).

In this paper, we compute the reachability relation for timed automata extended with a stack.
We propose to study the model of \emph{timed pushdown automata} (\TPDA),
which extends timed automata with a timed stack and rich set of clocks constraints.
The model features control clocks, as well as stack clocks.
As time elapses, all clocks both in the control and in the stack increase their values,
and they do so at the same rate.
Control clocks can be reset and compared against other control clocks using integral, fractional, and modular diagonal constraints.
At the time of a push operation, new stack clocks are created and pushed on the stack.
Their initial value is non-deterministically chosen as to satisfy a given push constraint between stack clocks and control clocks.
Push constraints are arbitrary Boolean combinations of integral, fractional, and modular diagonal constraints.
At the time of pop, stack clocks are compared to control clocks with analogous constraints.

It is important to remark that
the use of fractional constraints is crucial for the expressiveness of the model,
since \TPDA with just classical clock constraints
recognise the same class of timed languages
as \TPDA with untimed stack \cite[Theorem II.1]{ClementeLasota:LICS:2015}.
Uezato and Minamide have shown that this semantic collapse can be avoided by allowing fractional stack constraints \cite{UezatoMinamide:LPAR15};
c.f.~\Cref{sec:discussion} for a detailed review of the literature on \TPDA and related models.
Since classical constraints can be expressed as combinations of integral and fractional clock constraints (c.f.~\Cref{rem:sugar}),
we consider integral and fractional constraints as the basic building blocks of \TPDA.
We also consider modular constraints since
\begin{inparaenum}[1)]
	\item they are not expressible as integral and fractional constraints (thus they increase the expressiveness of the model), and
	\item they can easily be handled by our solution technique
	with minimal overhead.
\end{inparaenum}

\oldsubsection*{Contributions}
Let $p,q$ be control locations, $\X = \set{\x_1, \dots, \x_n}$ the set of control clocks,
and $\Delta = \set{\delta_1, \dots, \delta_m}$ the set of transitions.
The \emph{reachability relation} of a \TPDA is the family of relations
\begin{align*}
  \reach {}  {pq} \ \subseteq\ \Qgeq^\X \times \Delta^* \times \Qgeq^\X,
\end{align*}
s.t.~from the initial clock valuation $\mu \in \Qgeq^\X$, control location $p$, and empty stack,
we can reach the final clock valuation $\nu \in \Qgeq^\X$, control location $q$, and empty stack,
by a sequence of transitions $w \in \Delta^*$, written $\mu \reach w {pq} \nu$.
The main contribution of the paper is a procedure for the effective description of the \TPDA reachability relation in the existential fragment of \emph{linear arithmetic},
i.e., first-order logic over the additive reals with integral ``$\floor{\_}$'' and fractional ``$\fract{\_}$'' operations.
Linear arithmetic is an expressive logic generalising Presburger $(\Z, \leq, \eqvs, +, 0)$ and rational arithmetic $(\R, \leq, +, 0)$,
and it is equi-expressive with the hybrid logic used in previous works on \TA reachability relations.
A formula of linear arithmetic $\varphi_{pq}(\x_1, \dots, \x_n, f_1, \dots, f_m, \x_1', \dots, \x_n')$
speaks about the initial values of clocks $\x_1, \dots, \x_n$,
their final values $\x_1', \dots, \x_n'$,
and the number of times $f_1, \dots, f_m$ that each transition $\delta_1, \dots, \delta_m$ is used in the run.
Formally, a run $\mu \reach w {pq} \nu$ \emph{satisfies} a linear arithmetic formula $\varphi_{pq}$ if
\begin{align*}
	\bar x: \mu, \bar f: \PI w, {}, \bar x': \nu \models \varphi_{pq}(\bar x, \bar f, \bar x'), 
\end{align*}
where $\PI w {} : \N^\Delta$ is the \emph{Parikh image} of $w \in \Delta^*$,
i.e., $\PI w {\delta_i}$ is the number of occurrences of $\delta_i$ in $w$.
%
%
The reachability relation \emph{is expressed} by a family of formulas $\set{\varphi_{pq}}_{p, q}$ if,
for all control locations $p, q$, clock valuations $\mu, \nu : \Rgeq^\X$, and sequence of transitions $w \in \Delta^*$,
$\mu \reach w {pq} \nu$ holds if, and only if, it satisfies $\varphi_{pq}$.

\paragraph{Main result}

The following is the main result of the paper.
\begin{theorem}
	\label{thm:TPDA}
	The reachability relation of a \TPDA is expressed by a family of formulas of linear arithmetic.
\end{theorem}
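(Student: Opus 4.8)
The plan is to split every clock value $x$ into its integral part $\floor x \in \N$ and its fractional part $\fract x \in [0,1)$, and to treat these two components by different techniques before recombining them. The guiding observation is that, since all clocks---both control clocks in $\X$ and stack clocks---elapse at the same rate, time elapse shifts the whole vector of fractional parts rigidly around the cyclic domain $[0,1)$ and therefore preserves the cyclic ordering and the differences of the fractional parts, while the integral parts merely count how many times each clock crosses an integer boundary. This decouples the problem into a bounded, geometric part (the fractions) and an unbounded, arithmetic part (the integers). As a preprocessing step I would first apply the quantifier elimination result for clock constraints to normalise every transition of the \TPDA, so that its integral, fractional, and modular diagonal constraints are brought into a simple atomic form; in particular this eliminates the existential choice of a fresh stack-clock value hidden inside a $\push$ constraint, so that the only data that must be propagated up the stack from a push to its matching pop is the fractional relationship between the pushed clock and the clocks live at the pop.

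Next I would exploit the fact that the transition sequences $w$ taking the automaton from $p$ with empty stack to $q$ with empty stack form (a projection of) a context-free language over $\Delta$, reflecting the well-nested stack discipline. I would build a context-free grammar for these well-nested runs whose nonterminals are annotated with \emph{clock difference relations}: each nonterminal derives a sub-run delimited by a matching push/pop pair, and its annotation records how the cyclic fractional differences $\fract{x_i} \cycminus \fract{x_j}$ are transformed across that sub-run. The crucial point is that there are only finitely many such relations and that they compose correctly under concatenation and nesting, so the annotated grammar is finite. The reachable fractional configurations are then a projection of the language of this grammar, and each clock difference relation is definable by a formula of linear arithmetic over the reals; this yields a formula for the fractional component of $\mu \reach w {pq} \nu$.

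To reconstruct the integral parts I would extend the terminal alphabet of the same grammar with bookkeeping symbols marking each integer-boundary crossing (wraparound) of each clock, alongside the symbols recording which transition $\delta_i$ is fired. By Parikh's theorem the Parikh image of the resulting context-free language is semilinear, hence definable in Presburger arithmetic, which is a fragment of linear arithmetic. This image simultaneously constrains the transition multiplicities $f_1,\dots,f_m$ and the net number of crossings of each clock, and the final integral value is obtained as $\floor{x_i'}$ equal to $\floor{x_i}$ plus that net crossing count. Conjoining the fractional formula, the semilinear integral formula, and the coupling constraints tying the crossing counts to both components gives the desired family $\set{\varphi_{pq}}_{p,q}$.

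The step I expect to be the main obstacle is establishing the finiteness and correct compositionality of the clock difference relation summaries: one must prove that the effect of an \emph{arbitrarily deep} nested computation on the cyclic fractional differences between a pushed stack clock and the surrounding control clocks is captured by a \emph{finite} set of relations that behave well under the stack discipline, so that the annotated grammar is genuinely finite and Parikh's theorem applies. The delicate part is reconciling this finiteness with the fact that the integral wraparound counts threaded through the same grammar are genuinely unbounded; keeping the continuous fractional information faithful while the integral information is abstracted to mere counts is the technical heart of the argument.
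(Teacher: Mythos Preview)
Your overall architecture matches the paper's closely: push-copy normalisation via quantifier elimination, a context-free grammar whose nonterminals are annotated with clock difference relations for the fractional part, tick symbols for integer-boundary crossings, and Parikh's theorem to extract the integral part. These are exactly the ingredients of \Cref{sec:simplify} and \Cref{sec:fractional:TPDA}.

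There is, however, a genuine gap. After push-copy, pop operations may still carry \emph{integral} stack constraints of the form $\floor{\y_i} - \floor{\x_j} \leq k$ (equivalently classical $\y_i - \x_j \leq k$), relating a stack clock at pop time to a control clock. Your assertion that after quantifier elimination ``the only data that must be propagated up the stack \dots\ is the fractional relationship'' is precisely the statement that needs to be \emph{proved}, not assumed. The difficulty is that $\floor{\y_i}$ at pop time depends on the elapsed time between \emph{this particular} push and its matching pop, and that quantity is not recoverable from a Parikh image: once you pass to letter multiplicities, you lose which ticks lie inside a given push--pop pair. Modular pop constraints are harmless (the residue class is finite and can ride in the stack symbol), but a bound $\leq k$ cannot be encoded this way, and the grammar itself cannot check an unbounded numeric property of the yield of a nonterminal.

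The paper closes this gap in \Cref{sec:simplify:pop-integer-free} by a separate, non-obvious construction (\Cref{thm:pop-integer-free}, the ``stack untiming'' step): each classical pop constraint $\y_i - \x_j \precsim k$ or $\y_i - \x_j \succsim k$ is replaced by a \emph{transition} constraint on a constant number of fresh control clocks, using an analysis of how the ``critical intervals'' (from the last reset of $\x_i$ before push to the last reset of $\x_j$ before pop) can overlap---at most two at any instant. Only after this step is the \TPDA pop-integer-free, and only then does your tick-counting-plus-Parikh scheme go through. Incidentally, the obstacle you single out---finiteness and compositionality of the \CDR summaries---is the comparatively easy part: it follows from quantifier elimination for \CDR (\Cref{lem:CDR:qe}) and the fact that there are only finitely many \CDR over a fixed clock set up to equivalence.
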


This properly generalises the previous expressibility results on \TA in several ways.
First, the kind of reachability relation that we consider is \emph{ternary} because it takes into account not only the initial and final clock values,
but also (the Parikh image of) the transitions executed during the run.
The binary reachability relation considered in previous works on \TA
takes into account only initial and final clock values
and can be obtained as a special case with the following formula of linear arithmetic:
\begin{align*}
	\exists \bar f \st \varphi_{pq}(\bar x, \bar f, \bar x').
\end{align*}
As an application of the ternary reachability relation we can count, for instance,
the number of symbols in the stack,
which is not possible with the binary reachability relation alone.
To this end, for a \TPDA transition $\delta_i$ and stack symbol
$\alpha_j \in \Gamma = \set{\alpha_1, \dots, \alpha_\ell}$ let
\begin{align*}
	c_{i, j} = \left\{
	\begin{array}{ll}
		1	& \text{ if $\delta_i$ pushes $\alpha_j$ on the stack,} \\
		-1	& \text{ if $\delta_i$ pops $\alpha_j$ from the stack, and} \\
		0	& \text{ otherwise.}
	\end{array}
	\right.
\end{align*}
Let $\bar \partial = \tuple{\partial_1, \dots, \partial_\ell}$
be a vector of integer variables
denoting the total variation of the number of occurrences of each stack symbol.
We can then model the variation of stack symbols
with the following formula of linear arithmetic
\begin{align*}
	\psi_{pq}(\bar x, \bar f, \bar \partial, \bar \x')
		\equiv
			\bigwedge_{j = 1}^\ell \partial_j = \sum_{i=1}^m c_{i, j} \cdot f_i \land
				\varphi_{pq}(\bar x, \bar f, \bar y).
\end{align*}
Second, \TPDA are more expressive than \TA because of the presence of the (timed) stack, which is an unbounded data structure.
Finally, the kind of clock constraints that we consider mixing integral, modular, and fractional constraints
is very expressive and was not previously considered for \TA.
%

\paragraph{Quantifier elimination for clock constraints}

The other contributions of the paper are of a more technical nature
and arise from the methodology used to prove \Cref{thm:TPDA}.
More precisely, the computation of the reachability relation is achieved by a sequence of translations
progressively simplifying the kind the clock constraints allowed in the automaton.
A crucial ingredient in our reductions is a novel \emph{quantifier elimination} result for the fragment of linear arithmetic corresponding to clock constraints (cf.~\Cref{lem:qe-clocks}),
which is another contribution of this work, possibly of independent interest.
While linear arithmetic is known to have elimination of quantifiers \cite{Weispfenning:1999:MRL,BoigelotJodogneWolper:TOCL:2005}
(and likewise for Presburger \cite{Presburger:1930} and rational arithmetic \cite{FerranteRackoff:QE:Reals}),
our result is stronger since we transform a quantifier clock constraints into a logically equivalent (quantifier-free) \emph{clock constraint},
instead of an arbitrary quantifier-free formula of linear arithmetic
(as a generic quantifier elimination procedure would do \cite{Weispfenning:1999:MRL,BoigelotJodogneWolper:TOCL:2005}).

The language of quantified clock constraints that we consider is very close to the so called \emph{difference logic}
(whose relevance in program verification was first noted by Pratt in 1977 \cite{Pratt1977}),
which is the first order theory of the reals with atomic formulas of the form $x_i \sim c$ and $x_i - x_j \sim c$
with $\sim$ a comparison operator in $\set{<, >, =}$ and $c \in \Q$ a rational constant.
There are two variants of difference logic,
depending on whether it is interpreted over the integers $\Z$ or over the reals $\R$;
both variants admit quantifier elimination \cite{Koubarakis:PKRR:1994}.
However, the integral variant of difference logic does not have modulo constraints,
and adding modulo constraints strictly increases its expressive power.
In fact, while for full Presburger arithmetic modulo constraints such as $x_i - x_j \eqv m k$ ($k \in \Z$) do not increase the expressive power,
since they can be expressed as $\exists z \st x_i - x_j = k + m \cdot z$,
the latter formula is not a formula of difference logic.
On the other hand, the rational variant of difference logic is more expressive than the fractional fragment of the quantified clock constraint that we consider,
since it allows arbitrary rational constants $c \in \Q$ to appear in the formula, while we allow only the constant $0$.
Thus, our quantified clock constraints are incomparable with difference logic,
and consequently our quantifier elimination result does not follow from the corresponding result for difference logic.

Our sequence of transformations produces a so-called \emph{fractional} \TPDA,
i.e., one which uses only fractional constraints.
In order to reconstruct the full reachability relation from a fractional \TPDA
we follow \cite{QuaasShirmohammadiWorrell:LICS:2017,FranzleQuaasShirmohammadiWorrell:2019} and encode integral clock values in the language of the automaton.
This is the technical reason why ternary reachability is more convenient than mere binary reachability in our setting.

\paragraph{Quantifier elimination for clock difference relations}

In the last step, we compute the reachability relation of a fractional \TPDA
by constructing a context-free grammar recognising precisely the sequence of transitions $w$ labelling its executions $\mu \reach {w} {pq} \nu$.
This step uses Parikh's theorem applied to the grammar
in order to compute a small existential formula of Presburger arithmetic
expressing the Parikh image of language it recognises.
We represent the reachability relation between the fractional values of clocks (which is the only relevant quantity for a fractional \TPDA)
by the so-called \emph{clock difference relations} (\CDR),
which are the fragment of quantifier-free rational arithmetic generated by atomic formulas of the form $u \leq v$,
where $u, v$ are terms of the form $\fract{x_0 - x_i}$, $\fract{x_0}$, $\fract{x_0' - x_i}$, and $\fract{x_0'}$.
Modulo some presentational details, \CDR were previously introduced to compute the binary reachability relation for \TA \cite{KrcalPelanek:TM:FSTTCS:2005},
and even in the analysis of communicating timed automata \cite{KrcalYi:CTA:CAV:2006}.
We show in \Cref{cor:CDR:composition} that \CDR are closed under relational composition,
which is a consequence of a novel quantifier elimination result for \CDR, another technical contribution of this work.
This allows us to build a \CDR expressing the fractional reachability relation by iteratively composing \CDR representing shorter runs, until no new \CDR are produced (up to logical equivalence).

\paragraph{Untiming of \TPDA languages are context-free}

Since all our transformation essentially preserve the untiming of the \TPDA language,
we obtain as a corollary that such languages are context-free.
This is similar as for the untiming of timed automata languages,
which are regular \cite{AD94}.

\begin{corollary}
  The untiming of a \TPDA language is context-free.
\end{corollary}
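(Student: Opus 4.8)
The plan is to exhibit $\languntimed{\A}$ as a homomorphic image of a context-free language, and then invoke closure of the context-free languages under homomorphism and finite union. For a \TPDA $\A$ with input alphabet $\Sigma$ and transitions $\Delta$, let $h : \Delta^* \to \Sigma^*$ be the letter-to-letter homomorphism sending each transition to the input letter it reads, and to $\emptyword$ if the transition is silent. The untimed word read along a run is obtained by forgetting the timestamps, hence it equals $h(w)$ where $w \in \Delta^*$ is the transition sequence labelling the run. Assuming \wlg acceptance by empty stack, the untimed language satisfies $\languntimed{\A} = h(W_\A)$, where $W_\A \subseteq \Delta^*$ is the set of transition sequences $w$ with $\mu_0 \reach{w}{pq} \nu$ for some initial location $p$, accepting location $q$, the initial valuation $\mu_0$, and some final valuation $\nu$. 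It therefore suffices to show that $W_\A$ is context-free.

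Next I would push the problem through the translation chain underlying \Cref{thm:TPDA}. That chain rewrites $\A$ into a fractional \TPDA $\A'$, and I would argue that each step preserves the set of accepting transition sequences up to a homomorphism: the guard-rewriting steps (in particular the quantifier elimination of \Cref{lem:qe-clocks}) leave the read/push/pop structure untouched and only relabel transitions, while the steps that split a transition or insert auxiliary silent transitions realise $W$ of the source automaton as a homomorphic image of $W$ of the target automaton. Composing these homomorphisms, $\languntimed{\A}$ becomes a homomorphic image of $W_{\A'}$.

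Finally, for the fractional \TPDA $\A'$ I would invoke the context-free grammar construction already announced for the reachability relation: there is a \CFG recognising precisely the transition sequences $w$ with $\mu \reach{w}{pq} \nu$. Instantiating this construction with source valuation $\mu_0$ and arbitrary target valuation, taking $p$ initial and $q$ accepting, and forming the finite union over all such pairs yields a single \CFG whose language is $W_{\A'}$. Applying the composed homomorphism and using closure of the context-free languages under homomorphism and finite union then gives that $\languntimed{\A}$ is context-free.

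The hard part will be the untiming-preservation bookkeeping in the middle step, and specifically the reduction to purely fractional constraints. Following \cite{QuaasShirmohammadiWorrell:LICS:2017,FranzleQuaasShirmohammadiWorrell:2019}, this reduction encodes integral clock values into the alphabet of $\A'$, so the relevant homomorphism must additionally \emph{erase} these encoding symbols; I would have to verify that erasing them recovers exactly the original sequence of genuine input letters, i.e.\ that the integral-value encoding perturbs neither the order nor the identity of the letters actually read by $\A$.
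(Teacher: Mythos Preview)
Your proposal is correct and follows exactly the route the paper has in mind: the paper's own justification is the single sentence ``Since all our transformation essentially preserve the untiming of the \TPDA language, we obtain as a corollary that such languages are context-free,'' and you have simply made explicit the homomorphism bookkeeping that this sentence sweeps under the rug, including the erasure of the $\tick i$ symbols introduced in \Cref{sec:simplify:fractional}. One small clarification: the grammar of \Cref{sec:fractional:TPDA} is not parameterised by a valuation $\mu_0$ but by a \CDR $\varphi$, so ``instantiating with source valuation $\mu_0$'' concretely means taking the finite union of $L(p,\varphi,q)$ over those finitely many \CDR's $\varphi$ for which $(\mu_0,\nu)\models\varphi$ is satisfiable---which is what you intend.
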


\oldsubsection*{Organisation}

This paper is an extended version of \cite{ClementeLasota:ICALP:2018}.
With respect to the conference version, we provide full proofs of all the formal constructions.
Moreover, the treatment of fractional \TPDA has been substantially simplified by the use of clock difference relations,
thus making the paper entirely self-contained and avoiding the introduction of register automata.

We start in Sec.~\ref{sec:logic} with basic notions on linear arithmetic,
clock constraints, clock difference relations, and fundamental quantifier elimination results for these logics. 
In Sec.~\ref{sec:TPDA} we introduce the model of timed pushdown automata (\TPDA),
and in Sec~\ref{sec:overview} we present an overview of the reductions leading to \Cref{thm:TPDA}.
%
The reductions themselves are presented in \Cref{sec:simplify},
which progressively simplify the shape of control and stack constraints of the automaton until we eventually obtain a fractional \TPDA.
In Sec.~\ref{sec:fractional:TPDA}, the reachability relation of a fractional \TPDA is reduced to the Parikh image of a context-free grammar.
In Sec.~\ref{sec:complexity} we analyse the complexity of our construction,
and in Sec.~\ref{sec:discussion} we provide an extensive comparison between \TPDA and related models from the literature.
In Sec.~\ref{sec:conclusions} we conclude with some perspectives for further research.
Proofs of the more technical statements are provided \ref{app:proofs}
in order not to disrupt the flow of the presentation.


\section{Quantifier elimination}
\label{sec:logic}

\paragraph{Notation}
We denote by $\N$, $\Z$, $\Q$, and $\Qgeq$
the set of, resp., natural, integer, rational, and nonnegative rational numbers.
Let $\I = \Q \cap [0, 1)$ be the unit rational interval.
Let $\eqv m$ denote the congruence modulo $m \in \N\setminus\set 0$ in $\Z$.
For $a\in \Q$, let $\floor a \in \Z$ denote the largest integer $k$ \st $k \leq a$,
and let $\fract a = a - \floor a$ denote its fractional part.

Let $\condone C$, for a condition $C$, be $1$ if $C$ holds, and $0$ otherwise.
This will be used primarily in the following elementary properties of integral/fractional arithmetic:
For every $a, b \in \Rgeq$,
\begin{align}
	\label{eq:floor}
	\floor {a + b} &= \floor a + \floor b + \condone {\fract a + \fract b \geq 1}, 
		&\floor {a - b} &= \floor a - \floor b - \condone {\fract a < \fract b}, \\
	%
	%
	\label{eq:fract}
	\fract {a + b} &= \fract a + \fract b - \condone {\fract a + \fract b \geq 1},
		&\fract {a - b} &= \fract a - \fract b + \condone {\fract a < \fract b}.
\end{align}

\paragraph{Linear arithmetic}

We call \emph{linear arithmetic} the first order language in the vocabulary of the structure (cf.~\cite{Weispfenning:1999:MRL})
$$\A = (\R, \leq, \eqvs, +, \floor{\_}, \fract{\_}, (k \cdot \_)_{k\in\Z}, 0, 1).$$
The symbol ``$+$'' is interpreted as the binary sum function,
and ``$\floor{\_}$'' is the unary integral part operator,
and ``$\fract{\_}$'' is the unary fractional part operator.
For every integer $k \in \Z$ we have a unary function ``$k \cdot \_$'' which multiplies its argument by $k$.
The formula $u \eqv m v$ means that $u - v$ is an integer multiple of $m$;
in particular, $\fract u = \fract v$.
While not formally part of the vocabulary, we allow to write expressions of the form $u - v$ as syntactic sugar.
We assume that constants are encoded in binary.

Linear arithmetic restricted to the integers $\Z$ (and without the fractional part operator)
is commonly known as \emph{Presburger arithmetic} \cite{Presburger:1930},
and removing the modulo and integral part operator yields \emph{rational arithmetic} \cite{FerranteRackoff:QE:Reals}.
Both the former sublogics admit elimination of quantifiers,
and the same holds for linear arithmetic \cite{Weispfenning:1999:MRL,BoigelotJodogneWolper:TOCL:2005}.
Regarding the complexity of the satisfiability problem,
it is \NP-complete for existential fragments of both Presburger \cite{Weispfenning:JSC:1988}
and rational arithmetic \cite{Sontag:IPL:1985}.
The same complexity can be shown for existential linear arithmetic.

\begin{theorem}[\protect{\cite[Theorem 3.1]{BoigelotJodogneWolper:TOCL:2005}, \cite[Theorem 1]{ClementeHofmanTotzke:CONCUR:2019}}]
	\label{thm:ELA:NP}
	The satisfiability problem for existential linear arithmetic is \NP-complete.
\end{theorem}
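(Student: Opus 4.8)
The plan is to establish the two directions separately. \NP-hardness is immediate: both existential Presburger arithmetic and existential rational arithmetic are sublogics of existential linear arithmetic, and each is already \NP-hard (indeed \NP-complete, by \cite{Weispfenning:JSC:1988,Sontag:IPL:1985}); hence so is existential linear arithmetic. The substance is therefore to show membership in \NP, for which I would combine a flattening of the $\floor{\_}$ and $\fract{\_}$ operators with a guess-and-check strategy.

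First I would put the formula into a normal form in which the integral and fractional part operators apply only to variables. Using $\fract t = t - \floor t$ I eliminate every occurrence of $\fract{\_}$ in favour of $\floor{\_}$, and then, for each remaining maximal subterm $\floor t$, I introduce a fresh integer-sorted variable $z$ together with the defining constraints $z \leq t < z+1$ and $z \in \Z$, replacing $\floor t$ by $z$ throughout. Each congruence atom $u \eqv m v$ is Skolemised, as observed in the introduction, by a fresh integer variable $w$ and the equation $u - v = m\cdot w$. Since the number of subterms is polynomial, this transformation is polynomial and preserves satisfiability; the result is an existential formula of mixed linear arithmetic over $(\R, \Z, \leq, +)$, that is, a Boolean combination of linear (in)equalities in which some variables are declared to range over $\Z$.

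Next I would guess the Boolean structure: nondeterministically choose, for each atom, whether it holds, obtaining (after rewriting $\neg(s \leq t)$ as $t < s$) a single conjunction of linear constraints with integrality declarations on part of the variables. It then remains to decide feasibility of such a mixed integer--real system in \NP. The key point is a small-witness property: if the system is feasible, then it admits a solution all of whose values have numerators and denominators of polynomially bounded bit-size. One then guesses the binary encodings of all variable values and checks each constraint in polynomial time. Intuitively the rational component behaves like a linear program (small rational solutions, \cite{Sontag:IPL:1985}) and the integer component like a Presburger constraint (small integer solutions, \cite{Weispfenning:JSC:1988}), the two interacting only through the bounded defining constraints introduced by the flattening.

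The main obstacle I anticipate is precisely the control of bit-size in the presence of the floor operator together with binary-encoded integer coefficients. A naive approach that tries to guess the ``carry'' $\floor{\sum_i k_i \fract{x_i}}$ of each linear term directly fails, because such a carry ranges over an interval whose size is exponential in the bit-length of the coefficients $k_i$ (compare the carry terms $\condone{\fract a + \fract b \geq 1}$ appearing in \Cref{eq:floor}--\Cref{eq:fract}). The flattening above sidesteps this by never enumerating carries explicitly: the carry is simply the fresh integer variable $z$, whose value is pinned down by the linear defining constraints and is furnished, at polynomial bit-size, by the small-solution property of the mixed system. Verifying that this small-solution bound indeed transfers from the pure integer and pure rational settings to the mixed setting with congruences is the technical heart of the argument.
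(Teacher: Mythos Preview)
Your approach is correct but takes a genuinely different route from the paper's.  You flatten all occurrences of $\floor{\_}$ and $\fract{\_}$ by introducing fresh integer-constrained variables, land in mixed integer--real linear arithmetic, and then appeal to a small-witness bound for that mixed theory.  The paper instead \emph{separates} the formula: after making terms shallow (in particular, unfolding each $k\cdot x$ into $O(\log k)$ doublings so that sums are binary), it rewrites every atom $s \leq t$ as $\floor s < \floor t \vee (\floor s = \floor t \wedge \fract s \leq \fract t)$ and pushes $\floor{\_}$, $\fract{\_}$ down to the variables using the identities \eqref{eq:floor}--\eqref{eq:fract}; since sums are now binary, each push introduces a single Boolean carry $\condone{\cdot}$ rather than an exponential range, and expanding these carries yields a formula in which every atom mentions only terms $\floor{x_i}$ or only terms $\fract{x_j}$, never both.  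Satisfiability then splits into two independent instances, one of existential Presburger and one of existential rational arithmetic, each already known to be in \NP.  So the paper's route avoids the mixed small-model property altogether, relying solely on the two cited \NP results for the pure fragments; your route is arguably more direct conceptually but, as you acknowledge, defers the substance to the mixed MILP bound, which is standard but not established by the references the paper provides.
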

\noindent
The result above is not obvious: The satisfiability problem for linear arithmetic
does not immediately reduce (under polynomial time Turing reductions)
to the same problem for Presburger and rational arithmetic,
since atomic formulas can mix together integral ``$\floor{\_}$'' and fractional ``$\fract{\_}$'' operators.
However, it is possible to separate integral and fractional operators with a polynomial blow-up in the formula size,
and preserving the existential fragment.
For completeness, we present a particularly short proof of this fact from \cite{ClementeHofmanTotzke:CONCUR:2019}.

\begin{proof}[Proof (of \Cref{thm:ELA:NP})]
	By introducing linearly many new existentially quantified variables and suitable defining equalities,
	we assume w.l.o.g.~that there are no modulo constraints,
	and that terms are \emph{shallow}, in the sense that they are generated by the following restricted grammar
	(where $x, y$ denote variables):
	\begin{align*}
		s, t \; ::= \; x \separator k \separator \floor x \separator \fract x \separator {-x} \separator x + y \separator k \cdot x.
	\end{align*}
	Since now we have atomic propositions of the form $s \leq t$ 
	with $s, t$ shallow terms,
	we assume that we have no terms of the form ``$-x$''
	(by moving it to the other side of the relation
	and possibility introducing a new existential variable to make the term shallow again).
	Moreover, we can also eliminate terms of the form $k\cdot x$
	by introducing $O(\log k)$ new existential variables and using iterated doubling (based on the binary expansion of $k$);
	for instance, $5\cdot x$ is replaced by $x_0 + x_2$,
	by adding new variables $x_0, \dots, x_2$ and equalities
	$x_0 = x$, $x_1 = x_0 + x_0$, $x_2 = x_1 + x_1$.
	We end up with the following further restricted syntax of terms:
	\begin{align*}
		s, t \; ::= \; x \separator k \separator \floor x \separator \fract x \separator x + y.
	\end{align*}
	We replace atomic propositions of the form $s \leq t$ by the equivalent formula
	\begin{align}
		\label{eq:le:expand}
		\floor s < \floor t \vee (\floor s = \floor t \wedge \fract s \leq \fract t).
	\end{align}
	%
	%
	We push the integral $\floor \_$ and fractional $\fract \_$ operations inside terms,
	according to \eqref{eq:floor}, \eqref{eq:fract}, and the following rules:
	\begin{align*}
		\floor k &\to k
			&\floor {\floor x} &\to \floor x
				&\floor {\fract x} &\to 0 \\
		\fract k &\to 0
			&\fract {\floor x} &\to 0
				&\fract {\fract x} &\to \fract x.
	\end{align*}
	%
	%
	%
	The terms of the form $\condone{\_}$ introduced by \eqref{eq:floor} and \eqref{eq:fract} are subsequently removed by expanding their definition.
	We thus obtain a logically equivalent \emph{separated formula},
	i.e., one where integral $\floor x$ and fractional $\fract y$ variables never appear together in the same atomic formula.
	Since we only added existentially quantified variables in the process,
	the resulting formula is still in the existential fragment,
	which can be decided in \NP by appealing to decision procedures for Presburger and rational arithmetic.
\end{proof}

%


\paragraph{Clock constraints}

Let $\X$ be a finite set of clocks.
%
We consider constraints which can separately speak about
the integer $\floor \x$ and fractional value $\fract \x$ of a clock $\x \in \X$.
A \emph{clock constraint} over $\X$ is a Boolean combination of
\emph{atomic clock constraints} of one of the forms
\begin{align*}
  &
		&\textrm{(class}&\textrm{ical)}
    &\textrm{(inte}&\textrm{gral)}
    &\textrm{(modu}&\textrm{lar)}
    &\textrm{(fracti}&\textrm{onal)}
  \\[1ex]
  &\textrm{(non-diagonal)}	&\x		&\leq k	\ \
		&\floor \x & \leq k 	\ \  			
			&\floor \x & \eqv m k  	\ \   	 	
				&\fract \x & = 0   \\
  &\textrm{(diagonal)}	&\x - \y &\leq k
		&\floor \x - \floor \y & \leq k	
			&\floor \x - \floor \y & \eqv m k	
				&\fract \x & \leq \fract \y
\end{align*}
where
$\x, \y \in \X$, $m \in \N$, and $k \in \Z$.
A clock constraint is \emph{conjunctive} if it is of the form $\bigwedge_{i=1}^n \varphi_i$,
where each $\varphi_i$'s is a (possibly negated) atomic clock constraint.
%
%
A \emph{clock valuation} 
is a mapping $\mu \in \Qgeq^\X$ assigning a non-negative rational number to every clock in $\X$;
we write $\floor \mu$ for the valuation in $\N^\X$ \st $\floor \mu(\x) := \floor {\mu(\x)}$
and $\fract \mu$ for the valuation in $\I^\X$ \st $\fract \mu(\x) := \fract {\mu(\x)}$.
For a valuation $\mu$ and a clock constraint $\varphi$, 
we say that $\mu$ \emph{satisfies} $\varphi$
if $\varphi$ is satisfied when integer clock values $\floor \x$ are evaluated according to $\floor \mu$
and fractional values $\fract \x$ according to $\fract \mu$.
For a clock valuation $\mu$ and a set of clocks $\Y \subseteq \X$,
let $\mu[\Y \mapsto 0]$ be the same as $\mu$ except that clocks in $\Y$ are mapped to $0$,
and let $\restrict \mu \Y \in \Qgeq^\Y$ be the restriction of $\mu$ to $\Y$.
For $\delta \in \Qgeq$, let $\mu + \delta$ be the clock valuation which adds $\delta$ to the value of every clock,
i.e., $(\mu + \delta)(\x) := \mu(\x) + \delta$ for every $\x \in \X$.

\begin{remark}[Classical clock constraints]
	\label{rem:sugar}
	%
	%
	Classical constraints can be expressed in terms of integral and fractional ones (c.f.~\eqref{eq:floor}):
	\begin{align*}
		\x - \y \leq k
			\quad \textrm{if, and only if,} \quad
				(\floor \x - \floor \y \leq k \wedge \fract \x \leq \fract \y) \vee \floor \x - \floor \y \leq k - 1,
	\end{align*}
	%
	and, vice versa,
	integral constraints can be expressed in terms of classical and fractional ones:
	\begin{align*}
		\floor \x - \floor \y \leq k
			\quad \textrm{if, and only if,} \quad
			\begin{array}{c}
				\x - \y \leq k \vee
				(\x - \y  \leq k + 1 \wedge \fract \x > \fract \y).
			\end{array}
	\end{align*}
	%
	On the other hand, the fractional constraint $\fract \x = 0$
	is expressible neither as a classical constraint
	nor as an integral one.
\end{remark}

\begin{remark}[$\floor \x - \floor \y$ versus $\floor {\x - \y}$]
	\label{rem:int-diff}
	In the presence of fractional constraints,
	the expressive power would not change if,
	instead of terms $\floor \x - \floor \y$
	speaking of the \emph{difference of the integer parts},
	we would instead choose terms $\floor{\x - \y}$
	speaking of the \emph{integer part of the difference},
	since the two are inter-expressible by \eqref{eq:floor}.
\end{remark}

\paragraph{Shifts}

While classical diagonal constraints $\varphi(\x, \y) \equiv \x - \y \leq k$ are invariant \wrt the elapse of time,
in the sense that $\varphi(\x + \delta, \y + \delta)$ is equivalent to $\varphi(\x, \y)$,
this is not the case for the other kind of constraints.
However, the \emph{class} of clock constraints is closed under such shifts.

\begin{lemma}
	\label{lem:shift}
	For any clock constraint $\varphi(x_1, \dots, x_n)$ and a fresh variable $x_0$,
	%
	$$\varphi(x_1 - x_0, \dots, x_n - x_0)$$
	is also expressible as a clock constraint, with a linear size blow-up.
\end{lemma}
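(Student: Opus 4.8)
The plan is to exploit that the class of clock constraints is closed under Boolean combinations, so it suffices to treat a single \emph{atomic} clock constraint $\psi(\x_1,\dots,\x_n)$ and show that $\psi(\x_1 - \x_0, \dots, \x_n - \x_0)$ is equivalent to a clock constraint of \emph{constant} size over the clocks $\x_0, \x_1, \dots, \x_n$. Since substitution commutes with $\wedge$, $\vee$, and $\neg$, replacing each atom of $\varphi$ by its bounded-size shifted expansion then yields an equivalent clock constraint whose size is linear in that of $\varphi$, as claimed.

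The diagonal classical atom $\x - \y \leq k$ is the easy case: it is shift-invariant, since $(\x - \x_0) - (\y - \x_0) = \x - \y$, and is therefore left unchanged. For every remaining atom the difficulty is the same: after substitution the terms $\floor{\x - \x_0}$ and $\fract{\x - \x_0}$ appear, and these are not themselves atomic clock constraints. The key tool is to peel off the integer carry using \eqref{eq:floor} and \eqref{eq:fract} (cf.~\Cref{rem:int-diff}), giving
\[
  \floor{\x - \x_0} = \floor \x - \floor \x_0 - \condone{\fract \x < \fract \x_0}, \qquad
  \fract{\x - \x_0} = \fract \x - \fract \x_0 + \condone{\fract \x < \fract \x_0},
\]
and then to resolve each indicator $\condone{\fract \x < \fract \x_0}$ by a case split on the ordering of the fractional parts, each case being the diagonal fractional atom $\fract \x \leq \fract \x_0$ or its negation. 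For instance, the non-diagonal integral atom becomes $\floor{\x-\x_0} \leq k$, which by the first identity is equivalent to
\[
  (\fract \x_0 \leq \fract \x \wedge \floor \x - \floor \x_0 \leq k)
  \ \vee\ (\fract \x < \fract \x_0 \wedge \floor \x - \floor \x_0 \leq k+1);
\]
the non-diagonal modular atom is handled identically with $\leq$ replaced by $\eqv m$, and the non-diagonal fractional atom $\fract{\x - \x_0} = 0$ collapses to $\fract \x = \fract \x_0$ (the carry case being unsatisfiable), i.e.~to $\fract \x \leq \fract \x_0 \wedge \fract \x_0 \leq \fract \x$. In each case the surviving terms are differences of integer parts $\floor \x - \floor \x_0$ and comparisons of fractional parts, which are genuine atomic clock constraints.

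The main obstacle is the diagonal atoms, where \emph{two} carries interact. Substituting into $\floor \x - \floor \y \leq k$ and simplifying via the first identity yields
\[
  \floor{\x - \x_0} - \floor{\y - \x_0}
    = \floor \x - \floor \y - \condone{\fract \x < \fract \x_0} + \condone{\fract \y < \fract \x_0},
\]
so the correction ranges over $\set{-1, 0, +1}$ according to the relative order of $\fract \x, \fract \y$, and $\fract \x_0$; a four-way case split on the two indicators reduces the atom to a disjunction of diagonal integral constraints $\floor \x - \floor \y \leq k'$ with $k' \in \set{k-1, k, k+1}$, each guarded by fractional comparisons (the diagonal modular atom is treated the same way modulo $m$). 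The diagonal fractional atom $\fract \x \leq \fract \y$ is the most delicate: the second identity makes it equivalent to $\fract \x + \condone{\fract \x < \fract \x_0} \leq \fract \y + \condone{\fract \y < \fract \x_0}$, and the same four-way split shows that two subcases collapse to constants (one to $\true$, one to $\false$, since the two sides can then differ by a whole unit), leaving a Boolean combination of the fractional atoms $\fract \x \leq \fract \y$, $\fract \x_0 \leq \fract \x$, and $\fract \x_0 \leq \fract \y$. As each atom expands into at most a constant number of atoms, the overall blow-up is linear.
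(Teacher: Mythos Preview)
Your proof is correct and follows essentially the same approach as the paper: expand $\floor{\x-\x_0}$ and $\fract{\x-\x_0}$ via \eqref{eq:floor}--\eqref{eq:fract}, then case-split on the indicator terms $\condone{\fract\x<\fract\x_0}$ to resolve each atom into a Boolean combination of atomic clock constraints. The paper's proof is terser---it treats only the classical (shift-invariant), diagonal fractional, and diagonal integral atoms explicitly---whereas you systematically cover all eight atom shapes; your four-way split for the diagonal fractional atom yields exactly the paper's three-term disjunction.
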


\begin{proof}

	For classical constraints the claim is obvious.
	For a fractional constraint $\fract {x - x_0} \leq \fract {y - x_0}$,
	by applying \eqref{eq:fract} on both sides, we obtain
	$\fract {x} - \fract {x_0} + \condone {\fract {x} < \fract {x_0}} \leq
		\fract {y} - \fract {x_0} + \condone {\fract {y} < \fract {x_0}}$.
	By doing a case analysis on all possible total orderings of the fractional values,
	we have
	\begin{align*}
		\fract {x - x_0} \leq \fract {y - x_0} \textrm{ iff }
			\fract{x_0} \leq \fract{x} \leq \fract{y} \vee
				\fract{x} \leq \fract{y} < \fract{x_0} \vee
					\fract{y} < \fract{x_0} \leq \fract{x}.
	\end{align*}
	For integral constraints, by applying \eqref{eq:floor} twice, we have
	\begin{align*}
			\floor {x - x_0} - \floor {y - x_0} = \floor {x} - \floor{y} + \condone {\fract y < \fract{x_0}}
			-\condone {\fract x < \fract {x_0}}.
			\tag*{\qedhere}
	\end{align*}
\end{proof}

\subsection{Quantifier elimination}
\label{sec:qe}

In this section we show that quantified clock constraints admit effective elimination of quantifiers.

\begin{lemma}
	\label{lem:qe-clocks}
	Quantified clock constraints admit effective elimination of quantifiers. 
	For conjunctive formulas, we can produce an equivalent quantifier-free clock constraint of exponential size in disjunctive normal form,
	where each disjunct is a conjunctive clock constraint of polynomial size.
\end{lemma}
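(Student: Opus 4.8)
The plan is to reduce, by standard manipulations, to the elimination of a single existential quantifier in front of a conjunctive clock constraint, and then to exploit the independence of the integral and fractional parts of a clock in order to split this into two unrelated elimination problems: one over the integers and one over the dense order $\I = \Q \cap [0,1)$. For the reduction, to eliminate an innermost $\exists x$ I would put the quantifier-free matrix in disjunctive normal form and push $\exists x$ through the disjunction, so that it suffices to treat a single conjunctive clock constraint; universal quantifiers are handled by $\forall x \cdot \psi \equiv \neg \exists x \cdot \neg \psi$, and outer quantifiers by iterating inside-out. Having fixed a conjunctive $\exists x \cdot \varphi$, I would first use \Cref{rem:sugar} to remove classical (non-separated) atoms, and then separate integral from fractional atoms exactly as in the proof of \Cref{thm:ELA:NP}, writing $\varphi \equiv \varphi_{\mathrm{int}} \wedge \varphi_{\mathrm{frac}}$ where $\varphi_{\mathrm{int}}$ speaks only of the $\floor{\cdot}$ values and $\varphi_{\mathrm{frac}}$ only of the $\fract{\cdot}$ values. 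Since the map $x \mapsto (\floor x, \fract x)$ is a bijection $\Qgeq \to \N \times \I$, the quantifier splits as
\begin{align*}
  \exists x \cdot \varphi \;\equiv\; \big(\exists n \in \N \cdot \varphi_{\mathrm{int}}[\floor x \mapsto n]\big) \;\wedge\; \big(\exists f \in \I \cdot \varphi_{\mathrm{frac}}[\fract x \mapsto f]\big),
\end{align*}
reducing the task to two independent eliminations.

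The fractional elimination is the easy half: $\varphi_{\mathrm{frac}}$ is a conjunction of atoms $\fract x \leq \fract y$, $\fract x = 0$ and their negations, interpreted over the dense order $\I$ with least element $0$. I would collect the lower and upper bounds on the quantified fractional value $f$ (including the implicit bound $0 \leq f$ from the domain) and, as in quantifier elimination for dense linear orders, output for each lower/upper pair the appropriate strict or non-strict comparison; when $f = 0$ is forced I substitute and simplify. Every atom produced is again of the form $\fract x \leq \fract y$, $\fract x = 0$, or a negation thereof, so the result is a fractional clock constraint of polynomial size.

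The integral elimination is the hard half and the main obstacle. Here $\varphi_{\mathrm{int}}$ is a conjunction of difference and modular atoms on the integer variables $\floor x$; write $X_0$ for the variable to be eliminated. First I would replace every modular atom modulo $m$ by an equivalent disjunction of atoms modulo the common modulus $M$ (the least common multiple of all moduli) and distribute into disjunctive normal form; this is one source of the exponential blow-up, and it conveniently turns negated modular atoms into positive ones, so that within each disjunct the residue of $X_0$ modulo $M$ is pinned by equalities of the shape $X_0 \eqv M r$ or $X_0 - X_j \eqv M r$. Pairwise consistency of these residue equalities produces exactly modular difference atoms $X_j - X_{j'} \eqv M c$ on the remaining variables, which stay in the language. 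It then remains to express existence of an integer $X_0$ that simultaneously lies in the interval cut out by the difference bounds and meets the pinned residue. Combining each lower bound $L$ with each upper bound $U$ (after a case split selecting which bound is extremal) yields difference atoms $L \leq U$; but the residue requirement is genuinely delicate, because the least integer $\geq L$ congruent to the required residue is not itself a clock term.

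The hard part will be precisely this last point, and it is what makes the result stronger than a generic Presburger elimination, which would leave arbitrary linear atoms behind. My plan is to case-split on the residue modulo $M$ of the variable occurring in $L$ (and, in the purely diagonal case, of the anchor variable determining the residue of $X_0$): each case adds a modular atom $X_j \eqv M s$ to the output and fixes the offset $(r - L) \bmod M$ to a constant $d$, so that the least suitable integer above $L$ becomes the genuine clock term $X_j + c + d$ and the existence condition collapses to the difference atom $X_j + c + d \leq U$. Thus every output atom is a difference or modular clock constraint, as required. Accounting for the blow-up, the modulus expansion, the selection of extremal bounds, and the residue case split each contribute a disjunction over at most $M$ (hence exponentially many) cases, whose product is still of exponential size, while each individual disjunct is a conjunction of polynomially many polynomial-size atoms, matching the stated bound.
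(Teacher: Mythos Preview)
Your approach is correct and follows the same high-level decomposition as the paper: reduce to a single existential in front of a conjunction, separate the integral and fractional parts via the bijection $\Qgeq \cong \N \times \I$, and eliminate the two quantifiers independently. Your fractional elimination is exactly the paper's \Cref{lem:qe-rat}.

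For the integral part you take a more circuitous route than necessary. The paper (\Cref{lem:qe-clocks-int-rat}) avoids the residue case-split on the free variables altogether: having normalised the conjunct to the shape $\bigwedge_i x_i + \alpha_i \leq y \leq x_i + \beta_i \wedge y \eqv M x_i + \gamma_i$, it simply guesses which lower bound $x_j + \alpha_j$ is maximal and an offset $\delta \in \{0,\dots,M-1\}$, and substitutes the concrete witness $y := x_j + \alpha_j + \delta$ into every atom. Each resulting atom is immediately a difference or modular clock constraint (e.g.\ the modular atom becomes $x_j - x_i \eqv M \gamma_i - \alpha_j - \delta$), so nothing further is needed. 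This yields the same exponential DNF with polynomial-size disjuncts, but spares you the extra layer of residue case-splits on the $X_j$'s and the bookkeeping about ``the least suitable integer above $L$''.
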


\noindent
Clock constraints are a sublogic of quantifier-free \LA, where variables take only nonnegative values.
Since the latter logic admits elimination of quantifiers \cite{Weispfenning:1999:MRL,BoigelotJodogneWolper:TOCL:2005},
it follows that every quantified clock constraint $\varphi$
admits an equivalent quantifier-free \emph{\LA formula} $\psi$.
Our result above is stronger,
because we show that $\psi$ is a \emph{clock constraint},
instead of an arbitrary quantifier-free \LA formula.
%
%
%
%
Lemma~\ref{lem:qe-clocks} follows directly from Lemmas~\ref{lem:qe-clocks-int-rat} and \ref{lem:qe-rat} below,
which take care of the integral, resp., fractional clock constraints.

\begin{lemma}
	\label{lem:qe-clocks-int-rat}
	The structure 
	$(\N, \leq, \eqvs, +1, 0)$ admits effective elimination of quantifiers.
	The complexity is singly exponential for conjunctive formulas.
\end{lemma}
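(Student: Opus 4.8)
The plan is to adapt Cooper's quantifier-elimination procedure for Presburger arithmetic to this restricted signature, taking care that the output stays \emph{within the language} (so that it remains a clock constraint rather than an arbitrary quantifier-free formula) and that a single elimination step blows the formula up only singly exponentially. First I would reduce to the core case of eliminating one existential quantifier $\exists z$ from a conjunction of (possibly negated) atoms. This reduction is routine: the atoms $s \leq t$ and $s \eqv m t$ are closed under negation up to disjunction, since negating $s \leq t$ gives $t + 1 \leq s$ and negating $s \eqv m t$ gives $\bigvee_{j=1}^{m-1} s \eqv m (t + j)$. Hence, after pushing negations inward and distributing into disjunctive normal form, it suffices to eliminate $\exists z$ from a single conjunctive clause, because $\exists$ commutes with $\vee$.

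Next I would classify the conjuncts containing $z$. Since every term is of the form $x + c$ with $x$ a variable and $c$ a constant, each such conjunct normalises into exactly one of three shapes: a lower bound $L_i \leq z$, an upper bound $z \leq U_j$, or a congruence $z \eqv {m_\ell} r_\ell$, where $L_i, U_j, r_\ell$ are terms over the remaining variables (and over $\N$ we always have the implicit lower bound $0 \leq z$). Let $M$ be the least common multiple of the moduli $m_\ell$. The heart of the argument is that the solution set of the congruence subsystem is either empty or a single residue class modulo $M$, hence $M$-periodic; consequently, if any witness for $z$ exists, then the least witness lying above all lower bounds sits within distance $M$ of the largest lower bound. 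This yields the equivalence
\[
	\exists z \cdot \varphi(z)
		\quad\Longleftrightarrow\quad
			\bigvee_{L} \; \bigvee_{t=0}^{M-1} \; \varphi[z := L + t],
\]
where $L$ ranges over the lower-bound terms together with the constant $0$, and $\varphi[z := L + t]$ denotes the substitution of the term $L + t$ for $z$.

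Finally I would verify the two properties claimed in the statement. Closure within the signature is immediate: substituting the term $L + t$ for $z$ turns every atom back into an atom of the same logic---an order comparison between two terms of the shape ``variable $+$ constant'', or a congruence between two such terms---so each disjunct is a conjunctive clock constraint over the remaining variables, and the whole disjunction is a quantifier-free clock constraint in disjunctive normal form. For the complexity, $M$ is at most singly exponential in the binary input size (the least common multiple of numbers of total bit-length $n$ is bounded by $2^{O(n)}$), the number of lower-bound terms is linear, and each substituted clause has polynomial size; hence a single elimination step applied to a conjunctive formula produces a formula of singly exponential size with polynomial-size disjuncts. The main obstacle is the correctness of the least-witness argument: one must argue carefully that the simultaneous congruences, together with the order constraints and the boundary at $0$, admit a witness exactly when some $L + t$ does, handling uniformly the degenerate cases of an inconsistent congruence system (where every disjunct is falsified through a contradictory congruence atom) and of a missing upper bound (where arbitrarily large witnesses exist, yet the periodicity argument still pins the least one near a lower bound).
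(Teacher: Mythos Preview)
Your proposal is correct and follows essentially the same Cooper-style argument as the paper: search for a witness within modulus-distance of some lower-bound term and substitute. The paper assumes a single common modulus $m$ (a global convention stated earlier) and treats the no-lower-bound case by a separate sub-case searching $\{0,\dots,m-1\}$, whereas you take the LCM of the moduli and absorb that case uniformly by including the constant $0$ among the lower bounds; otherwise the two arguments coincide.
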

\noindent
An analogous result was obtained in \cite[Theorem 4.5]{Koubarakis:PKRR:1994} for \emph{integral difference logic},
i.e., where the atomic formulas are of the form $x - y \bowtie k$,
with $\bowtie \in \set{<, \leq, >, \geq, =}$ and $k \in \Z$.
Our context is slightly different because we also consider modular constraints,
strictly increasing the expressive power of difference logic.
(Interestingly, while quantifier elimination in \PA requires the introduction of modular constraints,
this is not the case for difference logic.)

\begin{proof}[Proof (of \Cref{lem:qe-clocks-int-rat})]
	We assume that all modulo statements are over the same modulus $m$.
	%
	It suffices to consider a formula $\exists y \cdot \varphi$
	where $\varphi$ is a conjunctive formula of the form 
	\begin{align}
		\label{eq:integer:qe}
		\bigwedge_i x_i + \alpha_i \leq y \leq x_i + \beta_i \ \wedge \ y \eqv m x_i + \gamma_i,
	\end{align}
	s.t., for every $i$,
	$\alpha_i, \beta_i \in \Z \cup \set {-\infty, +\infty}$ with $\alpha_i \leq \beta_i$, $\gamma_i \in \set {0, \dots, m - 1}$.
	For uniformity of notation we assume $x_0 = 0$ in order to model non-diagonal constraints on $y$.
	If not all $\alpha_i$'s are equal to $-\infty$,
	then a satisfying $y$ will be of the form $x_j + \alpha_j + \delta$ with $\delta \in \set {0, \dots, m - 1}$
	where $j$ maximises $x_j + \alpha_j$.
	The following quantifier free formula $\widetilde\varphi$ is equivalent to \eqref{eq:integer:qe}:
	\begin{align}
		\label{eq:quantifier_free}
		\!\!\!\!\!\bigvee_{\delta \in \set {0, \dots, m - 1}} \!\! \bigvee_j
		\bigwedge_i x_i + \alpha_i \leq x_j + \alpha_j + \delta \leq x_i + \beta_i \wedge x_j + \alpha_j + \delta \eqv m x_i + \gamma_i.\!\!
	\end{align}
	For the complexity claim, $\widetilde\varphi$ is exponentially bigger than \eqref{eq:integer:qe} when constants are encoded in binary.
	%
	%
	For the inclusion $\sem{\widetilde\varphi} \subseteq \sem{\exists y \cdot \varphi}$,
	let $(a_1, \dots, a_n) \in \sem{\widetilde\varphi}$.
	There exist $\delta$ and $j$ as per \eqref{eq:quantifier_free},
	and thus taking $a_0 := a_j + \alpha_j + \delta$ yields
	$(a_0, a_1, \dots, a_n) \in \sem{\exists y \cdot \varphi}$.
	For the other inclusion,
	let $(a_0, a_1, \dots, a_n) \in \sem{\varphi}$.
	Let $j \neq 0$ be \st $a_j + \alpha_j$ is maximised,
	and define $\delta := a_0 - (a_j + \alpha_j) \mod m$.
	Clearly $\delta \geq 0$ since $a_0$ satisfies all the lower bounds $a_i + \alpha_i$.
	Since $a_0$ satisfies all the upper bounds $a_i + \beta_i$ and $a_j + \alpha_j + \delta \leq a_0$,
	upper bounds are also satisfied.
	Finally, since $a_0 \eqv m a_i + \gamma_i$ and $a_0 \eqv m a_j + \alpha_j + \delta$,
	also the modular constraints $a_j + \alpha_j + \delta \eqv m a_i + \gamma_i$ are satisfied.
	Thus, $(a_1, \dots, a_n) \in \sem{\widetilde\varphi}$, as required.

	If all $\alpha_i$'s are equal to $-\infty$, then there are no lower bound constraints and only modulo constraints remain, hence
	%
	and a satisfying $y$ (if it exists) can be taken in the interval $\set{0, \dots, m - 1}$, yielding
	\begin{align*}
		\bigvee_{\delta \in \set {0, \dots, m - 1}} \bigwedge_i \ \delta \leq x_i + \beta_i \ \wedge \
		\delta \eqv m x_i + \gamma_i.
	\end{align*}
	The same complexity holds.
	The formula above is shown equivalent to \eqref{eq:integer:qe} by reasoning as in the previous paragraph.
\end{proof}

\begin{lemma}
  \label{lem:qe-rat}
	The structure $(\I, \leq, 0)$ admits effective elimination of quantifiers.
	The complexity is quadratic for conjunctive formulas.
\end{lemma}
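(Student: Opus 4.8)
The plan is to observe that $(\I, \le, 0)$ is, as an ordered structure, a \emph{dense linear order with least element $0$ and no greatest element}, and then to run the classical quantifier-elimination argument for such orders while bookkeeping the complexity. As in the proof of \Cref{lem:qe-clocks-int-rat}, it suffices to eliminate a single existential quantifier, since universal quantifiers are dual ($\forall = \neg \exists \neg$) and a block of quantifiers is removed one at a time. Pushing negations inward, every literal becomes a comparison $s \bowtie t$ with $\bowtie \in \set{\le, <, =, \ne}$ between terms $s, t \in \set{x_1, \dots, x_n, y, 0}$, where $<$ and $\ne$ arise as negations of $\le$ and $=$. Crucially, $s < t$ and $s \ne t$ are themselves Boolean combinations of $\le$-atoms, so the formula we eventually output remains quantifier-free over the signature $(\le, 0)$. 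We thus consider $\exists y \cdot \varphi$ with $\varphi$ a conjunction of such literals.

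Next I would separate the literals of $\varphi$ into those not mentioning $y$, which are copied verbatim to the output, and those that do. If some literal is an equality $y = t$ with $t \in \set{x_1, \dots, x_n, 0}$, we substitute $y \mapsto t$ into the remaining literals and drop the quantifier. Otherwise the literals mentioning $y$ form a set of lower bounds $\ell \mathrel{\bowtie_L} y$, upper bounds $y \mathrel{\bowtie_U} u$ with $\bowtie_L, \bowtie_U \in \set{<, \le}$, and disequations $y \ne d$, where $\ell, u, d$ range over $\set{x_1, \dots, x_n, 0}$; to these we always add the implicit lower bound $0 \le y$, since $0$ is the least element of $\I$.

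The key observation is that, because $\I$ is dense and has no greatest element, $\exists y \cdot \varphi$ holds precisely when (i) every lower bound lies below every upper bound --- formally, for each pair we assert $\ell \le u$ if both bounds are non-strict and $\ell < u$ otherwise --- which makes the feasible set for $y$ a nonempty interval; and (ii) the disequations can be met. For (ii), if the feasible interval contains at least two points, density supplies infinitely many candidates and the finitely many forbidden values $d$ can be avoided, so no extra constraint is needed; the interval collapses to a single value $v$ only through a matching pair of non-strict bounds with equal endpoints, in which case we additionally require $v \ne d$ for each disequation. Since ``degenerate'' versus ``non-degenerate'' is itself expressible by comparisons among the $x_i$ and $0$, the whole condition is a Boolean combination of atomic clock constraints, hence a clock constraint; and as we emit one comparison per (lower bound, upper bound) pair and per disequation, the output is quadratic in the size of $\varphi$, as claimed. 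This is the fractional companion of \Cref{lem:qe-clocks-int-rat}, the density of $\I$ replacing the modular reasoning used there, in the spirit of quantifier elimination for dense orders~\cite{FerranteRackoff:QE:Reals}.

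The hard part will be the careful bookkeeping in case (ii): one must verify that a feasible interval degenerates to a single point \emph{only} through a pair of non-strict bounds with equal endpoints, that in every other feasible configuration density genuinely produces a point avoiding all disequations, and that the implicit bound $0 \le y$ together with the absence of a greatest element is correctly woven into the pairwise comparisons and the disjunction separating the two cases. Getting the strict/non-strict combinations right across lower bounds, upper bounds, and disequations is where the argument, though elementary, demands attention.
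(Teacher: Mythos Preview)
Your proposal is correct and follows the same Fourier--Motzkin idea as the paper: pair every lower bound of $y$ against every upper bound. The paper's proof is much terser---it treats only the non-strict atoms $x_i \leq y$, $y \leq x_j$ and the substitution case $y = 0$, leaving negated atoms to the reader---so your version is more complete on that front.

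One simplification is available. In $(\I, \leq, 0)$ the constant $0$ is the minimum, so the only disequation literal $y \neq 0$ is equivalent to the strict lower bound $0 < y$; and $y \neq x_i$ is not a single literal in this signature (equality between two variables is the conjunction $x_i \leq y \wedge y \leq x_i$, so its negation is a disjunction and would already have been distributed away before reaching your conjunctive case). Hence every literal involving $y$ is already a (strict or non-strict) bound, the singleton-interval analysis in your point (ii) never arises, and the output is simply $\bigwedge_{\ell, u} \ell \mathrel{\bowtie} u$ with $\bowtie \in \{<, \leq\}$ chosen by strictness---directly yielding the quadratic bound without the bookkeeping you flagged as ``the hard part''.
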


\noindent
An analogous result was obtained in \cite[Theorem 5.5]{Koubarakis:PKRR:1994} for \emph{rational difference logic},
i.e., where the domain is $\Q$ and the atomic formulas are of the form $x - y \bowtie k$,
with $\bowtie \in \set{<, \leq, >, \geq, =}$ and $k \in \Q$.
The statement of Lemma~\ref{lem:qe-rat} concerns a more restrictive setting
where the domain is the unit rational interval $\I = \Q \cap [0, 1)$
and the constraints are only of the form $x = 0$ and $x \leq y$.
Since a first-order formula $\varphi$ of $(\I, \leq, 0)$ is also a formula of rational difference logic,
by \cite[Theorem 5.5]{Koubarakis:PKRR:1994} there exists an equivalent quantifier-free formula $\psi$ of difference logic.
However, we prove that $\psi$ is even a formula of the more restrictive structure $(\I, \leq, 0)$.

\begin{proof}[Proof (of \Cref{lem:qe-rat})]
	It suffices to consider a conjunctive formula of the form $\varphi \equiv \exists y \cdot \bigwedge_k \varphi_k$
	where $\varphi_k$ are atomic formulas.
	If any $\varphi_k$ is the constraint $y = 0$,
	then we obtain $\widetilde\varphi$ by replacing $y$ with $0$ everywhere.
	Otherwise, $\varphi$ is of the form
	\begin{gather*}
		\exists y \cdot \bigwedge_{i \in I} {x_i} \leq y \wedge \bigwedge_{j \in J} y \leq {x_j},
	\end{gather*}
	and we can eliminate $y$ by writing the equivalent constraint $\widetilde\varphi$
	\begin{gather*}
		\bigwedge_{i \in I} \bigwedge_{j \in J} {x_i} \leq {x_j}.
	\end{gather*}
	The size of $\widetilde\varphi$ is quadratic in the size of $\varphi$.
\end{proof}

\subsection{Clock difference relations}
\label{sec:prelim:CDR}

Let $\X$ be a set of clocks containing a special clock $\x_0 \in \X$ which is never reset,
and, for every clock $\x$, let $\x'$ denote a copy thereof.
A \emph{clock difference relation} (\CDR) $\varphi(\bar \x, \bar x')$ over $\X$
is a Boolean combination of formulas of the form
\begin{align}
  \label{eq:CDR}
  u \leq t,
\end{align}
where $u, t$ are terms of one of the forms $\fract {\x_0 - \x}$, $\fract{\x_0}$, $\fract{\x_0' - \x'}$, $\fract{\x_0'}$.
As we will see in Sec.~\ref{sec:fractional:TPDA},
clock difference relations can express the one-step transition relation of timed pushdown automata,
restricted to fractional values.
As a basic building block,
the identity relation is expressible as the following \CDR
\begin{align}
	\label{eq:CDR:id}
	\varphi_{\id}(\bar x, \bar x') &\ \equiv\ 
	\fract{\x_0'} = \fract{\x_0} \wedge \bigwedge_{\x \in \X} \fract {\x_0' - \x'} = \fract {\x_0 - \x}.
\end{align}
Also fractional clock constraints are expressible as \CDR,
because of the following two equivalences:
\begin{align}
	\label{eq:CDR:zero}
	\fract \x = 0 \quad \textrm { iff } \quad
		&\fract{\x_0 - \x} = \fract {\x_0}, \textrm { and } \\
	\label{eq:CDR:leq}
	\fract \x \leq \fract \y \quad \textrm { iff } \quad
		&\fract{\x_0 - \x} \leq \fract{\x_0} \leq \fract{\x_0 - \y} \;\vee \\
		\nonumber
		&\fract{\x_0} \leq \fract{\x_0 - \y} \leq \fract{\x_0 - \x} \;\vee \\
		\nonumber
		&\fract{\x_0 - \y} \leq \fract{\x_0 - \x} \leq \fract{\x_0}.
\end{align}
%
%

There are two additional facts that make \CDR particularly interesting.
First, for a fixed set of clocks $\X$, there are finitely many \CDR up to logical equivalence.
Second, \CDR are closed \wrt \emph{relational composition}:
Given two \CDR $\varphi(\bar \x, \bar \x'), \psi(\bar \x', \bar \x'')$
their composition is defined as
\begin{align}
	\label{eq:CDR:composition}
	(\varphi \circ \psi)(\bar \x, \bar \x'') \;\equiv\; \exists \bar \x' \cdot \varphi(\bar \x, \bar \x') \wedge \psi(\bar \x', \bar \x'').
\end{align}

\begin{lemma}
  \label{lem:CDR:qe}
  Clock difference relations admit effective elimination of quantifiers
  of the form $\exists y$ for $y$ different from the special variables $x_0, x_0'$.
\end{lemma}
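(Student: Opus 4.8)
The plan is to exploit that every non-special clock occurs in a \CDR through a \emph{single} fractional term which, moreover, can be made to range freely over $\I = \Q \cap [0,1)$. I would fix a \CDR $\varphi(\bar \x, \bar \x')$ and a variable $y \notin \set{\x_0, \x_0'}$ to be eliminated; by the symmetry between the primed and unprimed copies I may assume $y$ is unprimed. Inspecting the four admissible forms of \CDR terms, the only one mentioning $y$ is $\fract{\x_0 - y}$: the term $\fract{\x_0}$ does not involve $y$ because $y \neq \x_0$, the terms $\fract{\x_0 - \x}$ for $\x \neq y$ do not involve $y$, and the primed terms $\fract{\x_0' - \x'}$ and $\fract{\x_0'}$ speak only about the independent copies (in particular $\fract{\x_0' - y'}$ involves the free variable $y'$, which is distinct from $y$). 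Hence $y$ influences $\varphi$ solely through the value of $\fract{\x_0 - y}$.

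The crux of the argument is the observation that, with all variables other than $y$ held fixed, the quantity $\fract{\x_0 - y}$ attains \emph{every} value of $\I$ as $y$ ranges over $\Qgeq$, independently of the remaining terms. Indeed, by \eqref{eq:fract} one has $\fract{\x_0 - y} = \fract{\x_0} - \fract y + \condone{\fract{\x_0} < \fract y}$, and for any fixed value of $\fract{\x_0}$ this expression sweeps out all of $\I$ as $\fract y$ ranges over $\I$. Consequently $\exists y \cdot \varphi$ is logically equivalent to $\exists z \cdot \widetilde\varphi$, where $z$ is a fresh variable ranging over $\I$ and $\widetilde\varphi$ is obtained from $\varphi$ by substituting $z$ for every occurrence of the term $\fract{\x_0 - y}$.

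Next I would note that $\widetilde\varphi$ is a Boolean combination of atomic constraints $u \leq t$ whose terms are drawn from $\set z$ together with the \CDR terms not involving $y$, all interpreted in $\I$; thus $\exists z \cdot \widetilde\varphi$ is a first-order formula over the structure $(\I, \leq, 0)$. Applying \Cref{lem:qe-rat} then yields an equivalent quantifier-free formula $\psi$ over $(\I, \leq, 0)$, which, being a Boolean combination of $\leq$-constraints among the surviving \CDR terms, is itself a \CDR over $\X$ no longer mentioning $y$ --- the desired quantifier-free equivalent of $\exists y \cdot \varphi$. The elimination of a primed variable $y'$ is entirely symmetric, replacing $\fract{\x_0' - y'}$ by a fresh variable.

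The step I expect to require the most care is the independence claim of the second paragraph: one must verify both that varying $y$ affects no \CDR term other than $\fract{\x_0 - y}$ and that this term is unconstrained relative to all the others. Both facts rely precisely on the hypothesis $y \notin \set{\x_0, \x_0'}$, since the special clocks occur simultaneously in many terms (e.g.\ $\x_0$ appears in $\fract{\x_0}$ and in every $\fract{\x_0 - \x}$); for them the one-variable-one-term reduction would break down, which is exactly why the statement excludes them.
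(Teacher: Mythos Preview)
Your proposal is correct and follows essentially the same strategy as the paper: both arguments isolate the single \CDR term $\fract{\x_0 - y}$ containing the quantified variable, observe that it ranges freely over $\I$, and reduce to quantifier elimination in $(\I,\leq,0)$ via \Cref{lem:qe-rat}. The paper packages this reduction as a global change of variables --- the bijection $f(x_0,x_1,\dots,x_n)=(x_0,x_0-x_1,\dots,x_0-x_n)$ turns every \CDR term $\fract{\x_0-\x_i}$ into a fractional clock constraint term $\fract{\x_i}$, applies \Cref{lem:qe-rat}, then pulls back via $f^{-1}$ --- whereas you substitute only the one term involving $y$ by a fresh $z\in\I$; the two are equivalent, your presentation being slightly more direct.
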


\begin{proof}
	Let the set of clocks be $\X = \set{x_0, x_1, \dots, x_n}$
	and consider the quantified \CDR $\exists x_i \cdot \varphi(\bar x, \bar x')$ with $i \neq 0$,
	where $\varphi(\bar x, \bar x')$ is a \CDR (i.e., quantifier free).
	(The case $\exists x_i' \cdot \varphi(\bar x, \bar x')$ is analogous.)
	Consider the linear transformation $f : \R^{n+1} \to \R^{n+1}$ defined as:
	\begin{align*}
		f(x_0, x_1, \dots, x_n) = (x_0, {x_0 - x_1}, \dots, {x_0 - x_n}).
	\end{align*}
	Clearly, $f$ is a bijection,
	and $\varphi(f(\bar x), f(\bar x'))$ is a (fractional) clock constraint.
	Consider the quantified clock constraint
	\begin{align*}
		\xi \ \equiv\ \exists x_i \cdot \varphi(f(\bar x), f(\bar x')).
	\end{align*}
	By Lemma~\ref{lem:qe-rat}, $\xi$ is equivalent to a fractional clock constraint $\hat\xi(\bar x, \bar x')$ not containing $x_i$.
	The formula $\hat\xi'(\bar x, \bar x') \equiv \hat\xi(f^{-1}(\bar x), f^{-1}(\bar x'))$
	obtained by applying the inverse function
	$f^{-1}(\delta_0, \delta_1, \dots, \delta_n) = (\delta_0, {\delta_0 - \delta_1}, \dots, {\delta_0 - \delta_n})$
	is a \CDR logically equivalent to $\exists x_i \cdot \varphi(\bar x, \bar x')$.
	Since $x_i$ is not present in $\hat\xi$ and it is different from the special variable $x_0$,
	$x_i$ is also not present in $\hat\xi'$
	by the definition of $f^{-1}$.
\end{proof}

%
%


\begin{corollary}
	\label{cor:CDR:composition}
	Clock difference relations are closed under composition.
\end{corollary}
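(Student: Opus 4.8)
The plan is to eliminate the existentially quantified intermediate clocks $\bar x' = (x_0', x_1', \dots, x_n')$ from the defining formula \eqref{eq:CDR:composition} of the composition, and to show that what remains is a \CDR over $\bar x$ and $\bar x''$. First I would observe that the matrix $\varphi(\bar x, \bar x') \wedge \psi(\bar x', \bar x'')$ is a Boolean combination of atomic constraints $u \leq t$ whose terms $u, t$ are fractional differences drawn from the three generations $\bar x$, $\bar x'$, $\bar x''$; thus the only work is to eliminate the block $\exists \bar x'$.

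The elimination proceeds in two phases, dictated by the fact that \Cref{lem:CDR:qe} applies only to variables different from the special ones. In the first phase I would eliminate the non-special intermediate clocks $x_1', \dots, x_n'$ one at a time. Each such $x_i'$ is distinct from the special variables of the relevant generations (it is neither $x_0'$ nor $x_0''$), so \Cref{lem:CDR:qe} --- whose proof via the bijection $f$ and \Cref{lem:qe-rat} extends verbatim to a Boolean combination of fractional differences over several generations --- produces after each step an equivalent quantifier-free \CDR no longer mentioning $x_i'$. After all of $x_1', \dots, x_n'$ have been removed, the only term still referring to the intermediate generation is $\fract{x_0'}$, since every diagonal term $\fract{x_0' - x_i'}$ has disappeared together with the corresponding $x_i'$.

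It remains to eliminate $\exists x_0'$, which \Cref{lem:CDR:qe} does not cover. Here I would exploit that $x_0'$ now occurs solely through the single term $\fract{x_0'} \in \I$. Since every value of $\I$ is attained by $\fract{x_0'}$ for a suitable $x_0' \in \Qgeq$, the quantifier $\exists x_0'$ is equivalent to an existential quantifier over a fresh variable ranging over $\I$, and the residual formula is a formula of the structure $(\I, \leq, 0)$ in that variable together with the surviving input and output terms. Applying \Cref{lem:qe-rat} eliminates it and yields a quantifier-free formula of $(\I, \leq, 0)$ whose atoms involve only the terms $\fract{x_0 - x_i}$, $\fract{x_0}$, $\fract{x_0'' - x_i''}$, $\fract{x_0''}$ --- that is, precisely a \CDR over $\bar x$ and $\bar x''$, as required.

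I expect the main obstacle to be the special status of the intermediate reference clock $x_0'$: it cannot be discharged by the generic \CDR quantifier elimination of \Cref{lem:CDR:qe}, and the crux of the argument is to order the eliminations so that, once the non-special intermediate clocks are gone, the dependence on $x_0'$ collapses to the single fractional term $\fract{x_0'}$ and becomes amenable to the more elementary \Cref{lem:qe-rat}. A secondary point worth verifying is that \Cref{lem:CDR:qe} genuinely applies to the three-generation conjunction; this holds because its proof relies only on the per-generation change of coordinates and on \Cref{lem:qe-rat}, both of which are insensitive to the number of generations present.
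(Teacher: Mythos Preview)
Your proposal is correct and is essentially the argument the paper leaves implicit: the corollary is stated without proof as an immediate consequence of \Cref{lem:CDR:qe}, and your two-phase elimination (first $x_1',\dots,x_n'$ via \Cref{lem:CDR:qe}, then $x_0'$ via \Cref{lem:qe-rat} once it survives only as the single term $\fract{x_0'}$) spells out precisely what is needed. In fact your treatment is more careful than the paper's, since \Cref{lem:CDR:qe} as stated excludes the special variable $x_0'$, and the paper does not comment on how that final quantifier is discharged; your observation that after the first phase $x_0'$ occurs only through $\fract{x_0'}\in\I$ and hence falls to \Cref{lem:qe-rat} is exactly the missing step.
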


\begin{remark}
	In the original definition of \cite{KrcalPelanek:TM:FSTTCS:2005},
	in a basic \CDR $u \leq t$ as above
	terms $u, v$ are of one of the forms
	$\fract {x} - \fract{y}$, $1 - (\fract {x} - \fract{y})$,
	$\fract {x'} - \fract{y'}$.
	Our presentation differs in two respects:
	1) We compare fractional parts of differences of clocks rather than differences of fractional parts;
	this has the advantage of being invariant under time elapse
	and thus we do not need expressions of the form $1 - (\fract {x} - \fract{y})$.
	2) Differences are taken only \wrt $x_0, x_0'$, instead of arbitrary clocks $x, x'$.
	  %
\end{remark}

\section{Timed pushdown automata}
\label{sec:TPDA}

%
%


A \emph{timed pushdown automaton} (\TPDA) is a tuple
$\P = \tuple {\Sigma, \Gamma, L, \X, \ZZ, \Delta}$
where $\Sigma$ is a finite \emph{input alphabet},
$\Gamma$ is a finite \emph{stack alphabet},
$L$ is a finite set of \emph{control locations},
$\X$ is a finite set of \emph{control clocks}, 
$\ZZ$ is a finite set of \emph{stack clocks} disjoint from $\X$.
The last item $\Delta$ is a set of transition rules
of the form $\trule p \op q$ with $p, q \in L$ control locations,
where $\op$ determines the type of transition:
\begin{itemize}
	\item \emph{time elapse}%
	\footnote{Explicit time elapse transitions are non-standard in the literature on \TA.
	The standard semantics of timed automata where time can elapse freely in every control location
	is simulated by adding explicit time elapse transitions
	$\trule{p}{\elapse}{p}$ in every location $p$.
	Our explicit, more fine grained modelling of the elapse of time will simplify the constructions of the paper.}
	$\op = \elapse$;
	\item \emph{input} $\op = \readop a$ with $a \in \Sigma_\varepsilon := \Sigma \cup \{\varepsilon\}$ an input letter;
	\item \emph{test} $\op = \testop \varphi$ where $\varphi$ is a clock constraint over clocks in $\X$,
	called the \emph{transition constraint};
	\item \emph{reset} $\op = \resetop \Y$ with $\Y \subseteq \X$ a set of clocks to be reset
	(when $\Y$ is the singleton $\set \x$, sometimes we just write $\resetop \x$);
	\item \emph{push} $\op = \pushop{\alpha}{\psi}$
	with $\alpha \in \Gamma$ a stack symbol to be pushed on the stack
	under the clock constraint $\psi$ over clocks $\X \cup \ZZ$,
	called the \emph{stack constraint};
	\item \emph{pop} $\op = \popop{\alpha}{\psi}$ similarly as push.
\end{itemize}
We also allow transitions $\trule p {\op_1; \cdots; \op_n} q$
to carry a sequence of operations, to be executed in the given order,
as to avoid introducing intermediate control locations.
We assume that every atomic stack constraint contains some stack variable from $\ZZ$.
%
%
A \TPDA has \emph{untimed stack} if the only stack constraint is $\true$.
%
Without push/pop operations, we obtain nondeterministic timed automata (\TA).

\begin{example}
	\label{ex:TPDA}
	For illustration, consider a \TPDA with one control clock $\x$ and one stack clock $\z$ 
	that recognises the language of even-length palindromes over $\Sigma = \{a, b\}$ under certain timing constraints
	to be unravelled later.
	The stack alphabet of the \TPDA is $\Gamma = \Sigma$.
	In the initial control location $q_0$ the \TPDA just resets the control clock 
	and moves to the control location $q_1$:
	\begin{align*}
	\trule {q_0} {\resetop \x} {q_1}.
	\end{align*}
	In location $q_1$ the \TPDA reads $a$'s and $b$'s and keeps track of them on the stack, but the
	initial value of the stack clock depends on the input letter:
	\begin{align*}
	&\trule {q_1} {\readop a; \pushop{a}{\z = 0}} {q_1}, \\
	&\trule {q_1} {\readop b; \pushop{b}{\z = 1}} {q_1}, \\
	&\trule {q_1} {\readop{\varepsilon}} {q_2}.
	\end{align*}
	The last transition silently moves to location $q_2$, in which
	the \TPDA pops the stack while checking a
	constraint on the time elapse since the corresponding push:
	\begin{align} \label{eq:popconstr}
	&\trule {q_2} {\readop c; \popop{c}{\floor \z \eqv 2 0 \land \fract{\z} \leq \fract{\x}}} {q_2} 
	\qquad (c \in \{a, b\}).
	\end{align}
	Finally, in locations $q_1$ and $q_2$ (but not in $q_0$) unrestricted time elapse is enabled:
	\begin{align*}
	\trule {q_1} {\elapse} {q_1}, \qquad \trule {q_2} {\elapse} {q_2}.
	\end{align*} 
	Once acceptance by empty stack is imposed, the language recognised by the \TPDA,
	when starting in control location $q_0$ with empty stack, 
	are those even-length palindromes which
	satisfy the following timing constraints:
	\begin{enumerate}
		\item the integer part of time elapsed between every matching pair of $a$'s is even;
		\item the integer part of time elapsed between every marching pair of $b$'s is odd;
		\item for every matching pair of letters, the fractional part of time-stamp (time-stamp = time elapsed since the reset of $\x$) of the first letter is smaller or equal
		to the fractional part of time-stamp of the second letter.
	\end{enumerate}
	The condition 3.~is imposed by the pop constraint $\fract{\z} \leq \fract{\x}$.
	At the moment of pop, the fractional part of time-stamp is $\fract{\x}$.
	In terms of the values of clocks $\x$ and $\z$ at the moment of pop, 
	the fractional part of time-stamp at the moment of the corresponding push is $\fract{\x - \z}$, since 
	$\z$ (resp.~$\z - 1$) represents the amount of time elapsed between push and pop of $a$ (resp.~$b$).
	The condition 3.~follows, since $\fract{\z} \leq \fract{\x}$ is equivalent to $\fract{\x - \z} \leq \fract{\x}$.
\end{example}
Throughout the paper we use $\x, \x_i$ to denote control clocks, and $\z, \z_j$ (or $\y, \y_j$) to denote stack clocks;
we let $\x_0$ be a control clock that is never reset (and thus measures the total elapsed time),
$\x_1$ a control clock that is reset at every push (and thus is assumed to be 0 at the time of push),
and $\z_1$ (or $\y_1$) a stack clock that is $0$ when pushed.

For complexity estimations, we assume \wlg that all clock constraints of the \TPDA are presented as conjunctive clock constraints.
Non-conjunctive constraints can be converted to conjunctive ones by first transforming to disjunctive normal form $\bigvee_{i=1}^m \varphi_i$
(where each $\varphi_i$'s is conjunctive)
and distribute each disjunct $\varphi_i$ to a different transition using the automaton's nondeterminism.
We also assume that constants are encoded in binary,
that all modular constraints use the same modulus $M$ (also encoded in binary)
and that all other constants appearing in constraints are smaller than $M$.

\subsection{Semantics}
%
%
Every stack symbol is equipped with a fresh copy of clocks from $\ZZ$.
At the time of $\pushop{\alpha}{\psi}$,
the push constraint $\psi$ specifies possibly nondeterministically the initial value of all clocks in $\ZZ$ \wrt control clocks in $\X$.
Both global and stack clocks evolve at the same rate when a time elapse transition is executed.
At the time of $\popop{\alpha}{\psi}$,
the pop constraint $\psi$ specifies the final value of all clocks in $\ZZ$ \wrt control clocks in $\X$.
%
%
A \emph{timed stack} is a sequence $w \in (\Gamma \times \Qgeq^\ZZ)^*$ of pairs $(\gamma, \mu)$,
where $\gamma$ is a stack symbol and $\mu$ is a valuation for stack clocks in $\ZZ$.
%
%
For $\delta \in \Qgeq$
and a timed stack $w = (\gamma_1, \mu_1) \cdots (\gamma_k, \mu_k)$,
let $w + \delta$ be $(\gamma_1, \mu_1 + \delta) \cdots (\gamma_k, \mu_k + \delta)$.
A (\TPDA) \emph{configuration} is a triple $(p, \mu, w) \in L \times \Qgeq^\X \times (\Gamma \times \Qgeq^\ZZ)^*$
where $p$ is a control location,
$\mu$ is a clock valuation over the control clocks $\X$,
and $w$ is a timed stack.
%
%
For every rule $\delta = \trule{p}{\op}{q} \in \Delta$
we have a transition
%
$p, \mu, u \reach \delta {} q, \nu, v$
%
whenever one of the following conditions holds:
%
\begin{itemize}
	\item
	$\op = \elapse$ and	there is some $t \in \Qgeq$ \st $\nu = \mu + t$ and $v = u + t$.
	\item
	$\op = \readop a$, $\nu = \mu$, $u = v$.
	\item
	$\op = \testop \varphi$, $a = \varepsilon$, $\mu \models \varphi$, $\nu = \mu$, $u = v$.
	\item
	$\op = \resetop Y$, $\nu = \mu[Y \mapsto 0]$, $v = u$.
	\item
	$\op = \pushop{\alpha}{\psi}$, $\mu = \nu$,
	$v = u \cdot \tuple{\alpha, \mu_1}$
	if $\mu_1 \in \Qgeq^\ZZ$ satisfies $(\mu, \mu_1) \models \psi$,
	where $(\mu, \mu_1) \in \Qgeq^{\X \cup \ZZ}$ is the unique clock valuation
	that agrees with $\mu$ on $\X$ and with $\mu_1$ on $\ZZ$.
	\item
	$\op = \popop{\alpha}{\psi}$, $\mu = \nu$,
	$u = v \cdot \tuple{\alpha, \mu_1}$
	provided that $\mu_1 \in \Qgeq^\ZZ$ satisfies $(\mu, \mu_1) \models \psi$.
\end{itemize}
%
%
The one-step transition relation $c \reach \delta {} d$
is extended on sequences of transitions $w \in \Delta^*$ in the natural way.
\ignore{
The \emph{timed language} from location $\ell$ to $\arr$ is
$\lang {\ell, \arr} := \setof
	{\project {\varepsilon} w \in (\Qgeq\Sigma)^*}
	{\tuple{\ell, \mu_0, \varepsilon} \goesto w \tuple{\arr, \mu_0, \varepsilon}}$
where $\project {\varepsilon} w$ 
removes the $\varepsilon$'s from $w$ and
$\mu_0$ is the valuation that assigns $\mu_0(x) = 0$ to every clock $x$.
The corresponding \emph{untimed language} $\languntimed {\ell, \arr}$
is obtained by removing the time elapses from $\lang {\ell, \arr}$.
}%

\subsection{Reachability relation}

The \emph{reachability relation} $\reach {} {pq} \;\subseteq\; \Qgeq^\X \times \Delta^* \times \Qgeq^\X$ of the \TPDA $\P$
is obtained by requiring that the stack is empty at the beginning and at the end of the run.
Formally, we write
$$\mu \reach w {pq} \nu \qquad \textrm{(or also $p, \mu \reach w {} q, \nu$)}$$
if $p, \mu, \varepsilon \reach w {} q, \nu, \varepsilon$.
%
The reachability relation can be characterised in the following natural way.

\begin{lemma}
	\label{lem:characterisation}
	%
	The reachability relation $\reach {} {pq}$ is the least relation satisfying the rules below,
	where $p, q, r, s \in Q$,
	$\mu, \nu \in \Q^\X$,
	and $u, v \in \Delta^*$:
	\begin{align}
		\label{eq:reachrel:A}
		&\textrm{\em (input)} &&
		  \frac{}{\mu \reach \delta {pq} \mu}
			 \qquad && \forall \delta = \trule p {\readop a} q \in \Delta \\[2ex]
		\label{eq:reachrel:B}
		&\textrm{\em (test)} &&
		\frac{}{\mu \reach \delta {pq} \mu}
		   \qquad && \forall  \delta = \trule p {\testop \varphi} q \in \Delta \st \mu \models \varphi \\[2ex]
		\label{eq:reachrel:C}
		&\textrm{\em (reset)} &&
		\frac{}{\mu \reach \delta {pq} \mu [\Y \mapsto 0]}
		   \qquad && \forall \delta = \trule p {\resetop \Y} q \in \Delta \\[2ex]
		\label{eq:reachrel:D}
		&\textrm{\em (elapse)} &&
		  \frac{}{\mu \reach \delta {pq} \mu + t}
			 \qquad && \forall \delta = \trule p \elapse q \in \Delta, t \geq 0 \\[2ex]
		\label{eq:reachrel:E}
	   	&\textrm{\em (transitivity)} &&
		  \frac{\mu \reach u {pr} \rho \quad \rho \reach v {rq} \nu}{\mu \reach {uv} {pq} \nu} \\[2ex]
		\label{eq:reachrel:F}
		&\textrm{\em (push-pop)} &&
			\frac{\mu \reach u {rs} \nu}{\mu \reach v {pq} \nu}
			&& 
				\begin{array}{l}
					\forall \delta_\push = \trule p {\pushop \alpha {\psi_\push}} r, \\
					\delta_\pop = \trule s {\popop \alpha {\psi_\pop}} q \in \Delta, \\
					v = \delta_\push \cdot u \cdot \delta_\pop,
				\end{array}
	\end{align}
	whenever the following condition is satisfied:
	\begin{align}
		\label{eq:characterisation}
			\exists \mu_\ZZ \in \Qgeq^\ZZ
			\st
			\begin{array}{l}
				(\mu, \mu_\ZZ) \models \psi_\push \textrm{ and } \\
				(\nu, \mu_\ZZ + \delta_{\mu\nu}) \models \psi_\pop,
			\end{array}
	\end{align}
	where $\delta_{\mu\nu} := \nu(\x_0) - \mu(\x_0)$ measures the total amount of time elapsed
	between a push and its corresponding pop
	(recall that $\x_0$ is never reset).
\end{lemma}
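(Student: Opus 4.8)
The plan is to prove both inclusions: that the operationally defined relation $\reach{}{pq}$ (runs with empty initial and final stack, via the one-step semantics) \emph{satisfies} all six rules \eqref{eq:reachrel:A}--\eqref{eq:reachrel:F} (soundness), and conversely that every such run can be \emph{derived} from these rules (completeness, i.e.\ minimality). The crux in both directions is a \emph{stack-independence} observation: if $r, \mu, \varepsilon \reach u {} s, \nu, \varepsilon$ is a run between empty stacks, then for \emph{any} stack content $c$ one has $r, \mu, c \reach u {} s, \nu, c + \delta_{\mu\nu}$ by a run that never pops below $c$, where $\delta_{\mu\nu} = \nu(\x_0) - \mu(\x_0)$; and conversely, any run keeping the bottom block $c$ untouched projects down to a run between empty stacks. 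I would prove this by a routine induction on $|u|$ inspecting each transition type: control operations (input, test, reset) leave the stack untouched, a time elapse adds the same amount uniformly to every entry of $c$, and push/pop operations act strictly above $c$. The identity $c' = c + \delta_{\mu\nu}$ uses crucially that $\x_0$ is never reset, so the total time elapsed during $u$ equals $\nu(\x_0) - \mu(\x_0)$.

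For soundness, rules \eqref{eq:reachrel:A}--\eqref{eq:reachrel:D} are immediate from the corresponding clauses of the one-step semantics (the stack stays empty), and \eqref{eq:reachrel:E} is just transitivity of the reachability relation. For the push-pop rule \eqref{eq:reachrel:F}, assume $\mu \reach u {rs} \nu$ and fix a witness $\mu_\ZZ$ for \eqref{eq:characterisation}. I would build the outer run in three segments: first $p, \mu, \varepsilon \reach {\delta_\push} {} r, \mu, \tuple{\alpha, \mu_\ZZ}$, legal since $(\mu, \mu_\ZZ) \models \psi_\push$; then, by the forward direction of stack-independence with $c = \tuple{\alpha, \mu_\ZZ}$, the inner run lifts to $r, \mu, \tuple{\alpha, \mu_\ZZ} \reach u {} s, \nu, \tuple{\alpha, \mu_\ZZ + \delta_{\mu\nu}}$; finally $s, \nu, \tuple{\alpha, \mu_\ZZ + \delta_{\mu\nu}} \reach {\delta_\pop} {} q, \nu, \varepsilon$, legal precisely because $(\nu, \mu_\ZZ + \delta_{\mu\nu}) \models \psi_\pop$. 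Composing the three segments yields $\mu \reach v {pq} \nu$ with $v = \delta_\push \cdot u \cdot \delta_\pop$.

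For completeness I would argue by induction on the length of a run $p, \mu, \varepsilon \reach w {} q, \nu, \varepsilon$, analysing its first transition. If it is not a push, it must be an input/test/reset/elapse (a pop is impossible on the empty stack); it leaves the stack empty, so I peel it off with the matching single-step rule and apply the induction hypothesis to the remainder together with transitivity \eqref{eq:reachrel:E}. If the first transition is a push $\delta_\push = \trule p {\pushop{\alpha}{\psi_\push}} r$, then since the final stack is empty this symbol is eventually removed; let $\delta_\pop$ be its \emph{matching} pop, i.e.\ the first transition at which the stack returns to empty, and write $w = \delta_\push \cdot u \cdot \delta_\pop \cdot w''$. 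Between the push and its matching pop the bottom entry $\tuple{\alpha, \mu_\ZZ}$ (with $(\mu, \mu_\ZZ) \models \psi_\push$ read off the push step) is never touched, so the backward direction of stack-independence yields a genuine empty-stack run $r, \mu, \varepsilon \reach u {} s, \nu, \varepsilon$, whence $\mu \reach u {rs} \nu$ is derivable by induction. The bottom entry at the moment of the pop equals $\mu_\ZZ + \delta_{\mu\nu}$, so the pop step certifies $(\nu, \mu_\ZZ + \delta_{\mu\nu}) \models \psi_\pop$; thus \eqref{eq:characterisation} holds and rule \eqref{eq:reachrel:F} derives $\mu \reach {\delta_\push u \delta_\pop} {pq} \nu$. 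A final use of the induction hypothesis on $w''$ together with transitivity closes the case.

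The main obstacle is making the stack-independence lemma precise and carrying out the time bookkeeping that ties the pushed stack-clock valuation $\mu_\ZZ$ to the value $\mu_\ZZ + \delta_{\mu\nu}$ present at pop time: everything hinges on all clocks, control and stack alike, elapsing at a common rate and on $\x_0$ recording total elapsed time, so that the ``$+\delta_{\mu\nu}$'' shift is forced rather than chosen. A minor technicality is the degenerate empty run $\mu \reach \varepsilon {pp} \mu$, which is not generated by the rules as stated; this is harmless, since the recursive structure of interest concerns runs of length at least one and the empty case can be treated by convention.
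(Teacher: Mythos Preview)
The paper states this lemma without proof, treating it as a routine characterisation; your approach is the standard one and is correct. The two-direction argument (closure under the rules for soundness, induction on run length with first-return decomposition for completeness) together with the stack-independence lemma is exactly how one proves such a characterisation, and your bookkeeping of the $+\delta_{\mu\nu}$ shift via the never-reset clock $\x_0$ is the right justification for condition~\eqref{eq:characterisation}. Your observation about the empty run $\mu \reach{\varepsilon}{pp} \mu$ is a genuine (if harmless) imprecision in the lemma statement itself rather than a flaw in your argument.
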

A special case of the reachability relation is the \emph{reachability set}, 
where the initial clock values are not specified:
\begin{align*}
	\ReachSet {pq} = \setof{(w, \nu) \in \Delta^* \times \Rgeq^\X}{\exists \mu \in \Rgeq^\X \st \mu \reach w {pq} \nu}.
\end{align*}

We are mainly interested in the computational problem of building a finite representation for the reachability relation.
A related decision problem is the \emph{nonemptiness problem},
which amounts to deciding, given two control locations $p, q \in \L$, whether $\ReachSet {pq} \neq \emptyset$.

\subsection{Fractional reachability relation}

Our strategy to compute the reachability relation involves a series of transformations in \Cref{sec:simplify}
reducing to computing the \emph{fractional reachability relation}
$\freach {} {cd} \;\subseteq\; (\R \cap [0, 1))^\X \times \Delta^* \times (\R \cap [0, 1))^\X$, which is defined as follows.
For fractional valuations $\mu, \nu \in (\R \cap [0, 1))^\X$ and sequence of transitions $w \in \Delta^*$,
\begin{align}
	\label{eq:fractional:reachability}
	\mu \freach w {cd} \nu 
		\quad \textrm{ if } \quad
			\exists \tilde\mu, \tilde\nu \in \Rgeq^\X \st \fract {\tilde\mu} = \mu, \fract{\tilde\nu} = \nu, \textrm{ and } \tilde\mu \reach w {cd} \tilde\nu.
\end{align}

We say that a \TPDA is \emph{fractional} if the only clock constraints are the fractional ones.
We observe that fractional reachability is transitive for fractional \TPDA, which will be useful in \Cref{sec:fractional:TPDA}.
\begin{fact}
  \label{fact:freach:transitive}
  The fractional reachability relation is \emph{transitive} for fractional \TPDA, in the sense that
  \begin{align*}
    \mu \freach u {pr} \rho \freach v {rq} \nu
      \quad \textrm{ implies } \quad
        \mu \freach {uv} {pq} \nu.
  \end{align*}
\end{fact}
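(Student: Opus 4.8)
The plan is to exploit the defining feature of a fractional \TPDA: every clock constraint is fractional, hence its truth depends only on the fractional parts of the clock values and is completely insensitive to their integer parts. The heart of the argument is therefore the following \emph{restart lemma}: for a fractional \TPDA, if $\mu_1 \reach w {cd} \nu_1$ and $\mu_2 \in \Rgeq^\X$ satisfies $\fract{\mu_2} = \fract{\mu_1}$, then there is some $\nu_2$ with $\fract{\nu_2} = \fract{\nu_1}$ and $\mu_2 \reach w {cd} \nu_2$. In words, the reachability relation of a fractional \TPDA is a left congruence with respect to fractional parts, so any run may be ``restarted'' from a valuation with the same fractional part, at the cost of changing only integer parts along the way.

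To prove the restart lemma I would induct on the derivation of $\mu_1 \reach w {cd} \nu_1$ using the rules of \Cref{lem:characterisation}. The base cases are routine. For (input) \eqref{eq:reachrel:A} and (reset) \eqref{eq:reachrel:C} one performs the same operation from $\mu_2$; a reset sets the affected clocks to $0$ in both runs, while the remaining clocks keep equal fractional parts. For (test) \eqref{eq:reachrel:B} the guard $\varphi$ is fractional, so $\fract{\mu_2}=\fract{\mu_1}$ together with $\mu_1 \models \varphi$ yields $\mu_2 \models \varphi$. For (elapse) \eqref{eq:reachrel:D} one elapses the \emph{same} amount $t$: since $\mu_2 - \mu_1$ is integral componentwise, so is $(\mu_2 + t) - (\mu_1 + t)$, whence $\fract{\mu_2 + t} = \fract{\mu_1 + t}$. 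The (transitivity) rule \eqref{eq:reachrel:E} is dispatched by applying the induction hypothesis to the two sub-runs in turn.

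The main obstacle is the (push-pop) rule \eqref{eq:reachrel:F}, which must propagate the change of integer parts through the stack. Here $\mu_1 \reach v {pq} \nu_1$ arises from an inner run $\mu_1 \reach u {rs} \nu_1$ together with condition \eqref{eq:characterisation}, witnessed by a stack valuation $\mu_\ZZ$ with $(\mu_1, \mu_\ZZ) \models \psi_\push$ and $(\nu_1, \mu_\ZZ + \delta_{\mu_1\nu_1}) \models \psi_\pop$. Applying the induction hypothesis to the inner run yields $\nu_2$ with $\fract{\nu_2} = \fract{\nu_1}$ and $\mu_2 \reach u {rs} \nu_2$; I would then re-establish \eqref{eq:characterisation} for $(\mu_2, \nu_2)$ using the new stack valuation $\mu_\ZZ' := \fract{\mu_\ZZ}$. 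The key observations are that $\mu_\ZZ' - \mu_\ZZ$ is integral by construction, and that $\delta_{\mu_2\nu_2} - \delta_{\mu_1\nu_1}$ is integral because $\fract{\mu_2} = \fract{\mu_1}$ and $\fract{\nu_2} = \fract{\nu_1}$ force the two $\x_0$-differences to agree up to an integer. Hence the arguments fed to the fractional constraints $\psi_\push$ and $\psi_\pop$ have unchanged fractional parts, and both constraints remain satisfied. Nonnegativity is not a problem either: $\mu_\ZZ' \geq 0$ by construction, and $\delta_{\mu_2\nu_2} \geq 0$ because $\x_0$ is never reset, making $\mu_\ZZ' + \delta_{\mu_2\nu_2}$ a legitimate stack valuation.

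With the restart lemma in hand, the Fact follows immediately. From $\mu \freach u {pr} \rho$ fix witnesses $\tilde\mu \reach u {pr} \tilde\rho$ with $\fract{\tilde\mu} = \mu$ and $\fract{\tilde\rho} = \rho$, and from $\rho \freach v {rq} \nu$ fix witnesses $\hat\rho \reach v {rq} \hat\nu$ with $\fract{\hat\rho} = \rho$ and $\fract{\hat\nu} = \nu$. Since $\fract{\tilde\rho} = \rho = \fract{\hat\rho}$, the restart lemma applied to the second run lets us restart it from $\tilde\rho$, producing $\tilde\nu$ with $\fract{\tilde\nu} = \nu$ and $\tilde\rho \reach v {rq} \tilde\nu$. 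As both runs begin and end with empty stack, $\tilde\mu \reach u {pr} \tilde\rho$ and $\tilde\rho \reach v {rq} \tilde\nu$ compose via the (transitivity) rule \eqref{eq:reachrel:E} to give $\tilde\mu \reach {uv} {pq} \tilde\nu$; since $\fract{\tilde\mu} = \mu$ and $\fract{\tilde\nu} = \nu$, this witnesses $\mu \freach {uv} {pq} \nu$, as required.
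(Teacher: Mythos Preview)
Your proof is correct and follows essentially the same approach as the paper: both arguments hinge on a ``restart'' property stating that a run of a fractional \TPDA from $\tilde\mu$ can be replayed from any $\mu'$ with $\fract{\mu'}=\fract{\tilde\mu}$, landing in some $\nu'$ with $\fract{\nu'}=\fract{\tilde\nu}$. The paper packages this as the coincidence of $\freach{}{pq}$ with a $\forall\exists$-variant and dispatches it in one line (``the very same run, with the same time elapses, works''), whereas you spell it out by structural induction on the characterisation of \Cref{lem:characterisation}, including the explicit bookkeeping for stack clocks in the push-pop case; your version is more detailed but the underlying idea is identical.
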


\begin{proof}
  Transitivity is not immediately clear from the definition of $\mu \freach w {pq} \nu$
  due to the existential quantification on $\mu'$.
  For $\mu, \nu \in (\R \cap [0, 1))^\X$, consider the following stronger notion: 
  \begin{align*}
    \mu \freachAE w {pq} \nu \textrm{ if } \underbrace{\forall \mu' \in \Rgeq^\X}_{\textrm{vs.~} \exists \mu' \in \Rgeq^\X} \st \exists \nu' \in \Rgeq^\X \st \fract{\mu'} = \mu \textrm{ implies: } \fract{\nu'} = \nu \textrm{ and } \mu' \reach w {pq} \nu'.
  \end{align*}
  The relation $\freachAE {} {pq}$ is easily shown to be transitive. 
  In fact, we show that $\freachAE {} {pq}$ and $\freach {} {pq}$ coincide, yielding the sought transitivity of $\freach {} {pq}$.
  One direction is immediate. For the other direction, assume $\mu \freach w {pq} \nu$ for some $\mu, \nu \in (\R \cap [0, 1))^\X$.
  By definition, there are $\tilde \mu, \tilde \nu \in \Rgeq^\X$
  \st $\tilde \mu \reach w {pq} \tilde \nu$, $\fract{\tilde\mu} = \mu$, and $\fract{\tilde\nu} = \nu$.
  In order to show $\mu \freachAE w {pq} \nu$,
  let $\mu' \in \Rgeq^\X$ be \st $\fract{\mu'} = \mu = \fract{\tilde\mu}$.
  The task is to find $\nu' \in \Rgeq^\X$ \st $\fract{\nu'} = \nu = \fract{\tilde \nu}$ and $\mu' \reach w {pq} \nu'$.
  If clock $\x_i$ is not reset in $w$, then let $\floor{\nu'(\x_i)} = \floor{\tilde\nu(\x_i)} + \floor{\mu'(\x_i)} - \floor{\tilde\mu(\x_i)}$.
  If clock $\x_i$ is reset in $w$, then let $\floor{\nu'(\x_i)} = \floor{\tilde\nu(\x_i)}$.
  This uniquely defines $\nu'$.
  Since the \TPDA is fractional, the very same run showing that $\tilde \mu \reach w {pq} \tilde \nu$ holds (including precise time elapses)
  also shows that $\mu' \reach w {pq} \nu'$ holds, as required.
\end{proof}

\section{Overview of the reductions}
\label{sec:overview}

\begin{figure}
	\begin{center}
		\begin{tikzpicture}[auto,>=latex', line/.style={draw, thick, -latex',shorten >=2pt},block/.style={rectangle, text width=7em, draw=blue, thick, fill=blue!20, align=center, text centered, rounded corners, minimum height=3.2em, drop shadow}]
			\matrix [column sep=5em, row sep=4em] {
				\node[block] (TPDA) {\TPDA}; &
					\node[block] (pushcopy) {push-copy \TPDA}; &
						\node[block] (popintegerfree) {no integral pop constraints}; \\
						&\node[block] (CFG) {context-free grammar}; &
							\node[block] (fractional) {fractional \TPDA}; \\
			};
			\begin{scope}[every path/.style=line]
				\path (TPDA) -- node [midway] {\Cref{sec:simplify:push-copy}} node [below] {} (pushcopy);
				\path (pushcopy) -- node [midway] {\Cref{sec:simplify:pop-integer-free}} node [below] {} (popintegerfree);
				\path (popintegerfree) -- node [midway] {\Cref{sec:simplify:fractional}} node [left] {} (fractional);
				\path (fractional) -- node [midway] {\Cref{sec:fractional:TPDA}} node [above] {} (CFG);
			\end{scope}
		\end{tikzpicture}
	\end{center}
	\caption{The main reductions to compute the reachability relation.}
	\label{fig:synopsis}
\end{figure}
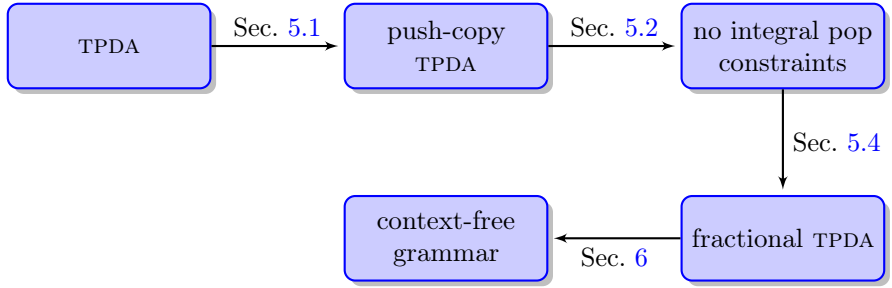

We prove our main result \Cref{thm:TPDA},
concerning the expressibility of the \TPDA reachability relation in linear arithmetic,
via a series of reductions, as outlined in \Cref{fig:synopsis}.
In the first part (\Cref{sec:simplify}) we reduce computing the reachability relation of a \TPDA
to the fractional reachability relation of a fractional \TPDA,
and in the second part (\Cref{sec:fractional:TPDA}) we compute the fractional reachability of a fractional \TPDA.
We briefly comment below on the most interesting steps.

\begin{enumerate}

	\item[\Cref{sec:simplify:push-copy}:]
		We transform a given \TPDA into a \TPDA where stack clocks are copies of control clocks.
		This step uses quantifier elimination for clock constraints (cf.~\Cref{sec:qe}),
		which introduces pop constraints of exponential size in disjunctive normal form, 
		which are in turn distributed to the transitions.
		The complexity of this step is an exponential blow-up in the number of pop transitions.
		This step essentially preserves the reachability relation.

	\item[\Cref{sec:simplify:pop-integer-free}:]
		We remove integral pop constraints, i.e., constraints of the form ${y_i - x_j \sim k}$.
		We simulate integral pop constraints by transition constraints on additional control clocks,
		generalising a previous result on \emph{untiming the stack} of a \TPDA (with a restricted syntax of constraints)
		{\cite[Theorem II.1]{ClementeLasota:LICS:2015}}.
		The complexity of this step is an exponential blow-up of the number of control locations and transitions
		in terms of the number of pop transitions.
		With the previous step, it combines to a double exponential blow-up.
		Also this step essentially preserves the reachability relation.

  \item[\Cref{sec:simplify:reset}:]
		We present a construction from \cite{FranzleQuaasShirmohammadiWorrell:2019}
		implying that all clocks can be assumed to be reset during the run, which simplifies our further development.
		This step only doubles the number of clocks.
		
		The same construction is used in \cite{FranzleQuaasShirmohammadiWorrell:2019} to show the much more interesting fact that,
		on timed automata (i.e., no stack),
		computing the reachability relation reduces to computing the \emph{reachability set},
		which is a much simpler object in principle.
		While the same also holds for \TPDA (as we show), in the presence of a stack
		the reachability set lacks the nice characterisation of \Cref{lem:characterisation},
		and thus it is not easier to compute than the reachability relation itself, as we detail in \Cref{sec:simplify:reset}.
		Therefore, we do not actually use in full generality the powerful reduction to the reachability set of \cite{FranzleQuaasShirmohammadiWorrell:2019}.

  \item[\Cref{sec:simplify:fractional}:]
		We remove all remaining non-fractional clock constraints,
		i.e., stack modulo constraints, and transition integral and modulo constraints,
		thus producing a fractional \TPDA.
		This step increases the number of control locations and transitions
		by an exponential multiplicative factor,
		thus leaving the combined complexity with the previous steps unchanged.
		This reduction preserves only the fractional reachability relation.
		The integral part of the reachability relation is reconstructed by encoding it in the (untimed) language,
		by using a technique inspired from \cite{QuaasShirmohammadiWorrell:LICS:2017}.

  \item[\Cref{sec:fractional:TPDA}:]
		In the last step, we compute the fractional reachability relation of a fractional \TPDA by encoding it into a context-free grammar
		and using Parikh's theorem to compute its commutative image.
		This is achieved by representing fractional reachability relations by clock difference relations (\CDR),
		which enjoy nice properties such as quantifier elimination (cf.~\Cref{lem:CDR:qe}),
		and thus closure under relational composition (cf.~\Cref{cor:CDR:composition}).
		The number of nonterminal symbols and productions of the context-free grammar
		is polynomial in the number of control locations and \CDR's (of which there are exponentially many in the number of clocks).
		Combined with the previous steps, this yields a grammar of doubly exponential size.
\end{enumerate}

\section{Reduction to fractional \TPDA}
\label{sec:simplify}

We show that computing the reachability relation
reduces to the same problem for fractional \TPDA.
%
Our transformation is done in three major steps:
In the first step we restrict the form of push operations (Sec.~\ref{sec:simplify:push-copy}),
in the second step we restrict pop operations (Sec.~\ref{sec:simplify:pop-integer-free}),
and finally in the last step we eliminate all integral and modular constraints,
thus obtaining a fractional \TPDA (Sec.~\ref{sec:simplify:fractional}).
We summarise the combined complexity of this sequence of reductions.

\begin{lemma}
	%
	%
	A \TPDA $\P$ can be effectively transformed into a fractional \TPDA $\QQ$
	\st	
	a linear-arithmetic description $\set{\varphi_{pq}}$ of the reachability relation of $\P$
	can effectively be computed from a linear arithmetic description $\set{\varphi_{p'q'}'}$
	of the reachability relation of $\QQ$. 
	%
	The number of control locations and the size of the stack alphabet in $\QQ$ have a double exponential blow-up,
	and the number of clocks has an exponential blow-up.
\end{lemma}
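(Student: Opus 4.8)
The plan is to obtain $\QQ$ by composing the four reductions of \Cref{sec:simplify:push-copy,sec:simplify:pop-integer-free,sec:simplify:reset,sec:simplify:fractional}, and to establish the lemma along two independent axes: how a linear-arithmetic description propagates \emph{backwards} from $\QQ$ to $\P$, and how the three size parameters \emph{accumulate}. Concretely, I would apply in sequence (1) the push-copy transformation of \Cref{sec:simplify:push-copy}, yielding $\P_1$ whose stack clocks are copies of control clocks; (2) the elimination of integral pop constraints of \Cref{sec:simplify:pop-integer-free}, yielding $\P_2$; (3) the all-clocks-reset construction of \Cref{sec:simplify:reset}, yielding $\P_3$; and (4) the elimination of the remaining integral and modular constraints of \Cref{sec:simplify:fractional}, yielding the fractional \TPDA $\QQ$. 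Each step is proved in its own subsection \emph{together with} a description of how it acts on the reachability relation, so that the effective transformation $\P \mapsto \QQ$ is simply their composition.

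For the back-translation, I would work from $\QQ$ towards $\P$. Given a linear-arithmetic description $\set{\varphi_{p'q'}'}$ of the reachability relation of $\QQ$, each subsection supplies an effective way to compute, from a description of the reachability relation of its target automaton, a description of the relation of its source automaton. For steps (1)--(3), which \emph{essentially preserve} the reachability relation, this is a syntactic operation: renaming control locations and clocks, and existentially quantifying over the auxiliary transition counters $\bar f$ introduced to realise the nondeterministic distribution of disjuncts (and the duplicated clocks of the reset step). Step (4) is the only one preserving merely the \emph{fractional} reachability relation; here I would exploit the ternary formulation of reachability, recovering the integral clock values from the Parikh image $\bar f$ of the transitions via the language-encoding inspired by \cite{QuaasShirmohammadiWorrell:LICS:2017}, so that a full description of $\P_3$ is reconstructed from the fractional description of $\QQ$. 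Composing the four effective back-translations produces $\set{\varphi_{pq}}$ from $\set{\varphi_{p'q'}'}$.

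It then remains to add up the size parameters. Step (1) incurs an exponential blow-up in the number of pop transitions, through the exponential disjunctive-normal-form output of \Cref{lem:qe-clocks}; step (2) incurs a further exponential blow-up of control locations and transitions, so the two \emph{compose} to a double-exponential number of control locations, transitions, and stack symbols. Step (3) only doubles the number of clocks. Step (4) multiplies the number of control locations and transitions by an exponential factor, but since the count is already double exponential, a further single exponential leaves it double exponential. The number of clocks is affected only by bounded factors across the steps (the copying of the push-copy step and the doubling of the reset step), which together stay within an exponential blow-up. Hence $\QQ$ has a double-exponential number of control locations and stack symbols and an exponential number of clocks, as claimed.

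The hard part will be the bookkeeping in step (4): I must verify that the fractional reduction, although it discards integral information at the level of configurations, loses \emph{no} information about the reachability relation once the integral part is recovered from $\bar f$. Everything else is a routine composition of per-step guarantees already established in their subsections, but this final reconstruction is precisely where the ternary (rather than binary) reachability relation is genuinely used, and it is the step that must be checked most carefully.
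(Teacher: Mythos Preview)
Your proposal is correct and matches the paper's approach exactly: the lemma is a summary whose proof \emph{is} the composition of the four reductions of \Cref{sec:simplify:push-copy,sec:simplify:pop-integer-free,sec:simplify:reset,sec:simplify:fractional}, with the back-translation and size bounds read off from the per-step guarantees (\Cref{thm:push-copy}, \Cref{thm:pop-integer-free}, \Cref{lem:clock:memorisation}, and the reconstruction paragraph in \Cref{sec:simplify:fractional}). One small correction to your accounting: the exponential clock blow-up does not come from ``bounded factors'' but from step~(2), where \Cref{thm:pop-integer-free} gives $\card{\X'} = O(\card{\X}\cdot\card{\Delta})$ and $\card{\Delta}$ is already exponential after step~(1); your conclusion is right, only the attribution needs adjusting.
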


\ignore{
%
%
\begin{corollary}
	\label{cor:fractional:NTA}
	The reachability relation of push-copy \TPDA
	effectively reduces to the reachability relation of fractional \TPDA
	with an exponential blow-up in control locations.
\end{corollary}
}

\subsection{Simplifying push operations: Push-copy}
\label{sec:simplify:push-copy}

A \TPDA is \emph{push-copy} if push operations can only copy control clocks into stack clocks.
More precisely, we have one stack clock $\y_i$ for each control clock $\x_i$, with $i \geq 0$;
by convention, $\x_0$ is never reset
and $\x_1$ is assumed to be $0$ at the time of push (as thus $y_1$).
The only push constraint is thus%
%
\begin{gather}
	\label{eq:copy:push:constraint}
		\psicopy(\x_0, \dots, \x_n, \y_0, \dots, \y_n) \ \equiv\ \bigwedge_{i = 0}^n \y_i = \x_i.
\end{gather}
Thus the number of control clocks $\x_0, \x_1, \ldots, \x_n$ is the same as the number of stack clocks
$\y_0, \y_1, \ldots, \y_n$ in a push-copy \TPDA.

We transform a \TPDA into a push-copy \TPDA with essentially the same reachability relation.
By pushing copies of control clocks into the stack,
we postpone checking all non-trivial push stack constraints to the time of pop,
thus substantially simplifying the form of push constraints.
This step uses quantifier elimination to construct suitable pop constraints.
The blow-up in the size of pop constraints is exponential (due to quantifier elimination).

%
%
%
Let $\psi_\push(\vec \x, \vec \z)$ be a push constraint,
and let $\psi_\pop(\vec \x', \vec \z')$ be the corresponding pop constraint.
Let $y_i$ be a new stack clock which is a copy of control clock $x_i$ at the time of push.
%
%
%
Its value $y_i'$ at the time of pop is $x_i$ plus the total time $y_1'$ that elapsed between push and pop.
%
%
Since all clocks evolve at the same rate,
for every control clock $\x_i$ and stack clock $\z_j$, we have
\begin{align}
	\label{equivalence}
	\x_i = \y'_i - \y_1' \qquad \textrm{ and } \qquad \z_j = \z_j' - \y_1'.
\end{align}
By applying the equations above,
we obtain the following new pop formula talking about the pop value of new stack clocks $\vec \y'$:
\begin{align}
	\label{eq:psi:pop'}
	\psi_\pop'(\vec \x', \vec \y') \ \equiv\ \exists \vec \z' \geq \vec 0 \st
		\psi_\push(\vec y' - \y_1', \vec \z' - \y_1') \wedge \psi_\pop(\vec \x', \vec \z').
\end{align}
Intuitively, $\psi_\pop'$ guesses the final value $\vec \z'$ of stack clocks 
as to satisfy push and pop constraints.
While ${\psi_\push(\vec y' - \y_1', \vec \z' - \y_1')}$ is not itself a clock constraint
(since variables are replaced by differences of variables),
by Lemma~\ref{lem:shift}
it is equivalent to some clock constraint $\psi_\push'(\vec y', \y_1', \vec \z')$
with a linear size blow-up.
By replacing $\psi_\push$ with $\psi_\push'$ in \eqref{eq:psi:pop'},
we can rewrite $\psi_\pop'$ as
\begin{align}
	\label{eq:psi:pop''}
	\psi_\pop'(\vec \x', \vec \y') \;\equiv\; \exists \vec \z' \geq \vec 0 \st
	\psi_\push'(\vec y', \y_1', \vec \z') \wedge
	\psi_\pop(\vec \x', \vec \z'),
\end{align}
By Lemma~\ref{lem:qe-clocks}, we can perform quantifier elimination,
obtaining a logically equivalent (quantifier free) clock constraint
$\xi_{\psi_\push, \psi_\pop}(\vec \x', \vec \y')$ of exponential size.

\paragraph{The construction}

We now present the formal construction.
Given a \TPDA $\P = (\Sigma, \Gamma, \L, \X, \ZZ, \Delta)$
we construct a push-copy \TPDA $\QQ = (\Sigma, \Gamma', \L, \X, \ZZ', \Delta')$,
where $\ZZ' = \setof {\y_i} {\x_i \in \X} \cup \set{\y_1}$
and the new stack alphabet $\Gamma'$ contains symbols of the form $\tuplesmall{\tilde\delta_\push, \tilde\delta_\pop} \in \Delta \times \Delta$, where
\begin{align*}
	\tilde\delta_\push = \trule p {\pushop {\gamma} {\psi_\push}} r
		\quad \textrm{ and } \quad
			\tilde\delta_\pop = \trule s {\popop{\gamma}{\psi_\pop}} q.
\end{align*}
The construction of $\QQ$ consists in checking $\xi_{\psi_\push, \psi_\pop}$
in place of $\psi_\pop$, assuming that the corresponding push transition was $\delta_\push$.
The latter is replaced by $\psicopy$.
Transitions in $\Delta'$ are determined as follows.
Input, test, time elapse, and clock reset transitions in $\P$
generate identical transitions in $\QQ$.
For every pair of push $\tilde \delta_\push$ and pop $\tilde \delta_\pop$ transitions in $\P$ as above,
we have a push $\delta_\push$ and pop $\delta_\pop$ transitions in $\QQ$ of the form
\begin{align}
	\label{eq:push:pop:new}
	\underbrace{\trule p {\pushop{\tuplesmall{\tilde\delta_\push, \tilde\delta_\pop}}{\psicopy}} r}_{\delta_\push}
		\ \ \textrm{and} \ \ 
			\underbrace{\trule s {\popop{\tuplesmall{\tilde\delta_\push, \tilde\delta_\pop}}{\xi_{\psi_\push, \psi_\pop}}} q}_{\delta_\pop},
\end{align}
where the constraint $\psicopy$ is defined in \eqref{eq:copy:push:constraint}.
This concludes the description of $\QQ$.
For a word $w \in (\Delta')^*$, let $\tilde w \in \Delta^*$ be obtained by replacing each
push $\delta_\push$ and pop $\delta_\pop$ transitions \eqref{eq:push:pop:new} by $\tilde \delta_\push$, resp., $\tilde \delta_\pop$.
The following lemma shows that $\P$ and $\QQ$ have the same reachability relation up to ``\,$\tilde {\ }$\,''.

\begin{example} 
\label{ex:pushcopy}
Consider a pair of corresponding push and pop transitions from Example~\ref{ex:TPDA}
($q'_1, q'_2$ are auxiliary control locations):
\begin{align*}
\tilde\delta_\push = \trule {q'_1} {\pushop{b}{\z = 1}} {q_1}, \qquad
\tilde\delta_\pop = \trule {q'_2} {\popop{b}{\floor \z \eqv 2 0 \land \fract{\z} \leq \fract{\x}}} {q_2}.
\end{align*}
According to our transformation, the new push transition is
\begin{align*}
&\delta_\push = \trule {q'_1} {\pushop{\tuplesmall{\tilde\delta_\push, \tilde\delta_\pop}}{y_1 = 0 \land y = x}} {q_1};
\end{align*}
if $x$ is identified with the special control clock $x_0$ (which needs not be copied to the stack in push transitions) 
we don't need the stack clock $y$ and the push-copy constraint simplifies to:
\begin{align*}
&\delta_\push = \trule {q'_1} {\pushop{\tuplesmall{\tilde\delta_\push, \tilde\delta_\pop}}{\y_1 = 0}} {q_1}.
\end{align*}
(Recall that $\y_1$ is a special stack clock set to $0$ at every push.)
The new pop transition
\begin{align*}
&\delta_\pop = \trule {q'_2} {\popop{\tuplesmall{\tilde\delta_\push, \tilde\delta_\pop}}{\xi}} {q_2}
\end{align*}
is derived by first instantiating the formula~\eqref{eq:psi:pop'}
which, in this case, is already in the form required by~\eqref{eq:psi:pop''}
(we use primed variables to indicate that they refer to the time of pop):
\begin{align*}
\exists \z' \geq 0 \st
		\z' - \y_1' = 1 \wedge \floor {\z'} \eqv 2 0 \land \fract{\z'} \leq \fract{\x'_0}
\end{align*}
and then by applying the quantifier elimination procedure of Lemma~\ref{lem:qe-clocks} to obtain an equivalent clock constraint:
\begin{align*}
		\xi(\x'_0, \y_1') \ \equiv \ \floor {\y_1'} \eqv 2 1 \land \fract{\y_1'} \leq \fract{\x'_0}.
\end{align*}
\end{example}

\begin{restatable}[Correctness]{lemma}{lemmaA}
	\label{lemma:A}
	%
	For every $p, q \in \L$, $w \in (\Delta')^*$, and $\mu, \nu \in \Qgeq^\X$,
	\begin{align*}
		\mu \reach {\tilde w} {pq} \nu
			\textrm{ in $\P$} \quad \textrm {if, and only if,} \quad 
				\mu \reach w {pq} \nu  \textrm{ in $\QQ$}.
	\end{align*}
\end{restatable}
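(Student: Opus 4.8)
\noindent
The plan is to reason by induction using the fixpoint characterisation of the reachability relation from \Cref{lem:characterisation}, which presents both $\reach{}{pq}$ in $\P$ and in $\QQ$ as the least relation closed under the rules \eqref{eq:reachrel:A}--\eqref{eq:reachrel:F}. The map $w \mapsto \tilde w$ sends input, test, reset, and time-elapse transitions to themselves, and sends a matched push/pop pair carrying the stack symbol $\tuplesmall{\tilde\delta_\push, \tilde\delta_\pop}$ in $\QQ$ to the matched pair $\tilde\delta_\push, \tilde\delta_\pop$ in $\P$. Since any run witnessing reachability has empty stack at both ends, it is well nested, so its push/pop matching is determined by the bracketing; consequently $w \mapsto \tilde w$ is a bijection between well-nested runs of $\QQ$ and well-nested runs of $\P$ (the $\QQ$-symbol of each matched pair being recovered from the tilde-images of its push and of its pop). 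It therefore suffices to show, by a simultaneous induction on the two derivations, that corresponding rule instances are applicable on the two sides.

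For every rule except push-pop \eqref{eq:reachrel:F} the two sides are literally identical, because the relevant transitions of $\P$ and $\QQ$ coincide and $\tilde{\ }$ acts as the identity on them; in particular the transitivity rule \eqref{eq:reachrel:E} is discharged directly by the induction hypothesis. The only substantive case is push-pop. Here the inner run $\mu \reach{u}{rs}\nu$ is matched by the induction hypothesis, so what remains is to prove that the side condition \eqref{eq:characterisation} holds for the $\P$-pair with constraints $\psi_\push, \psi_\pop$ if, and only if, it holds for the corresponding $\QQ$-pair with constraints $\psicopy$ and $\xi_{\psi_\push,\psi_\pop}$, at the same control valuations $\mu, \nu$. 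First I would unfold $\psicopy$ from \eqref{eq:copy:push:constraint}: it forces the pushed stack-clock valuation to be the copy $\mu$ of the control clocks, so at pop time the copied clocks carry values $\vec y'$ with $\y_1' = \delta_{\mu\nu}$ and, by \eqref{equivalence}, $\vec y' - \y_1' = \mu$. Substituting $\vec y'$ into $\xi_{\psi_\push,\psi_\pop}$ and using that $\xi_{\psi_\push,\psi_\pop}$ is by construction logically equivalent to the formula $\psi_\pop'$ of \eqref{eq:psi:pop'} (this is exactly where \Cref{lem:shift} and the quantifier elimination of \Cref{lem:qe-clocks} are invoked), the $\QQ$-side condition becomes the existence of a pop-time stack valuation $\vec z'$ with $\psi_\push(\mu, \vec z' - \y_1')$ and $\psi_\pop(\nu, \vec z')$. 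The change of variables $\mu_\ZZ := \vec z' - \y_1' = \vec z' - \delta_{\mu\nu}$ then turns this into precisely $(\mu, \mu_\ZZ)\models\psi_\push$ and $(\nu, \mu_\ZZ + \delta_{\mu\nu})\models\psi_\pop$, which is \eqref{eq:characterisation} for the $\P$-pair.

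The step that requires the most care — and which I expect to be the main obstacle — is matching the ranges of the two existential witnesses across the time shift $\y_1' = \delta_{\mu\nu}$. The $\P$-condition quantifies over push-time stack valuations $\mu_\ZZ \in \Qgeq^\ZZ$, whereas $\psi_\pop'$ quantifies over pop-time valuations $\vec z'$; one must check that the correspondence $\vec z' = \mu_\ZZ + \delta_{\mu\nu}$ is a bijection that respects the nonnegativity the \TPDA semantics imposes on stack clocks (which at push time reads $\mu_\ZZ \geq \vec 0$ and is automatically inherited at pop time since $\delta_{\mu\nu} \geq 0$). Once this bookkeeping is settled, both inclusions between the reachability relations follow from the simultaneous induction, establishing the claimed equivalence for every $w \in (\Delta')^*$.
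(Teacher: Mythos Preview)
Your approach is essentially the same as the paper's: both argue by induction on the derivation rules of \Cref{lem:characterisation}, with the push--pop rule \eqref{eq:reachrel:F} as the only nontrivial case, and both reduce that case to the logical equivalence between $\xi_{\psi_\push,\psi_\pop}$ and $\psi_\pop'$ from \eqref{eq:psi:pop'}. Your organisational remark that $w\mapsto\tilde w$ is a bijection on well-nested words (because the $\QQ$-stack symbol $\tuplesmall{\tilde\delta_\push,\tilde\delta_\pop}$ is recoverable from the matching) is a nice way of packaging both directions at once; the paper instead proves the $\P\Rightarrow\QQ$ direction in detail and declares the converse ``analogous''.

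There is, however, a genuine gap in the direction $\QQ\Rightarrow\P$, and your parenthetical does not close it. You correctly observe that $\mu_\ZZ\geq 0$ implies $\vec z'=\mu_\ZZ+\delta_{\mu\nu}\geq 0$, which is what one needs for $\P\Rightarrow\QQ$. But for $\QQ\Rightarrow\P$ the issue is the reverse implication: from a witness $\vec z'\geq 0$ for $\psi_\pop'$ one must produce $\mu_\ZZ:=\vec z'-\delta_{\mu\nu}\geq 0$, and \eqref{eq:psi:pop'} as written does \emph{not} enforce $\vec z'\geq y_1'$. Concretely, with $\psi_\push\equiv\true$ and $\psi_\pop\equiv(\z=0)$ the formula $\psi_\pop'$ is satisfied by $\vec z'=0$ regardless of $\delta_{\mu\nu}$, so $\QQ$ allows the pop even when $\delta_{\mu\nu}>0$, whereas in $\P$ the pop requires $\mu_\ZZ+\delta_{\mu\nu}=0$ with $\mu_\ZZ\geq 0$, forcing $\delta_{\mu\nu}=0$. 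This is not ``bookkeeping'': the fix is to strengthen \eqref{eq:psi:pop'} by conjoining the diagonal constraints $\z_j'\geq \y_1'$ (which are clock constraints, so \Cref{lem:qe-clocks} still applies); only then does the change of variables $\mu_\ZZ=\vec z'-\y_1'$ give a bijection between the two witness domains. The paper's proof, which treats the converse as ``analogous'', shares this omission.
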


\begin{proof}
	We prove the ``only if'' direction by induction on the length of derivations,
	following the characterisation of Lemma~\ref{lem:characterisation}.
	(The other direction is proved analogously.)
	Let $\mu \reach {\tilde w} {pq} \nu$ in $\P$.
	Since all transitions are the same except push and pop transitions,
	it suffices to prove it for matching pairs of push-pop transitions.
	By~\eqref{eq:characterisation}, there exist transitions	$\tilde\delta_\push, \tilde\delta_\pop$ in $\Delta$ as above,
	and a stack clock valuation $\mu_\ZZ \in \Qgeq^\ZZ$,
	\st	$(\mu, \mu_\ZZ) \models \psi_\push(\vec \x, \vec \z)$,
	${(\nu, \mu_\ZZ + \delta_{\mu \nu}) \models \psi_\pop(\vec \x', \vec \z')}$,
	$\tilde w= \tilde\delta_\push \cdot \tilde v \cdot \tilde\delta_\pop$,
	and $\mu \reach {\tilde v} {rs} \nu$ in $\P$.
	(Recall that $\delta_{\mu \nu} = \nu(\x_0) - \mu(\x_0)$ is the time elapsed between push and pop.)
	By the inductive hypothesis, $\mu \reach v {rs} \nu$ in $\QQ$. 
	By construction, $\QQ$ has matching transitions	$\delta_\push, \delta_\pop$ as in \eqref{eq:push:pop:new}.
	By definition of $\psicopy$, we have
	${(\vec \x: \mu, \vec \y: \mu, \y_1: 0) \models \psicopy(\vec \x, \vec \y)}$,
	where $\y_i$ is the stack clock copy of control clock $\x_i$.
	%
	%
	We show that
	$$(\vec \x' :\nu, \vec \y': \mu + \delta_{\mu \nu}) \models \xi_{\psi_\push, \psi_\pop}(\vec \x', \vec \y'),$$
	thus showing $\mu \reach {w} {pq} \nu$ in $\QQ$ by \eqref{eq:characterisation}
	for $w = \delta_\push \cdot v \cdot \delta_\pop$. 
	By definition,
	$\xi_{\psi_\push, \psi_\pop}(\vec \x', \vec \y')$
	is equivalent to $\psi_\pop'(\vec \x', \vec \y')$ from \eqref{eq:psi:pop'}.
	Take $\mu_\ZZ + \delta_{\mu \nu}$ as the valuation for $\vec \z'$,
	and we have
	$$(\vec \x': \nu, \vec \z' : \mu_\ZZ + \delta_{\mu \nu}, \vec \y' : \mu + \delta_{\mu \nu} ) \models
		\psi_\push(\vec \y' - \y_1', \vec \z' - \y_1') \wedge \psi_\pop(\vec \x', \vec \y')$$
	because $(\vec \x': \nu, \vec \z' : \mu_\ZZ + \delta_{\mu \nu}) \models \psi_\pop(\vec \x', \vec \z')$,
	$(\vec \y : \mu, \vec \z : \mu_\ZZ) \models \psi_\push(\vec \y, \vec \z)$,
	and $\y_1'$ has final value $\delta_{\mu \nu}$ since it was $0$ at the time of push by construction.
	This concludes the push-pop case, and the proof of the lemma.
\end{proof}

\paragraph{Reconstruction of the reachability relation}

Let $\varphi_{pq}(\bar x, \bar f, \bar x')$ be a family of linear arithmetic formulas
expressing the reachability relation of $\QQ$.
We index the list of variables $\bar f$ by writing $f_\delta$ with $\delta \in \Delta'$ a transition of $\QQ$.
Let $\Delta_\push$ and $\Delta_\pop$ be the set of transitions in $\Delta$ of the form $\tilde\delta_\push$, resp., $\tilde\delta_\pop$. 
For a transition $\tilde\delta_\push \in \Delta_\push$,
let $f_{\delta_\push, 1}, \dots, f_{\delta_\push, m}$ be all the $f_{\delta_\push}$'s of the form
$\trule p {\pushop{\tuplesmall{\delta, \tilde\delta_\pop}}{\psicopy}} r$ with $\delta = \tilde\delta_\push$,
and similarly for $f_{\delta_\pop, 1}, \dots, f_{\delta_\pop, n}$.
(The indices $m$ and $n$ depend on $\tilde\delta_\push$, resp., $\tilde\delta_\pop$, but for simplicity we omit this dependence.)
The reachability relation of $\P$ can be expressed as 
\begin{align*}
	\tilde\varphi_{pq}(\bar x, \bar g, \bar x') \;\equiv\;
		&\exists \bar f \st \varphi_{pq}(\bar x, \bar f, \bar x') \wedge
			\bigwedge_{\tilde\delta_\push \in \Delta_\push} g_{\tilde\delta_\push} = f_{\delta_\push, 1} + \cdots + f_{\delta_\push, m} \wedge \\
				&\bigwedge_{\tilde\delta_\pop \in \Delta_\pop} g_{\tilde\delta_\pop} = f_{\delta_\pop, 1} + \cdots + f_{\delta_\pop, n} \wedge
					\bigwedge_{\delta \in \Delta \setminus (\Delta_\push \cup \Delta_\pop)} g_\delta = f_\delta.
\end{align*}

The theorem below summarises the complexity of the push-copy reduction.

\begin{theorem}
	\label{thm:push-copy}
	Computing the reachability relation of a \TPDA $\P$
	reduces to computing the reachability relation of a push-copy \TPDA $\QQ$
	with an exponential blow-up in the number of transitions.
\end{theorem}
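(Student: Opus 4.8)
The plan is to assemble the ingredients already developed in this subsection and then carry out the complexity accounting; the substantive work has been done, so the argument is mostly a matter of bookkeeping. The push-copy \TPDA $\QQ$ has been constructed above: it keeps the control locations $\L$ of $\P$, replaces every push constraint by the copy constraint $\psicopy$ of \eqref{eq:copy:push:constraint}, records the guessed matching push/pop pair in the stack symbol $\tuplesmall{\tilde\delta_\push, \tilde\delta_\pop} \in \Gamma' \subseteq \Delta \times \Delta$, and defers all non-trivial constraint checking to the pop transition through $\xi_{\psi_\push, \psi_\pop}$, obtained from \eqref{eq:psi:pop''} by the quantifier elimination of \Cref{lem:qe-clocks}. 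I would first note that $\QQ$ is push-copy by construction, since $\psicopy$ is its only push constraint.

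For correctness I would invoke the already-proven \Cref{lemma:A}, which states that for all $p, q, w, \mu, \nu$ we have $\mu \reach {\tilde w} {pq} \nu$ in $\P$ if and only if $\mu \reach w {pq} \nu$ in $\QQ$, where the relabelling $w \mapsto \tilde w$ maps each push/pop transition of $\QQ$ back to the corresponding component in $\P$. This yields a run-to-run correspondence in which the clock valuations $\mu, \nu$ are preserved verbatim, so the only thing left to recover is the transition-use count, i.e., the Parikh image.

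Given a family $\set{\varphi_{pq}}$ of linear-arithmetic formulas expressing the reachability relation of $\QQ$, I would then verify that the formula $\tilde\varphi_{pq}(\bar x, \bar g, \bar x')$ displayed above expresses the reachability relation of $\P$. The point is that one transition of $\P$ may correspond to several transitions of $\QQ$: each original push $\tilde\delta_\push$ yields one variant per compatible pop it is paired with, and each original pop $\tilde\delta_\pop$ yields one variant per compatible push and per disjunct of its disjunctive-normal-form constraint $\xi_{\psi_\push, \psi_\pop}$, all of which the relabelling $w \mapsto \tilde w$ collapses back. Consequently, along corresponding runs the count $g_{\tilde\delta_\push}$ of $\tilde\delta_\push$ in $\P$ equals the sum $f_{\delta_\push, 1} + \cdots + f_{\delta_\push, m}$ of the counts of its $\QQ$-variants, symmetrically for pops, while every other transition is in bijection. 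Since this aggregation is a conjunction of linear equalities over the existentially quantified $\bar f$, the result $\tilde\varphi_{pq}$ is again a linear-arithmetic formula, and \Cref{lemma:A} guarantees it captures exactly $\reach {} {pq}$ in $\P$.

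Finally I would track the complexity. Input, test, elapse, and reset transitions are copied verbatim; for each of the at most quadratically many pairs $\tuplesmall{\tilde\delta_\push, \tilde\delta_\pop}$ there is a single push transition carrying the conjunctive constraint $\psicopy$. The exponential blow-up arises solely at the pop transitions: by \Cref{lem:qe-clocks} the constraint $\xi_{\psi_\push, \psi_\pop}$ is a quantifier-free clock constraint of exponential size in disjunctive normal form, and by the convention of \Cref{sec:TPDA} that all clock constraints are conjunctive, each of its exponentially many disjuncts is split off to a separate pop transition using nondeterminism, giving exponentially many pops per pair and hence an exponential blow-up in the total number of transitions. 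I expect the only real subtlety to be the faithfulness of the aggregation in the previous paragraph: one must check that along the bijection of runs from \Cref{lemma:A} each occurrence of an original transition of $\P$ corresponds to exactly one occurrence of one of its $\QQ$-variants; this is exactly the content of the run-to-run correspondence, so beyond careful bookkeeping no genuine obstacle remains.
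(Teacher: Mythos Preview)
Your proposal is correct and follows exactly the paper's own approach: invoke the already-established construction of $\QQ$, appeal to \Cref{lemma:A} for the run-to-run correspondence, use the displayed aggregation formula $\tilde\varphi_{pq}$ to recover the Parikh image over $\Delta$, and justify the exponential blow-up by the DNF output of \Cref{lem:qe-clocks} split across nondeterministic pop transitions. There is nothing to add; the paper treats \Cref{thm:push-copy} as a summary of precisely these ingredients rather than as a standalone proof.
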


The exponential blow-up in the number of transitions is justified as follows.
Since the $\xi_{\psi_\push, \psi_\pop}$'s are obtained applying Lemma~\ref{lem:qe-clocks} to a conjunctive formula,
it is in fact a disjunction of exponentially many conjunctive clock constraints $\xi_{\psi_\push, \psi_\pop}^j$ of polynomial size:
\begin{align*}
	\xi_{\psi_\push, \psi_\pop} \;\equiv\;
		\xi_{\psi_\push, \psi_\pop}^1 \vee \cdots \vee \xi_{\psi_\push, \psi_\pop}^m.
\end{align*}
By using nondeterminism, we split a pop transition
$\trule s {\popop{\tuplesmall{\tilde\delta_\push, \tilde\delta_\pop}}{\xi_{\psi_\push, \psi_\pop}}} q$
into transitions
$$\trule s {\popop{\tuplesmall{\tilde\delta_\push, \tilde\delta_\pop}}{\xi_{\psi_\push, \psi_\pop}^1}} q, \dots,
\trule s {\popop{\tuplesmall{\tilde\delta_\push, \tilde\delta_\pop}}{\xi_{\psi_\push, \psi_\pop}^m}} q,$$
thus arriving at a transition relation of exponential size, as claimed above.

From now on we consider push-copy \TPDA only, and consequently we use stack clocks $\y_1, \y_1, \dots$,
as copies of the corresponding control clocks $\x_0, \x_1, \dots$.
We continue with two easy preprocessing steps. 

\oldsubsection*{Simplifying pop constraints I: No stack/stack-stack pop constraints}
\label{sec:simplify:stack-stack}

Thanks to \Cref{thm:push-copy} we can assume that the automaton is push-copy,
i.e., stack clocks are copies of control clocks.
An immediate consequence is that diagonal pop constraints involving only stack clocks
can be replaced by checking \emph{at the time of push} a transition constraint between the corresponding control clocks.
For instance, the stack-stack pop operation $\popop \alpha {y_i - y_j \sim k}$
can be replaced by $\popop \alpha \true$
provided that $\x_i - \x_j \sim k$ holds at the time of push.
In this way we can remove diagonal stack-stack pop constraints.
Non-diagonal stack constraints like $y_i \sim k$ are converted to the diagonal form $y_i - x \sim k$
where $x$ is an auxiliary control clock which is assured to have value 0 at every pop.
%
%
We henceforth assume that there are no stack/stack-stack pop constraints.

\oldsubsection*{Simplifying pop constraints II: (Possibly negated) atomic pop constraints}
\label{sec:simplify:atomic-pop}

\begin{figure}%
	\centering%
	\includegraphics[scale=1]{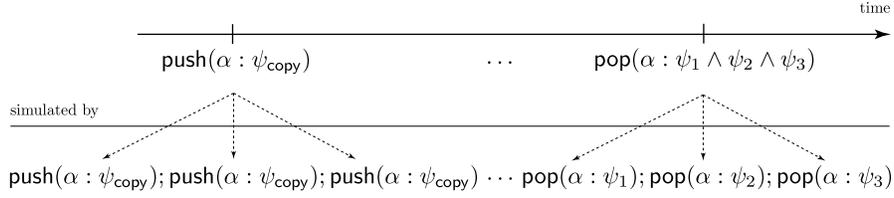}%
	\caption{Atomic pop constraints.}%
	\label{fig:atomic:pop}%
\end{figure}

Pop operations $\popop \alpha {\psi_\pop}$
can be simplified in order for $\psi_\pop$ to be a (possibly negated) atomic clock constraint.
The idea is to push many copies of the same symbol; \cf Fig.~\ref{fig:atomic:pop}.
Formally, let $n$ be the maximum number of atomic constraints in any pop constraint $\psi_\pop$.
A push operation $\trule p {\pushop \alpha {\psicopy}} q$ is replaced by pushing $n$ copies
$\alpha_1, \dots, \alpha_n$ of $\alpha$:
\begin{align*}
	\trule p {\pushop {\alpha_1} {\psicopy}; \pushop {\alpha_2} {\psicopy}; \cdots; \pushop {\alpha_n} {\psicopy}} q.
\end{align*}
A pop operation $\trule p {\popop \alpha {\psi_1 \wedge \cdots \wedge \psi_n}} q$
(where we allow the same $\psi_i$ to appear many times) is replaced by
\begin{align*}
	\trule p {\popop {\alpha_1} {\psi_1}; \cdots; \popop {\alpha_n} {\psi_n}} q.
\end{align*}
The correctness of this transformation uses the fact that the \TPDA is push-copy.
(More generally, it suffices that there is no guessing of stack clocks at the time of push,
because we cannot enforce that the same guesses is made $n$ times.)
The complexity of this step is a linear blow-up in the number of transitions.



%
%

\subsection{Simplifying pop constraints III: No integral pop constraints}
\label{sec:simplify:pop-integer-free}

\newcommand{\alphapsi}{\alpha}

A \TPDA is \emph{pop-integer-free} if pop operations do not have integral constraints $\floor \x - \floor \y \sim k$,
i.e., they only have modular and fractional constraints.
The aim of this section is to remove such integral constraints from pop transitions,
while being able to reconstruct the reachability relation.
In a preliminary step,
we convert $\floor \x - \floor \y \sim k$ into fractional and classical constraints $\x - \y \sim k$; cf.~Remark~\ref{rem:sugar}.
The reason for doing this is that the semantics of classical diagonal constraints
is invariant under time elapse (which is not true for integral constraints)
and this will simply the proof of \Cref{thm:pop-integer-free}.

It thus remains to remove classical pop constraints of the form 
\begin{align}
	\label{eq:psi}
	\psi \equiv y_i - x_j \sim k.
\end{align}
Let $\P = (\Sigma, \Gamma, \L, \X, \ZZ, \Delta)$ be a push-copy \TPDA
and fix a pop constraint $\psi$ as above
occurring in a pop operation of the form $\popop \alpha \psi$.
For convenience we assume, w.l.o.g., that every stack symbol $\alpha$ appears   
with a unique pop constraint, i.e., there are no two pop operations
$\popop \alpha {\psi_1}$ and $\popop \alpha {\psi_2}$ with the same 
stack symbol but different clock constraints.
The idea is to introduce few extra \emph{control clocks} of the form $x_\psi$
and replace every occurrence of pop operation $\popop \alpha \psi$ with constraint $\psi$
with a sequence of two operations of the form $\popop \alpha \true; \testop{x_\psi - x_j \sim k}$.
In other words, we simulate a stack constraint with a \emph{transition constraint}.
The cost of removing one such $\psi$ is to add a constant number of clocks and stack symbols,
and multiply the number of control locations by a constant.
By iterating the construction we can remove all such pop constraints;
%
the construction preserves all the other constraints present in the automaton.
The combined complexity stated below follows from the fact that there are at most $\card \Delta$ pop constraints $\psi$'s.

\begin{theorem}
	\label{thm:pop-integer-free}
	For every \TPDA $\P = (\Sigma, \Gamma, \L, \X, \ZZ, \Delta)$
	we can produce a \TPDA $\QQ = (\Sigma, \Gamma', \L', \X', \ZZ, \Delta')$
	not containing pop integral constraints $\floor \x - \floor \y \sim k$
	\st the reachability relation of $\P$ is efficiently computable from that of $\QQ$.
	The complexity of the construction is
	$\card{\Gamma'} = O(\card \Gamma \cdot \card \Delta)$,
	$\card{\X'} = O(\card \X \cdot \card \Delta)$,
	$\card{\L'} = \card \L \cdot 2^{O(\card \Delta)}$, and
	$\card{\Delta'} = \card \Delta \cdot 2^{O(\card \Delta)}$.
\end{theorem}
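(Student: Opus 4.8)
The plan is to eliminate integral pop constraints of the form $\psi \equiv y_i - x_j \sim k$ one at a time. For a fixed such $\psi$ attached (by our uniqueness assumption) to a unique stack symbol $\alpha$, I would introduce a small number of fresh control clocks $x_\psi$ whose job is to \emph{record}, at the moment of each push of $\alpha$, the value that the stack clock $y_i$ then held relative to the control clocks. Because the \TPDA is push-copy, at the time of push we have $y_i = x_i$, so $x_\psi$ can simply be set to the relevant control clock value by a reset. Then, instead of checking the stack constraint $y_i - x_j \sim k$ at pop time, I would replace the pop operation $\popop \alpha \psi$ by the pair $\popop \alpha \true;\testop{x_\psi - x_j \sim k}$, turning a stack constraint into an ordinary transition constraint on control clocks.

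\medskip
\noindent The key difficulty is that $x_\psi$ is a \emph{single} control clock, whereas the stack of $\alpha$-symbols may contain many pushed copies simultaneously, each of which was pushed at a different time and therefore ``wants'' a different recorded value. A control clock cannot simultaneously remember all of these. The standard device, following \cite{ClementeLasota:LICS:2015}, is to exploit the \emph{last-in-first-out} discipline of the stack: because pops of $\alpha$ undo pushes of $\alpha$ in reverse order, one need only remember, at any moment, the \emph{most recently pushed} still-unmatched value, and restore the previous value upon pop. I would therefore guess nondeterministically, and store in the finite control, whether the currently relevant $x_\psi$ is ``active'', and on each push of $\alpha$ save the old content of $x_\psi$ onto the stack (this is legitimate precisely because the stack is available and push-copy lets us copy control clocks freely), restoring it at the matching pop. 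This bookkeeping is what forces the exponential blow-up $2^{O(\card\Delta)}$ in control locations and transitions: the finite control must track, across all the constraints $\psi$ being removed, which auxiliary clocks are currently ``in scope''.

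\medskip
\noindent Concretely the steps are: (1) convert each integral constraint $\floor\x-\floor\y\sim k$ into a classical constraint $\x-\y\sim k$ via \Cref{rem:sugar}, as the text already indicates, so that the constraint being simulated is \emph{time-elapse invariant} --- this is essential because $x_\psi$ and $y_i$ drift together in time, so the recorded difference $x_\psi - x_j$ stays meaningful only for a constraint insensitive to uniform shifts; (2) for each $\psi$, add the auxiliary clock(s) $x_\psi$ and the few auxiliary stack symbols and control locations implementing the save/restore protocol; (3) prove a correctness lemma, by induction on the push-pop derivation tree of \Cref{lem:characterisation}, establishing a reachability-relation-preserving correspondence between runs of $\P$ and of $\QQ$, analogous to \Cref{lemma:A}; and (4) reconstruct the linear-arithmetic description of $\P$'s reachability relation from that of $\QQ$ by existentially projecting away the transition counts of the auxiliary transitions, exactly as in the push-copy step.

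\medskip
\noindent The main obstacle, which I would tackle first, is the correctness argument in step (3): one must verify that the save/restore protocol on the stack correctly reconstructs, at each pop of $\alpha$, the value $x_\psi$ held at the \emph{matching} push, and that the time-elapse invariance secured in step (1) guarantees $x_\psi - x_j \sim k$ at pop time is equivalent to $y_i - x_j \sim k$. The complexity bounds then follow by counting: each $\psi$ contributes a constant number of clocks and stack symbols and multiplies the control locations by a constant, and since there are at most $\card\Delta$ constraints $\psi$, iterating yields the stated $O(\card\Gamma\cdot\card\Delta)$, $O(\card\X\cdot\card\Delta)$, $\card\L\cdot 2^{O(\card\Delta)}$, and $\card\Delta\cdot 2^{O(\card\Delta)}$ bounds.
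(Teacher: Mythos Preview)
Your overall plan---convert to classical constraints, introduce an auxiliary control clock $x_\psi$, and replace $\popop\alpha\psi$ by $\popop\alpha\true;\testop{x_\psi-x_j\sim k}$---matches the paper's intent, and your complexity accounting at the end is right. The gap is in your handling of nested pushes.

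You propose to ``save the old content of $x_\psi$ onto the stack'' at push and ``restore it at the matching pop''. Saving is indeed possible via push-copy, but \emph{restoring} is not: in the \TPDA model the only operations on control clocks are $\resetop\cdot$ (to $0$), $\testop\cdot$, and $\elapse$; there is no primitive that copies a stack-clock value back into a control clock. If instead you try to use the saved stack copy directly at the outer pop, you are checking a constraint of the form $y_\psi - x_j \sim k$ on a stack clock, which is exactly the kind of pop constraint you are trying to eliminate. So the save/restore protocol cannot be implemented in this model.

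The paper circumvents this by a structural observation you did not identify: it looks at the \emph{critical interval} of each push/pop pair (from the last reset of $x_i$ before push to the last reset of $x_j$ before pop) and splits into Type~A constraints $y_i-x_j\precsim k$ and Type~B constraints $y_i-x_j\succsim k$. For Type~A, outer constraints imply inner ones, so only the outermost active constraint needs checking; for Type~B it is the reverse. Moreover, at most two critical intervals can overlap at any time, so \emph{two} copies $x_i^0,x_i^1$ of $x_i$ suffice, with finite control tracking which is ``frozen'' and which is ``tracking''. This replaces your unbounded stack-based memory by a bounded one, which is what makes the simulation by control clocks possible.
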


It is remarkable that such a simulation is at all possible.
A priori, each new push-pop pair creates a novel timing constraint on the run.
This should be contrasted with \emph{fractional stack constraints},
which cannot be removed by adding control clocks \cite{UezatoMinamide:LPAR15}.

If the original \TPDA $\P$ contained only classical stack constraints
(i.e., neither modular $\floor \x - \floor \y \eqv m k$ nor fractional $\fract x \leq \fract y$ stack constraints, as it is the case with \dtPDA),
then $\QQ$ won't have any stack constraints \emph{at all}.
Therefore, the stack of $\QQ$ is essentially \emph{untimed},
which allows us to recover (and actually generalise) the following result previously announced in \cite[Theorem II.1]{ClementeLasota:LICS:2015}.

\begin{corollary}[Stack untiming \cite{ClementeLasota:LICS:2015}]
	\label{cor:untiming}
	\TPDA without modular nor fractional stack constraints (such as \dtPDA)
	effectively recognise the same class of timed languages as \TPDA with untimed stack.
\end{corollary}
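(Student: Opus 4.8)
The plan is to read the corollary off the two reductions of this section, \Cref{thm:push-copy} (push-copy) and \Cref{thm:pop-integer-free} (elimination of integral/classical pop constraints), while tracking which \emph{kinds} of atomic constraint survive each step. The easy inclusion is immediate: a \TPDA with untimed stack uses only the stack constraint $\true$, which is a (vacuous) classical constraint free of modular and fractional atoms, so untimed-stack \TPDA are already of the required kind. The content is the converse: a \TPDA $\P$ whose stack constraints contain no modular nor fractional atoms admits a timed-language-equivalent \TPDA $\QQ$ with untimed stack.

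First I would apply the push-copy reduction to $\P$, obtaining $\P_1$, and check that it introduces no modular nor fractional stack atoms. The new push constraints are $\psicopy$, a conjunction of equalities $\y_i = \x_i$, hence classical; each new pop constraint is the eliminant $\xi_{\psi_\push, \psi_\pop}$ of \eqref{eq:psi:pop''}. By \Cref{lem:shift}, shifting a classical (resp.\ integral) constraint again yields a classical (resp.\ integral) one, so the body of \eqref{eq:psi:pop''} uses only classical and integral atoms; since both rational and integral difference logic admit quantifier elimination within themselves \cite{Koubarakis:PKRR:1994} --- and, as noted after \Cref{lem:qe-clocks-int-rat}, the integral case needs no modulo atoms --- $\xi_{\psi_\push, \psi_\pop}$ can be rewritten with neither modular nor fractional atoms. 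Thus $\P_1$ is again a push-copy \TPDA of the restricted kind.

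Next I would eliminate the remaining diagonal pop constraints using the construction behind \Cref{thm:pop-integer-free}, applied after the two preprocessing steps that discard stack--stack pop constraints and reduce pop constraints to atomic form. A classical pop constraint $\y_i - \x_j \sim k$ is replaced by $\popop \alpha \true$ followed by a control-clock test $\testop{\x_\psi - \x_j \sim k}$; the same device with an \emph{integral} test removes an integral pop constraint $\floor{\y_i} - \floor{\x_j} \sim k$. Crucially, for the corollary one applies this direct simulation to integral constraints rather than the integral-to-classical rewriting of \Cref{rem:sugar} used in the main pipeline, since the latter would create fractional atoms. As $\P_1$ has no modular nor fractional stack atoms, after this step every pop constraint is $\true$; the stack clocks installed by $\psicopy$ are then never inspected, so they are observationally irrelevant and $\psicopy$ may be replaced by $\true$, giving a \TPDA $\QQ$ with untimed stack.

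Finally I would check that the pipeline preserves the \emph{timed} language and not merely the reachability relation: push-copy preserves runs by \Cref{lemma:A} and leaves every input-reading transition in place, the pop elimination only inserts silent $\testop{\cdot}$ transitions, and dropping $\psicopy$ removes nothing observable; hence $\P$ and $\QQ$ accept the same timed language. The real mathematical weight lies in \Cref{thm:pop-integer-free} --- the a priori surprising fact that the unboundedly many timing constraints generated by nested push/pop pairs can be verified with a fixed number of control clocks --- which I assume as given. Relative to it, the corollary is the bookkeeping above, whose single delicate point is to ensure that no fractional atom is ever introduced (in particular, integral pops must be simulated directly), so that the stack genuinely becomes untimed.
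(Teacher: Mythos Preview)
Your approach matches the paper's own (very brief) argument: the corollary is presented immediately after \Cref{thm:pop-integer-free} as a direct consequence of the push-copy reduction followed by the elimination of classical pop constraints, with the observation that if no modular or fractional stack atoms were present to begin with, none remain afterwards and the stack becomes untimed.

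There is, however, a genuine gap in your treatment of \emph{integral} stack constraints. Your claim that ``shifting a classical (resp.\ integral) constraint again yields a classical (resp.\ integral) one'' is false for the integral case: the proof of \Cref{lem:shift} shows that $\floor{x-x_0}-\floor{y-x_0}=\floor x-\floor y+\condone{\fract y<\fract{x_0}}-\condone{\fract x<\fract{x_0}}$, so shifting an integral atom produces fractional atoms via the correction terms. Consequently, after push-copy a \TPDA with integral stack constraints may acquire fractional pop atoms, and your subsequent claim that the pop-elimination construction can be applied ``directly'' to an integral pop constraint $\floor{y_i}-\floor{x_j}\sim k$ with an integral control test is also unjustified: the correctness proofs of \Cref{lem:soundness:A,lem:completeness:A,lem:soundness:B,lem:completeness:B} rely on the time-elapse invariance of classical diagonal constraints (explicitly flagged in the paper just before \eqref{eq:psi}), which fails for $\floor{y_i}-\floor{x_j}$.

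For \TPDA with only \emph{classical} stack constraints---which covers \dtPDA and is what the paper's preceding sentence actually argues---your proof is correct: classical atoms are shift-invariant, the body of \eqref{eq:psi:pop''} stays in rational difference logic, and Koubarakis' quantifier elimination keeps it there; then \Cref{thm:pop-integer-free} eliminates every remaining pop constraint, leaving the stack untimed. The paper's own informal justification has the same scope (it says ``only classical stack constraints''), so the discrepancy between the corollary's wording and the argument is inherited from the paper; but your attempt to cover the integral case introduces errors rather than resolving it.
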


\paragraph{Intuition}

\begin{figure}
	\centering
	\includegraphics[scale=1]{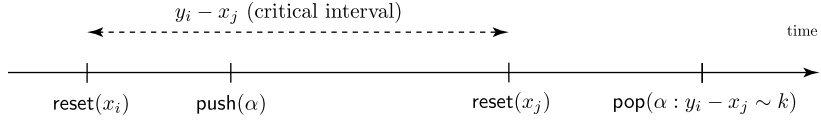}
	\caption{The critical interval of a pop constraint $y_i - x_j \sim k$.}
	\label{fig:critical}
\end{figure}

In this section we provide some intuition behind the formal construction leading to \Cref{thm:pop-integer-free},
which will be presented in \Cref{sec:simplify:pop-integer-free:A,sec:simplify:pop-integer-free:B}.

Fix a pop constraint $\psi$ as in \eqref{eq:psi}.
The \emph{critical interval} of a matching push and pop operation
is the interval between the last reset of $x_i$ before push
and the last reset of $x_j$ before pop;
cf.~Fig.~\ref{fig:critical}.
(In the picture $\resetop{x_i}$ happens before $\resetop{x_j}$, but it could also happen the other way around.)
The position in time of the two resets above is the only information necessary to determine
whether the constraint $\psi$ holds,
because the stack clock $y_i$ is a copy of the control clock $x_i$ at the time of push
(thanks to Sec.~\ref{sec:simplify:push-copy}).
We would like that critical intervals are either nested within each other,
or disjoint.

\begin{figure}
	\centering
	\includegraphics[scale=1]{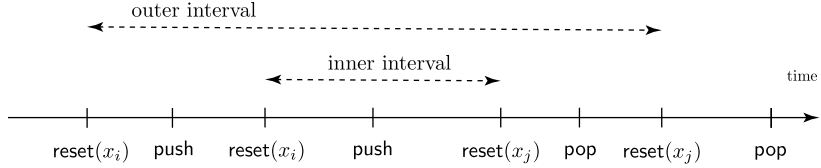}
	\caption{Nested critical intervals.}
	\label{fig:critical_nested}
\end{figure}

If we have nested push and pop operations,
then also the corresponding critical intervals are nested; cf.~Fig.~\ref{fig:critical_nested}.
In nested push and pop operations,
the left endpoint $\resetop {x_i}$ of the inner critical interval can only (possibly) move further to the left:
This happens when there is an earlier reset of $x_i$ before the outer push.
Symmetrically, the right endpoint $\resetop{x_j}$ can only (possibly) move further to the right,
which happens when there is a later reset of $x_j$ before the outer pop.
Thus inner critical intervals are shorter than the outer ones,
with useful consequences regarding which pop constraints we really need to check.

\begin{figure}
	\centering
	\includegraphics[scale=1]{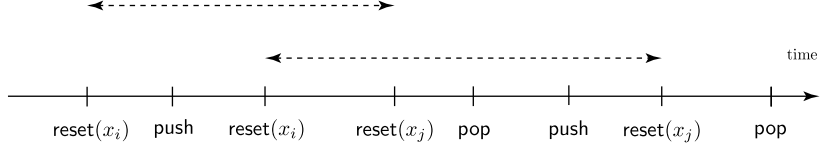}
	\caption{Overlapping critical intervals.}
	\label{fig:critical_overlapping}
\end{figure}

Sometimes the critical intervals are neither nested nor disjoint.
One such example is shown in Fig.~\ref{fig:critical_overlapping},
where in a push-pop-push-pop sequence (of the same symbol)
the second reset of $x_i$ happens before the first reset of $x_j$.
The crucial observation in such cases is that
at most \emph{two} critical intervals can overlap at any given moment.
For this reason, in \Cref{sec:simplify:pop-integer-free:A,sec:simplify:pop-integer-free:B}
we actually use two copies $x_i^0, x_i^1$ of $x_i$, instead of only one.

We say that a pop constraint $\psi$ is \emph{active} in the interval
between $\pushop{\alpha}{\psicopy}$ and its matching $\popop{\alpha}{\psi}$.
%
The way in which we simulate an active constraint $\psi$ depends on whether it is of type A or B.

\begin{figure}
	\centering
	\includegraphics[scale=1]{./imgs/fig_critical_nested_A.ai}
	\caption{A type A active constraint $\psi \equiv y_i - x_j \precsim k$.}
	\label{fig:critical_nested_A}
\end{figure}

Type A constraints are of the form
\begin{align}
	\tag{\textbf{A}}
	\psi \equiv y_i - x_j \precsim k, \qquad \textrm{ where } \precsim \;\in\! \set{\leq, <}.
\end{align}
Outer type A constraints subsume the inner ones,
thus it suffices to verify the outermost critical interval;
\cf~Fig.~\ref{fig:critical_nested_A}.
%
%
The picture also shows a simplified scheme to remove type A constraints without overlapping intervals,
which works as follows.
The automaton makes $\psi$ active the first time $\alphapsi$ is pushed on the stack and $\psi$ was inactive.
We introduce an additional control clock $x_\psi$,
which is reset whenever $x_i$ is reset and $\psi$ is not active.
When $\psi$ is active, we never reset $x_\psi$.
A $\popop {\alphapsi} {\psi}$ 
is simulated by $\popop {\alphapsi} \true; \testop{x_\psi - x_j \precsim k}$.
(The actual construction is slightly more complicated in order to deal with overlapping intervals; \cf~Sec.~\ref{sec:simplify:pop-integer-free:A}.)
%

\begin{figure}
	\centering
	\includegraphics[scale=1]{./imgs/fig_critical_nested_B.ai}
	\caption{A type B active constraint $\psi \equiv y_i - x_j \succsim k$.}
	\label{fig:critical_nested_B}
\end{figure}

Type B constraints are of the form 
\begin{align}
	\tag{\textbf{B}}
	\psi \equiv y_i - x_j \succsim k, \qquad \textrm{ where } \succsim \;\in\! \set{\geq, >}.
\end{align}
Since inner type B constraints subsume the outer ones,
it suffices to verify only the innermost active type B critical interval;
\cf~Fig.~\ref{fig:critical_nested_B}.
A simplified scheme to remove type B constraints in the absence of overlapping intervals is as follows.
The automaton guesses that the type B pop constraint $\psi \equiv y_i - x_j \succsim k$ becomes active.
The additional control clock $x_\psi$ is reset every time $x_i$ is reset.
%
%
A $\popop {\alphapsi} \psi$ is simulated by $\popop {\alphapsi} \true; \testop{x_\psi - x_j \succsim k}$.
(The actual construction is slightly more complicated in order to deal with overlapping intervals; \cf~Sec.~\ref{sec:simplify:pop-integer-free:B}.)
%
%
%
We have two constructions, depending on whether $\psi$ is of type A or of type B.

\subsubsection{The construction---Type A}
\label{sec:simplify:pop-integer-free:A}


For $\psi$ of type $A$, we construct a \TPDA
$\P_A = (\Sigma, \Gamma', \L_A, \X', \ZZ, \Delta_A)$ as follows.
We add two extra copies $x_i^0$ and $x_i^1$ of clock $x_i$;
$\X' = \X \cup \set{x_i^0, x_i^1}$.
%
%
A control location is either of the form $(p, d)$ or $(p, d, e)$,
where $p \in \L$ is the current control location.
The index $d \in \set{0, 1}$ indicates that $x_i^d$ is the \emph{tracking copy} of $x_i$
i.e., whenever $x_i$ is reset, so is $x_i^d$.
The index $e \in \set{0, 1}$ in $(p, d, e)$
indicates that $x_i^e$ is the \emph{frozen copy} of $x_i$,
which is not reset anymore and used to store a previous value of $x_i$.
Thus, $L_A = L \times (\set {0, 1} \cup \set {0, 1}^2)$.
We introduce a new stack symbol $\hat\alphapsi$ which denotes that $\psi$ becomes active when pushed on the stack;
thus, $\Gamma' = \Gamma \cup \set{\hat\alphapsi}$.

Transitions in $\Delta_A$ are as follows.
Let $\delta = \trule p \op q \in \Delta$ be a transition in $\P$.
If it is either an input $\op = \readop a$,
test $\op = \testop \varphi$,
time elapse $\op = \elapse$ transition,
reset $\op = \resetop \Y$ not resetting $x_i \not\in \Y$,
push $\op = \pushop \gamma {\psicopy}$ with $\gamma \neq \alphapsi$
(where $\psicopy$ is defined in \eqref{eq:copy:push:constraint}),
or pop $\op = \popop \gamma {\psipop}$ with $\gamma \neq \alphapsi$,
then it generates corresponding transitions in $\P_A$
\begin{align}
	\label{eq:A:generic}
	&\delta_1  = \trule {(p, d)} \op {(q, d)} \textrm{ and } \delta_2 = \trule {(p, d, e)} \op {(q, d, e)} \in \Delta_A,
\end{align}
for every choice of $d, e$.
A reset transition $\op = \resetop {\Y \cup \set{x_i}}$ resetting $x_i$
generates transitions in $\P_A$ of the form
\begin{align}
	\label{eq:A:reset:1}	
	&\trule {(p, d)} {\resetop {\Y \cup \set{x_i, x_i^d}}} {(q, d)}, \\
	\label{eq:A:reset:2}
	&\trule {(p, d, e)} {\resetop {\Y \cup \set{x_i, x_i^{1-d}}}} {(q, d, 1 - d)} \in \Delta_A.
\end{align}
A push transition $\op = \pushop {\alphapsi} {\psicopy}$
generates transitions
\begin{align}
	\label{eq:A:push:1}
	&\trule {(p, d)} {\pushop {\hat\alphapsi} {\psicopy}} {(q, d, d)}, \\
	\label{eq:A:push:2}
	&\trule {(p, d, e)} {\pushop {\alphapsi} {\psicopy}} {(q, d, e)} \in \Delta_A.
\end{align}
A pop transition $\op = \popop{\alphapsi} {\psi}$ generates transitions%
%
\begin{align}
	\label{eq:A:pop:1}
	&\trule{(p, d, e)} {\popop{\hat\alphapsi}{\true};\; \testop{x_i^d - x_j \precsim k}} {(q, e)}, \\
	\label{eq:A:pop:2}
	&\trule{(p, d, e)} {\popop{\alphapsi}{\true}} {(q, d, e)} \in \Delta_A.
\end{align}
This concludes the description of $\Delta_A$, and of $\P_A$.
For each new transition $\delta_A$ added in the equations \Cref{eq:A:generic,eq:A:reset:1,eq:A:reset:2,eq:A:push:1,eq:A:push:2,eq:A:pop:1,eq:A:pop:2} above,
let $\tilde \delta_A = \delta \in \Delta$ be the originating transition%
\footnote{
	Formally speaking, in \eqref{eq:A:pop:2} there may be different $\delta$'s inducing the same $\delta_A$,
	and thus $\tilde \delta_A$ would not be well defined.
	This can be avoided by recording in the stack symbol $\alpha$ the pop transition $\delta$,
	but we avoid it for simplicity.
} of $\P$.
The mapping ``$\;\tilde{}\;$'' is extended pointwise to a mapping $\Delta_A^* \to \Delta^*$ and it will be used in the correctness statements below.
Recall that ${\delta_{\mu\nu} := \nu(x_0) - \mu(x_0)}$ is the total time elapsed in the run (since $x_0$ is never reset).


\begin{example}
	\label{ex:type_A}
	We illustrate the construction on an example.
	Consider the timed language $L$ over the ternary alphabet $\Sigma = \set{a, b, c}$
	consisting of all timed words $w$
	whose untiming is of the form $a^n (c^* b)^n$ with $n \in \N$
	\st the amount of time between an $a$ and the last $c$ before the matching $b$ is at most $5$:
	\begin{align*}
		w = (a, t_n) \cdots (a, t_1) \cdots (c, t_{i_1}) (b, t_{i_1+1}) \cdots (c, t_{i_2}) (b, t_{i_2+1}) \cdots (c, t_{i_n}) (b, t_{i_n+1})
	\end{align*}
	%
	and, for every $1 \leq j \leq n$, $t_{i_j} - t_j \leq 5$.
	A direct translation of $L$ yields the following \TPDA.
	There are two control locations $p, q$,
	two control clocks $x_1, x_2$, one stack clock $y_1$,
	and one stack symbol $\Gamma = \set \alpha$.
	In $p$ the automaton can read $a$ and pushes it on the stack together with $x_1$,
	which is reset for the occasion.
	In $p$ the automaton can also nondeterministically move to $q$, without performing any action.
	In $q$ the automaton can read a letter $c$ and reset $x_2$,
	or it can read a letter $b$ and pop the stack
	provided the timing constraint holds:
	\begin{align*}
		&\tuple{p, \elapse; \resetop {x_1}; \readop a; \pushop {\alpha} {y_1 = x_1}, p}, \\
		&\tuple{p, \readop \varepsilon, q}, \\
		&\tuple{q, \elapse; \resetop {x_2}; \readop c; q}, \\
		&\tuple{q, \elapse; \readop b; \popop {\alpha} {y_1 - x_2 \leq 5}; q}.
	\end{align*}
	The language $L$ is recognised by $A$
	by considering $p$ the initial control location,
	$q$ the final one,
	and the stack is empty at the beginning and at the end of the run.
	This concludes the description of the automaton.
	The automaton obtained by applying the transformation
	to remove the type A classical pop constraint $y_1 - x_2 \leq 5$
	yields a \TPDA $A'$ with new stack alphabet $\Gamma' = \set {\alpha, \hat \alpha}$,
	a new set of control clocks $\X = \set{x_1, x_2, x_1^0, x_1^1}$,
	%
	control locations of the form 
	$(p, d), (p, d, e), (q, d), (q, d, e)$
	for every $d, e \in \set{0, 1}$,
	and transitions
	\begin{align*}
		&\tuple{(p, d), \elapse; \resetop {x_1, x_1^d}; \readop a; \pushop {\hat\alpha} {y_1 = x_1}, (p, d, d)}, \\
		&\tuple{(p, d, e), \elapse; \resetop {x_1, x_1^{1-d}}; \readop a; \pushop \alpha {y_1 = x_1}, (p, d, 1-d)}, \\
		&\tuple{(p, d, e), \readop \varepsilon, (q, d, e)}, \\
		&\tuple{(q, d, e), \elapse; \resetop {x_2}; \readop c, (q, d, e)}, \\
		&\tuple{(q, d, e), \elapse; \readop b; \popop \alpha \true, (q, d, e)}, \\
		&\tuple{(q, d, e), \elapse; \readop b; \popop {\hat \alpha} \true; \testop {x_1^d - x_2 \leq 5}, (q, e)},
	\end{align*}
	for every $d, e \in \set{0, 1}$.
	The new \TPDA $A'$ does not contain integral/classical pop constraints
	and recognises the same language $L$
	if we consider $(p, 0)$ the initial control location
	and $(q, 0), (q, 1)$ the final ones.
	In this particular example, there are only critical nested intervals (c.f.~\Cref{fig:critical_nested})
	and no critical overlapping intervals (c.f.~\Cref{fig:critical_overlapping}),
	and thus the automaton be simplified to use only one copy $x_1^0$ of $x_1$
	instead of two copies $x_1^0, x_1^1$.
\end{example}

\begin{example}
	\label{ex:type_A:critical}
	In this example we show that two copies $x_c^0, x_c^1$ of a control clock $x_c$
	are required in the presence of overlapping critical intervals (c.f.~\Cref{fig:critical_overlapping}).
	Consider the timed language $L$ over the four-letter alphabet
	$\Sigma = \set {a, b, c, f}$
	consisting of all timed words $w$ s.t.~\begin{inparaenum}[1)]
		\item for every prefix $u$ of $w$
		the number of $b$'s is less than or equal to the number of $a$'s,
		\item for every occurrence of $a$ and its matching occurrence of $b$ (if any)\footnote{
			We say that an occurrence $a_j = b$ of $b$
			matches an occurrence $a_i = a$ of $a$
			in a word $w = a_1 \cdots a_n$
			if $i < j$ and $j$ is the smallest index $> i$ \st the number of occurrences of $a$ in $u = a_i \cdots a_j$ is the same as the number of occurrences of $b$ in $u$.
		}
		the amount of time between the last occurrence of $c$ before this occurrence of $a$
		and the last occurrence of $f$ before this occurrence of $b$ (if any) is at most $5$,
		\item if there is no such $c$ or $f$ in the previous point,
		then the measurement is conventionally done since the beginning of the word
		(i.e., at time zero).
	\end{inparaenum}
	This language is recognised by the \TPDA $A$
	containing one control location $p$ (which is regarded as both initial and final),
	together with transitions
	\begin{align*}
		&\delta_\tau = \tuple{p, \elapse, p},\\
		&\delta_a = \tuple{p, \readop a; \pushop {\alpha} {y = x_c}, p}, \\
		&\delta_b = \tuple{p, \readop b; \popop {\alpha} {y - x_d \leq 5}, p}, \\
		&\delta_c = \tuple{p, \readop c; \resetop {x_c}, p}, \\
		&\delta_f = \tuple{p, \readop f; \resetop {x_f}, p}.
	\end{align*}
	The \TPDA $A'$ obtained by removing the classical stack constraint $y - x_d \leq 5$
	according to the reduction of this section
	contains control locations $(p, d), (p, d, e)$ for every $d, e \in \set{0, 1}$
	and all transitions of the form
	\begin{align*}
		&\delta_\tau^d = \tuple{(p, d), \elapse, (p, d)},
			\delta_\tau^{de} = \tuple{(p, d, e), \elapse, (p, d, e)}, \\
		&\delta_a^d = \tuple{(p, d), \readop a; \pushop {\hat\alpha} {y = x_c}, (p, d, d)}, \\
		&\delta_a^{de} = \tuple{(p, d, e), \readop a; \pushop \alpha {y = x_c}, (p, d, 1-d)}, \\
		&\delta_b^{de} = \tuple{(p, d, e), \readop b; \popop \alpha \true, (p, d, e)}, \\
		&\delta_{b'}^{de} = \tuple{(p, d, e), \readop b; \popop {\hat \alpha} \true; \testop {x_c^d - x_f \leq 5}, (p, e)}, \\
		&\delta_c^d = \tuple{(p, d),  \readop c; \resetop {x, x_c^d}; (p, d)}, \\
		&\delta_c^{de} = \tuple{(p, d, e), \readop c; \resetop {x, x_c^{1-d}}; (p, d, 1-d)}, \\
		&\delta_f^d = \tuple{(p, d), \readop f; \resetop {x_f}; (p, d)}, \\
		&\delta_f^{de} = \tuple{(p, d, e), \readop f; \resetop {x_f}, (p, d, e)}.
	\end{align*}
	The new \TPDA $A'$ does not contain classical pop constraints,
	which have been replaced with the local constraint $x_c^d - x_f \leq 5$
	between control clocks $x_c^d, x_f$.
	We claim that both copies $x_c^0, x_c^1$ of $x_c$ are required for correctness.
	For instance, consider the following timed word
	\begin{align*}
		w = (c, 0) (a, 1) (c, 3) (f, 5) (b, 6) (a, 7) (f, 8) (b, 10).
	\end{align*}
	The timed word $w$ contains two critical overlapping intervals as in \Cref{fig:critical_overlapping}:
	The first one is between $(c, 0)$ and $(f, 5)$
	and the second one is between $(c, 3)$ and $(f, 8)$.
	The word $w$ is accepted by the original automaton $A$
	since $5 - 0 \leq 5$ and $8 - 3 \leq 5$.
	Indeed, it induces the following run in $A$:
	Let $(p, t_c, t_f, w)$ be the configuration where clock $x_c$ has value $t_c$,
	clock $x_f$ has value $t_f$,
	and $w$ is the content of the stack.
	\begin{align*}
		(p, 0, 0, \varepsilon)
			&\reach {\delta_c} {} (p, 0, 0, \varepsilon) 
				&\text{} \\
			&\reach {\delta_\tau} {} (p, 1, 1, \varepsilon)
				\reach {\delta_a} {} (p, 1, 1, (\alpha, 1))
				&\text{(push)}\\
			&\reach {\delta_\tau} {} (p, 3, 3, (\alpha, 3))
				\reach {\delta_c} {} (p, 0, 3, (\alpha, 3))
				&\text{} \\
			&\reach {\delta_\tau} {} (p, 2, 5, (\alpha, 5))
				\reach {\delta_f} {} (p, 2, 0, (\alpha, 5))
				&\text{}\\
			&\reach {\delta_\tau} {} (p, 3, 1, (\alpha, 6))
				\reach {\delta_b} {} (p, 3, 1, \varepsilon)
				&\text{(pop)} \\
			&\reach {\delta_\tau} {} (p, 4, 2, \varepsilon)
				\reach {\delta_a} {} (p, 4, 2, (\alpha, 4))
				&\text{(push)}\\
			&\reach {\delta_\tau} {} (p, 5, 3, (\alpha, 5))
				\reach {\delta_f} {} (p, 5, 0, (\alpha, 5))
				&\text{}\\
			&\reach {\delta_\tau} {} (p, 7, 2, (\alpha, 7))
				\reach {\delta_b} {} (p, 7, 2, \varepsilon).
				&\text{(pop)}
	\end{align*}
	Configurations in $A'$ are of the form
	$((p, e), t_c^0, t_c^1, t_f, w)$ or
	$((p, d, e), t_c^0, t_c^1, t_f, w)$.
	(We avoid the valuation of clock $x_c$ for simplicity
	since it always equals the tracking copy $x_c^e$.)
	The of $A'$ over $w$ above is
	\begin{align*}
		((p, 0), 0, 0, 0, \varepsilon)
			&\reach {\delta_c} {} ((p, 0), 0, 0, 0, \varepsilon) 
				&\text{} \\
			&\reach {\delta_\tau} {} ((p, 0), 1, 1, 1, \varepsilon)
				\reach {\delta_a} {} ((p, 0, 0), 1, 1, 1, (\hat \alpha, 1))
				&\text{(push)}\\
			&\reach {\delta_\tau} {} ((p, 0, 0), 3, 3, 3, (\hat \alpha, 3))
				\reach {\delta_c} {} ((p, 0, 1), 3, 0, 3, (\hat \alpha, 3))
				&\text{} \\
			&\reach {\delta_\tau} {} ((p, 0, 1), 5, 2, 5, (\hat \alpha, 5))
				\reach {\delta_f} {} ((p, 0, 1), 5, 2, 0, (\hat \alpha, 5))
				&\text{}\\
			&\reach {\delta_\tau} {} ((p, 0, 1), 6, 3, 1, (\hat \alpha, 6))
				\reach {\delta_b} {} ((p, 1), 6, 3, 1, \varepsilon)
				&\text{(pop)} \\
			&\reach {\delta_\tau} {} ((p, 1), 7, 4, 2, \varepsilon)
				\reach {\delta_a} {} ((p, 1, 1), 7, 4, 2, (\hat \alpha, 4))
				&\text{(push)}\\
			&\reach {\delta_\tau} {} ((p, 1, 1), 8, 5, 3, (\hat \alpha, 5))
				\reach {\delta_f} {} ((p, 1, 1), 8, 5, 0, (\hat \alpha, 5))
				&\text{}\\
			&\reach {\delta_\tau} {} ((p, 1, 1), 10, 7, 2, (\hat \alpha, 7))
				\reach {\delta_b} {} ((p, 1), 10, 7, 2, \varepsilon).
				&\text{(pop)}
	\end{align*}
	Crucially, in control location $(p, 1)$ after the first pop transition $\delta_b$
	the automaton remembers that clock $x_c^1$ (which has value $3$)
	is the tracking one
	and it should be copied to the stack in the next push
	(and not $x_c^0$, which has value $6$).
	In this way, in the last pop transition $\delta_b$
	the automaton correctly checks
	the local clock constraint $x_c^1 - x_f \leq 5$ (which holds since $x_c^1 = 7$ and $x_f = 2$)
	and not the incorrect one $x_c^0 - x_f \leq 5$.
\end{example}



The following lemma states the correctness of the construction.
%

\begin{restatable}[Correctness \protect{[A]}]{lemma}{lemCorrectnessA}
	\label{lem:correctness:A}
	For control locations $p, q \in L$, clock valuations $\mu, \nu \in \Rgeq^\X$, sequence of transitions $w \in \Delta_A^*$, and flag $e \in \set{0, 1}$,
	let $\mu' = \mu[x_i^0 \mapsto \mu(x_i)], \nu'_e = \nu[x_i^e \mapsto \nu(x_i)]$.
	Then,
	\begin{align*}
		p, \mu \reach {\tilde w} {} q, \nu
			\quad \textrm{ iff } \quad
				\exists e \in \set{0, 1} \st (p, 0), \mu' \reach w {} (q, e), \nu'_e.
	\end{align*}
\end{restatable}

\paragraph{Reconstruction of the reachability relation \protect{[A]}}

\Cref{lem:correctness:A} immediately allows us to reconstruct the reachability relation of $\P$ from that of $\P_A$:
Let $\varphi_{(p, d)(q, e)}(\bar x, \bar f, \bar x')$ express the reachability relation $(p, d), \_ \reach {} {} (q, e), \_$ of $\P_A$.
For simplicity, we index the variables in $\bar f$ as $f_\delta$, where $\delta \in \Delta_A$.
For a transition $\tilde\delta \in \Delta$,
let $f_{\delta, 1}, \dots, f_{\delta, m}$ be all the $f_\delta$'s \st $\delta = \tilde\delta$.
(Formally, the index $m$ depends on $\tilde\delta$, but for simplicity we omit this dependence.)
Then the reachability relation $p, \_ \reach {} {} q, \_$ of $\P$ can be expressed as
(we write $\bar y$ instead of $\bar x'$ for readability)
\begin{align*}
	\varphi_{pq}(\bar x, \bar g, \bar y) \;\equiv\;
		\exists \bar f \st \bigvee_{e \in \set{0, 1}} \varphi_{(p, 0)(q, e)}(\bar x, x_i^0, x_i^1, \bar y, y_i^0, y_i^1) \wedge x_i^0 = x_i \wedge y_i^e = y_i \wedge \\
		\wedge\bigwedge_{\tilde\delta \in \Delta} g_{\tilde\delta} = f_{\delta, 1} + \cdots + f_{\delta, m}.
\end{align*}

\subsubsection{The construction---Type B}
\label{sec:simplify:pop-integer-free:B}

The construction of $\P_B$ is similar to $\P_A$,
except for the set of control locations,
which are now of the form
$L_B = L \times \set {0, 1, 2} \times \set {0, 1}^2$,
and for the set of transitions, which we describe next.
Let $\trule p \op q \in \Delta$ be a transition in $\P$.
If it is either an input $\op = \readop a$,
test $\op = \testop \varphi$,
time elapse $\op = \elapse$ transition,
reset $\op = \resetop \Y$ not resetting $x_i \not\in \Y$,
push $\op = \pushop \gamma {\psicopy}$ with $\gamma \neq \alphapsi$
(where $\psicopy$ is defined in \eqref{eq:copy:push:constraint}),
or pop $\op = \popop \gamma {\psipop}$ with $\gamma \neq \alphapsi$,
then it generates corresponding transitions in $\P_B$
\begin{align}
	\label{eq:B:generic}
	&\trule {(p, b, d, d)} \op {(q, b, d, d)}, b \in \set{0, 2}. \\
	&\trule {(p, 1, d, e)} \op {(q, 1, d, e)}.
\end{align}
A reset transition $\op = \resetop {\Y \cup \set{x_i}}$ resetting $x_i$ generates transitions
\begin{align}
	\label{eq:B:reset:02}
	&\trule {(p, b, d, d)} {\resetop {\Y \cup \set{x_i, x_i^d}}} {(q, b, d, d)}, b \in \set{0, 2}, \\
	\label{eq:B:reset:1}
	&\trule {(p, 1, d, e)} {\resetop {\Y \cup \set{x_i, x_i^{1-d}}}} {(q, 1, d, 1 - d)}.
\end{align}
A push transition $\op = \pushop {\alphapsi} {\psicopy}$ generates transitions
\begin{align}
	\label{eq:B:push:0}
	&\trule {(p, b, d, d)} {\pushop {\alphapsi} {\psicopy}} {(q, 0, d, d)}, b \in \set{0, 2}, \\
	\label{eq:B:push:1}
	&\trule {(p, b, d, d)} {\pushop {\hat\alphapsi} {\psicopy}} {(q, 1, d, d)}, b \in \set{0, 2}.
\end{align}
A pop transition $\op = \popop{\alphapsi} {\psi}$ generates transitions%
\begin{align}
	\label{eq:B:pop:1:0}
	&\trule{(p, 1, d, e)} {\popop{\hat\alphapsi}{\true};\; \testop{x_i^d - x_j \succsim k}} {(q, 2, e, e)}, \\
	\label{eq:B:pop:1:1}
	&\trule{(p, 2, d, d)} {\popop{\alphapsi}{\true}} {(q, 2, d, d)}.
\end{align}
The flag $b = 2$ ensures that an outer push-pop pair can be performed \eqref{eq:B:pop:1:1}
only if it contains a nested push-pop pair for which $\psi$ has been checked \eqref{eq:B:pop:1:0}.
For each new transition $\delta_B$ added in the equations \Cref{eq:B:generic,eq:B:reset:02,eq:B:reset:1,eq:B:push:0,eq:B:push:1,eq:B:pop:1:0,eq:B:pop:1:1} above,
let $\tilde \delta_B = \delta \in \Delta$ be the originating transition%
\footnote{
	Formally speaking, in \eqref{eq:B:pop:1:1} there may be different $\delta$'s inducing the same $\delta_B$,
	and thus $\tilde \delta_B$ would not be well defined.
	This can be avoided by recording in the stack symbol $\alpha$ the pop transition $\delta$,
	but we avoid it for simplicity.
} of $\P$.
The mapping ``$\;\tilde{}\;$'' is extended pointwise to a mapping $\Delta_B^* \to \Delta^*$ and it will be used in the correctness statements below.
%
%
The following lemma states the correctness of the construction.

\begin{restatable}[Correctness \protect{[B]}]{lemma}{lemCorrectnessB}
	\label{lem:correctness:B}
	For control locations $p, q \in L$, clock valuations $\mu, \nu \in \Rgeq^\X$, sequence of transitions $w \in \Delta_B^*$, and flag $e \in \set{0, 1}$,
	let $\mu' = \mu[x_i^0 \mapsto \mu(x_i)], \nu'_e = \nu[x_i^e \mapsto \nu(x_i)]$.
	Then,
	\begin{align*}
		p, \mu \reach {\tilde w} {} q, \nu
			\quad \textrm{ iff } \quad
				\exists e \in \set{0, 1} \st (p, 2, 0, 0), \mu' \reach w {} (q, 2, e, e), \nu'_e.
	\end{align*}
\end{restatable}

\paragraph{Reconstruction of the reachability relation \protect{[B]}}

We reconstruct the reachability relation of $\P$ from that of $\P_B$
by applying \Cref{lem:correctness:B}.
Let $\varphi_{pqe}(\bar x, \bar f, \bar y)$ express the reachability relation
${(p, 2, 0, 0), \_ \reach {} {} (q, 2, e, e), \_}$ of $\P_B$ (we write $\bar y$ instead of $\bar x'$ for readability).
For simplicity, we index the variables in $\bar f$ as $f_\delta$, where $\delta \in \Delta_B$.
For a transition $\tilde\delta \in \Delta$,
let $f_{\delta, 1}, \dots, f_{\delta, m}$ be all the $f_\delta$'s \st $\delta = \tilde\delta$.
(Formally, the index $m$ depends on $\tilde\delta$, but for simplicity we omit this dependence.)
Then the reachability relation $p, \_ \reach {} {} q, \_$ of $\P$ can be expressed as
\begin{align*}
	\varphi_{pq}(\bar x, \bar g, \bar y) \;\equiv\;
		\exists \bar f \st \bigvee_{e \in \set{0, 1}} \varphi_{pqe}(\bar x, x_i^0, x_i^1, \bar f, \bar y, y_i^0, y_i^1) \wedge x_i^0 = x_i \wedge y_i^e = y_i \\
		\wedge\bigwedge_{\tilde\delta \in \Delta} g_{\tilde\delta} = f_{\delta, 1} + \cdots + f_{\delta, m}.
\end{align*}

\subsection{Clocks are reset at least once}
\label{sec:simplify:reset}

In this section we show that all clocks can be assumed to be reset at least once during the run.
In fact, this follows from a more powerful observation showing that computing the reachability relation of timed automata
reduces to computing the reachability set \cite[proof of Theorem 1]{FranzleQuaasShirmohammadiWorrell:2019}.
This is achieved by what is called \emph{clock memorisation} in \cite{FranzleQuaasShirmohammadiWorrell:2019}.
We observe that clock memorisation can be applied as-is to \TPDA.
\begin{lemma}
	\label{lem:clock:memorisation}
	Computing the \TPDA reachability relation reduces to computing the \TPDA reachability set.
\end{lemma}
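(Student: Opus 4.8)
The plan is to adapt the \emph{clock memorisation} technique of Fränzle, Quaas, Shirmohammadi, and Worrell \cite{FranzleQuaasShirmohammadiWorrell:2019} to the timed pushdown setting, the main content being to check that the stack plays no role in the argument. Given a \TPDA $\P = \tuple{\Sigma, \Gamma, \L, \X, \ZZ, \Delta}$ with control clocks $\X = \set{x_0, \dots, x_n}$, I would build a \TPDA $\QQ$ over the doubled control-clock set $\X \cup \hat\X$, where $\hat\X = \setof{\hat x_i}{x_i \in \X}$ is a set of fresh \emph{memory clocks}. The automaton $\QQ$ starts in a new initial location from which a single transition resets every original clock $x_i$ and then enters a copy of $\P$. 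From that point the memory clocks are never touched by the initial phase, so each $\hat x_i$ retains its (unconstrained) initial value $a_i$ plus the elapsed time, i.e.\ $\hat x_i = a_i + t$ at every later moment, whereas the reset original clock $x_i$ measures only the time since the start, until it is reset again by an original transition. Thus $\hat x_i$ faithfully plays the role that $x_i$ would have had in $\P$ when started from the valuation $a_i$, \emph{before} $x_i$ is reset for the first time.

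To make $\QQ$ simulate $\P$ step for step, I would record in the finite control, for each clock $x_i$, a flag indicating whether $x_i$ has already been reset by an original transition; an original reset transition turns the corresponding flag on. Every clock constraint occurring in a test, push, or pop transition of $\P$ is then rewritten so that it refers to $\hat x_i$ while the flag for $x_i$ is off, and to $x_i$ itself once the flag is on. Because the initial reset makes $x_i = 0$ at the start while $\hat x_i$ carries the guessed value $a_i$, this rewriting makes a run of $\QQ$ correspond exactly to a run of $\P$ from the initial valuation $\mu$ with $\mu(x_i) = a_i$. Crucially, the push and pop transitions are left structurally unchanged apart from this rewriting of the control-clock references inside their stack constraints: the whole construction concerns only control clocks, and by the characterisation of reachability in \Cref{lem:characterisation} the matching of a push with its pop depends only on the control clocks and on the total elapsed time $\delta_{\mu\nu} = \nu(x_0) - \mu(x_0)$, which is reproduced faithfully by the never-reset reference clock $\hat x_0$. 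Once the flag for $x_i$ ($i \neq 0$) turns on, $\hat x_i$ is never referenced again and may itself be reset, so that---apart from the single reference clock $\hat x_0$ built into our setting---every clock is reset at least once, which is exactly the simplification exploited in the sequel.

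For the reconstruction, note that at the end of any run of $\QQ$ the memory clock $\hat x_i$ holds $a_i + T$ where $T$ is the total elapsed time, and since the original $x_0$ is reset at the start, $T$ is recoverable from the final valuation; hence the guessed initial values $a_i = \mu(x_i)$ are recoverable by linear arithmetic from the final data alone. Writing $\set{\rho_{p'q'}}$ for a linear-arithmetic description of the reachability set $\ReachSet{p'q'}$ of $\QQ$, the reachability relation $\set{\varphi_{pq}}$ of $\P$ is then obtained by a formula that existentially quantifies the final memory values, equates $\mu$ and $\nu$ with the appropriate components of the final valuation, and relates the transition counts of $\QQ$ to those of $\P$ (each original transition is split into finitely many flagged copies whose Parikh coordinates are summed), exactly in the style of the reconstruction formulas of \Cref{sec:simplify:push-copy}. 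The main obstacle is not any isolated step but the bookkeeping that keeps the simulation sound: threading the reset flags consistently through all six transition types, rewriting \emph{every} control-clock reference inside a stack constraint according to the current flag, and confirming the genuinely pushdown point---that push and pop transitions, which never reset control clocks, interact correctly with the flags and require no information about initial control-clock values beyond what the memory clocks already carry---so that clock memorisation indeed applies to \TPDA as-is.
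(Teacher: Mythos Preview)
Your approach is correct, but it takes a noticeably more laborious route than the paper's. The paper avoids all of the flag bookkeeping and constraint rewriting that you identify as the ``main obstacle'': in its preprocessing phase, the new automaton alternates time elapses with resets of the \emph{pair} $\set{x_i, y_i}$, so that at the end of preprocessing one has $x_i = y_i$ equal to an arbitrary nonnegative value. From that point $\P$ is simulated \emph{verbatim}---no flags, no rewriting of transition or stack constraints---with the single proviso that the memory clocks $y_i$ are never reset again. Because the original $x_i$ already carries the desired initial value, every constraint of $\P$ is evaluated correctly without modification; the memory clocks merely ride along, satisfying $y_i - x_0 = \mu(x_i)$ throughout, and the reconstruction is the one-line substitution $y_i' = x_i + x_0'$ (under the standing assumption $\mu(x_0)=0$). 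Your construction, by resetting all $x_i$ to $0$ at the start, destroys this alignment and must compensate with per-clock flags and case-split constraint rewriting; this works, but it is exactly the bookkeeping the paper's trick sidesteps. What your variant buys is that every clock except the single reference clock $\hat x_0$ is reset during the run without relying on the preprocessing loop, but the paper gets the same ``all clocks reset'' property for free from the preprocessing resets themselves.
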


\noindent
One may wonder why we do not apply the lemma above at the outset
and compute the allegedly simpler \TPDA reachability set.
The reason is that, as it will become clear in Sec.~\ref{sec:fractional:TPDA},
for \TPDA the reachability relation is more fundamental and actually easier to compute that the reachability set.
This is due to the transitivity \eqref{eq:reachrel:E} and push-pop \eqref{eq:reachrel:F} rules in the characterisation of the reachability relation from \Cref{lem:characterisation},
which have no counterpart for timed automata.

Nonetheless, clock memorisation allows us to assume that clocks are reset at least once in the run,
which makes some formal constructions in the following sections easier to present.
For completeness and given its wide applicability, we present below the clock memorisation technique of \cite{FranzleQuaasShirmohammadiWorrell:2019}.

\begin{proof}[Proof (of \Cref{lem:clock:memorisation})]
	Let $\P$ be the original \TPDA.
	Fix an initial control location $p \in \L$
	and assume that $\x_0$ is a distinguished reference clock which is $0$ at the beginning of the run.
	The idea is to add a copy $\y_i$ of every control clock $\x_i$ of $\P$.
	Then the execution of $\P$ is preceded by a preprocessing phase starting at a new control location $\tilde p$,
	where arbitrary time elapses $\elapse$ alternate with resets of the form $\resetop {\set{\x_i, \y_i}}$.
	In this way, at the end of preprocessing $\x_i = \y_i$ can be arbitrary,
	and moreover each clock of the new automaton is reset at least once.
	%
	After preprocessing, the automaton nondeterministically starts simulating an execution of $\P$ from $p$,
	with the proviso that $\y_i$ is not reset anymore. 
	In this way, $\y_i - \x_0$ is constant during the simulation of $\P$
	and equal to the initial value of $\x_i$.
	Formally, if $\psi_{\tilde pq}(\bar f, \bar x', \bar y')$ is a linear arithmetic formula expressing the reachability set $\ReachSet {\tilde pq}$ of the newly constructed automaton,
	then the following linear arithmetic formula expresses the reachability relation $\reach {} {pq}$ of $\P$:
	\begin{align*}
		\varphi_{pq}(\bar x, \bar f, \bar x') \equiv
			\psi_{\tilde pq}(\bar f, \bar x', \x_1 + \x_0', \dots, \x_n + \x_0'). \tag* \qedhere
	\end{align*}
\end{proof}

\subsection{Fractional \TPDA}
\label{sec:simplify:fractional}

A \TPDA is \emph{fractional} if it has only fractional constraints.
We show in this section how to eliminate all non-fractional constraints.
Thanks to the previous
\Cref{sec:simplify:push-copy,sec:simplify:stack-stack,sec:simplify:pop-integer-free,sec:simplify:atomic-pop,sec:simplify:pop-integer-free},
we assume that the \TPDA $\P$ is push-copy, pop-integer-free, there are no stack-stack pop constraints, and that pop constraints are atomic.
Diagonal control-control integral $\floor {\x_i} - \floor {\x_j} \sim k$
and modulo $\floor {\x_i} - \floor {\x_j} \eqv m k$ constraints 
are removed by a standard construction \cite{BerardDiekertGastinPetit:1998:Epsilon},
incurring a multiplicative blow-up in the number of control locations
exponential in the number of diagonal constraints.
Consequently, transition and pop constraints are (possibly negated) atomic constraints of the form
\begin{align}
	\label{eq:transition:constraints:C}
	&\textrm{(transition)}	&\floor {\x_i} &\leq k, &\floor {\x_i} &\eqv M k, &\fract {\x_i} &= 0, &\fract {\x_i} &\leq \fract {\x_j}, \\
	\label{eq:stack:constraints:C}
	&\textrm{(pop)}		& & &\floor {\y_i} - \floor {x_j} &\eqv M k, &\fract {\y_i} &= 0, &\fract {\y_i} &\leq \fract {\x_j}.
\end{align}
where we assume w.l.o.g.~that $M$ is the maximal constant appearing in any constraint.
The idea is to replace the integral value of clocks by their \emph{unary abstraction} $\lambda$,
which remembers only the modulo class of each clock and the exact value up to $M$.
Fractional constraints are unchanged.
To reconstruct the reachability relation,
the new automaton additionally outputs special symbols for each integral time unit that elapses
for those clocks which are not reset anymore until the end of the run.

\paragraph{Preliminaries}
Let $M \in \N$.
Valuations $\mu, \nu \in \Qgeq^\X$ are \emph{$M$-unary equivalent}
if, for every clock $\x \in \X$,
$\floor {\mu(\x)} \eqv M \floor{\nu(\x)}$ and $\floor{\mu(\x)} < M \iff \floor{\nu(\x)} < M$.
Let $\Lambda_M$ be the (finite) set of $M$-unary equivalence classes of clock valuations.
For $\lambda \in \Lambda_M$ and a clock $\x$,
we write $\lambda(\x)$ for $\mu(\x)$, where $\mu$ is any clock valuation in $\lambda$, provided its 
choice does not matter.
We write $\lambda[\Y \mapsto 0]$ for the equivalence class of $\nu[\Y \mapsto 0]$ for some $\nu\in\lambda$,
and, for a clock $\x \in \X$,
$\lambda[\x \mapsto \x + 1]$ for the equivalence class of $\nu[\x \mapsto \nu(\x)+1]$ for some $\nu\in\lambda$;
in both cases, the choice of $\nu$ is irrelevant.
Let $\varphi_\lambda$ be the \emph{characteristic constraint} of the unary class $\lambda$:
\begin{align}
 \varphi_\lambda(\vec \x) \equiv \bigwedge_{\x \in \X} \floor \x \eqv M \lambda(\x) \wedge (\floor \x < M \iff \lambda(\x) < M),
\end{align}
where $\lambda(\x)$ denotes $\nu(\x)$ for some $\nu\in\lambda$
(whose choice is irrelevant).
%
For a control constraint $\varphi$ (of the form \eqref{eq:transition:constraints:C}),
let $\restrict \varphi \lambda$ be $\varphi$
where every non-diagonal integer $\floor \x \leq k$ or modulo constraint $\floor \x \eqv M k$
is uniquely resolved to be $\true$ or $\false$ by looking at $\lambda$;
thus, $\restrict \varphi \lambda$ contains only fractional constraints.
The following observation formalises that it suffices to know the unary class of $\mu$ and its fractional part in order to know whether it satisfies $\varphi$.

\begin{fact}
	\label{fact:unary}
	For every valuation $\mu \in \lambda$ 
	and constraint $\varphi$,
	$\mu \models \varphi$ iff $\fract \mu \models \restrict \varphi {\lambda}$.
\end{fact}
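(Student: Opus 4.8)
The plan is to establish the equivalence for atomic constraints first and then extend to arbitrary Boolean combinations by a routine structural induction, using that both satisfaction and the resolution operation commute with the connectives: $\restrict{(\neg\psi)}{\lambda} = \neg\,\restrict{\psi}{\lambda}$ and $\restrict{(\psi_1 \wedge \psi_2)}{\lambda} = \restrict{\psi_1}{\lambda} \wedge \restrict{\psi_2}{\lambda}$, and likewise for disjunction. It therefore suffices to prove $\mu \models \varphi$ iff $\fract\mu \models \restrict\varphi\lambda$ for each of the four atomic forms appearing in \eqref{eq:transition:constraints:C}.

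For the two fractional forms $\fract{\x_i} = 0$ and $\fract{\x_i} \leq \fract{\x_j}$ the resolution $\restrict{\cdot}{\lambda}$ is the identity, so the claim reduces to observing that the satisfaction of a fractional atomic constraint depends only on the fractional parts $\fract{\mu(\x_i)}$, and that these are unchanged when passing from $\mu$ to $\fract\mu$ because the fractional-part operation is idempotent ($\fract{\fract a} = \fract a$). For the modular form $\floor{\x_i} \eqv M k$, membership $\mu \in \lambda$ gives $\floor{\mu(\x_i)} \eqv M \lambda(\x_i)$, so the truth of $\floor{\mu(\x_i)} \eqv M k$ equals the truth of $\lambda(\x_i) \eqv M k$; the latter depends only on $\lambda$, and this is exactly the value to which $\restrict{\cdot}{\lambda}$ resolves the constraint.

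The main case is the integral form $\floor{\x_i} \leq k$, where I would invoke the standing assumption $k < M$ (all constants are below the common modulus) together with the two defining properties of $M$-unary equivalence. If $\lambda$ records $\floor{\mu(\x_i)} < M$, then the sub-$M$ flag and the residue together pin $\floor{\mu(\x_i)}$ down to the unique representative in $[0,M)$, so the truth of $\floor{\mu(\x_i)} \leq k$ is determined by $\lambda$ alone; if instead $\lambda$ records $\floor{\mu(\x_i)} \geq M$, then $\floor{\mu(\x_i)} \geq M > k$ and the constraint is uniformly false, again determined by $\lambda$. In both subcases the truth value $\restrict{\cdot}{\lambda}$ assigns coincides with $\mu \models \floor{\x_i} \leq k$, which simultaneously confirms that the resolution is well defined, i.e.\ independent of the chosen representative of $\lambda$.

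I expect the only delicate point to be this last case: the fact that a sub-$M$ integer is fully recovered from its residue and the sub-$M$ flag, and that values at or above $M$ uniformly falsify a constraint with threshold $k < M$. Everything else---the fractional and modular cases together with the Boolean induction---is immediate once this is in place.
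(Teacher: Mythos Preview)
Your proposal is correct. The paper itself does not prove this fact; it merely states it as an observation, so your case analysis on the four atomic forms in \eqref{eq:transition:constraints:C} followed by a Boolean induction is exactly the routine verification the paper leaves implicit, and your treatment of the integral case $\floor{\x_i}\leq k$ via the sub-$M$ flag and the standing assumption $k<M$ is the intended reason the resolution is well defined.
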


We now explain how to remove modular constraints from a pop constraint $\psi$. 
Let $\mu \in \Rgeq^\X$ be the control clock valuation at the time of push,
let $\nu \in \Rgeq^\X$ be the control clock valuation at the time of pop,
and let $\rho \in \Rgeq^\ZZ$ be the stack clock valuation at the time of pop. 
Recall that $\x_0$ is a control clock which is never reset
and $\y_1$ a stack clock which is $0$ upon push.
Thanks to push-copy,
\begin{align*}
	\rho(\y_1) = \nu(\x_0) - \mu(\x_0)
		\quad \textrm{and} \quad
			\rho(\y_i) = \mu(\x_i) + \rho(\y_1) = \mu(\x_i) + \nu(\x_0) - \mu(\x_0).
\end{align*}
A modulo constraint $\floor {\y_i} - \floor {\x_j} \eqv M k$
is satisfied at the time of pop if $\floor {\rho(\y_i)} - \floor {\nu(\x_j)} \eqv M k$.
It suffices to compute the modulo class of $\floor {\nu(\x_j)}$ (which we will store in the finite control) and of $\floor {\rho(\y_i)}$.
For the latter, we make use of the following observation.

\begin{fact}
	\label{fact:fractional}
	$\floor {\rho(\y_i)} =
			\floor {\mu(\x_i)} + \floor {\nu(\x_0)} -  \floor {\mu(\x_0)} +
				\condone {\fract{\rho(\y_i)} < \fract {\rho(\y_1)}} -
					\condone {\fract {\nu(\x_0)} < \fract {\rho(\y_1)}}$.
\end{fact}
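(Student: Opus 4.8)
The plan is to derive the identity purely by algebraic manipulation of the two push-copy relations
$$\rho(\y_1) = \nu(\x_0) - \mu(\x_0) \quad \textrm{and} \quad \rho(\y_i) = \mu(\x_i) + \rho(\y_1),$$
combined with the floor and fractional identities \eqref{eq:floor} and \eqref{eq:fract}. The only subtlety is that applying \eqref{eq:floor} directly produces carry indicators of a shape different from the one in the statement, so each indicator must be rewritten into the target form; apart from this the computation is entirely mechanical.

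First I would apply the addition rule of \eqref{eq:floor} to $\rho(\y_i) = \mu(\x_i) + \rho(\y_1)$, obtaining
$$\floor{\rho(\y_i)} = \floor{\mu(\x_i)} + \floor{\rho(\y_1)} + \condone{\fract{\mu(\x_i)} + \fract{\rho(\y_1)} \geq 1}.$$
Since the emergent indicator is not the one claimed, I would reconcile it using the addition rule of \eqref{eq:fract}, namely $\fract{\rho(\y_i)} = \fract{\mu(\x_i)} + \fract{\rho(\y_1)} - \condone{\fract{\mu(\x_i)} + \fract{\rho(\y_1)} \geq 1}$. A two-case split on whether the sum $\fract{\mu(\x_i)} + \fract{\rho(\y_1)}$ crosses $1$, using $0 \leq \fract{\mu(\x_i)} < 1$, shows that $\fract{\rho(\y_i)} < \fract{\rho(\y_1)}$ holds precisely when the carry fires, hence
$$\condone{\fract{\mu(\x_i)} + \fract{\rho(\y_1)} \geq 1} = \condone{\fract{\rho(\y_i)} < \fract{\rho(\y_1)}}.$$

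Next I would expand $\floor{\rho(\y_1)}$ via the subtraction rule of \eqref{eq:floor} applied to $\rho(\y_1) = \nu(\x_0) - \mu(\x_0)$, yielding $\floor{\rho(\y_1)} = \floor{\nu(\x_0)} - \floor{\mu(\x_0)} - \condone{\fract{\nu(\x_0)} < \fract{\mu(\x_0)}}$. By the same device, now invoking the subtraction rule of \eqref{eq:fract} to write $\fract{\rho(\y_1)} = \fract{\nu(\x_0)} - \fract{\mu(\x_0)} + \condone{\fract{\nu(\x_0)} < \fract{\mu(\x_0)}}$, a two-case split gives $\condone{\fract{\nu(\x_0)} < \fract{\mu(\x_0)}} = \condone{\fract{\nu(\x_0)} < \fract{\rho(\y_1)}}$. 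Substituting both reconciled indicators into the expression for $\floor{\rho(\y_i)}$ produces exactly the claimed identity.

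The main obstacle, such as it is, lies entirely in the two indicator reconciliations: one must verify that comparing $\fract{\rho(\y_i)}$ (resp.~$\fract{\nu(\x_0)}$) against $\fract{\rho(\y_1)}$ is equivalent to the carry condition generated by \eqref{eq:floor}. Both equivalences reduce to a routine case analysis on whether the relevant fractional sum or difference crosses an integer, but this is the step where care is required to avoid errors in the strict-versus-nonstrict comparisons and in the signs of the carries.
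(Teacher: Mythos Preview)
Your proposal is correct and follows essentially the same approach as the paper: apply the floor rules \eqref{eq:floor} to the two push-copy relations, then reconcile the resulting carry indicators into the target forms using the fractional rules \eqref{eq:fract}, and combine. The only cosmetic difference is that the paper performs the indicator reconciliations by substituting $\mu(\x_i) = \rho(\y_i) - \rho(\y_1)$ and $\mu(\x_0) = \nu(\x_0) - \rho(\y_1)$ and simplifying algebraically, whereas you argue them by a direct two-case split; the content is the same.
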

\begin{proof}
	We compute $\floor {\rho(\y_i)}$ and $\floor {\rho(\y_1)}$ directly:
	\begin{align*}
		\floor {\rho(\y_i)}
			&= \floor {\mu(\x_i) + \rho(\y_1)} =
				&& \textrm{(def.~$\rho$)} \\
			&= \floor {\mu(\x_i)} + \floor {\rho(\y_1)} + \condone {\fract {\mu(\x_i)} + \fract {\rho(\y_1)} \geq 1} = 
				&& \textrm{(by \eqref{eq:floor})} \\
			&= \floor {\mu(\x_i)} + \floor {\rho(\y_1)} + \condone {\fract {\rho(\y_i) - \rho(\y_1)} + \fract {\rho(\y_1)} \geq 1} = 
				&& \textrm{($\mu(\x_i) = \rho(\y_i) - \rho(\y_1)$)} \\
			&= \floor {\mu(\x_i)} + \floor {\rho(\y_1)} + \\
			&\ \ + \condone {\left(\fract {\rho(\y_i)} - \fract {\rho(\y_1)} + \condone {\fract {\rho(\y_i)} < \fract {\rho(\y_1)}}\right) + \fract {\rho(\y_1)} \geq 1} = 
				&& \textrm{(by \eqref{eq:fract})} \\
			&= \floor {\mu(\x_i)} + \floor {\rho(\y_1)} + \condone {\fract {\rho(\y_i)} + \condone {\fract {\rho(\y_i)} < \fract {\rho(\y_1)}} \geq 1} = 
				&& \textrm{(simpl.)} \\
			&= \floor {\mu(\x_i)} + \floor {\rho(\y_1)} + \condone {\fract {\rho(\y_i)} < \fract {\rho(\y_1)}}.
				&& \textrm{(def.~$\condone {}$)}, \\[2ex]
		\floor {\rho(\y_1)}
			&= \floor {\nu(\x_0) - \mu(\x_0)} =
				&& \textrm{(def.~$\rho$)} \\
			&= \floor {\nu(\x_0)} - \floor {\mu(\x_0)} - \condone {\fract {\nu(\x_0)} < \fract {\mu(\x_0)}}
				&& \textrm{(by \eqref{eq:floor})} \\
			&= \floor {\nu(\x_0)} - \floor {\mu(\x_0)} - \condone {\fract {\nu(\x_0)} < \fract {\nu(\x_0) - \rho(\y_1)}}
				&& \textrm{($\mu(\x_0) = \nu(\x_0) - \rho(\y_1)$)} \\
			&= \floor {\nu(\x_0)} - \floor {\mu(\x_0)} + \\
			&\ \ - \condone {\fract {\nu(\x_0)} < \fract {\nu(\x_0)} - \fract {\rho(\y_1)} + \condone {\fract {\nu(\x_0)} < \fract {\rho(\y_1)}}}
				&& \textrm{(by \eqref{eq:fract})} \\
			&= \floor {\nu(\x_0)} - \floor {\mu(\x_0)} - \condone {\fract {\rho(\y_1)} < \condone {\fract {\nu(\x_0)} < \fract {\rho(\y_1)}}}
				&& \textrm{(simpl.)} \\
			&= \floor {\nu(\x_0)} - \floor {\mu(\x_0)} - \condone {\fract {\nu(\x_0)} < \fract {\rho(\y_1)}}
				&& \textrm{(def.~$\condone {}$)}.
	\end{align*}
	%
	The claim follows from the two equations above.
\end{proof}
\noindent
Thanks to \Cref{fact:fractional}, we reconstruct the modulo class of $\floor {\rho(\y_i)}$
from those of $\floor {\mu(\x_i)}$, $\floor {\nu(\x_0)}$, and $\floor {\mu(\x_0)}$, which will all be stored in the control,
provided we can determine the value of the correction terms
$\condone {\fract{\rho(\y_i)} < \fract {\rho(\y_1)}}$ and $\condone {\fract {\nu(\x_0)} < \fract {\rho(\y_1)}}$.
The latter is easily achieved by looking at the fractional values at the time of pop of the stack clocks $\y_i, \y_1$ and of the control clock $\x_0$.

Formally, for a pop constraint $\psi$,
a unary abstraction $\lambda_\push$ at the time of push
and one $\lambda_\pop$ at the time of pop,
let $\restrict \psi {\lambda_\push, \lambda_\pop}$ be $\psi$
where every modulo constraint $\floor {\y_i} - \floor {\x_j} \eqv M k$
is resolved to be $\true$ or $\false$
by replacing $\floor {x_j}$ by $\lambda_\pop (\x_j)$,
and $\floor {\y_i}$ ($i \geq 0$) by
\begin{align} \label{eq:floory}
	\underbrace{\;\lambda_\push (\x_i)\;}_{\text{initial value of $\x_i$}} + \quad
		\underbrace{\;\lambda_\pop (\x_0) - \lambda_\push (\x_0)
			+ \condone {\fract {\y_i} < \fract {\y_1}}
				- \condone {\fract {\x_0} < \fract {\y_1}}\;}
					_{\text{time between push and pop}}.
\end{align}
%
%
%
While the notation above is not a constraint,
it can be converted to a constraint by expanding the definition of the correction terms of the form $\condone {}$.
%
%

\paragraph{The construction}
Let $\P = (\Sigma, \Gamma, L, \X, \ZZ, \Delta)$ be a \TPDA
satisfying the assumptions from the beginning of this section.
We build a fractional \TPDA $\QQ = (\Sigma', \Gamma', L', \X, \ZZ, \Delta')$
in such a way that we can express the reachability relation of $\P$ in terms of the one of $\QQ$.
The new input alphabet is $\Sigma'= \Sigma \cup T$,
where $$T = \setof {\tick \x} {\x \in \X}$$
is a set of tick symbols $\tick \x$ for every control clock $\x \in \X$;
those are used in order to reconstruct the integral value of clocks in the reachability relation of $\P$.
The new stack alphabet $\Gamma' = \Gamma \times \Lambda_M$ extends $\Gamma$
by recording the $M$-unary equivalence class of clocks which are pushed on the stack.
Control locations $L'$ of $\QQ$ are either new intermediate control locations used to perform the simulation
(which we do not describe explicitly for simplicity),
or of the form $\tuple {p, \lambda, \T}$
where $p \in L$ is a control location of $\P$,
$\lambda \in \Lambda_M$ abstracts the integral value of clocks, and
$\T \in 2^\X$ is the set of clocks which are \emph{not} allowed to be reset anymore in the future;
the last component is used in order to emit ticks $\tick \x$ when one unit of time elapses for clocks which are not reset anymore.

Every transition $\trule p \op q \in \Delta$ generates one or more transitions in $\QQ$
according to $\op$.
If ${\op = \readop a}$ is an input transition,
then $\Delta'$ contains a corresponding transition
$\trule {\tuple {p, \lambda, \T}} {\readop a} {\tuple{q, \lambda, \T}} \in \Delta'$,
for every choice of $\lambda$ and $\T$;
for conciseness, we will henceforth implicitly assume that we take all possible choices of free parameters (such as $\lambda$ and $\T$ above) for transitions in $\Delta'$.
If $\op = \testop \varphi$ is a test transition,
then $\QQ$ contains a corresponding test transition
$$\trule {\tuple {p, \lambda, \T}} {\testop {\restrict \varphi \lambda}} {\tuple{q, \lambda, \T}} \in \Delta',$$
where $\restrict \varphi \lambda$ contains only fractional constraints.
If $\op = \resetop \Y$ is a reset transition,
then $\QQ$ contains a reset transition of the form
\begin{align}
	\label{eq:fractional:reset}
	\trule {\tuple {p, \lambda, \T}} {\resetop \Y} {\tuple{q, \lambda[\Y \mapsto 0], \T \cup \Y'}} \in \Delta'
\end{align}
whenever $\Y \subseteq \X \setminus \T$, i.e., no forbidden clocks are reset,
and $\Y' \subseteq \Y$ are new clocks which are declared to be reset now for the last time.
If $\op = \elapse$ is a time elapse transition,
then we need to update the unary abstraction of control clocks
and also emit tick symbols $\tick \x$'s for the integral elapse of time of clocks in $\T$.
We simulate only time elapses of length $<1$; longer elapses can be obtained by repeating many small elapses.
This is achieved with the the following three groups of transitions in $\QQ$:
\begin{enumerate}[a)]
\item
	First, we silently go to $\tuple {p, \lambda, \T, q}$ to start the simulation:
	\begin{align*}
	\langle \tuple {p, \lambda, \T},
		\readop \varepsilon, \tuple {p, \lambda, \T, q}\rangle \in \Delta'.
	\end{align*}

\item
	The following formula says that clocks $\Y \subseteq \X$ have maximal fractional value
	(and thus will overflow first when time elapses):
	\begin{align*}
		\varphi^{\max}_\Y (\bar x) \; \equiv \; \bigwedge_{\x_i \in \Y} \bigwedge_{\x_j \in \X} \fract {x_j} \leq \fract {\x_i}.
	\end{align*}
	The automaton guesses such a set of clocks $\Y$,
	checks that their fractional value is $0$ after time elapse,
	reads corresponding ticks,
	and updates the unary abstraction accordingly:
	$\langle \tuple {p, \lambda, \T, q}, \ops_{\Y,\T}, \tuple {p, \lambda[\Y \mapsto \Y + 1], \T, q} \rangle \in \Delta'$,
	where the overflown clocks which will not be reset in the future are $\Y \cap \T = \set{\x_{i_1}, \dots, \x_{i_m}}$ and
	\begin{align}
		\label{eq:fractional:ops}
		\ops_{\Y,\T} :=
			\testop {\varphi^{\max}_\Y};
				\elapse;
					\testop{\bigwedge_{\x_i \in \Y} \fract {\x_i} = 0};
						\readop{\tick {i_1} \cdots \tick {i_m}}.
	\end{align}

\item
	When enough time has elapsed, we quit the simulation
	($\xi$ is in general different from the starting $\lambda$):
	$$\trule {\tuple {p, \xi, \T, q}} {\readop{\varepsilon}} {\tuple {q, \xi, \T}} \in \Delta'.$$
	This concludes the simulation of time elapse.
\end{enumerate}
%
If $\op = \pushop{\gamma}{\psicopy}$ is a push-copy transition,
then $\QQ$ contains a fractional push-copy transition,
additionally recording the current unary class in the stack:
\begin{align}
	\nonumber
	&\trule
	{\tuple {p, \lambda, \T}}
	{\pushop{\tuple {\gamma, \lambda}} {\psi_\push}}
	{\tuple{q, \lambda, \T}} \in \Delta', \textrm { where } \\[1ex]
	\label{eq:fractional:psipush}
	&\psi_\push \equiv \fract {\y_1} = 0 \wedge \bigwedge_{\x_i \in \X} \fract {\y_i} = \fract {\x_i}.
\end{align}
Finally, if $\op = \popop{\gamma}{\psi}$ is a pop transition without integral constraints,
then $\QQ$ contains fractional pop transitions of the form
$$
	\trule
	{\tuple {p, \lambda_\pop, \T}}
	{\popop{\tuple {\gamma, \lambda_\push}}{\restrict \psi {\lambda_\push,\lambda_\pop}}}
	{\tuple{q, \lambda_\pop, \T}} \in \Delta'.
$$
This concludes the description of $\QQ$.
We eliminated all occurrences of $\floor \x$ both from transition and push/pop stack constraints.
Thus, all transition and stack constraints of $\QQ$ are fractional.

\begin{example} \label{ex:integralpop}
Continuing with the \TPDA from Example~\ref{ex:TPDA}, consider the pair of push and pop transitions derived in Example~\ref{ex:pushcopy} ($\alpha = \tuplesmall{\tilde\delta_\push, \tilde\delta_\pop}$):
\begin{align*}
\delta_\push = \trule {q'_1} {\pushop{\alpha}{\y_1 = 0}} {q_1}, \ 
\delta_\pop = \trule {q'_2} {\popop{\alpha}{\floor {\y_1} \eqv 2 1 \land \fract{\y_1} \leq \fract{\x_0}}} {q_2}.
\end{align*}
Our construction, in order to eliminate the modular pop constraint $\floor{\y_1} \eqv 2 1$,
enriches the locations with $2$-unary abstraction $\lambda$ (we ignore 
here for simplicity the further component $T$). The $2$-abstraction 
amounts to the remainder of $\floor{\x_0}$ modulo 2, thus $\lambda \in \{0, 1\}$ and 
the transformation yields two push transitions:
\begin{align*}
\trule {\langle q'_1, \lambda \rangle} {\pushop{\langle \alpha, \lambda \rangle}{\fract{\y_1} = 0}} {\langle q_1, \lambda \rangle} 
\qquad (\lambda \in \{0,1\}).
\end{align*}
To derive the corresponding pop transitions, we use the formula~\eqref{eq:floory} and substitute
\begin{align} \label{eq:x0mod2}
\lambda_\pop - \lambda_\push - \condone {\fract {\x_0} < \fract {\y_1}} \eqv 2 1
\end{align}
in place of the pop constraint $\floor{\x_0} \eqv 2 1$,
where $\lambda_\pop$ and $\lambda_\push$ are the reminders of $\floor{\x_0}$ at the time of pop and push, respectively. In presence of the other pop constraint $\fract {\y_1} \leq \fract {\x_0}$, the formula~\eqref{eq:x0mod2} 
is equivalently expressed by $\lambda_\pop \neq \lambda_\push$. The transformation thus yields altogether two pop transitions:
\begin{align*}
&\trule {\langle q'_2, 0 \rangle} {\popop{\langle \alpha, 1 \rangle}{\fract{\y_1} \leq \fract{\x_0}}} {\langle q_2, 0 \rangle} \\
&\trule {\langle q'_2, 1 \rangle} {\popop{\langle \alpha, 0 \rangle}{\fract{\y_1} \leq \fract{\x_0}}} {\langle q_2, 1 \rangle}.
\end{align*}
\end{example}
The following two lemmas state the correctness of the construction.
\begin{restatable}[Soundness]{lemma}{lemFractionalSoundness}
	\label{lem:fractional:soundness}
	For control locations $p, q \in \L$,
	clock valuations ${\mu, \nu \in \Rgeq^\X}$,
	and a sequence of operations $w \in (\Delta')^*$,
	let $\nu' \in \Rgeq^\X$ be the unique clock valuation \st 
	$\forall \x_i \in \X \st \nu'(\x_i) = \fract {\nu(\x_i)} + \card {w}_{\checkmark_i}$.
	Then
	\begin{align*}
		\tuple {p, \lambda(\mu), \emptyset}, \mu \reach w {} \tuple {q, \lambda(\nu), \X}, \nu
			\quad \textrm{ implies } \quad
			 p, \mu \reach {} {} q, \nu'.
	\end{align*}
\end{restatable}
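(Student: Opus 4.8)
The plan is to proceed by induction on the number of transitions of the run in $\QQ$, proving a strengthened statement that relates \emph{every} intermediate configuration of $\QQ$ to a configuration of $\P$, not merely the endpoints. Concretely, I would maintain the invariant that a $\QQ$-configuration $\tuple{p, \lambda, \T}, \sigma, \omega$ (with control valuation $\sigma \in \Rgeq^\X$ and timed stack $\omega$) corresponds to the $\P$-configuration $p, \sigma, \bar\omega$ obtained by erasing the unary annotation from each stack symbol, where in addition: (i) $\lambda = \lambda(\sigma)$, i.e.\ the control component correctly abstracts the $M$-unary class of the current valuation; (ii) for every frozen clock $\x_i \in \T$ the number of $\checkmark_i$ symbols emitted so far equals $\floor{\sigma(\x_i)}$; and (iii) each $\QQ$ stack frame records the class $\lambda_\push$ of the control clocks at the time of its push and corresponds to a genuine push-copy frame of $\P$ (so the stack clocks equal the pushed control clocks), with matching fractional parts. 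The $\P$-run is built by replaying the genuine time elapses, resets, and push/pop operations of the $\QQ$-run and discarding the auxiliary $\readop\varepsilon$ and $\readop{\checkmark_{i_1}\cdots}$ steps. Since $\T$ starts at $\emptyset$ and ends at $\X$, the hypothesis forces every clock to be reset (hence frozen) at some moment, at which instant its value is $0$; this is exactly what makes (ii) consistent.

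The routine transition types are immediate from the construction. Input steps map to themselves. A test $\testop{\restrict\varphi\lambda}$ succeeding in $\QQ$ gives $\fract\sigma \models \restrict\varphi\lambda$, whence $\sigma \models \varphi$ by \Cref{fact:unary} together with invariant~(i), so the original test $\testop\varphi$ succeeds in $\P$. A reset updates $\lambda$ to $\lambda[\Y \mapsto 0] = \lambda(\sigma[\Y \mapsto 0])$ and possibly freezes some clocks of $\Y$; since reset sets them to $0$, invariant~(ii) is (re)initialised with tick count $0 = \floor 0$. A push records $\lambda_\push = \lambda(\sigma)$ in the stack symbol, restoring invariant~(iii), while in $\P$ we push exact copies of the control clocks, whose fractional parts agree with the fractional copies forced by $\psi_\push$ in \eqref{eq:fractional:psipush}.

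The first delicate point is the simulation of a time elapse. A single elapse of duration $t$ in $\P$ is realised in $\QQ$ by a sequence of genuine sub-elapses, each of length $<1$, bracketed by the silent entry/exit steps. I would argue that guessing a maximal-fractional-value set $\Y$ via $\varphi^{\max}_\Y$ and then elapsing exactly until the clocks of $\Y$ reach an integer (checked by $\fract{\x_i}=0$) processes the integer crossings in nondecreasing order of their occurrence time; each crossing increments $\lambda$ to $\lambda[\Y \mapsto \Y+1]$, preserving invariant~(i), and emits one $\checkmark_i$ for every frozen overflowing clock $\x_i \in \Y \cap \T$, preserving invariant~(ii). Because every sub-step is a genuine $\elapse$ and nothing else alters the valuation during the simulation, the sub-elapses sum to $t$ and reproduce $\sigma + t$ exactly, so a single $\elapse$ of duration $t$ in $\P$ yields the matching configuration. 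Checking that the crossings are handled in the correct order, so that $\lambda$ never drifts out of sync, is the main obstacle here.

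The second delicate point, and the real crux, is the pop. A successful $\QQ$-pop checks $\restrict\psi{\lambda_\push,\lambda_\pop}$, where $\lambda_\push$ is read off the popped frame and $\lambda_\pop = \lambda(\sigma_\pop)$ is the current control class. The fractional conjuncts of $\psi$ are kept verbatim and tested against fractional values that agree between $\QQ$ and $\P$, so they transfer directly. For a modular conjunct $\floor{\y_i} - \floor{\x_j} \eqv M k$, the construction resolves $\floor{\y_i}$ through \eqref{eq:floory}; by \Cref{fact:fractional} this expression computes precisely the true integral part $\floor{\rho(\y_i)}$ of the $\P$ stack clock at pop time (using push-copy, so that $\rho(\y_i) = \sigma_\push(\x_i) + \sigma_\pop(\x_0) - \sigma_\push(\x_0)$), the correction terms $\condone{\fract{\y_i} < \fract{\y_1}}$ and $\condone{\fract{\x_0} < \fract{\y_1}}$ being fixed by the fractional values available at pop time. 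Hence every modular conjunct resolving to $\true$ (as it must, since the pop fired) certifies the corresponding constraint on the actual $\P$ values, so the original pop succeeds. Finally, at the endpoint $\T = \X$, invariant~(ii) gives $\card{w}_{\checkmark_i} = \floor{\nu(\x_i)}$ for every clock, so the reconstructed valuation satisfies $\nu'(\x_i) = \fract{\nu(\x_i)} + \floor{\nu(\x_i)} = \nu(\x_i)$; the replayed run is thus a genuine run $p, \mu \reach {} {} q, \nu'$ of $\P$ with empty stack at both ends, as required.
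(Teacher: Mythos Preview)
Your step-by-step simulation is a natural approach, but the central invariant (i), namely $\lambda = \lambda(\sigma)$, is not actually maintained by every $\QQ$-run.  Nothing in the construction of $\ops_{\Y,\T}$ bounds the length of the $\elapse$ step: after checking $\varphi^{\max}_\Y$ one may elapse, say, $1+\delta$ instead of $\delta$, and the post-test $\bigwedge_{\x_i\in\Y}\fract{\x_i}=0$ still passes, while $\floor{\sigma(\x_i)}$ increases by $2$ but $\lambda$ is only incremented by $1$.  Once $\lambda$ and $\lambda(\sigma)$ disagree, your replayed $\P$-run can fail at the very next $\testop\varphi$ (since by \Cref{fact:unary} you would need $\fract\sigma\models\restrict\varphi{\lambda(\sigma)}$, but you only know $\fract\sigma\models\restrict\varphi{\lambda}$) or at a pop (the modular resolution via \eqref{eq:floory} uses $\lambda_\push,\lambda_\pop$, not the true classes).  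Your invariant (ii) breaks for the same reason.  You flag this as ``the main obstacle'' but the obstacle is not just a bookkeeping nuisance: the claim that $\lambda$ never drifts is false.

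The paper's fix is to decouple the $\P$-valuation from the $\QQ$-valuation.  Its inductive statement (\Cref{lem:fractional:soundness:inductive}) asserts that from any $\mu'\in R_\lambda(\mu)$ one reaches some $\nu'\in R_\xi(\nu)$, i.e.\ a valuation with the same \emph{fractional} part as $\nu$ but whose unary class is the \emph{recorded} $\xi$, not necessarily $\lambda(\nu)$.  In the elapse case the $\P$-run advances only by $\fract\delta$, so exactly one integer crossing per $\ops$ is reproduced regardless of how long the $\QQ$-elapse was; this is what keeps $\lambda(\nu')=\xi$ and condition~(C) aligned.  The induction is also organised differently---structurally on the characterisation of \Cref{lem:characterisation} (empty-stack to empty-stack, with a matched push--pop rule) rather than step by step with an explicit stack invariant---which lets the pop case consume the push frame and its recorded $\lambda_\push$ in one shot.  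To repair your argument you would need to replace ``$\P$-configuration $p,\sigma,\bar\omega$'' by ``$\P$-configuration $p,\sigma',\bar\omega'$ with $\fract{\sigma'}=\fract\sigma$ and $\lambda(\sigma')=\lambda$'' and propagate this through the stack, which essentially reproduces the paper's inductive hypothesis.
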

\begin{restatable}[Completeness]{lemma}{lemFractionalCompleteness}
	\label{lem:fractional:completeness}
	If $p, \mu \reach \pi {} q, \nu$ and all clocks are reset in $\pi$,
	then there is $w \in (\Delta')^*$ \st 
	\begin{align*}
		\tuple {p, \lambda(\mu), \emptyset}, \mu \reach w {} \tuple {q, \lambda(\nu), \X}, \nu
			\textrm{ and }
				\forall \x_i \in \X \st \floor{\nu(\x_i)} = \card w_{\checkmark_i}.
	\end{align*}
\end{restatable}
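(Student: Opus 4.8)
The plan is to simulate the given run $\pi = \delta_1 \cdots \delta_\ell$ of $\P$ one step at a time inside $\QQ$, proceeding by induction on the length $\ell$ and maintaining the invariant that a configuration $p, \mu$ reached by a prefix of $\pi$ is matched in $\QQ$ by the configuration $\tuple{p, \lambda(\mu), \T}, \mu$ with the \emph{same} clock valuation $\mu \in \Rgeq^\X$ (recall that $\QQ$ is fractional only in the sense that its constraints are fractional, not its clock domain). Since $\pi$ is given in advance, I would first determine, for every control clock $\x_i$, the position of its \emph{last} reset along $\pi$, and define $\T$ at each point to be exactly the set of clocks whose last reset has already occurred. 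Then $\T$ grows monotonically, starting at $\emptyset$ and, using the hypothesis that all clocks are reset in $\pi$, reaching $\X$ at the end, which matches the source and target locations $\tuple{p, \lambda(\mu), \emptyset}$ and $\tuple{q, \lambda(\nu), \X}$. Matching of push and pop operations is taken care of automatically by the simulated stack, so no separate appeal to the push-pop rule is needed.

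The non-elapse transitions would be handled directly. Input transitions carry over verbatim. For a test $\testop \varphi$, Fact~\ref{fact:unary} gives $\mu \models \varphi$ iff $\fract \mu \models \restrict{\varphi}{\lambda(\mu)}$, so the fractional test of $\QQ$ succeeds precisely when the original one does. For a reset $\resetop \Y$ I would fire \eqref{eq:fractional:reset}, choosing $\Y' \subseteq \Y$ to be exactly those clocks of $\Y$ whose last reset in $\pi$ is this transition; this updates the unary class to $\lambda[\Y \mapsto 0]$ and extends $\T$ consistently with the invariant. For a push the current unary class is recorded on the stack through \eqref{eq:fractional:psipush}; for the matching pop, the stored $\lambda_\push$ together with the current $\lambda_\pop$ resolves every modular pop constraint via the substitution \eqref{eq:floory}, whose correctness is exactly Fact~\ref{fact:fractional}. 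The correction terms $\condone {\fract {\y_i} < \fract {\y_1}}$ and $\condone {\fract {\x_0} < \fract {\y_1}}$ are determined by the fractional clock values present at the moment of pop, which the simulation reproduces faithfully.

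The crux is the time-elapse transition. Given an elapse of length $t$ in $\pi$, I would split it into finitely many sub-elapses, each of length $< 1$ and chosen so that within each sub-elapse every clock crosses at most one integer boundary, with all crossings occurring at the sub-elapse's endpoint. Each sub-elapse is realised by the three groups of transitions (a)--(c): guess via $\varphi^{\max}_\Y$ the set $\Y$ of clocks of currently maximal fractional value that are about to overflow, elapse, verify $\bigwedge_{\x_i \in \Y} \fract {\x_i} = 0$, read the ticks for $\Y \cap \T$ as prescribed by \eqref{eq:fractional:ops}, and update the unary class to $\lambda[\Y \mapsto \Y + 1]$. Sub-elapses with no overflow are handled with $\Y = \emptyset$, and simultaneous overflows of several clocks are grouped into a single maximal $\Y$, which is precisely why $\varphi^{\max}_\Y$ quantifies over a set of clocks.

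Finally I would verify the tick count. By the invariant a clock $\x_i$ enters $\T$ exactly at its last reset, at which moment its value is $0$; from then until the end of the run its value grows monotonically from $0$ to $\nu(\x_i)$, and a tick $\tick i$ is emitted precisely whenever $\x_i$ crosses an integer, hence $\floor {\nu(\x_i)}$ times in total (a crossing landing exactly on an integer is counted since the gadget tests $\fract {\x_i} = 0$ after elapse). No tick $\tick i$ is emitted before the last reset of $\x_i$, since there $\x_i \notin \T$, so indeed $\card w_{\tick i} = \floor {\nu(\x_i)}$. I expect the main obstacle to be this elapse simulation, namely arguing that an arbitrary elapse always decomposes into the small steps the gadget (a)--(c) can perform, that simultaneous integer crossings are captured by a single maximal $\Y$, and that the emitted ticks match the integral growth of each clock in $\T$ exactly; the remaining cases follow immediately from Facts~\ref{fact:unary} and~\ref{fact:fractional}.
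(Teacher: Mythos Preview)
Your proposal is correct, but it organises the induction differently from the paper. The paper does not walk along $\pi$ linearly; instead it strengthens the statement to an auxiliary lemma quantifying over arbitrary start/end sets $\U \subseteq \V$ of ``already-final'' clocks (with $\Resets\pi \subseteq \X\setminus\U$ and $\V \subseteq \U\cup\Resets\pi$) and proves that by structural induction on the characterisation of $\reach{}{pq}$ from Lemma~\ref{lem:characterisation}. In that scheme push and pop are treated together in a single inductive case \eqref{eq:reachrel:F}, and transitivity \eqref{eq:reachrel:E} is what forces the generalised $\U,\V$; the target lemma is then the instance $\U=\emptyset$, $\V=\X$.

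Your linear, configuration-level simulation avoids the strengthened statement entirely: because you track the full stack throughout, push and pop are separate one-step cases and no transitivity rule needs an inductive hypothesis with different $\U,\V$. The price is that your invariant must also speak about the stack --- each item $(\gamma,\rho)$ in $\P$ corresponds to $(\tuple{\gamma,\lambda_\push},\rho)$ in $\QQ$ with $\lambda_\push$ the unary class recorded at its push --- which you leave implicit but is exactly what makes the pop case go through via Fact~\ref{fact:fractional}. Both approaches handle the elapse decomposition the same way (the paper picks $\delta_{j+1} = 1 - \max_i \fract{\mu_j(\x_i)}$, matching your ``maximal-fractional clocks overflow first''); your treatment of the final non-overflowing remainder with $\Y=\emptyset$ is a small detail the paper leaves tacit. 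Net effect: your route is slightly more elementary, the paper's is more modular in that it never leaves the empty-stack reachability relation.
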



	%

\paragraph{Reconstruction of the reachability relation}

The correctness provided by \Cref{lem:fractional:soundness,lem:fractional:completeness} above
allows us to reconstruct the linear arithmetic description of the reachability relation $\reach {} {}$ of the \TPDA $\P$
by looking at reachability relation of the fractional \TPDA $\QQ$.
In fact, we never need to look at the integral values of clocks in the reachability relation of $\QQ$,
since those are reconstructed on the sole basis of the number of ticks $\tick i$ read by $\QQ$.
For this reason, it suffices to know the \emph{fractional reachability relation} $\freach {} {cd}$ of $\QQ$.
We assume that the formula expressing $\freach {} {cd}$ does not contain occurrences of integral clock variables $\floor {\x_i}, \floor {\x_i'}$,
i.e., it is of the form $\varphi_{cd}(\fract{\bar x}, \bar f, \fract{\bar x'})$,
where $\bar f = (f_1, \dots, f_{\card {\Delta'}})$ counts the number of occurrences of transitions in $\Delta'$ of $\QQ$.

With these ingredients we can express the reachability relation of $\P$ in terms of the fractional reachability relation of the fractional \TPDA $\QQ$.
Recall that transitions of $\P$ are from $\Delta = \set{\delta_1, \dots, \delta_m}$.
%
%
Transitions in $\QQ$ can be organised in three groups:
\begin{itemize}

	\item In the first group we have variables $f_1, \dots, f_n$ counting transitions $\readop {\tick i}$'s,
	which will be used to reconstruct the integral values of clocks of $\P$.

	\item In the second group we have variables $f_{n+1}, \dots, f_{n+m}$
	counting transitions $\delta_i$'s originally from $\Delta$,
	which we preserve.

	\item Finally, in the last group we have variables $f_{n+m+1}, \dots, f_{\card{\Delta'}}$
	counting the remaining (new) transitions from $\Delta' \setminus \Delta$,
	which we project away.

\end{itemize}
The last point is implemented by defining
$$\tilde \varphi_{cd} (\fract{\bar x}, f_1, \dots, f_{n+m}, \fract{\bar x'}) \equiv \exists f_{n+m+1}, \dots, f_{\card{\Delta'}} \st \varphi_{cd}(\fract{\bar x}, \bar f, \fract{\bar x'}).$$
The first two points allow us to express the reachability relation of $\P$ 
as
\begin{align*}
	&\varphi_{pq}(\fract{\bar x}, f_{n+1}, \dots, f_{n+m}, \fract{\bar x'}) \;\equiv\;
		\bigvee_{\lambda, \xi \in \Lambda_M} \varphi_{\lambda}(\floor{\bar x}) \wedge \varphi_{\xi}(\floor{\bar x'}) \wedge \\
	&		\!\!\wedge \exists f_1, \dots, f_n \st \tilde \varphi_{\tuple{p,\lambda, \emptyset}, \tuple{q, \xi, \X}}(\fract{\bar x}, f_1, \dots, f_{n+m} , \fract{\bar x'}) \wedge
				\floor{x_1} = f_1 \wedge \cdots \wedge \floor{x_n} = f_n.
\end{align*}

\section{Fractional reachability relations of fractional \TPDA}
\label{sec:fractional:TPDA}

In this section we compute the fractional reachability relation (defined in \eqref{eq:fractional:reachability}) for a fractional \TPDA~\(\P\).
%
We assume that there is a control clock $\x_0$ which is never reset.

We begin by showing how to express the one-step fractional transition relation of a fractional \TPDA as a \CDR.
Recall that $\varphi_{\id}$ from \eqref{eq:CDR:id} implements the identity relation.
The one-step transition relations of a fractional \TPDA is expressible as the following \CDR:
\begin{align}
  \label{eq:CDR:read}
  \varphi_{\readop a} (\bar x, \bar x') &\;\equiv\; \varphi_{\id}(\bar x, \bar x'), \\
  \varphi_{\testop \psi} (\bar x, \bar x') &\;\equiv\; \varphi_{\id}(\bar x, \bar x') \wedge \psi(\bar x), \\
  \varphi_{\resetop \Y}(\bar x, \bar x') &\;\equiv\;
    \bigwedge_{\x \in \Y} \fract {\x_0' - \x'} = \fract {\x_0'} \wedge
    \bigwedge_{\x \in \X \setminus \Y} \fract{\x_0' - \x'} = \fract {\x_0 - \x} \wedge \\
    &\; \wedge \fract{\x_0'} = \fract{\x_0}, \\ 
  \label{eq:CDR:elapse}
  \varphi_{\elapse}(\bar x, \bar x') &\;\equiv\;
    \bigwedge_{\x \in \X} \fract{\x'_0 - \x'} = \fract{\x_0 - \x}.
\end{align}
We assume that a clock constraint $\psi$ is in the \CDR form thanks to the equivalences \eqref{eq:CDR:zero}--\eqref{eq:CDR:leq}.
The following lemma states that the basic \CDR above
capture the one-step fractional reachability relation.
\begin{fact}
  \label{fact:onestep:CDR}
  For all valuations $\mu, \nu \in \Rgeq^\X$ and a transition $\delta = \tuple{p, \op, q} \in \Delta$ of the form $\op = \readop a, \testop \psi, \resetop \Y, \elapse$,
  \begin{align*}
    (\mu, \nu) \models \varphi_\op
      \quad \textrm{iff} \quad
        \mu \freach \delta {pq} \nu.
  \end{align*}
\end{fact}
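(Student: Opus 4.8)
The plan is to prove both directions simultaneously for each of the four operation types, by first unfolding the definition of $\freach{\delta}{pq}$ and then matching the result against the corresponding \CDR. The first observation is that both sides of the claimed equivalence depend only on the fractional parts $\fract\mu, \fract\nu$ (recall that for $\mu,\nu \in \Rgeq^\X$ the relation $\freach{}{}$ is read on $\fract\mu, \fract\nu$): the right-hand side because the definition \eqref{eq:fractional:reachability} quantifies existentially over integral lifts $\tilde\mu, \tilde\nu$ with prescribed fractional parts, and the left-hand side because $\varphi_\delta$ is a \CDR, hence built only from the fractional terms $\fract{\x_0 - \x}$, $\fract{\x_0}$ and their primed copies.

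The key technical ingredient I would isolate first is that a fractional valuation is determined by its differences from $\x_0$. Concretely, fix $c \in \I$ and let $g(t) := t - \condone{c < t}$ on $\I$; since $g$ is the identity on $[0,c]$ and subtracts $1$ on $(c,1)$, its two pieces have disjoint images, so $g$ is injective. By \eqref{eq:fract}, once $\fract{\x_0} = c$ is fixed we have $\fract{\x_0 - \x} = c - g(\fract \x)$, so knowing $\fract{\x_0}$ and $\fract{\x_0 - \x}$ determines $\fract\x$ uniquely (this is exactly the bijection $f$ underlying \Cref{lem:CDR:qe}). An immediate corollary is that $(\mu,\nu) \models \varphi_{\id}$ iff $\fract\mu = \fract\nu$: the conjunct $\fract{\x_0'} = \fract{\x_0}$ fixes $\fract{\nu(\x_0)} = \fract{\mu(\x_0)}$, and then each conjunct $\fract{\x_0' - \x'} = \fract{\x_0 - \x}$ forces $\fract{\nu(\x)} = \fract{\mu(\x)}$ by injectivity; the converse is clear.

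With this corollary, the cases $\readop a$, $\testop \psi$, and $\resetop \Y$ are ``rigid'' and follow quickly. For a read, $\tilde\mu \reach{\delta}{} \tilde\nu$ holds iff $\tilde\nu = \tilde\mu$, so $\fract\mu \freach{\delta}{pq} \fract\nu$ iff $\fract\mu = \fract\nu$, which is exactly $(\mu,\nu)\models\varphi_{\readop a} = \varphi_{\id}$. For a test one additionally needs $\tilde\mu \models \psi$; since the \TPDA is fractional, $\psi$ depends only on $\fract\mu$, so the relation becomes $\fract\mu = \fract\nu \wedge \fract\mu \models \psi$, matching $\varphi_{\id} \wedge \psi(\bar x)$ in \CDR form via \eqref{eq:CDR:zero}--\eqref{eq:CDR:leq}. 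For a reset, $\tilde\nu = \tilde\mu[\Y \mapsto 0]$, so the fractional relation is $\fract{\nu(\x)} = 0$ for $\x \in \Y$ and $\fract{\nu(\x)} = \fract{\mu(\x)}$ for $\x \notin \Y$; the first part is captured by $\fract{\x_0' - \x'} = \fract{\x_0'}$ using \eqref{eq:CDR:zero} on the primed variables, and the second part (together with $\fract{\x_0'} = \fract{\x_0}$) by injectivity exactly as for $\varphi_{\id}$, which is precisely what $\varphi_{\resetop\Y}$ asserts.

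The elapse case is the only one where the existential quantifier in \eqref{eq:fractional:reachability} does real work, and I expect it to be the main obstacle. Since a time elapse adds the same $t$ to every clock, every difference $\x_0 - \x$ is preserved, so each $\fract{\x_0 - \x}$ is invariant; this yields the conjuncts of $\varphi_{\elapse}$ and settles the forward direction, as any witnessing pair $\tilde\mu, \tilde\nu = \tilde\mu + t$ preserves fractional differences, and these depend only on fractional parts. For the converse, suppose $(\mu,\nu)\models\varphi_{\elapse}$, i.e.\ all fractional differences agree. Crucially, $\varphi_{\elapse}$ leaves $\fract{\x_0}$ completely free, and I would exploit this: choose $t \geq 0$ with $\fract{\mu(\x_0) + t} = \fract{\nu(\x_0)}$, take any lift $\tilde\mu$ with $\fract{\tilde\mu} = \fract\mu$, and set $\tilde\nu := \tilde\mu + t$. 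Then $\fract{\tilde\nu(\x_0)} = \fract{\nu(\x_0)}$ and all fractional differences of $\tilde\nu$ equal those of $\mu$, hence of $\nu$; by the injectivity corollary $\fract{\tilde\nu} = \fract\nu$, so $\tilde\mu \reach{\delta}{} \tilde\nu$ witnesses $\fract\mu \freach{\delta}{pq} \fract\nu$. The subtlety to get right is exactly that preserving all differences and matching $\fract{\x_0}$ pins down the entire fractional valuation, so a single elapse suffices; everything else reduces to the routine unfolding of the one-step semantics and of \eqref{eq:fract}.
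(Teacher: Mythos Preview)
Your argument is correct. The paper states this as a ``Fact'' and does not supply a proof, treating it as a routine verification; your write-up carries out precisely that verification, and the only step that requires any thought---the injectivity of $t \mapsto t - \condone{c < t}$ on $\I$, which is what makes the pair $(\fract{\x_0}, \fract{\x_0 - \x})$ determine $\fract\x$---is handled cleanly and is indeed the same idea behind the bijection $f$ in \Cref{lem:CDR:qe}.
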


We construct a context-free grammar $\GG$ with terminal symbols from $\Delta$
and a nonterminal of the form $\tuple{p, \varphi, q}$
for every control locations $p, q \in \L$ and a \CDR $\varphi$.
For every transition  $\delta = \trule p \op q \in \Delta$ of the form
$\op = {\readop a, \testop \psi, \resetop \Y, \elapse}$, 
we have a production
\begin{align}
  \label{eq:CFG:base}
  \tuple {p, \varphi_\op, q} \from \delta,
\end{align}
where the basic clock relations $\varphi_{\readop a}, \varphi_{\testop \psi}, \varphi_{\resetop \Y}, \varphi_{\elapse}$
are defined in \eqref{eq:CDR:read}--\eqref{eq:CDR:elapse} above.
These rules mimic cases \eqref{eq:reachrel:A}--\eqref{eq:reachrel:D}
in the characterisation of the reachability relation of Lemma~\ref{lem:characterisation}.
Transitivity rules \eqref{eq:reachrel:E} are mimicked by productions of the form
\begin{align}
  \label{eq:CFG:transitivity}
  \tuple {p, \varphi \circ \psi, r} \from \tuple {p, \varphi, q} \cdot \tuple {q , \psi, r}.
\end{align}
Push-pop rules \eqref{eq:reachrel:F} are simulated as follows:
For every pair of matching push
$\delta_\push = \trule p {\pushop \alpha {\psi_\push}} r$
and pop
$\delta_\pop = \trule s {\popop \alpha {\psi_\pop}} q$ transitions,
we have a production
\begin{align}
  \label{eq:CFG:push-pop}
  &\tuple {p, \varphi, q} \ \from\ \delta_\push \cdot \tuple {r, \psi, s} \cdot \delta_\pop, \textrm{ where } \\ 
  \nonumber
  &\varphi(\bar x, \bar x') \equiv
    \exists \bar z, \bar z' \!\st\! \psi(\bar x, \bar x') \wedge
      \psi_\push(\bar x, \bar z) \wedge
        \psi_\pop(\bar x', \bar z') \wedge
          \!\underbrace{\bigwedge_i \fract{\z_i'} = \fract{\z_i + x_0' - x_0}}_{\textrm{(A)}}.
\end{align}
The part (A) above ensures that the final fractional value $\fract{\z_i'}$ of stack clocks
is obtained from its initial value by elapsing the same amount of time $x_0' - x_0$.
While (A) is syntactically not a \CDR,
it is in fact equivalent to the \CDR
$\bigwedge_i \fract {x_0 - z_i} = \fract {x_0' - z_i'}$%
\footnote{This follows from the identity
$\forall a, b, c \in \R \st \fract a = \fract b \iff \fract{c-a} = \fract{c-b}$.}.
%
The quantified variables $\bar z, \bar z'$ can be eliminated by Lemma~\ref{lem:CDR:qe},
since they do not involve the reference variables $x_0, x_0'$.
Formally, we assume that $\varphi$ is presented as an equivalent \CDR,
uniquely determined by $\delta_\push$, $\delta_\pop$, and $\psi$.

%
%

\begin{example}
For illustration of the latter rules~\eqref{eq:CFG:push-pop},
consider one of matching pairs of push and pop transitions from Example~\ref{ex:integralpop}:
\begin{align*}
\delta_\push & = \trule {\langle q'_1, 0 \rangle} {\pushop{\langle \alpha, 0 \rangle}{\fract{\y_1} = 0}} {\langle q_1, 0 \rangle}  \\
\delta_\pop & = \trule {\langle q'_2, 1 \rangle} {\popop{\langle \alpha, 0 \rangle}{\fract{\y_1} \leq \fract{\x_0}}} {\langle q_2, 1 \rangle}.
\end{align*}
Instantiating $\psi$ with $\true$, we obtain a rule
\begin{align*}
  &\tuple {\langle q'_1, 0 \rangle, \varphi, \langle q_2, 1 \rangle} \ \from\ \delta_\push \cdot 
    \tuple {\langle q_1, 0 \rangle, \true, \langle q'_2, 1 \rangle} \cdot \delta_\pop
\end{align*}
where $\varphi \equiv \fract{\x_0} \leq \fract{\x'_0}$ is a \CDR equivalent to the following formula: 
\begin{align*}
\exists \y_1, \y_1' \! \st \! \fract{\y_1} = 0 \wedge \fract{\y_1'} \leq \fract{\x'_0} \wedge
\fract{\x_0 - \y_1} = \fract{\x'_0 - \y_1'}.
\end{align*}
\end{example}

The following two lemmas show that $\GG$ correctly encodes the fractional reachability relation of $\P$.
%
%
\begin{restatable}[Soundness]{lemma}{lemCFGsoundness}
  \label{lem:CFG:soundess}
  %
   $w \in L(p, \varphi, q) \textrm{ and } (\mu, \nu) \models \varphi \textrm{ implies } \mu \freach w {pq} \nu.$
\end{restatable}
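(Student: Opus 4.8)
The plan is to prove the statement by structural induction on the derivation of $w$ from the nonterminal $\tuple{p, \varphi, q}$ in $\GG$, i.e., on the height of its parse tree. The three kinds of productions \eqref{eq:CFG:base}, \eqref{eq:CFG:transitivity}, and \eqref{eq:CFG:push-pop} are designed to mirror exactly the three groups of rules \eqref{eq:reachrel:A}--\eqref{eq:reachrel:D}, \eqref{eq:reachrel:E}, and \eqref{eq:reachrel:F} of the characterisation in \Cref{lem:characterisation}, so each case of the induction will invoke the matching rule. For the base case, $w = \delta$ is produced by \eqref{eq:CFG:base} for some one-step transition $\delta = \trule p \op q$ with $\op \in \set{\readop a, \testop \psi, \resetop \Y, \elapse}$ and $\varphi = \varphi_\op$; since $(\mu, \nu) \models \varphi_\op$ by assumption, \Cref{fact:onestep:CDR} immediately yields $\mu \freach \delta {pq} \nu$.

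For the transitivity case, the derivation begins with a production $\tuple{p, \varphi, q} \from \tuple{p, \varphi_1, r} \cdot \tuple{r, \varphi_2, q}$ where $\varphi = \varphi_1 \circ \varphi_2$, and $w = w_1 w_2$ with $w_1 \in L(p, \varphi_1, r)$ and $w_2 \in L(r, \varphi_2, q)$. By the definition of composition \eqref{eq:CDR:composition}, $(\mu, \nu) \models \varphi$ unfolds to $\exists \bar x' \st \varphi_1(\mu, \bar x') \wedge \varphi_2(\bar x', \nu)$; since a \CDR constrains only fractional parts, this witness yields an intermediate fractional valuation $\rho \in (\R \cap [0,1))^\X$ with $(\mu, \rho) \models \varphi_1$ and $(\rho, \nu) \models \varphi_2$. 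Applying the induction hypothesis to $w_1$ and $w_2$ gives $\mu \freach {w_1} {pr} \rho$ and $\rho \freach {w_2} {rq} \nu$, and transitivity of fractional reachability (\Cref{fact:freach:transitive}) concludes $\mu \freach {w_1 w_2} {pq} \nu$.

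The push-pop case is the main obstacle, since here one must bridge the purely fractional \CDR semantics and the genuine (non-fractional) reachability used in \Cref{lem:characterisation}. The derivation begins with $\tuple{p, \varphi, q} \from \delta_\push \cdot \tuple{r, \psi, s} \cdot \delta_\pop$, with $w = \delta_\push \, v \, \delta_\pop$, $v \in L(r, \psi, s)$, and $\varphi$ as in \eqref{eq:CFG:push-pop} (presented as the quantifier-free \CDR obtained by eliminating $\bar z, \bar z'$ via \Cref{lem:CDR:qe}). From $(\mu, \nu) \models \varphi$ I would extract fractional stack valuations $\mu_\ZZ, \rho_\ZZ \in [0,1)^\ZZ$ such that $(\mu, \nu) \models \psi$, $(\mu, \mu_\ZZ) \models \psi_\push$, $(\nu, \rho_\ZZ) \models \psi_\pop$, and $\fract{\rho_\ZZ(\z_i)} = \fract{\mu_\ZZ(\z_i) + \nu(\x_0) - \mu(\x_0)}$ (part (A)). The induction hypothesis applied to $v$ gives $\mu \freach v {rs} \nu$, so by \eqref{eq:fractional:reachability} there are actual valuations $\tilde\mu, \tilde\nu \in \Rgeq^\X$ with $\fract{\tilde\mu} = \mu$, $\fract{\tilde\nu} = \nu$, and $\tilde\mu \reach v {rs} \tilde\nu$.

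I would then set $\tilde\mu_\ZZ := \mu_\ZZ$ (reading the fractional values as genuine nonnegative values) and check, using \eqref{eq:fract} together with $\mu(\x_0), \nu(\x_0) \in [0,1)$, that part (A) forces $\fract{\tilde\mu_\ZZ + \delta_{\tilde\mu\tilde\nu}} = \rho_\ZZ$, where $\delta_{\tilde\mu\tilde\nu} = \tilde\nu(\x_0) - \tilde\mu(\x_0) \geq 0$ (nonnegative because $\x_0$ is never reset). Since the automaton is fractional, $\psi_\push$ and $\psi_\pop$ depend only on fractional parts, so $(\tilde\mu, \tilde\mu_\ZZ) \models \psi_\push$ and $(\tilde\nu, \tilde\mu_\ZZ + \delta_{\tilde\mu\tilde\nu}) \models \psi_\pop$; this is precisely condition \eqref{eq:characterisation}, hence the push-pop rule \eqref{eq:reachrel:F} gives $\tilde\mu \reach w {pq} \tilde\nu$, and therefore $\mu \freach w {pq} \nu$ by \eqref{eq:fractional:reachability}. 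The delicate point throughout is that every existential witness produced at the \CDR level (the intermediate $\rho$, and the stack valuations $\mu_\ZZ, \rho_\ZZ$) must be realised as a genuine fractional valuation and then lifted to an integral-part assignment compatible with time elapse; part (A) is exactly what guarantees that this lifting is consistent with elapsing the common amount of time $\delta_{\tilde\mu\tilde\nu}$ between push and pop.
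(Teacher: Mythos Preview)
Your proof is correct and follows essentially the same approach as the paper's own proof: induction on the derivation tree, with the base case handled by \Cref{fact:onestep:CDR}, transitivity by \Cref{fact:freach:transitive}, and push-pop by unfolding the definition of $\varphi$ in \eqref{eq:CFG:push-pop}, lifting via \eqref{eq:fractional:reachability}, and applying rule \eqref{eq:reachrel:F}. If anything, your push-pop case is more carefully argued than the paper's: where the paper simply asserts that ``we can think of $\nu_\ZZ$ to be of the form $\mu_\ZZ + \delta_{\mu\nu}$'', you explicitly verify via part (A) and the fact that $\nu(\x_0) - \mu(\x_0)$ and $\tilde\nu(\x_0) - \tilde\mu(\x_0)$ differ by an integer that $\fract{\tilde\mu_\ZZ + \delta_{\tilde\mu\tilde\nu}} = \rho_\ZZ$, which is exactly what is needed to invoke \eqref{eq:characterisation}.
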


For the completeness proof, it is more convenient to work with the actual reachability relation of $\P$, yielding a stronger statement.
\begin{restatable}[Completeness]{lemma}{lemCFGcompleteness}
  \label{lem:CFG:completeness}
  %
   $\mu \reach w {pq} \nu \textrm{ implies }
    \exists \varphi \st w \in L(p, \varphi, q) \textrm{, } (\mu, \nu) \models \varphi$.
\end{restatable}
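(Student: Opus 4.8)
The plan is to prove the statement by induction on the structure of the derivation of $\mu \reach w {pq} \nu$ according to the inductive characterisation of the reachability relation in \Cref{lem:characterisation}. Working with the full reachability relation (rather than with its fractional projection) is exactly what makes the induction go through, since \Cref{lem:characterisation} is phrased for $\reach {} {pq}$; the resulting \CDR $\varphi$ constrains only fractional parts, so $(\mu, \nu) \models \varphi$ is insensitive to the integral components and the stronger statement specialises immediately to the fractional one. For every derivation rule I would exhibit a matching production of $\GG$ and then verify that the associated \CDR is satisfied by $(\mu, \nu)$.

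In the base cases --- rules \eqref{eq:reachrel:A}--\eqref{eq:reachrel:D}, where $\op \in \set{\readop a, \testop \psi, \resetop \Y, \elapse}$ --- the grammar offers the single production \eqref{eq:CFG:base}, namely $\tuple{p, \varphi_\op, q} \from \delta$, so that $w = \delta \in L(p, \varphi_\op, q)$. It then remains to check $(\mu, \nu) \models \varphi_\op$, which is precisely the content of \Cref{fact:onestep:CDR}, read in the direction from the single transition step to the \CDR (using that a genuine transition of a fractional \TPDA entails the corresponding one-step fractional relation and that $\varphi_\op$ reads only fractional values). For the transitivity rule \eqref{eq:reachrel:E} I would apply the induction hypothesis to the two sub-derivations $\mu \reach u {pr} \rho$ and $\rho \reach v {rq} \nu$, obtaining \CDR's $\varphi, \psi$ with $u \in L(p, \varphi, r)$, $(\mu, \rho) \models \varphi$, and $v \in L(r, \psi, q)$, $(\rho, \nu) \models \psi$. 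The composition production \eqref{eq:CFG:transitivity} then yields $uv \in L(p, \varphi \circ \psi, q)$, and taking $\rho$ as the witness for the middle variables in the definition \eqref{eq:CDR:composition} of $\varphi \circ \psi$ establishes $(\mu, \nu) \models \varphi \circ \psi$.

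The interesting case, and the one I expect to be the main obstacle, is the push-pop rule \eqref{eq:reachrel:F}. Here the inner derivation $\mu \reach u {rs} \nu$ gives by induction a \CDR $\chi$ with $u \in L(r, \chi, s)$ and $(\mu, \nu) \models \chi$, while the grammar supplies the production \eqref{eq:CFG:push-pop}, whose \CDR $\varphi$ existentially quantifies the initial and final stack-clock values $\bar z, \bar z'$. The side condition \eqref{eq:characterisation} furnishes a stack valuation $\mu_\ZZ$ with $(\mu, \mu_\ZZ) \models \psi_\push$ and $(\nu, \mu_\ZZ + \delta_{\mu\nu}) \models \psi_\pop$, where $\delta_{\mu\nu} = \nu(\x_0) - \mu(\x_0)$. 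I would instantiate $\bar z := \mu_\ZZ$ and $\bar z' := \mu_\ZZ + \delta_{\mu\nu}$; then $\chi$, $\psi_\push$, and $\psi_\pop$ are satisfied directly, and the delicate conjunct (A) of \eqref{eq:CFG:push-pop} holds because the chosen $z_i' = \mu_\ZZ(\z_i) + \delta_{\mu\nu}$ coincides with $z_i + x_0' - x_0$ (since $x_0' - x_0 = \delta_{\mu\nu}$), so their fractional parts agree.

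The subtlety to get right is exactly that this is the point where the elapsed time $\delta_{\mu\nu}$, measured by the never-reset clock $\x_0$, links the push-time and pop-time fractional stack values. I would make sure to justify that (A) is indeed logically equivalent to a genuine \CDR (as asserted just after \eqref{eq:CFG:push-pop}), and that the quantifier elimination of \Cref{lem:CDR:qe} legitimately removes the auxiliary variables $\bar z, \bar z'$ --- which do not involve the reference clocks $x_0, x_0'$ --- so that the $\varphi$ appearing in the production is an honest \CDR. Once this is in place, the induction closes, since in each case the witnessing valuation is read off directly from the corresponding rule of \Cref{lem:characterisation}.
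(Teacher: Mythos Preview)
Your proposal is correct and follows essentially the same approach as the paper's proof: induction on the derivation rules of \Cref{lem:characterisation}, handling the base cases via \Cref{fact:onestep:CDR} and production \eqref{eq:CFG:base}, the transitivity case via \CDR composition and production \eqref{eq:CFG:transitivity}, and the push-pop case by instantiating the existentially quantified stack variables with $\mu_\ZZ$ and $\mu_\ZZ + \delta_{\mu\nu}$ to verify the \CDR from production \eqref{eq:CFG:push-pop}. Your verification of conjunct (A) and the remark that the resulting $\varphi$ is an honest \CDR via \Cref{lem:CDR:qe} are exactly the points the paper relies on.
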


\begin{corollary}
  \label{cor:CFG:completeness}
  $\mu \freach w {pq} \nu \textrm{ implies }
  \exists \varphi \st w \in L(p, \varphi, q) \textrm{, } (\mu, \nu) \models \varphi$.
\end{corollary}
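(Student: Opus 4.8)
The plan is to derive \Cref{cor:CFG:completeness} directly from \Cref{lem:CFG:completeness} together with the definition \eqref{eq:fractional:reachability} of the fractional reachability relation. The corollary differs from the lemma only in that the hypothesis is stated in terms of the \emph{fractional} reachability relation $\mu \freach w {pq} \nu$ (with $\mu, \nu \in (\R \cap [0,1))^\X$ purely fractional) rather than the actual reachability relation $\mu \reach w {pq} \nu$. So the task is essentially to unfold the existential quantifier hidden in $\freach{}{}$ and invoke the completeness lemma on a suitable lift.

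First I would assume $\mu \freach w {pq} \nu$ for some fractional valuations $\mu, \nu$. By the definition \eqref{eq:fractional:reachability}, there exist lifts $\tilde\mu, \tilde\nu \in \Rgeq^\X$ with $\fract{\tilde\mu} = \mu$, $\fract{\tilde\nu} = \nu$, and $\tilde\mu \reach w {pq} \tilde\nu$ an actual run of $\P$. Now I apply \Cref{lem:CFG:completeness} to this actual run, obtaining a \CDR $\varphi$ such that $w \in L(p, \varphi, q)$ and $(\tilde\mu, \tilde\nu) \models \varphi$. It remains to observe that $(\mu, \nu) \models \varphi$ as well. This is where the fact that $\varphi$ is a \emph{clock difference relation} is essential: by definition a \CDR is a Boolean combination of atomic formulas $u \leq t$ whose terms are of the forms $\fract{\x_0 - \x}$, $\fract{\x_0}$, $\fract{\x_0' - \x'}$, $\fract{\x_0'}$, i.e.\ they speak only about fractional parts of clock values and of their differences. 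Since $\fract{\tilde\mu} = \mu$ and $\fract{\tilde\nu} = \nu$, and fractional parts of differences are themselves determined by the fractional parts of the individual clocks (using \eqref{eq:fract}), every such term evaluates identically under $(\tilde\mu, \tilde\nu)$ and under $(\mu, \nu)$. Consequently the truth value of $\varphi$ is unchanged, so $(\mu, \nu) \models \varphi$, which establishes the corollary with the same witness $\varphi$.

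The only point requiring a little care — and the step I would flag as the main (mild) obstacle — is justifying that \CDR are invariant under replacing a valuation by its fractional part. For the terms $\fract{\x_0}$ and $\fract{\x_0'}$ this is immediate. For the difference terms $\fract{\x_0 - \x}$ one uses that $\fract{\x_0 - \x}$ depends only on $\fract{\x_0}$ and $\fract{\x}$, via the identity $\fract{a - b} = \fract{\fract a - \fract b}$ (a consequence of \eqref{eq:fract}); hence replacing $\tilde\mu$ by $\fract{\tilde\mu} = \mu$ leaves every \CDR term invariant. This invariance is exactly the motivation recorded in the \Cref{sec:prelim:CDR} remark that \CDR over differences with respect to $\x_0, \x_0'$ are ``invariant under time elapse'', and it is what makes \CDR the right vehicle for representing \emph{fractional} reachability. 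With this observation in place the corollary follows in a single line from \Cref{lem:CFG:completeness}, and no new grammar-theoretic or inductive argument is needed beyond what the completeness lemma already provides.
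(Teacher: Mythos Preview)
Your proposal is correct and is exactly the intended derivation: the paper states the corollary without proof, and the one-line argument is precisely to unfold \eqref{eq:fractional:reachability}, apply \Cref{lem:CFG:completeness} to the lift $\tilde\mu \reach w {pq} \tilde\nu$, and use that every \CDR term depends only on fractional parts. One minor quibble: the property you need is invariance of \CDR under passing from a valuation to its fractional part, which is not quite the same as the ``invariance under time elapse'' mentioned in the remark of \Cref{sec:prelim:CDR}; but your actual justification via \eqref{eq:fract} is the right one.
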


\paragraph{Construction of the fractional reachability relation}

We now show how to build the fractional reachability relation of the fractional \TPDA $\P$
by looking at the family of context-free languages $L(p, \varphi, q)$ generated by the context-free grammar $G$ constructed above.
By Parikh's theorem \cite{Parikh:JACM:1966}, the Parikh image $\PI {L(p, \varphi, q)} {}$ is a semilinear set, and thus it is expressible in Presburger arithmetic \cite{GinsburgSpanier:Semilinear:Presburger:PJM:1966}.
While the naive translation from semilinear sets to Presburger arithmetic is exponential,
in the special case of Parikh images of context-free languages recognised by grammars
we can use the following more efficient direct translation.
\begin{theorem}[\protect{\cite[Theorem 4]{VSS05}}]
  \label{thm:Parikh}
  The Parikh image of $L(p, \varphi, q)$ is expressible by an existential Presburger formula $\psi_{p, \varphi, q}(\bar f)$ (and thus in linear arithmetic)
  computable in linear time in the size of the grammar $G$.
\end{theorem}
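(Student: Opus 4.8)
The plan is to reduce the statement to a direct encoding of parse trees by linear counting constraints, following the classical ``flow plus connectivity'' recipe of \cite{VSS05}. Fix the nonterminal $X_0 = \tuple{p, \varphi, q}$ to play the role of start symbol, and write $N$, $\Delta$, $P$ for the nonterminals, terminals, and productions of $G$. For every production $\pi = (A \from \beta) \in P$ I would introduce a variable $x_\pi \in \N$ counting how many times $\pi$ is fired, and for a symbol $\sigma$ let $|\beta|_\sigma$ be the number of occurrences of $\sigma$ in the right-hand side $\beta$. Each Parikh coordinate is then the linear form $f_\delta = \sum_{\pi = (A \from \beta)} |\beta|_\delta \cdot x_\pi$ for $\delta \in \Delta$. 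The formula $\psi_{p,\varphi,q}(\bar f)$ will existentially quantify the $x_\pi$ together with a few auxiliary variables introduced below, and will assert that the vector $(x_\pi)_\pi$ is \emph{realizable} by a genuine derivation from $X_0$.

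The first block of constraints is the Kirchhoff-style \emph{flow condition}: for every nonterminal $A \in N$ the number of times $A$ is produced must match the number of times it is expanded,
\begin{align*}
	\llbracket A = X_0 \rrbracket + \sum_{\pi = (B \from \beta)} |\beta|_A \cdot x_\pi
		= \sum_{\pi = (A \from \beta)} x_\pi,
\end{align*}
where the Iverson-bracket term seeds the start symbol exactly once. These equations are clearly necessary, but they are \emph{not} sufficient on their own: a multiset of productions whose nonterminals form a cycle disconnected from $X_0$ satisfies every flow equation, yet corresponds to no derivation rooted at $X_0$.

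The crux is therefore to add a \emph{connectivity} constraint ruling out such spurious floating components while remaining inside existential Presburger arithmetic. To this end I would introduce, for each nonterminal $A$, a rank variable $z_A \in \N$ and demand a well-founded witness of reachability from $X_0$: whenever $A \neq X_0$ is actually expanded, i.e.~$\sum_{\pi = (A \from \beta)} x_\pi > 0$, there is a fired production $\pi = (B \from \beta)$ with $x_\pi > 0$, with $A$ occurring in $\beta$, and with $z_B < z_A$. This is a polynomial-size existential disjunction over the finitely many productions of $G$, so the whole formula remains existential and of size linear in $G$.

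The correctness argument splits into two inclusions. Soundness is routine: from any parse tree one reads off $x_\pi$ as the number of nodes labelled $\pi$ and sets $z_A$ to the depth of the shallowest occurrence of $A$, after which the flow and connectivity constraints hold by inspection. The hard direction, and the main obstacle, is \emph{sufficiency}: from a satisfying assignment I must reconstruct an actual derivation. Here I would argue by induction governed by the rank variables $z_A$, using the connectivity condition to peel off productions from the root downwards and the flow equations to guarantee that the generation and consumption of each nonterminal balance, so that the fired productions can be scheduled into a legal derivation tree realizing exactly the prescribed multiplicities. The linear-time bound is then immediate, since the number of variables and atomic constraints is proportional to $\card P + \card N + \card \Delta$ and each is assembled in constant time from $G$.
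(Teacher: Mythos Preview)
The paper does not prove this theorem at all: it is quoted verbatim from \cite{VSS05} and used as a black box, so there is no ``paper's own proof'' to compare against. What you have written is, in outline, a reconstruction of the argument of \cite{VSS05} itself---flow equations on production multiplicities together with a reachability/connectivity side condition to exclude floating cycles---and the encoding you give is of the right shape and, with the bounded right-hand sides of the grammar $G$ at hand, of linear size as claimed.

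The one place where your sketch is genuinely thin is the sufficiency direction. Saying you will ``peel off productions from the root downwards'' guided by the ranks $z_A$ is not yet a proof: the ranks only witness that every \emph{used} nonterminal is reachable from $X_0$ in the fired-production graph, but they do not by themselves tell you how to partition the multiset $(x_\pi)_\pi$ into subtrees hanging below the children of the root production. The actual argument (in \cite{VSS05}, or in Esparza's later presentations) proceeds by an induction on $\sum_\pi x_\pi$ that carefully maintains both the flow balance and the connectivity invariant after removing a well-chosen production instance; the choice of which instance to remove is the nontrivial step, and a naive top-down greedy choice can destroy connectivity of the residual system. Your write-up should either spell out this inductive step or, since the theorem is a citation anyway, simply defer to \cite{VSS05} as the paper does.
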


\noindent
From the above result,
%
we express the fractional reachability relation of $\P$ as
\begin{align}
  \varphi_{pq}(\bar \x, \bar f, \bar \x') \;\equiv\;
    \bigvee_{\textrm{\CDR }\ \varphi} \varphi(\bar \x, \bar \x') \wedge \psi_{p, \varphi, q}(\bar f).
\end{align}

The following lemma states that $\varphi_{pq}$ above correctly expresses the fractional reachability relation of the fractional \TPDA $\P$.
\begin{lemma}
  For every fractional clock valuations $\mu, \nu : (\R \cap [0, 1))^\X$,
  transition count valuation $\eta : \N^\Delta$,
  and control locations $p, q \in \L$,
  \begin{align*}
    \mu, \eta, \nu \models \varphi_{pq}
      \quad \textrm{if, and only if,} \quad
        \exists w \in \Delta^* \st \PI w {} = \eta \textrm{ and } \mu \freach w {pq} \nu.
  \end{align*}
\end{lemma}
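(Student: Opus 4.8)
The plan is to prove both directions of the equivalence by combining the soundness and completeness lemmas for the grammar $\GG$ (Lemma~\ref{lem:CFG:soundess} and Corollary~\ref{cor:CFG:completeness}) with the Presburger-definability of Parikh images (Theorem~\ref{thm:Parikh}). The key observation is that the definition of $\varphi_{pq}$ as a disjunction over \CDR{s} $\varphi$ pairs a geometric condition $\varphi(\bar\x,\bar\x')$ on the fractional clock values with a combinatorial condition $\psi_{p,\varphi,q}(\bar f)$ on the transition counts, and these two conditions interact only through the shared \CDR label $\varphi$ of the grammar's nonterminal.

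For the ``if'' direction, I would assume $\mu,\eta,\nu \models \varphi_{pq}$. By the definition of $\varphi_{pq}$, there is some \CDR $\varphi$ with $(\mu,\nu)\models\varphi$ and $\eta\models\psi_{p,\varphi,q}$. Since $\psi_{p,\varphi,q}$ expresses the Parikh image of $L(p,\varphi,q)$ by Theorem~\ref{thm:Parikh}, there exists $w\in L(p,\varphi,q)$ with $\PI w {} = \eta$. Now I apply the soundness Lemma~\ref{lem:CFG:soundess}: from $w\in L(p,\varphi,q)$ and $(\mu,\nu)\models\varphi$ we conclude $\mu\freach w {pq}\nu$, which gives exactly the required $w$.

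For the ``only if'' direction, I would assume there is some $w\in\Delta^*$ with $\PI w {}=\eta$ and $\mu\freach w {pq}\nu$. By the completeness Corollary~\ref{cor:CFG:completeness}, there exists a \CDR $\varphi$ such that $w\in L(p,\varphi,q)$ and $(\mu,\nu)\models\varphi$. Since $w\in L(p,\varphi,q)$, its Parikh image $\eta=\PI w {}$ belongs to $\PI{L(p,\varphi,q)}{}$, so $\eta\models\psi_{p,\varphi,q}$ by Theorem~\ref{thm:Parikh}. Together with $(\mu,\nu)\models\varphi$, the corresponding disjunct $\varphi(\bar\x,\bar\x')\wedge\psi_{p,\varphi,q}(\bar f)$ of $\varphi_{pq}$ is satisfied by $\mu,\eta,\nu$, hence $\mu,\eta,\nu\models\varphi_{pq}$.

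The proof is essentially a bookkeeping argument threading together results already established, so I do not expect a genuine obstacle; the only point demanding care is that the disjunction over \CDR{s} in $\varphi_{pq}$ is finite and well-defined, which is guaranteed by the remark (in Sec.~\ref{sec:prelim:CDR}) that for a fixed clock set $\X$ there are only finitely many \CDR{s} up to logical equivalence. One should also note that the fractional reachability relation is insensitive to integral parts, so evaluating $\varphi(\bar\x,\bar\x')$ on the fractional valuations $\mu,\nu\in(\R\cap[0,1))^\X$ is exactly what the \CDR semantics requires; this is consistent with the fact that $\varphi_{pq}$ mentions only fractional terms $\fract{\x_0-\x}$, $\fract{\x_0}$, $\fract{\x_0'-\x'}$, $\fract{\x_0'}$.
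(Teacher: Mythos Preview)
Your proof is correct and follows essentially the same route as the paper's: extract a \CDR $\varphi$ from the disjunction, invoke Theorem~\ref{thm:Parikh} to pass between $\psi_{p,\varphi,q}$ and membership of a Parikh image, and apply Lemma~\ref{lem:CFG:soundess} and Corollary~\ref{cor:CFG:completeness} for the two directions. One small slip: you have the labels ``if'' and ``only if'' swapped relative to the statement (in ``$A$ iff $B$'', the ``only if'' direction is $A\Rightarrow B$), so your ``if'' paragraph is actually the ``only if'' direction and vice versa; the content is unaffected.
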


\begin{proof}
  For the ``only if'' direction, assume $\mu, \eta, \nu \models \varphi_{pq}(\bar x, \bar f, \bar x')$.
  There is a \CDR $\varphi$ \st $\mu, \nu \models \varphi(\bar \x, \bar \x')$ and $\eta \models \psi_{p, \varphi, q}(\bar f)$.
  By the definition of $\psi_{p, \varphi, q}$ there is a sequence of transitions $w \in L(p, \varphi, q)$ with Parikh image $\PI w {} = \eta$.
  By \Cref{lem:CFG:soundess}, $\mu \freach w {pq} \nu$, as required.
  For the ``if'' direction, assume $w$ is a sequence of transitions \st $\mu \freach w {pq} \nu$.
  By \Cref{cor:CFG:completeness}, there is a \CDR $\varphi$
  \st $w \in L(p, \varphi, q)$, and thus its Parikh image $\PI w {}$ satisfies $\PI w {} \models \psi_{p, \varphi, q}(\bar f)$,
  and $(\mu, \nu) \models \varphi$.
  %
  Putting the pieces together, $(\mu, \PI w{}, \nu) \models \varphi(\bar x, \bar x') \wedge \psi_{p, \varphi, q}(\bar f)$, as required.
\end{proof}

\begin{corollary}
  The fractional reachability relation of a fractional \TPDA is expressible in linear arithmetic.
\end{corollary}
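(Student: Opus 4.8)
The plan is to read the corollary off directly from the lemma immediately preceding it, whose two directions already establish that the displayed formula $\varphi_{pq}$ captures exactly the triples $(\mu, \eta, \nu)$ for which some witnessing run with Parikh image $\eta$ exists, i.e.\ such that $\exists w \in \Delta^* \st \PI w {} = \eta$ and $\mu \freach w {pq} \nu$. Since, by definition, a run satisfies a linear arithmetic formula precisely when its data --- the fractional initial and final valuations together with the Parikh image of its transition word --- satisfies it, nothing semantic remains to be done. The only task is therefore to verify that $\varphi_{pq}$ is a genuine formula of linear arithmetic, which is a purely syntactic check.

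First I would confirm that the outer disjunction is finite: as recorded in \Cref{sec:prelim:CDR}, for a fixed clock set $\X$ there are only finitely many \CDR up to logical equivalence, so the disjunction over \CDR $\varphi$ contributes finitely many disjuncts. Next, each disjunct is the conjunction $\varphi(\bar \x, \bar \x') \wedge \psi_{p, \varphi, q}(\bar f)$. The first conjunct $\varphi$ is a \CDR, i.e.\ a Boolean combination of atoms $u \leq t$ built from fractional-difference terms; this is a quantifier-free formula of rational arithmetic, hence of linear arithmetic. The second conjunct is the existential Presburger formula $\psi_{p, \varphi, q}(\bar f)$ supplied by \Cref{thm:Parikh}, expressing the Parikh image of the context-free language $L(p, \varphi, q)$ generated by $\GG$; this too is a linear arithmetic formula. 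A finite disjunction of conjunctions of linear arithmetic formulas is again linear arithmetic, so $\varphi_{pq}$ lies in the logic, and effectivity is immediate since $\GG$ is computable and \Cref{thm:Parikh} produces each $\psi_{p, \varphi, q}$ in linear time.

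The genuinely hard work has already been discharged upstream and is where I would locate the real obstacle: the soundness and completeness of $\GG$ (\Cref{lem:CFG:soundess} and \Cref{cor:CFG:completeness}), which rest in turn on closure of \CDR under relational composition (\Cref{cor:CDR:composition}) and hence on the quantifier-elimination result \Cref{lem:CDR:qe}. Relative to those ingredients, this corollary is a bookkeeping step, and I would expect no difficulty beyond confirming the three syntactic points above: finiteness of the index set of \CDR, membership of each \CDR in rational arithmetic, and membership of each Parikh formula in existential Presburger arithmetic.
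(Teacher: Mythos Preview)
Your proposal is correct and matches the paper's approach: the corollary is not given a separate proof in the paper and is meant to follow immediately from the preceding lemma together with the syntactic observation that $\varphi_{pq}$ is a finite disjunction of conjunctions of a \CDR (quantifier-free rational arithmetic) and an existential Presburger formula from \Cref{thm:Parikh}. Your identification of the three checks (finitely many \CDR, each \CDR is linear arithmetic, each $\psi_{p,\varphi,q}$ is existential Presburger) is exactly what is needed, and your remark that the substantive content lies upstream in \Cref{lem:CFG:soundess}, \Cref{cor:CFG:completeness}, and \Cref{cor:CDR:composition} is accurate.
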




\section{Complexity}  \label{sec:complexity}
\label{sec:complexity}

In this section we comment on the complexity of our procedure when applied to \TPDA and certain subclasses thereof.
By combining the blow-up of the different constructions in \Cref{sec:simplify,sec:fractional:TPDA},
we can express the reachability relation of a \TPDA with an existential formula of linear arithmetic of doubly exponential size.
If we assume that we start from a push-copy \TPDA, we can avoid the singly exponential blow-up of \Cref{sec:simplify:push-copy},
and obtain a formula of singly exponential size.
We obtain the following refinement of our main result \Cref{thm:TPDA}:

\begin{theorem}
	\label{thm:TPDA:complexity}
	The reachability relation of a \TPDA is expressible as an existential formula of linear arithmetic of doubly exponential size.
	For push-copy \TPDA, the complexity reduces to singly exponential.
\end{theorem}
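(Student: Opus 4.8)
The plan is to derive both size bounds by composing the blow-ups incurred by the successive reductions of \Cref{fig:synopsis}, tracking in parallel two quantities along the way: the number of clocks and the number of control locations/transitions. The single observation that drives the whole estimate is that the number of distinct clock difference relations (up to logical equivalence) is \emph{exponential} in the number of clocks: a \CDR is a Boolean combination of atomic formulas $u \leq t$ over linearly many terms, and is thus determined up to equivalence by an ordering of these terms. Consequently the size of the grammar $\GG$ of \Cref{sec:fractional:TPDA}, and hence of the final formula, is governed by the number of control locations (entering polynomially) and by the number of \CDR $\varphi$ (entering polynomially, but itself exponential in the clock count).

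First I would chase the general case. By \Cref{thm:push-copy} the passage to a push-copy \TPDA costs a single exponential in the number of transitions while leaving the number of clocks essentially unchanged (one stack clock per control clock). Applying \Cref{thm:pop-integer-free} to this automaton, whose transition set is already exponential, yields $\card{\X'} = O(\card \X \cdot \card \Delta)$, i.e.\ \emph{singly} exponentially many clocks, together with $\card{\L'}$ and $\card{\Delta'}$ that are \emph{doubly} exponential (since the $2^{O(\card\Delta)}$ factors are applied to an already exponential $\card\Delta$). The clock-memorisation step of \Cref{lem:clock:memorisation} only doubles the clocks, and the fractional reduction of \Cref{sec:simplify:fractional} multiplies the locations by $\card{\Lambda_M}$, by $\card{2^\X}$, and by the cost of eliminating diagonal control--control constraints; since the clock count is unchanged and each of these factors is at most doubly exponential, the combined bound stays at doubly exponentially many locations and singly exponentially many clocks. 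Feeding this into \Cref{sec:fractional:TPDA}, the grammar $\GG$ has nonterminals $\tuple{p,\varphi,q}$ and productions polynomial in the number of control locations (doubly exponential) and in the number of \CDR $\varphi$ (exponential in the singly-exponential clock count, hence again doubly exponential), so $\card\GG$ is doubly exponential. \Cref{thm:Parikh} then produces, in time linear in $\card\GG$, an existential Presburger formula $\psi_{p,\varphi,q}$, and the final formula $\varphi_{pq} \equiv \bigvee_\varphi \varphi \wedge \psi_{p,\varphi,q}$ is of doubly exponential size.

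For the push-copy refinement I would simply omit the first reduction of \Cref{sec:simplify:push-copy}, which is the source of the first exponential. Starting from a push-copy automaton, \Cref{thm:pop-integer-free} now incurs only a single exponential blow-up in locations and transitions and keeps the clock count polynomial; the remaining steps (clock memorisation, fractional reduction, grammar construction, Parikh) each add at most a single exponential on top of polynomially many clocks, so every quantity stays singly exponential and the resulting formula is singly exponential.

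The last thing to verify, and the only genuine subtlety, is that the output lies in the \emph{existential} fragment. Here I would note that the two quantifier-elimination results used inside the constructions---\Cref{lem:qe-clocks} for building pop constraints and \Cref{lem:CDR:qe} for composing \CDR in \eqref{eq:CFG:push-pop}---are applied during the construction and return quantifier-free clock constraints, resp.\ \CDR, so they introduce no quantifiers into the final formula. Each reconstruction formula (push-copy, types A and B, clock memorisation, and the fractional step) adds only existentially quantified counting variables together with conjunctions or disjunctions of linear-arithmetic atoms whose size is dominated by the bounds above, so existentiality is preserved throughout and the size estimates are unaffected. The main obstacle is therefore not any single reduction but the careful bookkeeping of how the clock count and the location count propagate, together with the realisation that the exponential dependence of the \CDR count on the clocks is precisely what turns a singly-exponential clock count into a doubly-exponential grammar in the general case.
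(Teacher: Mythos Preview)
Your proposal is correct and follows exactly the paper's approach: the paper's own justification for \Cref{thm:TPDA:complexity} is the single paragraph preceding it, which simply says to combine the blow-ups of \Cref{sec:simplify,sec:fractional:TPDA} and, in the push-copy case, to skip the exponential step of \Cref{sec:simplify:push-copy}; your more detailed bookkeeping of clock versus location counts and the check that existentiality is preserved make explicit what the paper leaves implicit. One point to tighten: your claim that a \CDR ``is determined up to equivalence by an ordering of these terms'' is literally true only for \emph{conjunctive} \CDR (a general \CDR is a union of such orderings, which would give a doubly-exponential count in the number of clocks), but since the base \CDR can be split into conjunctive disjuncts at the grammar level and both composition and the push-pop rule preserve conjunctivity via \Cref{lem:qe-rat}, restricting to conjunctive \CDR suffices and your $2^{\mathrm{poly}(n)}$ bound stands.
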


The singly exponential complexity for push-copy \TPDA is particularly interesting,
since already for \TA the best currently known procedure builds a formula of exponential complexity \cite{FranzleQuaasShirmohammadiWorrell:2019}.
In fact, it is not possible to compute such a formula in polynomial time for push-copy \TPDA unless $\NP = \EXPTIME$,
and for \TA unless $\NP = \PSPACE$:
Assuming the contrary, since the satisfiability problem for the existential fragment is in \NP (cf.~\Cref{thm:ELA:NP}),
we would be able to solve the nonemptiness problem in \NP.
However, nonemptiness for \TA is \PSPACE-hard and, as we show below, it is \EXPTIME-hard for push-copy \TPDA.

The nonemptiness problem for several subclasses of push-copy \TPDA is \EXPTIME-complete.
The upper-bound follows from the push-copy assumption,
since we avoid the exponential blow-up of quantifier elimination of \Cref{sec:simplify:push-copy},
thus obtaining an equi-nonempty context-free grammar of singly exponential size.
This gives an \EXPTIME procedure to decide nonemptiness of push-copy \TPDA.
This subsumes previous results on \TPDA with timeless stack \cite[Theorem 4.1]{bouajjani:timed:PDA:94}
and dense-time \TPDA without diagonal constraints \cite[Theorem 6]{AbdullaAtigStenman:DensePDA:12}.

The \EXPTIME-hardness results are obtained in two steps:
In the first step (\Cref{sec:complexity:TA}) we recall known \PSPACE-hardness results for the underlying \TA,
and in the second step (\Cref{sec:complexity:TPDA}) we present a generic reduction
showing that adding an untimed stack (which can model alternation) causes the complexity to jump to \APSPACE = \EXPTIME \cite{ChandraKozenStockmeyere:JACM:1981}
(\APSPACE is alternating \PSPACE).
\Cref{sec:complexity:1clockTPDA} contains additional complexity results for \TPDA with 1 clock.

\subsection{Complexity of \TA nonemptiness}
\label{sec:complexity:TA}

The \emph{regular intersection-nonemptiness problem} asks whether
$$\lang {A_1} \cap \cdots \cap \lang {A_n} \neq \emptyset$$
for given \NFA's $A_1, \dots, A_n$.
It is well-known that the regular intersection-nonemptiness problem is \PSPACE-complete \cite[Lemma 3.2.3]{Kozen:FOCS:1977}.
For the purpose of the reductions below, we assume \wlg that \NFA states belong to the set $\set{0, \dots, n - 1}$.
The following three lemmas establish \PSPACE-hardness (and hence also \PSPACE-completeness) of 
1) \TA with only fractional constraints, 2) discrete-time \TA with only equational non-diagonal integral constraints,
and 3) discrete-time TA with only non-diagonal modular constraints.

\begin{lemma}
	\label{lem:fractional:TA:PSPACE}
	The reachability problem for fractional \TA is \PSPACE-complete.
\end{lemma}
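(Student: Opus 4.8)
The plan is to establish the upper and lower bounds separately. For membership in \PSPACE, I would observe that fractional \TA form a syntactic subclass of timed automata, so the bound follows from Alur and Dill's region construction \cite{AD94}: reachability is decided by a nondeterministic search through region space, which takes only polynomial space. In fact, since fractional constraints $\fract \x = 0$ and $\fract \x \leq \fract \y$ are sensitive only to the equalities and cyclic order of the fractional parts of clocks, it suffices to track the \emph{fractional region} (the ordered partition of clocks by fractional value), of which there are exponentially many; a guess-and-check over consecutive fractional regions runs in \PSPACE.

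For \PSPACE-hardness I would reduce from the regular intersection-nonemptiness problem, which is \PSPACE-complete \cite[Lemma 3.2.3]{Kozen:FOCS:1977} and was recalled above. Given \NFA's $A_1, \dots, A_n$ over a common alphabet $\Sigma$ with states in $\set{0, \dots, n-1}$, the product automaton recognising $\lang {A_1} \cap \cdots \cap \lang {A_n}$ has exponentially many states, so I cannot build it explicitly; instead I would encode its state into the fractional order of the clocks of a fractional \TA of polynomial size. Concretely, I would use \emph{reference clocks} $r_0, \dots, r_{n-1}$ maintained with pairwise distinct fractional parts, together with a \emph{tracking clock} $t_i$ for each automaton $A_i$, and maintain the invariant that $A_i$ is in state $j$ exactly when $\fract {t_i} = \fract {r_j}$ (an equality expressible as $\fract {t_i} \leq \fract {r_j} \wedge \fract {r_j} \leq \fract {t_i}$). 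The decisive point is that a time elapse $\elapse$ increments all clocks simultaneously and therefore preserves every tie and the cyclic order of fractional parts, so this encoding is invariant under the passage of time.

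The core of the construction is the \emph{update gadget} simulating one transition $(j, a, j') \in \delta_i$ of $A_i$. The automaton first verifies the current tie $\fract {t_i} = \fract {r_j}$ with a $\testop{}$, and then re-tags $t_i$ from $r_j$ to $r_{j'}$ by letting time elapse until the guard $\fract {r_{j'}} = 0$ fires (i.e.\ until $r_{j'}$ next reaches an integer) and, at that very instant, executing $\resetop {t_i}$; this yields $\fract {t_i} = 0 = \fract {r_{j'}}$, installing the new tie. Because ties are preserved under elapse and the references remain distinct, no other tracking clock $t_{i'}$ is disturbed and no spurious coincidence is created. Reading a letter $a$ is then a macro-step that updates $t_1, \dots, t_n$ one after another, and the intermediate elapses inside the macro-step are harmless precisely because the encoding is elapse-invariant. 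I would initialise the references to distinct fractional parts by resetting them at successive instants separated by small elapses, set each $t_i$ to its initial state, and let a final $\testop{}$ checking that every $t_i$ is tied to an accepting-state reference of $A_i$ witness acceptance of a common word. The resulting automaton has $O(n)$ clocks and polynomially many control locations (one iterates over the $n$ automata and their polynomially many transitions), giving a logspace reduction and hence \PSPACE-hardness.

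The step I expect to be the main obstacle is proving the correctness of the re-tagging gadget: one must show that waiting for $\fract {r_{j'}} = 0$ and resetting $t_i$ at that instant establishes exactly the intended tie while leaving all other tracking clocks and the distinctness of the references intact, and that the one-automaton-at-a-time simulation faithfully realises the \emph{synchronous} product despite the interleaved time elapses. Carefully maintaining the reference invariant across successive macro-steps (so that each $r_j$ wraps around cleanly and distinctly) is the delicate part of the argument.
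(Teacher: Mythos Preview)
Your proposal is correct and follows essentially the same approach as the paper: reduce \PSPACE-hardness from regular intersection-nonemptiness by encoding each \NFA's state as a tie between a tracking clock and one of $n$ reference clocks with pairwise distinct fractional parts, updating the tie by elapsing until the target reference wraps to $0$ and resetting the tracking clock at that instant. Your version is slightly more economical---you share one set of reference clocks among all $n$ automata (giving $O(n)$ clocks) whereas the paper uses a separate copy per automaton (giving $O(n^2)$ clocks) and additionally forces each update elapse to be integral via an extra control clock---but the core mechanism is identical.
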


\begin{proof}
	We reduce from the regular intersection-nonemptiness problem.
	We build a fractional \TA $B$ which simulates the \NFA's $A_1, \dots, A_n$ by storing their control locations in the fractional clocks.
	For every $0 \leq i < n$, we have $n+1$ clocks $x^i_0, x^i_1, \dots, x^i_n$
	which during the simulation satisfy the invariant:
	\begin{align*}
		0 \leq \fract {x^i_0} < \fract {x^i_2} < \cdots < \fract {x^i_{n-1}} < 1
	\end{align*}
	and $\fract{x^i_n}$ equals $\fract {x^i_k}$ precisely when automaton $A_i$ is in state $q_i = k$.
	It is clear that an \NFA $A_i$ transition going from state $q_i = k$ to state $q_i' = k'$
	can be simulated by elapsing an integral amount of time (which can be checked by adding an extra control clock)
	and resetting $x^i_n$ precisely when $\fract {x^i_{k'}} = 0$.
\end{proof}

\begin{lemma}[\protect{\cite[Theorem 4.17]{AD94}}]
	\label{lem:integral:TA:PSPACE}
	The reachability problem for discrete-time \TA with constraints of the form $\x_i = k$, equivalently 
	$\floor{\x_i} = k$, ($k \in \N$) is \PSPACE-complete.
\end{lemma}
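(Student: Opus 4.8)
The plan is to prove the two directions separately: membership in \PSPACE and \PSPACE-hardness.

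For the upper bound I would exploit the fact that, in discrete time with only equality guards $\floor{\x_i} = k$, the exact value of a clock becomes irrelevant once it exceeds the largest constant $M$ occurring in the automaton: clocks only grow until reset and may be compared solely for equality against constants $\le M$, so any two valuations that agree on each clock's capped value $\min(\floor{\x_i}, M+1)$ are indistinguishable by the automaton. A configuration is therefore faithfully represented by a control location together with a vector in $\set{0,\dots,M,M+1}^{\X}$, which occupies only $O(\card{\X}\cdot\log M)$ bits, hence polynomially many since $M$ is written in binary. Reachability in the resulting (exponentially large but polynomially described) configuration graph is then decided by guessing a path one edge at a time and storing only the current capped configuration; this uses polynomial space, so by Savitch's theorem the problem lies in \PSPACE.

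For the lower bound I would reduce from the regular intersection-nonemptiness problem, exactly as in \Cref{lem:fractional:TA:PSPACE}, but encoding \NFA states in the \emph{integral} parts of clocks instead of in fractional parts. Given \NFA's $A_1,\dots,A_n$ with state sets $\set{0,\dots,N-1}$, I would introduce one state-clock $c_i$ per automaton together with a master clock $t$ that cycles through the phases $t=0,1,\dots,N-1$ and is reset to $0$ at the end of each period; the period of $t$ plays the role that the unit circle $[0,1)$ plays in the fractional construction. The master clock lets one simulate the otherwise-forbidden diagonal test ``$c_i$ equals the current phase'' by the legal constant test $c_i = N$ performed at the instant $t=\phi$, so that $A_i$ is declared to be in state $q_i$ precisely at the phase $\phi=q_i$ at which $c_i$ reaches $N$. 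Reading a common input letter $a$ then amounts to guessing, for each $i$, an $a$-transition $q_i\to q_i'$ and reinstalling the successor by resetting $c_i$ at the phase dictated by $q_i'$; synchronisation through $t$ forces all automata to advance on the same letter, and a designated target configuration of the \TA is reachable iff the $A_i$ admit a common accepted word.

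The main obstacle is exactly this step of updating each clock individually while all clocks elapse in lockstep. In the fractional setting of \Cref{lem:fractional:TA:PSPACE} the update is painless because a single fractional time-elapse continuously rotates the whole circle and can align a freshly reset marker with the target slot regardless of the current state; in discrete time there is no fractional rotation, so the phase at which $c_i$ must be reset (fixed by the target $q_i'$) may precede the phase at which the current state $q_i$ is observed. I would resolve this by spreading one product step over a bounded number of master-periods, using the wrap-around of $t$ to carry a pending reset forward to the required phase, realised by a finite polynomial-size gadget. This delicate bookkeeping is the technical heart of the construction, and is precisely what is carried out in \cite{AD94} (Theorem 4.17), which I would invoke for the remaining details.
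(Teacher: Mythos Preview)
Your upper bound via capped clock values is correct and in fact more explicit than what the paper does (the paper simply relies on the citation for membership in \PSPACE).

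For hardness, both you and the paper reduce from the regular intersection-nonemptiness problem, but the encodings differ. The paper stores the state of $A_i$ \emph{directly} as the value of a clock $y_i$, maintaining the invariant $y_i = q_i$. Updating $y_i$ to a target value $k$ is done over a window of $2n$ time units (measured by an auxiliary clock $x$): every clock $y_j$ is reset whenever it reaches $n$, which makes each $y_j$ with $j\neq i$ return to its original value after the window; the target clock $y_i$ is additionally reset when it equals $n-k$, so that it ends at value $k$. This sidesteps entirely the wrap-around difficulty you identify, because the doubled window guarantees that the required reset moments occur in order regardless of the initial value.

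Your phase-based encoding (state $=$ phase at which $c_i$ hits $N$) is workable but more roundabout; the ``main obstacle'' you describe is an artefact of this encoding rather than an intrinsic difficulty. One concrete issue: your closing appeal to \cite{AD94} for the ``remaining details'' is misdirected, since the hardness proof there proceeds by direct simulation of linearly bounded Turing machines, not via intersection-nonemptiness, and therefore cannot be invoked to fill in the gadget you sketch. You would need to spell out the multi-period bookkeeping yourself---or, more simply, switch to the direct encoding above.
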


\begin{proof}
	The hardness argument given in \cite{AD94} consists in a direct simulation from the membership problem of linearly bounded Turing machines.
	An alternative reduction from the intersection-nonemptiness problem can be given.
	We have $n+1$ clocks $x, \y_1, y_1, \dots, y_n$ and we maintain the invariant that $y_i = k$ precisely when automaton $A_i$ is in state $k$.
	Setting $y_i$ to $k$ is performed by letting a total of $2n$ time units elapse (achieved using clock $x$),
	ensuring that whenever any clock $y_j = n$ it is immediately reset (including $y_i$),
	and additionally resetting $y_i$ when it equals $n - k$.
	At the end of the $2n$ time units elapse $y_i = k$,
	and all the other $y_j$ will have retained their original value.
\end{proof}

\begin{lemma}
	\label{lem:modular:TA:PSPACE}
	The reachability problem for discrete-time \TA with constraints of the form $\floor {\x_i} \eqv m k$ ($k, m \in \N$) is \PSPACE-complete.
\end{lemma}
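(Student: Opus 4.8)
The plan is to prove \PSPACE-membership and \PSPACE-hardness separately. Membership is immediate, since discrete-time \TA whose only constraints are of the form $\floor {\x_i} \eqv m k$ form a subclass of \TA, whose reachability problem is in \PSPACE \cite{AD94}. For hardness I would reduce from the regular intersection-nonemptiness problem, reusing the overall scheme of \Cref{lem:integral:TA:PSPACE} but encoding \NFA states in the \emph{residues} of clocks rather than in their exact values.

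Fix \NFA's $A_1, \dots, A_n$ over a common alphabet, with states in $\set{0, \dots, n-1}$, and take the modulus to be $m = n$. The \TA $B$ uses one clock $y_i$ per automaton $A_i$ and maintains the invariant that $A_i$ is in state $k$ exactly when $y_i \eqv n k$. Whether $A_i$ is currently in state $k$ is then tested by the modular constraint $\floor {y_i} \eqv n k$, which (together with those used in the update gadget below) is the only form of constraint that $B$ ever uses. A common input letter is guessed nondeterministically, and for each $i$ a transition $k \to k'$ of $A_i$ on that letter is simulated in turn, so that all automata advance in lockstep; acceptance is signalled by simultaneously testing $\floor {y_i} \eqv n f_i$ for the chosen final states $f_i$, so that $B$ has an accepting run iff $\lang {A_1} \cap \cdots \cap \lang {A_n} \neq \emptyset$.

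The core of the reduction is updating a single residue while leaving the others untouched, despite time elapsing globally. Here I would exploit that elapsing exactly $n$ time units is the identity on residues modulo $n$. To move $y_i$ from residue $k$ to residue $k'$, $B$ runs an \emph{update block}: it resets an auxiliary clock $x$, and then two modular tests force the block to elapse a total of exactly $n$ units, resetting $y_i$ to $0$ in between at the right moment. Concretely, the block forces $n - k'$ units to pass (detected by $x \eqv n (n - k')$), resets $y_i$, and then forces $k'$ more units (detected by $x \eqv n 0$, with a control flag ensuring the block is nonempty). Since $x$ is reset at the start of the block its value stays within $[0, n]$, so these tests pin $x$ to the exact intended values; at the end $y_i = k'$, hence $y_i \eqv n k'$, while every other clock has gained exactly $n$ and thus retained its residue. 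Sequencing one update block per automaton realises a single synchronized step of all the \NFA's, and correctness follows by a straightforward induction showing the invariant is preserved by each block.

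The main obstacle, exactly as in \Cref{lem:integral:TA:PSPACE}, is that time elapse is global and cannot target an individual clock; the full-period trick above is precisely what makes a residue update \emph{local}. The remaining care is in the degenerate case $k' = 0$, where the reset of $y_i$ coincides with the end of the block, and in guaranteeing that each block elapses a strictly positive amount so that the terminal test $x \eqv n 0$ is not satisfied vacuously at the start; both are handled by bookkeeping in the finite control, which incurs only a polynomial overhead.
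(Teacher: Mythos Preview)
Your approach is the same as the paper's---reduce from regular intersection-nonemptiness and maintain the invariant $y_i \eqv n k$ for the current state of $A_i$---and the construction you describe is correct. However, the justification has a gap. You write ``since $x$ is reset at the start of the block its value stays within $[0,n]$, so these tests pin $x$ to the exact intended values''. This is not true: with only modular constraints nothing bounds the amount of time elapsed in a block, so at the first test $x$ may be any value congruent to $n-k'$ modulo $n$, and the whole block may elapse any nonnegative multiple of $n$ rather than exactly $n$. Consequently you cannot conclude $y_i = k'$ or that the other clocks gained exactly $n$.

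Fortunately the construction still works, for a simpler reason. The terminal test $x \eqv n 0$ guarantees the total elapsed time is $\eqv n 0$, so all other residues are preserved; and the time elapsed after the reset of $y_i$ is $\eqv n k'$, so $y_i \eqv n k'$ at the end. In other words, the two tests pin residues, not exact values, and residues are all the invariant needs. Once you replace the exact-value argument by this residue argument (which also makes the special handling of $k'=0$ and of ``strictly positive'' blocks unnecessary), the proof is complete and coincides with the paper's.
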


\begin{proof}
	The reduction is similar to \Cref{lem:integral:TA:PSPACE},
	by maintaining the invariant $y_i \eqv n k$ whenever automaton $A_i$ is in state $k$.
\end{proof}

\subsection{Complexity of \TPDA nonemptiness}
\label{sec:complexity:TPDA}

Our complexity results for push-copy \TPDA are summarised below.

\begin{theorem}
	\label{thm:TPDA:nonemptiness:EXPTIME}
	The nonemptiness problem for push-copy \TPDA is \EXPTIME-complete, and \EXPTIME-hard already for
	1) fractional \TPDA with untimed stack,
	2) discrete-time \TPDA with untimed stack with only constraints of the form $\floor {x_i} = k$ ($k \in \N$),
	3) the same but with only constraints of the form $\floor {x_i} \eqv m k$ ($k,m \in \N$).
\end{theorem}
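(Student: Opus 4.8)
The plan is to treat the upper and lower bounds separately, reusing the machinery already developed in the paper. For the \EXPTIME upper bound, I would run the chain of reductions from \Cref{sec:simplify,sec:fractional:TPDA} on the given push-copy \TPDA. Since we start from a push-copy automaton, the singly exponential blow-up of the quantifier-elimination step in \Cref{sec:simplify:push-copy} is avoided, and each remaining step (pop-integer-free, reset-once, fractionalisation) contributes at most a singly exponential multiplicative factor in the number of control locations, transitions, and clock difference relations, while the number of clocks stays polynomial. Consequently the context-free grammar $\GG$ built in \Cref{sec:fractional:TPDA} has singly exponential size and is equi-nonempty with the original automaton. For mere nonemptiness I do not need the Parikh image or the passage to linear arithmetic: it suffices to test emptiness of $\GG$ (over the relevant initial and final nonterminals), which is decidable in time linear in $\abs \GG$. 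This yields an \EXPTIME procedure, and hence \EXPTIME-membership for push-copy \TPDA.

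For the matching lower bounds I would follow the two-step strategy announced in the introduction. The underlying \PSPACE-hardness of each of the three \TA fragments is established in \Cref{lem:fractional:TA:PSPACE,lem:integral:TA:PSPACE,lem:modular:TA:PSPACE} by a uniform reduction from the regular intersection-nonemptiness problem for \NFA's $A_1, \dots, A_n$: the \TA reads a common input word and stores the current state of each $A_i$ inside a block of clocks, using, respectively, fractional, equational integral, or modular constraints. The key observation is that \emph{the very same clock encodings} can be kept verbatim while we additionally equip the automaton with an untimed stack whose stack constraint is $\true$ (and which is therefore trivially push-copy, as there are no nontrivial push constraints to resolve). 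The stack lets the automaton simulate, on the same input word, an auxiliary pushdown automaton $P$, interleaving $P$'s push/pop moves with the clock manipulations that simulate the \NFA product; since the stack never interacts with the timing constraints, the resulting automaton stays within the restricted fragment. The resulting \TPDA is nonempty if, and only if, $\lang P \cap \lang{A_1} \cap \cdots \cap \lang{A_n} \neq \emptyset$.

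It therefore remains to show that this intersection-nonemptiness problem for one pushdown automaton together with $n$ finite automata is \EXPTIME-hard, which is where alternation enters: intersecting a context-free language with the exponential-size regular language given by the product of the $A_i$'s is as hard as reachability for an alternating \PSPACE machine, and $\APSPACE = \EXPTIME$ by \cite{ChandraKozenStockmeyere:JACM:1981}. Concretely, I would encode an accepting computation tree of a linearly bounded alternating Turing machine as a bracketed word: the $A_i$'s check local legality of successive polynomial-space configurations and of the finite control structure, while the stack of $P$ matches opening and closing brackets of the tree traversal and, via the standard push-on-descent/match-on-ascent discipline, enforces that the configuration restored when returning to a universal node agrees with the one stored when that node was first visited. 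Running this reduction through each of the three clock encodings yields \EXPTIME-hardness for items 1), 2), and 3) simultaneously, and combined with the upper bound gives \EXPTIME-completeness for push-copy \TPDA in general.

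The main obstacle is precisely the \EXPTIME-hardness of the intersection problem, i.e.\ faithfully compiling alternation into the pushdown-plus-regular setting: one must guarantee that the stack discipline reconstructs each parent configuration across matched brackets even though the untimed stack carries only finitely many symbols per cell, which forces configurations to be spelled out symbol-by-symbol and matched against the input. Once this encoding is in place, wiring it into each of the three restricted clock fragments is routine, since the constructions of \Cref{lem:fractional:TA:PSPACE,lem:integral:TA:PSPACE,lem:modular:TA:PSPACE} are used as black boxes and the stack operates independently of the timing.
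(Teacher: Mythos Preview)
Your proposal is correct and follows essentially the same approach as the paper: the upper bound avoids the push-copy step and tests emptiness of the singly exponential grammar, while the lower bound reduces from the regular--context-free intersection-nonemptiness problem $\lang P \cap \lang{A_1} \cap \cdots \cap \lang{A_n} \neq \emptyset$ and plugs in the clock encodings of \Cref{lem:fractional:TA:PSPACE,lem:integral:TA:PSPACE,lem:modular:TA:PSPACE}. The only difference is that the paper treats the \EXPTIME-hardness of this intersection problem as a known result (citing \cite{HeusnerLerouxMuschollSutre:LMCS:2012,EsparzaKuceraSchoon:IC:2003}), whereas you sketch a direct encoding of alternating linearly bounded machines; your sketch is fine but not needed.
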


\noindent
Similar \EXPTIME-hardness results about extensions of pushdown automata exist in the literature,
such as pushdown automata extended with regular stack valuations \cite[Theorem 7]{EsparzaKuceraSchoon:IC:2003},
networks of communicating pushdown systems \cite[Proposition 2.4]{HeusnerLerouxMuschollSutre:LMCS:2012},
pushdown timed automata with nondeterministic clock assignments \cite[Lemma 7]{AbdullaAtigStenman:DensePDA:12},
and pushdown register automata over equality \cite[Theorems 10 and 11]{MurawskiTzevelekos:JCSS:2017} and more general homogeneous data \cite[Corollary 10]{ClementeLasota:CSL:2015}.

\begin{proof}[\protect{Proof (of \Cref{thm:TPDA:nonemptiness:EXPTIME})}]
	The \EXPTIME upper-bound for push-copy \TPDA follows from the observation that 1) we can avoid the exponential construction of \Cref{sec:simplify:push-copy},
	and 2) the other reductions to an equi-nonempty context-free grammar have a combined singly exponential complexity, and
	3) nonemptiness of context-free grammars is in \PTIME (and in fact complete for this class).

	For the \EXPTIME-hardness, we follow the approach of \cite[Lemma 7]{AbdullaAtigStenman:DensePDA:12}
	and reduce from the \emph{regular-context-free intersection nonemptiness problem},
	which amounts to deciding whether
	$$\lang A \cap \lang {A_1} \cap \cdots \cap \lang {A_n} \neq \emptyset$$
	for a \PDA $A$ and \NFA's $A_1, \dots, A_n$.
	The latter problem is \EXPTIME-complete (cf.~\cite[Proposition 2.4]{HeusnerLerouxMuschollSutre:LMCS:2012} and \cite[Theorem 7]{EsparzaKuceraSchoon:IC:2003}),
	which follows from the fact that \APSPACE = \EXPTIME and that a stack can be used to model alternation
	(similarly as in the proof that \PDA nonemptiness is \PTIME-complete, since \ALOGSPACE = \PTIME).
	The statement of \Cref{thm:TPDA:nonemptiness:EXPTIME} follows from reductions analogous to \Cref{lem:integral:TA:PSPACE,lem:modular:TA:PSPACE,lem:fractional:TA:PSPACE}.
\end{proof}

\subsection{Complexity of 1-clock \TPDA nonemptiness}
\label{sec:complexity:1clockTPDA}

Since for 1-clock \TA the nonemptiness problem can be solved in \NLOGSPACE \cite[Proposition 5.1]{LaroussinieMarkeySchnoebelen:CONCUR:2004},
one may wonder whether an analogous complexity collapse happens for 1-clock \TPDA
(\TPDA with 1 control clock and 1 stack clock).
With a timed stack, we have the following \PSPACE-hardness result.

\begin{lemma}
	The nonemptiness problem for 1-clock \TPDA is \PSPACE-hard.
\end{lemma}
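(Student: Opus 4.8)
The plan is to reduce from the \emph{regular intersection-nonemptiness problem}, which is \PSPACE-complete \cite[Lemma 3.2.3]{Kozen:FOCS:1977}, exactly as in the proofs of \Cref{lem:fractional:TA:PSPACE,lem:integral:TA:PSPACE,lem:modular:TA:PSPACE}. Given \NFA's $A_1, \dots, A_n$ over a common alphabet, I would build a 1-clock \TPDA $\P$ (one control clock $\x$, one stack clock $\z$) that is nonempty if, and only if, $\lang{A_1} \cap \cdots \cap \lang{A_n} \neq \emptyset$. The single control clock $\x$ is reserved as a never-reset reference measuring global time, so the difficulty is that, unlike the multi-clock constructions of the earlier lemmas, we cannot keep the $n$ current \NFA states in $n$ separate control clocks. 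Instead, the key idea is to spread the state vector $(q_1, \dots, q_n)$ across $n$ timed stack frames, one frame per automaton, and to simulate the synchronous reading of a single guessed input letter by one full sweep over these $n$ frames.

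Concretely, I would maintain on the stack a block of $n$ frames, the $i$-th recording the current state $q_i$ of $A_i$ (the state itself can sit in the stack symbol, since each $A_i$ has finitely many states, while the timing on $\z$ is used to \emph{link} a frame to the sweep that produced it). A macro-step simulating the reading of one letter $a$ proceeds by popping the block frame-by-frame, choosing for each component a transition $q_i \goesto a q_i'$, and pushing back the updated block; the letter $a$ is held in the finite control for the duration of the sweep so that all $n$ components are forced to read the \emph{same} letter. Acceptance requires the block to encode an accepting state of every $A_i$ at the end, with the stack returned to empty. The timestamps carried by $\z$ via push/pop constraints (relative to $\x$) are used to certify that the frames consumed in one sweep are precisely those produced in the previous sweep, in the correct positions, preventing the automaton from cheating by mixing frames of different macro-steps.

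The main obstacle is exactly this last point: a plain pushdown automaton cannot update the \emph{interior} of a stored vector in place, and indeed \PDA nonemptiness is only in \PTIME, so without the clock the construction would be too weak. The crux of the proof is therefore to design the timing gadget on the single stack clock $\z$ so that, during the pop phase of a sweep, each popped frame can be verified—by a push/pop constraint of the form $\fract{\z} = \fract{\x}$ or $\floor{\z} \eqv m k$, matched against the control value—to originate from the immediately preceding sweep at the matching component index. Making this enforcement airtight with only one stack clock, and arguing that any run surviving all these constraints corresponds to a \emph{bona fide} synchronous run of the product $A_1 \times \cdots \times A_n$ on a common word (and conversely that every such word yields an accepting run), is where the real work lies; the remaining bookkeeping (encoding the sweep parity and component index in the finite control, and checking the global acceptance condition) is routine.
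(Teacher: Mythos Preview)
Your approach is quite different from the paper's, and as written it has a genuine gap.

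The paper gives a much shorter reduction, not from the regular intersection problem but from nonemptiness of classical $2$-clock timed automata, which is \PSPACE-hard by Fearnley and Jurdzi\'nski~\cite{FJ15}. The first \TA clock becomes the \TPDA control clock $\x$; the second \TA clock is simulated by the stack clock $\z$: a reset of it is implemented by emptying the stack and then pushing nondeterministically many fresh copies with constraint $\z=0$, and each subsequent test involving that clock consumes one copy via a pop with the corresponding constraint (e.g.\ $\x-\z\sim k$). This sidesteps entirely the difficulty you identify.

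The gap in your proposal is the ``sweep'' itself, not the timing verification you flag as the hard part. To update the block of $n$ frames you must either (a) pop all $n$ frames and then push the $n$ updated ones, or (b) interleave pops and pushes. In case~(a), between the two phases the tuple $(q_1',\dots,q_n')$ must be held somewhere; the finite control cannot hold it without $\prod_i|Q_i|$ many locations (exponential in $n$), and the stack is empty at that moment. In case~(b), after popping $q_n$ and pushing $q_n'$ the next old frame $q_{n-1}$ lies \emph{beneath} $q_n'$ and is inaccessible under LIFO discipline. Your timing gadget relates $\z$ to $\x$ at the instant of a single push or pop; it can help \emph{certify} that a popped frame has the right provenance, but it cannot transport $n$ intermediate states across the sweep, which is a data-movement problem, not a verification problem. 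This is precisely why the untimed \PDA version fails, and you have not shown how one control clock and one stack clock overcome it. Absent a concrete mechanism for the sweep, the reduction does not go through.
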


\begin{proof}
	1-clock \TPDA can simulate classical 2-clock \TA whose nonemptiness problem is \PSPACE-complete~\cite{FJ15}, as follows.
	\TA clock $x_1$ becomes \TPDA control clock $x$,
	and the other \TPDA clock $x_2$ is stored as a stack clock $z$,
	in as many copies as there are upcoming tests involving it.
	For simulation of a reset of $x_2$, the \TPDA checks emptiness of the stack, and then 
	performs non-deterministically many push operations with constraint $z=0$.
	The simulation of a transition of \TA with clock constraint $x_1 - x_2 \sim k$ is done by a pop operation of \TPDA
	with the corresponding pop constraint $x - z \sim k$.
\end{proof}

If the stack has bounded height (even of height one), then the problem is \NP-hard.

\begin{lemma}
	The nonemptiness problem for 1-clock discrete-time \TPDA of stack height one is \NP-hard.
\end{lemma}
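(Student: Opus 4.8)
The plan is to reduce from \textsc{Subset Sum}, which is \NP-complete: given weights $a_1, \dots, a_n \in \N$ and a target $t \in \N$ encoded in binary, decide whether some $S \subseteq \set{1, \dots, n}$ satisfies $\sum_{i \in S} a_i = t$. From such an instance I would build a $1$-clock discrete-time \TPDA with a single control clock $\x$ and a single stack clock $\z$ whose stack never holds more than one symbol. The automaton pushes one symbol at the very start and pops it at the very end; between these two events the stack clock $\z$, which is never reset, acts as a \emph{global timer} recording the total amount of time elapsed since the push, while the control clock $\x$ is used as a scratch register that may be reset freely.

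Concretely, the run begins with $\resetop \x$ followed by $\pushop \alpha {\z = \x}$, so that immediately after the push both clocks read $0$. The automaton then proceeds through $n$ phases. In phase $i$ it nondeterministically either \emph{excludes} item $i$, moving on without elapsing any time, or \emph{includes} it by executing $\resetop \x; \elapse; \testop{\floor \x = a_i}$, which forces exactly $a_i$ time units to pass. Because time elapse happens only inside these gadgets (explicit $\elapse$ transitions, never freely in a location) and because resetting $\x$ does not affect $\z$, after all phases the stack clock satisfies $\z = \sum_{i \in S} a_i$, where $S$ is the set of included items. Finally the automaton performs $\resetop \x$ and then $\popop \alpha {\floor \z - \floor \x = t}$; since $\x = 0$ at the pop, this constraint is equivalent to $\z = t$.

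Correctness is then a routine verification in both directions: a witnessing subset $S$ yields an accepting run that includes exactly the phases in $S$, and conversely any accepting run determines a subset $S$ (the included phases) whose weights sum to $t$, since each inclusion is forced to elapse precisely $a_i$ and the pop enforces $\sum_{i \in S} a_i = t$. The reduction is polynomial: the automaton has $O(n)$ control locations and transitions, the constants $a_i$ and $t$ occur in binary inside the tests $\floor \x = a_i$ and the pop constraint, the stack alphabet is a singleton, and the stack height is exactly one throughout.

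The conceptual crux --- and the point I would emphasise --- is \emph{why} a single control clock does not already suffice (which it must not, since $1$-clock \TA nonemptiness is tractable): one clock cannot simultaneously accumulate the running sum of the chosen subset and meter the individual summands $a_i$, because verifying $\floor \x = a_i$ requires resetting $\x$, thereby destroying any total stored in it. The stack clock supplies exactly the missing \emph{never-reset accumulator}, and it is essential that a single push/pop pair is enough: the accumulated total needs to be inspected only once, at the very end, so stack height one is not a real restriction here. The main technical care is therefore in arguing that no time can elapse outside the phase gadgets (guaranteed by the explicit-elapse semantics of the model) so that $\z$ faithfully equals $\sum_{i \in S} a_i$ at the pop.
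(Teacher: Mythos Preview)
Your proposal is correct and follows essentially the same approach as the paper: both reduce from \textsc{Subset Sum}, push once at the start so that the stack clock accumulates total elapsed time, use the control clock (reset repeatedly) to meter each chosen $a_i$ via $\resetop \x;\elapse;\testop{\x = a_i}$, and check the target with the pop constraint at the end. Your write-up is somewhat more careful (you insist on binary encoding, which is what is needed for \NP-hardness, and you spell out why one control clock alone cannot do the job), but the underlying construction is the same.
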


\begin{proof}
	We follow a similar reduction in \cite[Theorem 4]{ClementeHofmanTotzke:CONCUR:2019} from the \NP-complete \emph{subset sum problem} \cite{GareyJohnson},
	which amounts to establishing, given a set of nonnegative integers $A = \set{n_1, \dots, n_k} \subseteq \N$ and $n \in \N$ (encoded in unary),
	whether there is a subset thereof $B \subseteq A$ \st $\sum_{n_i \in B} = n$.
	The \TPDA has one clock $x$, which is initially $0$ and is immediately pushed on the stack.
	For each $n_i$ (one after the other) the automaton guesses whether $n_i \in B$ or not:
	If so, the automaton uses $x$ to elapse exactly $n_i$ time units, otherwise no time is elapsed.
	The automaton accepts only if it can pop with stack constraint $z = n$,
	which checks that exactly $n$ time units elapsed since the beginning of the run.
\end{proof}

However, the complexity does become polynomial if the stack is untimed.
\begin{lemma}
	The nonemptiness problem for 1-clock \TPDA with untimed stack is \PTIME-complete.
	(Hardness holds already for \PDA.)
\end{lemma}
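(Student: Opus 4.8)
The claim has two parts. For \PTIME-hardness I would simply observe that a \PDA is the degenerate case of a $1$-clock \TPDA with untimed stack in which the single clock is never tested and time elapse is unconstrained; since emptiness of context-free grammars (equivalently, \PDA nonemptiness) is \PTIME-complete, the problem is \PTIME-hard already for \PDA, which gives the parenthetical remark for free.

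For membership in \PTIME the plan is to eliminate the single clock by a region abstraction and reduce to \PDA nonemptiness. The crucial structural point is that, with an untimed stack, stack symbols carry no timing information, so push and pop operations are oblivious to the value of the lone control clock $x$; thus the clock interacts only with the finite control. Since there is a single clock, there are no diagonal constraints, and the only atomic tests on $x$ have the forms $\floor x \leq k$, $\floor x \eqv M k$, and $\fract x = 0$. I would therefore abstract the value of $x$ by the region consisting of $\floor x$ truncated at the largest threshold $K$, the residue $\floor x \bmod M$, and the flag $[\,\fract x = 0\,]$, and build a finite region automaton $R$ over these regions whose edges realise clock resets and (sub-unit and unit) time elapses. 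Forming the product \PDA $\P'$ whose control locations are pairs (original location, region of $R$), keeping the stack alphabet and the push/pop transitions unchanged, yields a \PDA that is nonempty if, and only if, the original automaton is: a run of one mirrors region-by-region a run of the other, and correctness is a routine induction following \Cref{lem:characterisation}. Finally, nonemptiness of $\P'$ is decided in \PTIME by the classical polynomial-time algorithm for \PDA emptiness.

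The step I expect to require the most care is keeping $\P'$ of polynomial size, i.e. bounding the number of regions of $R$. For a single clock the regions form a linear order and collapse to only $O(K+M)$ classes; when constants are treated symbolically this stays polynomial, exactly as in the \NLOGSPACE analysis of $1$-clock \TA \cite{LaroussinieMarkeySchnoebelen:CONCUR:2004}, which handles the clock without materialising an exponential region graph. This is precisely where the untimed stack is essential: because the stack cannot store clock values, the timing abstraction remains confined to the finite control and the product is a genuine \PDA of polynomial size. The contrast with the timed-stack case is striking---there, copies of the clock can be pushed onto the stack, which already suffices to make nonemptiness \PSPACE-hard.
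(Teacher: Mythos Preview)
Your proposal is correct and follows essentially the same approach as the paper: both argue \PTIME-hardness from \PDA emptiness, and for the upper bound both observe that an untimed stack confines the timing information to the finite control, apply the polynomial-size 1-clock region construction of \cite{LaroussinieMarkeySchnoebelen:CONCUR:2004}, and reduce to \PDA nonemptiness. Your write-up is considerably more detailed than the paper's two-line proof, but the underlying argument is identical.
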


\begin{proof}
	Since the stack is untimed, we can perform the polynomial time region construction from \cite{LaroussinieMarkeySchnoebelen:CONCUR:2004} on the finite control,
	preserving all reachability property.
	This yields an equi-nonempty \PDA, which can be solved in \PTIME.
\end{proof}

\section{Discussion}
\label{sec:discussion}

After the seminal work on \TA \cite{AD94},
many extensions have been proposed aiming at generalising \TA in different directions.
One such direction is the introduction of additional discrete data structures, usually resulting in increased expressive power.
A prominent example is the study of \TA extended with a stack, such as the \TPDA model that we study in this work.
This permits reasoning about real-time programs with procedure calls,
which exhibit a subtle interaction between recursion and timing constraints.
In this section,
we provide an extensive discussion on the relationship between \TPDA and other related models which have been proposed in the literature.

\paragraph{Pushdown timed automata}

One of the earliest extensions of \TA with a (classical, untimed) stack is \emph{pushdown timed automata} (\PDTA),
which have been proposed in the 1990's by Bouajjani \emph{et al.} \cite{bouajjani:timed:PDA:94}.
Since the stack is untimed, the very same region construction leading to the \PSPACE nonemptiness algorithm for \TA
can be used to show that nonemptiness for \PDTA is in \EXPTIME (cf.~\Cref{thm:TPDA:nonemptiness:EXPTIME}).
The construction of the \PDTA  reachability relation is more difficult,
and it has been solved only several years later both in discrete \cite{DangEtAl:CAV:2000}
and dense time \cite{Dang:CAV:2001,Dang:PTA:2003}.
Our construction shares with \cite{Dang:PTA:2003} an important intuition,
namely, separating clocks into integral and fractional part to simplify the analysis.

Going beyond the nonemptiness problem,
Quaas has recently shown that the problem of deciding language inclusion between a \PDTA and a one-clock \TA is undecidable
\cite{Quaas:LMCS:2015} (correcting the opposite claim in \cite{EmmiMajumdar:RealTime:2006}),
and that the universality problem is undecidable for the class of timed visibly one-counter automata.

\paragraph{Event-clock visibly pushdown automata}

\emph{Event-clock timed automata} \cite{AlurFixHenzinger:TCS:1999} constitute a robust subclass of \TA,
enjoining closure under Boolean operations and decidable nonemptiness and inclusion problems.
In the context of untimed context-free languages, a similar status is shared by \emph{visibly pushdown automata} \cite{AlurMadhusudan:STOC:2004}.
These two worlds have been joined in 2009 by Tang and Ogawa \cite{TangOgawa:SOFSEM:2009},
who showed that the resulting \emph{event-clock visibly pushdown automata} have a decidable inclusion problem.
We note that in such a model, being a subclass of \PDTA, the stack is untimed.

\paragraph{Dense-timed pushdown automata}
More recently, \emph{dense-timed pushdown automata} (\dtPDA)
have been proposed in 2012 by Abdulla \emph{et al.}~\cite{AbdullaAtigStenman:DensePDA:12}
as an extension of \PDTA featuring, for the first time, a \emph{timed stack}.
This was an important innovation, conferring a certain popularity to the model,
as testified by the number of works published in the years following its introduction.
The idea is to equip a stack symbol with a real-valued \emph{age} (i.e., a stack clock),
which initially is $0$ when pushed on the stack
and increases with the elapse of time at the same rate as the control clocks;
when a symbol is popped, its age is tested for membership in an interval.
In the syntax of this paper, \dtPDA correspond to \TPDA with only one stack clock $\z$,
and push/pop stack constraints are Boolean combinations of constraints the form $z \sim k$
(hence, no diagonal constraints, no fractional constraints, no modular constraints).
The nonemptiness problem for \dtPDA is \EXPTIME-complete \cite{AbdullaAtigStenman:DensePDA:12},
which is shown by an elegant region-based transformation untiming the stack \emph{while preserving nonemptiness}.
While it is sufficient to decide nonemptiness,
we note that the transformation of \cite{AbdullaAtigStenman:DensePDA:12} does not preserve the \emph{timed language} recognised by the automaton.

Subsequent works building on \dtPDA include \emph{nested timed automata}
(stack of \TA which can be pushed and popped),
whose nonemptiness problem reduce to \dtPDA \cite{LiCaiOgawaYuen:FORMATS:2013},
input/output \dtPDA \cite{MHemdiJulliandMassonRobbana:ICSTW:2015,MHemdiJulliandMassonRobbana:2016},
visible \dtPDA \cite{Devendra-BhaveKrishnaTrivedi:LATA:2016},
and classes of decidable timed multistack languages closed under Boolean operations (i.e., a so called \emph{perfect class}),
such as round-bounded \cite{BhaveDaveKrishnaPhawadeTrivedi:DLT:2016}, later generalised to scope-bounded \cite{BhaveKrishnaPhawadeTrivedi:DLT:2019}.
Alternative analyses of the elegant \dtPDA construction have been performed,
e.g., via a subclass of pushdown automata with stack manipulation \cite{UezatoMinamide:ATVA:2013}
and well-structured pushdown systems \cite{CaiOgawa:FLOPS:2014}.
A logical characterisation of the class of \dtPDA languages has also been proposed \cite{DrostePerevoshchikov:CSR:2015}.

One important expressiveness question about \dtPDA, apparently not considered in all the previous works,
is whether the class of timed language recognised by \dtPDA
is strictly larger than \PDTA; in other words, whether the timed stack adds any expressiveness.
This is a very relevant question to ask, before furthering the study of \dtPDA
(which otherwise would reduce to the study of \PDTA).
In \cite{ClementeLasota:LICS:2015} we made the surprising observation that this is not the case,
i.e., the class of timed languages recognised by \dtPDA and \PDTA is the same.
In other words, the timed stack of a \dtPDA can be \emph{untimed while preserving the timed language} (not just nonemptiness);
this even yields an optimal \EXPTIME decision procedure for nonemptiness.
This is the consequence of the interplay between the kind of clock constraints allowed in \dtPDA (non-diagonal integral constraints) and the monotonicity of time.

The semantic collapse of \dtPDA to \TPDA has potential consequences on the works mentioned above.
For instance, the logical characterisation of \dtPDA languages in \cite{DrostePerevoshchikov:CSR:2015}
uses certain \emph{distance matching} predicates constraining the time elapse between a push and its matching pop;
since the stack can be untimed, this raises the question whether such distance matching predicates are really necessary,
i.e., whether they enhance the expressive power of the logic.
Another example is visible \dtPDA \cite{Devendra-BhaveKrishnaTrivedi:LATA:2016},
which is supposedly strictly generalising the corresponding untimed stack model of Tang and Ogawa \cite{TangOgawa:SOFSEM:2009};
since our stack untiming construction preserves visibility,
this appears not to be the case.
This motivates the quest for a strictly more expressive generalisation of \PDTA and \dtPDA with a truly timed stack.



\paragraph{Synchronised recursive timed automata}

Uezato and Minamide observed in 2015 that adding fractional stack constraints prevents the stack from being untimed \cite{UezatoMinamide:LPAR15},
and thus strictly enriches the expressive power of the model.
(Example~\ref{ex:TPDA} is easily adapted to show this: just drop the modular constraint $\floor \z \eqv 2 0$ from
the constraint $\floor \z \eqv 2 0 \land \fract{\z} \leq \fract{\x}$
in the pop rules~\eqref{eq:popconstr}.
Moreover, if the fractional constraint $\fract{\z} \leq \fract{\x}$ is dropped,
then the example shows that also modular stack constraints strictly increase the expressiveness of the model.)
\noindent
This is in contrast with \TA, where, if epsilon transitions are available,
fractional/modular constraints do not increase expressiveness~\cite{BerardDiekertGastinPetit:1998:Epsilon,ChoffrutGoldwurm:Periodic:2000}.
The resulting model is called \emph{synchronised recursive timed automata} (\SRTA),
and Uezato and Minamide show that, despite their increased expressive power vs.~\dtPDA/\PDTA,
the construction of Abdulla \emph{et al.}~can be adapted to decide nonemptiness in \EXPTIME.

Our \emph{timed pushdown automata} model (\TPDA) is strongly influenced by \SRTA, and in fact is a syntactic extension thereof.
More precisely, we consider the full class of diagonal constraints as potential push constraints,
while in \SRTA the only push constraint is push-copy.
Moreover, we consider \emph{modular constraints}, which are not present in \SRTA.
The difference in the order of words \wrt~``pushdown timed automata'' (\PDTA)
stresses the fact that the stack is timed (and inherently so).
Despite the syntactical generalisation,
since in the presence of fractional constraints integral and modulo constraints can be removed (as we show in Sec.~\ref{sec:simplify:fractional}),
\TPDA are in fact expressively equivalent to \SRTA.
While Uezato and Minamide solve the control state reachability problem,
we study the more general problem of computing the reachability relation.
This means that our reduction techniques need not only preserve nonemptiness, like \cite{UezatoMinamide:LPAR15},
but additionally enable the reconstruction of the reachability relation.

\paragraph{Timed register pushdown automata}

Another approach to the study of timed systems is the use of \emph{timed registers} over $(\R, \leq, +1)$ \cite{BojanczykLasota:ICALP:2012},
which are analogous to clocks under the reset-point/local time semantics \cite{BengtssonJonssonLiliusYi:CONCUR:1998}
(cf.~also \cite{Fribourg:1998,Fribourg:LOPSTR:2000}).
We have pursued this direction with \emph{timed register pushdown automata} (\TRPDA),
showing that nonemptiness is decidable \cite{ClementeLasota:LICS:2015,ClementeLasotaLazicMazowiecki:LICS:2017}
and that the reachability relation is computable \cite{ClementeLasotaLazicMazowiecki:TOCL:2019}.
Using a construction along the lines of \cite{ClementeLasota:LICS:2015},
\TPDA nonemptiness reduces to \TRPDA nonemptiness.
One may wonder whether analogous constructions
can perform the same reduction for the reachability relation.
This is not the case,
since the former reduction forgets the exact value of ``very large'' clocks,
which preserves nonemptiness but not the reachability relation.
For this reason, in the present work we follow another route
by encoding the integral part of clocks in the language and reducing to a model with only fractional clocks.

\paragraph{Other \TA extensions}

Another expressive extension of \TA, called \emph{recursive timed automata} (\RTA),
has been proposed independently in 2010 by Trivedi and Wojtczak \cite{TrivediWojtczak:RTA:2010}
and by Benerecetti \emph{et al.}~\cite{BenerecettiMinopoliPeron:RTA:2010}.
\RTA use a timed stack to store the current clock valuation,
which does not evolve as time elapses
and can be restored at the time of pop.
This facility makes \RTA expressively incomparable to all models previously mentioned.
Nonemptiness for \RTA is undecidable,
even in the timed-bounded case and already for five clocks \cite{KrishnaManasaTrivedi:LATA:2015}.
An expressive \emph{recursive hybrid automata} model generalising \dtPDA and \RTA has been investigated in \cite{KrishnaManasaTrivedi:HSCC:2015}.

Alternative approaches for the analysis of timed system
extended with discrete data structures, such as stacks and queues,
include the tree automata approach of \cite{AkshayGastinKrishna:LMCS:2018},
based on the observation that the timed behaviours of such systems can be represented as graphs of bounded tree-width,
and a method based on propositional dynamic logic \cite{AkshayGastinJugeKrishna:LICS:2019}.

A model of \emph{commutative timed context-free grammars} modelling unbounded networks of timed processes
has recently been studied \cite{ClementeHofmanTotzke:CONCUR:2019},
where a more general problem of synchronised reachability (where all processes are required to have zero clocks at the end of the run)
is shown to be solvable in \EXPTIME for an arbitrary number of clocks,
and in \NP for one clock per process.


\section{Conclusions}
\label{sec:conclusions}

We have shown how to compute the reachability relation for \TPDA, an expressive model combining recursion with timing constraints.
Several directions for further research can be identified.

One direction concerns decidable extensions of \TPDA with more general stack manipulation primitives beyond simple push/pop.
For untimed \PDA, general prefix-rewriting rules such as $\trule p {\pop(u); \push(v)} q$ allowing to atomically replace $u \in \Gamma^*$ with $v \in \Gamma^*$ from the top of the stack
do not increase the expressiveness of the model,
in the sense that such generalised \PDA still effectively recognise the class of context-free languages, and thus have a decidable nonemptiness problem.
For \TPDA this is not the case.
Already top-of-stack rewrite rules of the form $$\rewriteop \alpha \beta \psi$$ replacing $\alpha \in \Gamma$ from the top of the stack with $\beta \in \Gamma$,
where $\psi(\bar y, \bar y')$ relates via diagonal constraints old $\bar y$ and new $\bar y'$ top-of stack clocks,
yield a model with undecidable emptiness, and this holds already for a stack of height one.
This follows from the fact that non-destructive operations $\rewriteop \alpha \alpha {y_i' = y_i + 1}$
can be used to simulate counter increments, and similarly for decrements and zero tests,
thus enabling the simulation of 2 counter Minsky machines, which have an undecidable nonemptiness problem \cite{Minsky:1961}.
On the other hand, ``long'' push-only operations $$\pushop {\alpha_1, \dots, \alpha_n} {\psi(\bar x, \bar y_1, \dots, \bar y_n)}$$
can be simulated by a standard \TPDA push $\pushop {\tuple{\alpha_1, \dots, \alpha_n}} \psi$ by adding new stack clocks,
and thus do not enhance the expressive power of \TPDA.
``Long'' pop-only operations $\popop {\alpha_1, \dots, \alpha_n} {\psi(\bar x, \bar y_1, \dots, \bar y_n)}$
can be similarly converted to the short form.
It remains open whether the expressive model can be extended in such a way as to preserve decidability.

Another direction for further work is to identify suitable \TPDA subclasses
for which the nonemptiness problem has lower computational complexity.
We have seen in \Cref{sec:complexity} that 1-clock \TPDA are \PSPACE-hard with unbounded stack and \NP-hard with stack of height one.
It would be very interesting to find an expressive \PTIME subclass.

We represent the reachability relation as a formula of linear arithmetic,
whose integral part is computed using full Presburger arithmetic (due to the use of Parikh's theorem in \Cref{thm:Parikh}).
It is clear that fractional \TPDA from \Cref{sec:simplify:fractional} (to which we apply Parikh's theorem in \Cref{sec:fractional:TPDA})
read the ticks $\tick i$'s according to certain structural restrictions and thus do not generate all semilinear sets when projected to $\set{\tick 1, \dots, \tick n}$.
For instance, the semilinear set recognised by the formula $$\varphi(x, x') \;\equiv\; x' = 2 \cdot x$$
forces the final value $x'$ to be twice its initial value $x$, which clearly is not expressible as a \TPDA reachability relation \wrt a clock $x$.
It would be interesting to identify which fragment of linear arithmetic would capture precisely \TPDA reachability relations;
since full Presburger arithmetic is necessary to represent the Parikh image of transitions,
one would look for a fragment that describes the reachability relation projected to the clock values.

\let\section\oldsection

\bibliographystyle{abbrv}
\bibliography{bib}

\newpage
\appendix
\section{Proofs}
\label{app:proofs}

\subsection{Proofs for \Cref{sec:simplify:pop-integer-free:A}}

We first recall the correctness statement.
\lemCorrectnessA*

In order to formally prove the lemma above we need to find a stronger inductive statement.
This is provided by the two lemmas \Cref{lem:soundness:A,lem:completeness:A} below,
from which \Cref{lem:correctness:A} follows immediately.


\begin{restatable}[Soundness \protect{[A]}]{lemma}{lemSoundnessA}
	\label{lem:soundness:A}

	\begin{enumerate}[a)]
		\item If $(p, d, e), \mu \reach w {} (q, d, f), \nu$, $\mu(x_i) = \mu(x_i^e)$, and
		$\mu(x_i) + \delta_{\mu\nu} - \nu(x_j) \precsim k$, then
		\begin{align}
			\label{eq:soundness:A:a}
			p, \restrict \mu \X \reach {\tilde w} {} q, \restrict \nu \X.
		\end{align}

		\item If $(p, d), \mu \reach w {} (q, e), \nu$ and $\mu(x_i) = \mu(x_i^d)$, then
		\begin{align}
			\label{eq:soundness:A:b}
			p, \restrict \mu \X \reach {\tilde w} {} q, \restrict \nu \X.
		\end{align}
	\end{enumerate}
\end{restatable}

\begin{proof}
	We start with point a).
	%
	%
	We establish \eqref{eq:soundness:A:a} by structural induction on $\reach {} {}$,
	by following the characterisation of Lemma~\ref{lem:characterisation}.
	We focus on the interesting cases.
	In the transitivity case \eqref{eq:reachrel:E} we have runs
	\begin{align*}
		(p, d, e), \mu \reach u {} (r, d, g), \rho \reach v {} (q, d, f), \nu.
	\end{align*}
	Since $\delta_{\mu\nu} = \delta_{\mu\rho} + \delta_{\rho\nu}$, by assumption we have
	\begin{align}
		\label{eq:mu-nu}
		\mu(x_i) + \delta_{\mu\rho} +\delta_{\rho\nu} - \nu(x_j) \precsim k.
	\end{align}
	By definition of clock reset, we also have
	\begin{align}
		\label{eq:mu-rho}
		\rho(x_i) &\leq \mu(x_i) + \delta_{\mu\rho}, \\
		\label{eq:rho-nu}
		\nu(x_j) &\leq \rho(x_j) + \delta_{\rho\nu}.
	\end{align}
	Consequently, we derive
	\begin{align*}
		&\mu(x_i) + \delta_{\mu\rho} - \rho(x_j)
			\stackrel {\tiny(\textrm{by }\eqref{eq:rho-nu})} {\leq}
				\mu(x_i) + \delta_{\mu\rho} - (\nu(x_j) - \delta_{\rho\nu})
					\stackrel {\tiny(\textrm{by }\eqref{eq:mu-nu})} {\precsim}
						k \\
		&\rho(x_i) + \delta_{\rho\nu} - \nu(x_j)
			\stackrel {\tiny(\textrm{by }\eqref{eq:mu-rho})} {\leq}
				(\mu(x_i) + \delta_{\mu\rho}) + \delta_{\rho\nu} - \nu(x_j)
					\stackrel {\tiny(\textrm{by }\eqref{eq:mu-nu})} {\precsim}
						k.
	\end{align*}
	This is the crucial point where we make use of the intuition that outer type A constraints subsume (imply) the inner ones.
	Thanks to the two inequalities above,
	we can invoke the inductive assumption (twice)
	and we obtain a run
	\begin{align*}
		p, \restrict \mu \X \reach {\tilde u} {} r, \restrict \rho \X \reach {\tilde v} {} q, \restrict \nu \X.
	\end{align*}
	By \eqref{eq:reachrel:E}, we have \eqref{eq:soundness:A:a}, as required.

	In the push-pop case, if the symbol pushed is $\alphapsi$, then there are operations
	$\delta_\push = \pushop {\alphapsi} {\psicopy}$ and
	$\delta_\pop = \popop {\alphapsi} {\true}$
	with $\tilde \delta_\push = \delta_\push$ and $\tilde \delta_\pop = \popop \alpha \psi$ (recall $\psi \equiv y_i - x_j \precsim k$),
	and a run
	\begin{align*}
		(p, d, e), \mu, \varepsilon \goesto {\delta_\push}
			(r, d, e), \mu, (\alphapsi, \mu(x_i)) \reach u {}
				(s, d, f), \nu, (\alphapsi, \mu(x_i) + \delta_{\mu\nu}) \goesto {\delta_\pop}
					(q, d, f), \nu, \varepsilon,
	\end{align*}
	\st $w = \delta_\push \cdot u \cdot \delta_\pop$.
	%
	%
	Thanks to the assumption $\mu(x_i) + \delta_{\mu\nu} - \nu(x_j) \precsim k$ we can apply the induction hypothesis, obtaining
	\begin{align*}
		p, \mu, \varepsilon \goesto {\tilde \delta_\push}
			r, \mu, (\alphapsi, \mu(x_i)) \reach {\tilde u} {}
				s, \nu, (\alphapsi, \mu(x_i) + \delta_{\mu\nu}) \goesto {\tilde \delta_\pop}
					q, \nu, \varepsilon,
	\end{align*}
	where the latter operation is legal thanks again to the assumption above.
	By \eqref{eq:reachrel:F}, we have \eqref{eq:soundness:A:a}, as required.

	We now continue to point b).
	We focus on the interesting case,
	which is the push-pop case \eqref{eq:reachrel:F} when we (necessarily) push $\hat\alphapsi$.
	We have operations $\delta_\push = \pushop {\hat\alphapsi} {\psicopy}$ and
	$\delta_\pop = \popop {\hat\alphapsi} {\true};\; \testop{x_i^d - x_j \precsim k}$
	with $\tilde \delta_\push = \pushop {\alphapsi} {\psicopy}$ and $\tilde \delta_\pop = \popop {\alphapsi} {\psi}$ (recall $\psi \equiv y_i - x_j \precsim k$),
	and a run of the form
	\begin{align*}
		(p, d), \mu, \varepsilon \goesto {\delta_\push}
			(r, d, d), \mu, (\alphapsi, \mu(x_i)) \reach u {}
				(s, d, e), \nu, (\alphapsi, \mu(x_i) + \delta_{\mu\nu}) \goesto {\delta_\pop}
					(q, e), \nu, \varepsilon,
	\end{align*}
	with $w = \delta_\push \cdot u \cdot \delta_\pop$.
	In particular, $\nu (x_i^d) - \nu(x_j) \precsim k$.
	By the inductive assumption from the first part of point a) applied to the middle run above,
	$\nu(x_i^d) = \mu(x_i^d) + \delta_{\mu\nu}$.
	By the last two conditions and the assumption $\mu(x_i) = \mu(x_i^d)$, we have
	\begin{align}
		\label{eq:soundness:cond:0}
		\mu(x_i) + \delta_{\mu\nu} - \nu(x_j) \precsim k.
	\end{align}
	We can thus apply the second part of point a),
	implying the existence of the middle run below:
	\begin{align*}
		p, \restrict \mu \X, \varepsilon \goesto {\tilde \delta_\push}
			r, \restrict \mu \X, (\alphapsi, \mu(x_i)) \reach {\tilde u} {}
				s, \restrict \nu \X, (\alphapsi, \mu(x_i) + \delta_{\mu\nu}) \goesto {\tilde\delta_\pop}
					q, \restrict \nu \X, \varepsilon,
	\end{align*}
	where the latter operation is legal thanks to \eqref{eq:soundness:cond:0}.
	By \eqref{eq:reachrel:F}, we have \eqref{eq:soundness:A:b}, as required.
\end{proof}

\begin{restatable}[Completeness \protect{[A]}]{lemma}{lemCompletenessA}
	\label{lem:completeness:A}
	Assume $p, \restrict \mu \X \reach {\tilde w} {} q, \restrict \nu \X$.
	\begin{enumerate}[a)]

		\item For every $d, e$ \st $\mu(x_i) = \mu(x_i^e)$, there is $f$ \st
		\begin{align}
			\label{eq:completeness:1}
			(p, d, e), \mu \reach w {} (q, d, f), \nu,
					\quad \nu(x_i) = \nu(x_i^f),
						\quad 
							\nu(x_i^d) = \mu(x_i^d) + \delta_{\mu\nu}.
		\end{align}

		\item For every $d$ \st $\mu(x_i) = \mu(x_i^d)$, there is $e$ \st
		\begin{align}
			\label{eq:completeness:0}
			(p, d), \mu \reach w {} (q, e), \nu
				\quad \textrm{ and } \quad
					\nu(x_i) = \nu(x_i^e).
		\end{align}
	\end{enumerate}
\end{restatable}

\begin{proof}
	First of all, $\nu(x_i^d) = \mu(x_i^d) + \delta_{\mu\nu}$ holds by construction,
	since no run of the form \eqref{eq:completeness:1} can reset clock $x_i^d$,
	and $\nu(x_i) = \nu(x_i^e)$, $\nu(x_i) = \nu(x_i^f)$ follow from the reset policy of $x_i^e$ ($x_i^f$, resp.).

	We begin from case a).
	We proceed by structural induction on $\reach {\tilde w} {}$
	according to the characterisation of Lemma~\ref{lem:characterisation}.
	%
	%
	%
	We focus on the two inductive cases, which are the most interesting.
	In the transitivity case,
	\begin{align*}
		p, \restrict \mu \X \reach {\tilde u} {} r, \restrict \rho \X \reach {\tilde v} {} s, \restrict \nu \X, \quad \textrm{ for some } r, \rho,
	\end{align*}
	with $\tilde w = \tilde u \cdot \tilde v$.
	By the inductive hypothesis applied twice,
	\begin{align*}
		(p, d, e), \mu \reach u {} (q, d, g), \rho \reach v {} (q, d, f), \nu
	\end{align*}
	and thus $(p, d, e), \mu \reach w {} (q, d, f), \nu$ by transitivity.

	In the push-pop case there are operations
	${\tilde\delta_\push = \pushop{\alphapsi} \psicopy}$ and
	$\tilde\delta_\pop = \popop{\alphapsi}{\psi}$ \st
	$\delta_\push = \tilde\delta_\push$,
	$\delta_\pop = \popop{\alphapsi}{\true}$, and
	\begin{align*}
		p, \restrict \mu \X \goesto {\tilde\delta_\push} r, \restrict \mu \X, (\alphapsi, \mu(x_i))
			\reach {\tilde u} {} s, \restrict \nu \X, (\alphapsi, \mu(x_i)+\delta_{\mu\nu})
				\goesto {\tilde\delta_\pop} q, \restrict \nu \X.
	\end{align*}
	By inductive assumption, we can find the middle run in
	\begin{align*}
		(p, d, e), \mu \goesto {\delta_\push} (r, d, e), \mu, (\alphapsi, \mu(x_i))
			\reach u {} (s, d, f), \nu, (\alphapsi, \mu(x_i)+\delta_{\mu\nu})
				\goesto {\delta_\pop} (q, d, f), \nu,
	\end{align*}
	because the pop condition is trivial.

	The push-pop case when the stack symbol is not $\alphapsi$ follows straightforwardly from the inductive assumption.

	We now proceed to point b).
	The only non-trivial case is the push-pop case.
	There are operations
	$\tilde\delta_\push = \pushop{\alphapsi} \psicopy$ and
	$\tilde\delta_\pop = \popop{\alphapsi}{\psi}$ \st
	$\delta_\push = \pushop{\widehat\alphapsi}{\psicopy}$,
	$\delta_\pop = \popop{\widehat\alphapsi}{\true}; \testop {x_i^d - x_j \precsim k}$, and
	\begin{align*}
		p, \restrict \mu \X \reach {\tilde\delta_\push} {} r, \restrict \mu \X, (\alphapsi, \mu(x_i))
			\reach {\tilde u} {} s, \restrict \nu \X, (\alphapsi, \mu(x_i)+\delta_{\mu\nu})
				\reach {\tilde \delta_\pop} {} q, \restrict \nu \X.
	\end{align*}
	In particular, it holds that
	\begin{align}
		\label{eq:completeness:cond:0}
		\mu(x_i)+\delta_{\mu\nu} - \nu(x_j) \precsim k.
	\end{align}
	By point a) we can find the middle run in
	\begin{align*}
		(p, d), \mu \reach {\delta_\push} {} (r, d, d), \mu, (\widehat\alphapsi, \mu(x_i))
			\reach u {} (s, d, e), \nu, (\widehat\alphapsi, \mu(x_i)+\delta_{\mu\nu})
				\reach {\delta_\pop} {} (q, e), \nu
	\end{align*}
	\st $\nu(x_i) = \nu(x_i^e)$ and $\nu(x_i^d) = \mu(x_i^d) + \delta_{\mu\nu}$.
	By the last equation, the assumption $\mu(x_i) = \mu(x_i^d)$,
	and \eqref{eq:completeness:cond:0}, it follows that $\nu(x_i^d) - \nu(x_j) \precsim k$ holds,
	and thus $\delta_\pop'$ can be executed.
	By \eqref{eq:reachrel:F}, we obtain \eqref{eq:completeness:0}, as required.
\end{proof}

\subsection{Proofs for \Cref{sec:simplify:pop-integer-free:B}}

We first recall the lemma stating the correctness of the construction.

\lemCorrectnessB*

The lemma above follows immediately from
the stronger inductive statements \Cref{lem:soundness:B,lem:completeness:B} below.

\begin{restatable}[Soundness \protect{[B]}]{lemma}{lemSoundnessB}
	\label{lem:soundness:B}
	If
	\begin{align}
		\label{eq:00}
		(p, 0, d, d), \mu &\reach w {} (q, 0, e, e), \nu, \textrm{ or } \\
		\label{eq:11}
		(p, 1, d, e), \mu &\reach w {} (q, 1, d, f), \nu, \textrm{ or } \\
		\label{eq:22}
		(p, 2, d, d), \mu &\reach w {} (q, 2, e, e), \nu, \textrm{ or } \\
		\label{eq:02}
		(p, 0, d, d), \mu &\reach w {} (q, 2, e, e), \nu,
	\end{align}
	and $\mu(x_i) = \mu(x_i^d)$, then
	\begin{align}
		\label{eq:conclusion}
		p, \mu \reach {\tilde w} {} q, \nu
			\quad \textrm{ and } \quad
				\nu(x_i) = \nu(x_i^e).
	\end{align}
	Moreover,
	\begin{align}
		\label{eq:soundeness:1:1}
		&\nu(x_i^d) = \mu(x_i^d) + \delta_{\mu\nu}
			&&\textrm{ in case \eqref{eq:11}}, \textrm{ and } \\
		\label{eq:soundeness:0:2}
			&\mu(x_i) + \delta_{\mu\nu} - \nu(x_j) \succsim k
			&&\textrm{ in case \eqref{eq:02}}.
	\end{align}
\end{restatable}
\begin{proof}
	By direct inspection, the lemma considers all possible cases of runs in $\P_B$.
	Time elapse and resetting $x_i^d$ clearly preserve the invariant.
	In case \eqref{eq:11}, $x_i^d$ is never reset,
	and thus \eqref{eq:soundeness:1:1} holds.
	
	We consider the transitivity cases.
	The first three cases are of the form
	\begin{align*}
		(p, 0, d, d), \mu &\reach u {} (r, 0, f, f), \rho \reach v {} (q, 0, e, e), \nu, \textrm{ or } \\
		(p, 1, d, e), \mu &\reach u {} (r, 0, d, g), \rho \reach v {} (q, 1, d, f), \nu, \textrm{ or } \\
		(p, 2, d, d), \mu &\reach u {} (r, 2, f, f), \rho \reach v {} (q, 2, e, e), \nu,
	\end{align*}
	with $w = u \cdot v$.
	For each of them, by a double application of the induction hypothesis and \eqref{eq:reachrel:E}
	we obtain \eqref{eq:conclusion}.
	We have another case of the form
	\begin{align*}
		(p, 0, d, d), \mu \reach u {} (r, 0, f, f), \rho \reach v {} (q, 2, e, e), \nu.
	\end{align*}
	By the inductive hypothesis applied twice 
	we have $\nu(x_i) = \nu(x_i^e)$ and
	\begin{align*}
		p, \mu \reach {\tilde u} {} r, \rho \reach {\tilde v} {} q, \nu
			\quad \textrm{ and } \quad 
				\rho(x_i) + \delta_{\rho\nu} - \nu(x_j) \succsim k.
	\end{align*}
	By definition, $\rho(x_i) \leq \mu(x_i) + \delta_{\mu\rho}$, and thus
	\begin{align*}
		\mu(x_i) + \delta_{\mu\nu} - \nu(x_j) =
			\mu(x_i) + \delta_{\mu\rho} + \delta_{\rho\nu} - \nu(x_j) \geq
				\rho(x_i) + \delta_{\rho\nu} - \nu(x_j) \succsim k,
	\end{align*}
	establishing \eqref{eq:soundeness:0:2} as required.
	In the last transitivity case, we have a run
	\begin{align*}
		(p, 0, d, d), \mu \reach u {} (r, 2, f, f), \rho \reach v {} (q, 2, e, e), \nu.
	\end{align*}
	By the inductive assumption applied twice
	we have $\nu(x_i) = \nu(x_i^e)$ and
	\begin{align*}
		p, \mu \reach {\tilde u} {} r, \rho \reach {\tilde v} {} q, \nu
			\quad \textrm{ and } \quad
				\mu(x_i) + \delta_{\mu\rho} - \rho(x_j) \succsim k.
	\end{align*}
	By definition, $\nu(x_j) \leq \rho(x_j) + \delta_{\rho\nu}$, and thus
	\begin{align*}
		\mu(x_i) + \delta_{\mu\nu} - \nu(x_j) =
			\mu(x_i) + \delta_{\mu\rho} - (\nu(x_j) - \delta_{\rho\nu}) \geq
				\mu(x_i) + \delta_{\mu\rho} - \rho(x_j) \succsim k,
	\end{align*}
	establishing \eqref{eq:soundeness:0:2} as required.

	In the push-pop case, if the stack symbol is $\alpha \not \in \set{\alphapsi, \hat\alphapsi}$,
	then we conclude immediately by an application of the induction hypothesis.
	This covers entirely the cases \eqref{eq:00} and \eqref{eq:11},
	and one subcase for each of \eqref{eq:22} and \eqref{eq:02}.

	We now consider the push-pop cases where the stack symbol is
	$\alpha \in \set{\alphapsi, \hat\alphapsi}$.
	In the first one, for $b \in \set{0, 2}$ we have a run
	\begin{align*}
		(p, b, d, d), \mu \goesto {\pushop{\alphapsi}{\psicopy}}
			(r, 0, d, d), \mu, (\alphapsi, \mu(x_i)) \reach u {}
				(s, 2, e, e), \nu, (\alphapsi, \mu(x_i) + \delta_{\mu\nu}) \\
					\goesto {\popop{\alphapsi}{\true}}
						(q, 2, e, e), \nu
	\end{align*}
	By the inductive assumption $\nu(x_i) = \nu(x_i^e)$
	and there exists a valid run
	\begin{align*}
		p, \mu \goesto {\pushop{\alphapsi}{\psicopy}}
			r, \mu, (\alphapsi, \mu(x_i)) \reach {\tilde u} {}
				s, \nu, (\alphapsi, \mu(x_i) + \delta_{\mu\nu}) \goesto {\popop{\alphapsi}{\psi}}
					q, \nu.
	\end{align*}
	The pop transition above is legal since
	$\mu(x_i) + \delta_{\mu\nu} - \nu(x_j) \succsim k$ by inductive assumption \eqref{eq:soundeness:0:2}.

	In the second one, for $b \in \set{0, 2}$ we have a run
	\begin{align*}
		&(p, b, d, d), \mu \goesto {\pushop{\hat\alphapsi}{\psicopy}}
			(r, 1, d, d), \mu, (\hat\alphapsi, \mu(x_i)) \reach u {} \\
				\reach {} {}
					&(s, 1, d, e), \nu, (\hat\alphapsi, \mu(x_i) + \delta_{\mu\nu}) \goesto {\popop{\alphapsi}{\true};\; \testop{x_i^d - x_j \succsim k}}
						(q, 2, e, e), \nu.
	\end{align*}
	In particular, $\nu(x_i^d) - \nu(x_j) \succsim k$ holds.
	By the inductive assumption we have a run
	\begin{align*}
		p, \mu \goesto {\pushop{\alphapsi}{\psicopy}}
			r, \mu, (\hat\alphapsi, \mu(x_i)) \reach {\tilde u} {}
				s, \nu, (\hat\alphapsi, \mu(x_i) + \delta_{\mu\nu}) \goesto {\popop{\alphapsi}{\psi}}
					q, \nu.
	\end{align*}
	The pop transition above is legal since
	$\nu(x_i^d) = \mu(x_i^d) + \delta_{\mu\nu}$ by \eqref{eq:soundeness:1:1}
	and $\mu(x_i) = \mu(x_i^d)$ by assumption,
	thus implying \eqref{eq:soundeness:0:2}, as required.
\end{proof}

\begin{restatable}[Completeness \protect{[B]}]{lemma}{lemCompletenessB}
	\label{lem:completeness:B}

	Assume we have a run
	\begin{align}
		\label{eq:completeness:assumption}
		p, \restrict \mu \X \reach {\tilde w} {} q, \restrict \nu \X.
	\end{align}
	%
	%
	For every $d$ \st $\mu(x_i) = \mu(x_i^d)$,
	\begin{align}
		\label{eq:completeness:22}
		\exists e \cdot (p, 2, d, d), \mu \reach w {} (q, 2, e, e), \nu
			\quad \textrm{ and } \quad
				\nu(x_i) = \nu(x_i^e).
	\end{align}
	Additionally:
	\begin{enumerate}[a)]

		\item If in \eqref{eq:completeness:assumption} no $\alphapsi$ is pushed on the stack,
		then for every $d$ \st $\mu(x_i) = \mu(x_i^d)$,
		\begin{align}
			\label{eq:completeness:00}
			\exists e \cdot (p, 0, d, d), \mu &\reach w {} (q, 0, e, e), \nu
				\quad \textrm{ and } \quad
					\nu(x_i) = \nu(x_i^e), \\
		\intertext{and for every $d, e$ \st $\mu(x_i) = \mu(x_i^e)$,}
			\label{eq:completeness:11}
			\exists f \cdot (p, 1, d, e), \mu &\reach w {} (q, 1, d, f), \nu,
				\  \nu(x_i) = \nu(x_i^f),
					\  
						\nu(x_i^d) = \mu(x_i^d) + \delta_{\mu\nu}.
		\end{align}

		\item If in \eqref{eq:completeness:assumption} some $\alphapsi$ is pushed on the stack,
		then for every $d$ \st $\mu(x_i) = \mu(x_i^d)$,
		\begin{align}
			\label{eq:completeness:02}
			\exists e \cdot (p, 0, d, d), \mu &\reach w {} (q, 2, e, e), \nu
				\quad \textrm{ and } \quad
					\nu(x_i) = \nu(x_i^e).
		\end{align}
	\end{enumerate}
\end{restatable}
\begin{proof}
	First of all, the conditions $\nu(x_i) = \nu(x_i^e)$ and $\nu(x_i) = \nu(x_i^f)$ appearing in the statement of the lemma
	follow directly from the reset policy of $x_i$ \eqref{eq:B:reset:02}, \eqref{eq:B:reset:1}.
	Moreover, in case a), the condition $\nu(x_i^d) = \mu(x_i^d) + \delta_{\mu\nu}$ from \eqref{eq:completeness:11}
	follows immediately from the fact that $x_i^d$ is never reset in a run of the form \eqref{eq:completeness:11}.

	We proceed by induction on the characterisation of $\reach {\tilde w} {}$ from Lemma~\ref{lem:characterisation}.
	We focus on the interesting cases, which are transitivity and push-pop.
	In the transitivity case, we have a run of the form
	\begin{align}
		\label{eq:completeness:transitivity}
		p, \restrict \mu \X \reach {\tilde u} {} r, \restrict \rho \X \reach {\tilde v} {} q, \restrict \nu \X,
	\end{align}
	with $\tilde w = \tilde u \cdot \tilde v$.
	The condition \eqref{eq:completeness:22} holds by a double application of the induction hypothesis to the above.
	Moreover, if in the run \eqref{eq:completeness:transitivity} no $\alphapsi$ is pushed on the stack,
	then the same holds true in the two component runs,
	and each of \cref{eq:completeness:00,eq:completeness:11} follows from a double application of the induction hypothesis.
	On the other hand, if in \eqref{eq:completeness:transitivity} some $\alphapsi$ is pushed on the stack,
	then we have two subcases.
	If $\alphapsi$ is pushed in the first component run $p, \restrict \mu \X \reach {\tilde u} {} r$,
	then by the inductive hypothesis \eqref{eq:completeness:02} and \eqref{eq:completeness:22} we have
	\begin{align*}
		(p, 0, d, d), \mu \reach u {} (r, 2, f, f), \rho \reach v {} (q, 2, e, e), \nu.
	\end{align*}
	If $\alphapsi$ is not pushed in the first component run, then it must be pushed in the second one;
	by the inductive hypothesis \eqref{eq:completeness:00} and \eqref{eq:completeness:02} we have
	\begin{align*}
		(p, 0, d, d), \mu \reach u {} (r, 0, f, f), \rho \reach v {} (q, 2, e, e), \nu.
	\end{align*}
	In either case, by \eqref{eq:reachrel:E} we have \eqref{eq:completeness:02}, as required.


	In the push-pop case we have a run of the form
	\begin{align}
		\label{eq:completeness:push-pop}
		p, \restrict \mu \X \goesto {\pushop{\alphapsi}{\psicopy}}
			r, \restrict \mu \X, (\alphapsi, \mu(x_i)) \reach {\tilde u} {}
				s, \restrict \nu \X, (\alphapsi, \mu(x_i) + \delta_{\mu\nu}) \goesto {\popop{\alphapsi}{\psi}}
					q, \restrict \nu \X.
	\end{align}
	In particular, 
	\begin{align}
		\label{eq:completeness:push-pop:psi}
		\mu(x_i) + \delta_{\mu\nu} - \nu(x_j) \succsim k.
	\end{align}
	Since $\alphapsi$ is pushed on the stack in this run, we just need to establish \eqref{eq:completeness:02}.
	There are two cases to consider, depending on whether some more $\alphapsi$ is pushed in
	\begin{align}
		\label{eq:completeness:push-pop:component}
		r, \restrict \mu \X \reach {\tilde u} {} s, \restrict \nu \X.
	\end{align}
	If no other $\alphapsi$ is pushed in \eqref{eq:completeness:push-pop:component},
	then by the inductive hypothesis \eqref{eq:completeness:11}
	\begin{align}
		\nonumber
		&(r, 1, d, d), \mu \reach u {} (s, 1, d, e), \nu, \textrm{ and } \\
		\label{eq:completeness:push-pop:cond2}
		&\nu(x_i^d) = \mu(x_i^d) + \delta_{\mu\nu}.
	\end{align}
	By assumption, $\mu(x_i) = \mu(x_i^d)$, and thus from \eqref{eq:completeness:push-pop:psi}
	and \eqref{eq:completeness:push-pop:cond2}, we have
	\begin{align*}
		\nu(x_i^d) - \nu(x_j) \succsim k.
	\end{align*}
	Thus, $\nu \models x_i^d - x_j \succsim k$, and
	\begin{align*}
		\forall b \in \set{0, 2} \cdot (p, b, d, d), \mu \goesto {\pushop{\hat\alphapsi}{\psicopy}}
			(r, 1, d, d), \mu, (\hat\alphapsi, \mu(x_i)) \\
				\reach u {} (s, 1, d, e), \nu, (\hat\alphapsi, \mu(x_i) + \delta_{\mu\nu})
					\goesto {\popop{\hat\alphapsi}{\true}; \testop{x_i^d - x_j \succsim k}}
						(q, 2, e, e), \nu,
	\end{align*}
	yielding for $b = 0$ the sought run \eqref{eq:completeness:02} by \eqref{eq:reachrel:F}.

	In the other case, some more $\alphapsi$ is pushed in \eqref{eq:completeness:push-pop:component}.
	By the inductive hypothesis \eqref{eq:completeness:02} we obtain the middle run in
	\begin{align*}
		&\forall b \in \set{0, 2} \cdot (p, b, d, d), \mu \goesto {\pushop{\alphapsi}{\psicopy}}
			(r, 0, d, d), \mu, (\alphapsi, \mu(x_i)) \\
		\reach u {}
			&(s, 2, d, e), \nu, (\alphapsi, \mu(x_i) + \delta_{\mu\nu}) \goesto {\popop{\alphapsi}{\true}}
				(q, 2, e, e), \nu,
	\end{align*}
	yielding for $b = 0$ the sought run \eqref{eq:completeness:02} by \eqref{eq:reachrel:F}.

	Finally, notice that in either of the two cases above,
	for $b = 2$ we obtain \eqref{eq:completeness:22}, as required.
\end{proof}

\subsection{Proofs for \Cref{sec:simplify:fractional}}

In this section we prove correctness of the construction from \Cref{sec:simplify:fractional}.
We start by recalling the two soundness and completeness statements.
\lemFractionalSoundness*
\lemFractionalCompleteness*

In order two prove the two lemmas above we need to find suitable stronger inductive statements.
Those are found below in \Cref{lem:fractional:soundness:inductive},
resp., \Cref{lem:fractional:completeness:inductive},
from which \Cref{lem:fractional:soundness,lem:fractional:completeness} follow immediately.

The following lemma states some structural properties of the automaton $\QQ$.
\begin{lemma}
	\label{lem:fractional:basics}
	If $\tuple {p, \lambda, \U}, \mu \reach w {} \tuple {q, \xi, \V}, \nu$ then
	\begin{enumerate}[a)]
		\item $\U \subseteq \V$, and
		\item for every clock $\x_i \in \X \setminus \V$, $\card w_{\tick i} = 0$.
	\end{enumerate}
\end{lemma}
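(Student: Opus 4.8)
The plan is to prove both parts simultaneously by a straightforward induction on the length of the run, reducing everything to a single-step invariant on the $\T$-component of locations. The key observation is that, in the construction of $\QQ$ from \Cref{sec:simplify:fractional}, the set $\T$ of ``frozen'' clocks (those no longer allowed to be reset) is \emph{monotonically non-decreasing} along every transition, and that tick symbols $\tick i$ are emitted only for clocks that currently belong to $\T$.

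First I would isolate the following single-step invariant, verified by inspecting each transition type of $\Delta'$ in turn. For every transition $\delta \in \Delta'$ from a location $\ell$ with $\T$-component $\T_\ell$ to a location $\ell'$ with $\T$-component $\T_{\ell'}$ (extending the notion of $\T$-component to the intermediate simulation locations $\tuple{p, \lambda, \T, q}$ in the obvious way), I claim that (i) $\T_\ell \subseteq \T_{\ell'}$, and (ii) $\delta$ reads the tick $\tick i$ only if $\x_i \in \T_\ell$. Indeed, input, test, push \eqref{eq:fractional:psipush}, pop, and the two silent transitions (a) and (c) of the time-elapse simulation all leave the $\T$-component unchanged and read no tick; the reset transition \eqref{eq:fractional:reset} only enlarges $\T$ to $\T \cup \Y'$ with $\Y' \subseteq \Y$ and reads no tick; and the time-elapse transition \eqref{eq:fractional:ops} leaves $\T$ unchanged and reads exactly the ticks $\tick{i_1}, \dots, \tick{i_m}$ associated with the clocks $\Y \cap \T = \set{\x_{i_1}, \dots, \x_{i_m}}$, all of which lie in $\T$. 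This settles (i) and (ii) in every case.

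Next I would unfold the run $\tuple{p, \lambda, \U}, \mu \reach w {} \tuple{q, \xi, \V}, \nu$ as a sequence of one-step transitions $c_0 \reach{\delta_1}{} \cdots \reach{\delta_n}{} c_n$ through intermediate configurations $c_j$ (whose stacks may be non-empty, which is irrelevant here), and let $\T_j$ be the $\T$-component of the location of $c_j$, so that $\T_0 = \U$ and $\T_n = \V$. Part a) is then immediate: by claim (i), $\U = \T_0 \subseteq \T_1 \subseteq \cdots \subseteq \T_n = \V$. For part b), fix $\x_i \in \X \setminus \V$. Monotonicity gives $\T_j \subseteq \T_n = \V$ for every $j$, hence $\x_i \notin \T_j$ for every $j$; by claim (ii) no transition $\delta_{j+1}$ can read $\tick i$, and therefore $\card w_{\tick i} = 0$.

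The argument is entirely routine, and the only real work is the exhaustive case analysis behind the single-step invariant; the main point to get right is simply that $\T$ is threaded unchanged through the three-part time-elapse simulation (transitions (a)--(c)) and through push and pop, so that the only place it can grow is a reset and the only place a tick is produced is the elapse transition, where the tick always concerns a clock already in $\T$.
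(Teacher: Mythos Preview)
Your proposal is correct and follows essentially the same approach as the paper's proof: both argue that the $\T$-component is monotone (changing only at reset transitions, where it grows) and that ticks $\tick i$ are emitted only for clocks currently in $\T$. The paper compresses the whole argument into two sentences, whereas you spell out the single-step invariant and the case analysis explicitly, but the underlying reasoning is identical.
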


\begin{proof}
	\begin{enumerate}[a)]
		\item The third component $\U$ in a control location $\tuple {p, \lambda, \U}$
		is changed only in transitions \eqref{eq:fractional:reset}, where it increases.

		\item The symbol $\tick i$ is read only from a control location of the form $\tuple {p, \lambda, \U, q}$ with $\x_i \in \U$;
		the claim follows from the previous point since $\U$ can only increase. \hfill \qedhere
	\end{enumerate}
\end{proof}

For a unary abstraction $\lambda \in \Lambda_M$ and a clock valuation $\mu \in \Rgeq^\X$, let
\begin{align*}
	R_\lambda(\mu) &= \setof { \nu \in \Rgeq^\X } {\fract \mu = \fract \nu \textrm{ and } \lambda(\nu) = \lambda}
\end{align*}
be the set of clock valuations $\nu$ having the same fractional values as $\mu$ and unary abstraction $\lambda$.
\begin{restatable}{lemma}{lemFractionalSoundnessInductive}
	\label{lem:fractional:soundness:inductive}
	For control locations $p, q \in \L$,
	clock valuations ${\mu, \nu \in \Rgeq^\X}$,
	unary abstractions $\lambda, \xi \in \Lambda_M$,
	and a sequence of operations $w \in (\Delta')^*$, if
	\begin{align*}
		\tuple {p, \lambda, \U}, \mu \reach w {} \tuple {q, \xi, \V}, \nu
	\end{align*}
	then for every $\mu' \in R_\lambda(\mu)$,
	there is $\nu' \in R_\xi(\nu)$ \st
	%
	\begin{align*}
		&p, \mu' \reach {} {} q, \nu'
			\quad \textrm{ and } \quad 
				\textrm{(C)}\ \forall \x_i \in \V \st \floor {\nu'(\x_i)} = \card {w}_{\checkmark_i} + \floor {\mu'(\x_i)} \cdot \condone {\x_i \in \U}.
	\end{align*}
\end{restatable}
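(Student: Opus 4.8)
The plan is to prove the stronger inductive statement of \Cref{lem:fractional:soundness:inductive} by structural induction on the derivation of $\tuple{p,\lambda,\U},\mu \reach w {} \tuple{q,\xi,\V},\nu$ in $\QQ$, following the characterisation of \Cref{lem:characterisation}. The one delicate structural point is the time-elapse gadget of items a)--c): since its intermediate control locations $\tuple{p,\lambda,\T,q}$ are never proper locations, I would first observe that any run between proper locations factors uniquely into single input/test/reset/push/pop transitions and \emph{complete} elapse gadgets, so that the induction can treat each full gadget as a single ``macro elapse'' base case going directly from $\tuple{p,\lambda,\T}$ to $\tuple{q,\xi,\T}$. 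Throughout, the $\P$-run for $\mu'$ is produced by mimicking the transitions and copying verbatim the time elapses of the $\QQ$-run, so that total elapsed time is preserved; in particular $\delta_{\mu'\nu'}=\delta_{\mu\nu}$ on each matching push/pop segment.

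The non-elapse base cases are routine. For input and reset the $\P$-transition is unconditional, and one takes $\nu'=\mu'$ (resp.\ $\nu'=\mu'[\Y\mapsto 0]$); condition (C) then follows from $\card w_{\tick i}=0$ together with the bookkeeping of $\U$ versus $\V$, using that $\Y\subseteq\X\setminus\T$ forces the disjoint decomposition $\V=\U\cup\Y'$ with $\Y'$ reset and hence integral value $0$. For a test $\testop{\restrict\varphi\lambda}$ I would invoke \Cref{fact:unary}: since $\mu'\in R_\lambda(\mu)$ satisfies $\fract{\mu'}=\fract\mu$ and $\lambda(\mu')=\lambda$, the passing of $\restrict\varphi\lambda$ in $\QQ$ yields $\mu'\models\varphi$ in $\P$. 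The elapse macro-step is the analytic heart of the argument: here $\U=\V=\T$ and $\nu=\mu+t$, so I set $\nu'=\mu'+t$; because $\mu'$ and $\mu$ share all fractional parts and the same abstraction $\lambda$, elapsing $t$ crosses exactly the same integer boundaries, whence $\lambda(\nu')=\xi$ and, for every $\x_i\in\T$, the number of emitted ticks $\card w_{\tick i}$ equals $\floor{\mu'(\x_i)+t}-\floor{\mu'(\x_i)}$, which is precisely (C).

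The transitivity case combines two applications of the induction hypothesis via \eqref{eq:reachrel:E}, and the additive splitting of (C) relies on \Cref{lem:fractional:basics}: the inclusion $\U\subseteq\mathtt W\subseteq\V$ (where $\mathtt W$ is the intermediate forbidden-reset set) settles clocks in $\mathtt W$, while for $\x_i\in\V\setminus\mathtt W$ part b) of that lemma forces $\card u_{\tick i}=0$, so the tick count and the $\condone{\x_i\in\U}$ factor agree on both sides. The main obstacle, and the case I expect to require the most care, is push-pop. Here push/pop emit no ticks and leave $\T$ unchanged, so (C) is inherited directly from the inner run; the real work is verifying that the $\P$ pop constraint holds at $\nu'$. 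Fractional pop constraints transfer immediately because $\fract{\nu'}=\fract\nu$, $\fract{\mu'}=\fract\mu$ and $\delta_{\mu'\nu'}=\delta_{\mu\nu}$, so the fractional stack-clock values at pop coincide with those of the $\QQ$-run. For a modular constraint $\floor{\y_i}-\floor{\x_j}\eqv M k$ I would expand $\floor{\rho(\y_i)}$ via \Cref{fact:fractional}: its correction terms depend only on fractional values and hence agree for $\mu'$ and $\mu$, while its integral summands $\floor{\mu'(\x_i)},\floor{\mu'(\x_0)},\floor{\nu'(\x_0)}$ and the subtracted $\floor{\nu'(\x_j)}$ carry the modulo classes $\lambda_\push(\x_i),\lambda_\push(\x_0),\lambda_\pop(\x_0),\lambda_\pop(\x_j)$ recorded in $\lambda(\mu')=\lambda_\push$ and $\lambda(\nu')=\lambda_\pop$. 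This is exactly the expression evaluated in \eqref{eq:floory}, so the constraint that $\QQ$ resolved to $\true$ at $\lambda_\push,\lambda_\pop$ does hold at $\nu'$ in $\P$, and the matching $\P$ push-pop step exists by \Cref{lem:characterisation}. Finally, \Cref{lem:fractional:soundness} follows by instantiating $\U=\emptyset$, $\V=\X$, $\lambda=\lambda(\mu)$, $\xi=\lambda(\nu)$ and $\mu'=\mu$, so that (C) reduces to $\floor{\nu'(\x_i)}=\card w_{\tick i}$ for all $\x_i$.
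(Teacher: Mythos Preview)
Your overall induction scheme and most of the case analysis match the paper. The gap is in the elapse macro-step. You set $\nu'=\mu'+t$ and claim $\lambda(\nu')=\xi$ because ``elapsing $t$ crosses exactly the same integer boundaries''; but that reasoning only yields $\lambda(\nu')=\xi$ under the implicit assumption that the $\QQ$-abstraction is faithful, i.e., that $\xi=\lambda(\nu)$ at the end of the gadget (and $\lambda=\lambda(\mu)$ at its start). Nothing in the hypothesis of the inductive lemma guarantees this, and the $\QQ$-construction does not enforce it: a single $\ops_{\Y,\T}$ cycle tests only $\varphi^{\max}_\Y$ before the $\elapse$ and $\fract{\x_i}=0$ for $\x_i\in\Y$ after it, so the elapse may last $1{-}f,\,2{-}f,\,3{-}f,\ldots$ (where $f$ is the current maximal fractional value). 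With one clock $\x$, $\fract{\mu(\x)}=0.5$, and $\delta=1.5$, the cycle ends at $\xi=\lambda[\x\mapsto\x{+}1]$ with one emitted tick, yet $\floor{\mu(\x)+1.5}-\floor{\mu(\x)}=2$. Hence your $\nu'=\mu'+t$ lands in $R_{\lambda(\nu)}(\nu)$ rather than $R_\xi(\nu)$, and your tick-count identity $\card{w}_{\tick i}=\floor{\mu'(\x_i)+t}-\floor{\mu'(\x_i)}$ fails for the same reason.

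The paper handles this by treating each $\ops_{\Y,\T}$ cycle separately and taking $\nu'=\mu'+\fract{\delta}$ (not $\mu'+\delta$) for that cycle: since $\fract{\delta}<1$, this forces exactly one integer boundary crossing for the $\Y$-clocks, matching both the abstraction increment $\lambda\mapsto\lambda[\Y\mapsto\Y{+}1]$ and the single emitted tick, regardless of how large $\delta$ actually was in the $\QQ$-run. Chaining the per-cycle steps via transitivity then covers the full gadget. A side effect is that one only obtains $\fract{\delta_{\mu'\nu'}}=\fract{\delta_{\mu\nu}}$ rather than the full equality you use; but this still suffices for the push-pop case, since the remaining pop constraints are fractional or modular, and that part of your argument carries over unchanged.
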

\begin{proof}
	We proceed by induction on the characterisation of the reachability relation from Lemma~\ref{lem:characterisation}.
	Let $\tuple{p, \lambda, \U}, \mu \reach w {} \tuple{q, \xi, \V}, \nu$ and consider an arbitrary $\mu' \in R_\lambda(\mu)$.

	If $w = \readop a$, then in fact $\tuple{p, \lambda, \U}, \mu \reach {\readop a} {} \tuple{q, \lambda, \U}, \nu$
	and by construction we have $p, \mu' \reach {\readop a} {} q, \nu'$ for $\nu' = \mu'$ (the other conditions are trivially satisfied).

	If $w = \testop {\restrict \varphi \lambda}$,
	then in fact $\tuple{p, \lambda, \U}, \mu \reach {\testop {\restrict \varphi \lambda}} {} \tuple{q, \lambda, \U}, \nu$.
	Since $\mu \models \restrict \varphi \lambda$ holds,
	$\lambda$ satisfies the integral and modular constraints of $\varphi$,
	and $\fract {\mu}$ its fractional constraints.
	By assumption $\lambda(\mu') = \lambda, \fract{\mu'} = \fract{\mu}$,
	and thus $\mu' \models \varphi$.
	Consequently, $p, \mu' \reach {\testop \varphi} {} q, \nu'$ for $\nu' = \mu'$.

	If $w = \resetop \Y$,
	then in fact $\tuple{p, \lambda, \U}, \mu \reach {\resetop \Y} {} \tuple{q, \xi, \V}, \nu$,
	where $\xi = \lambda[\Y \mapsto 0]$, $\nu = \mu[\Y \mapsto 0]$, $\Y \subseteq \X \setminus \U$, and $\U \subseteq \V \subseteq \U \cup \Y$.
	Let $\nu' = \mu'[\Y \mapsto 0]$; by construction, there is a run $p, \mu' \reach {\resetop \Y} {} q, \nu'$.
	Since $\fract {\mu'} = \fract {\mu}$, $\fract{\nu'} = \fract{\nu}$.
	Since $\lambda(\mu') = \lambda$, $\lambda(\nu') = \xi$.
	Thus, $\nu' \in R_\xi(\nu)$, as required.
	Notice that $\card w {\tick i} = 0$.
	If $\x_i \in \U$, then it is not reset, and thus $\nu'(\x_i) = \mu'_(\x_i)$;
	if $\x_i \in \V \setminus \U \subseteq \Y$, then $\x_i$ is reset, and thus $\nu'(\x_i) = 0$, as required.
	
	The simulation of time elapse is more involved.
	The sequence of operations is $w = \readop{\varepsilon}; \ops_{\Y_1, \U}; \cdots; \ops_{\Y_m, \U}; \readop{\varepsilon}$,
	yielding a run of the form
	\begin{align*}
		\tuple{p, \lambda_0, \U}, \mu_0 \reach {\readop{\varepsilon}} {}
			\tuple{p, \lambda_0, \U, q}, \mu_0 \reach {\ops_{\Y_1, \U}} {}
					\cdots \reach {\ops_{\Y_m, \U}} {}
						\tuple{p, \lambda_m, \U, q}, \mu_m \\ \qquad \reach {\readop{\varepsilon}} {}
							\tuple{q, \lambda_m, \U}, \mu_m.
	\end{align*}
	where $\lambda_0 = \lambda$, $\lambda_m = \xi$, and $\V = \U$.
	Since the first and last steps are trivial,
	it suffices to focus on a single time elapse cycle $w = \ops_{\Y, \U}$, i.e.,
	\begin{align*}
		\tuple{p, \lambda, \U, q}, \mu \reach {\ops_{\Y, \U}} {} \tuple{p, \xi, \U, q}, \nu,
	\end{align*}
	where $\xi = \lambda[\Y \mapsto \Y + 1]$ and $\nu = \mu + \delta$ for some $\delta \in \Rgeq$.
	Let $\nu' = \mu' + \fract{\delta}$ and we have a run $p, \mu' \reach {} {} q, \nu'$.
	Since $\fract{\mu'} = \fract{\mu}$, $\fract{\nu'} = \fract{\nu}$.
	By the definition of $\ops_{\Y, \U}$, clocks in $\Y$ have maximal fractional value in $\mu$ and they have zero fractional value in $\nu$;
	thus for $\x_i \in \Y$, $\floor{\nu'} = \floor{\mu'} + 1$
	and for $\x_i \in \X\setminus\Y$, $\floor{\nu'} = \floor{\mu'}$.
	Since $\lambda(\mu') = \lambda$, $\lambda(\nu') = \xi$.
	We obtain $\nu' \in R_\xi(\nu)$, as required.
	Since by construction $\card w {\tick i} = 1$ for $\x_i \in \Y \cap \U$ and $0$ otherwise,
	this also entails $\floor{\nu'} = \floor{\mu'} + \card w {\tick i}$, as required.
	
	In the transitivity case, $w = uv$ and we have two runs
	\begin{align*}
		\tuple{p, \lambda, \U}, \mu \reach u {} \tuple{r, \theta, \T}, \rho \reach v {} \tuple{q, \xi, \V}, \nu,
	\end{align*}
	where $\U \subseteq \T \subseteq \V$ by Lemma~\ref{lem:fractional:basics}.
	By inductive assumption applied to the first run we obtain $\rho' \in R_\theta(\rho)$
	and a run $p, \mu' \reach {} {} r, \rho'$,
	and by inductive assumption applied to the second run we obtain $\nu' \in R_\xi(\nu)$
	and a run $r, \rho' \reach {} {} q, \nu'$,
	yielding $p, \mu' \reach {} {} q, \nu'$ as required.
	Moreover, the two inductive assumptions also give
	\begin{align}
		\label{eq:fractional:trans1}
		&\forall \x_i \in \T \st \floor {\rho'(\x_i)} = \card {u}_{\checkmark_i} + \floor {\mu'(\x_i)} \cdot \condone {\x_i \in \U}, \textrm{ and } \\
		\label{eq:fractional:trans2}
		&\forall \x_i \in \V \st \floor {\nu'(\x_i)} = \card {v}_{\checkmark_i} + \floor {\rho'(\x_i)} \cdot \condone {\x_i \in \T},
	\end{align}
	and we need to establish $\forall \x_i \in \V \st \floor {\nu'(\x_i)} = \card {w}_{\checkmark_i} + \floor {\mu'(\x_i)} \cdot \condone {\x_i \in \U}$.
	If $\x_i \in \U$, then $\x_i \in \T$ and by \Cref{eq:fractional:trans1,eq:fractional:trans1} we have
	$\floor {\nu'(\x_i)} = \card {v}_{\checkmark_i} + \card {u}_{\checkmark_i} + \floor {\mu'(\x_i)} = \card w_{\checkmark_i} + \floor {\mu'(\x_i)}$.
	If $\x_i \in \T \setminus \U$, by \Cref{eq:fractional:trans1,eq:fractional:trans1} we have
	$\floor {\nu'(\x_i)} = \card {v}_{\checkmark_i} + \card {u}_{\checkmark_i} = \card w_{\checkmark_i}$.
	Finally, if $\x_i \in \V \setminus \T$, then by \Cref{eq:fractional:trans2} we have
	$\floor {\nu'(\x_i)} = \card {v}_{\checkmark_i} = \card w_{\checkmark_i}$,
	where the last equality follows from $\card {u}_{\checkmark_i} = 0$ by Lemma~\ref{lem:fractional:basics}.
	This concludes the transitivity case.

	The last case is push-pop.
	We have $w = \op_\push \cdot u \cdot \op_\pop$
	where $\op_\push = \pushop{\tuple {\gamma, \lambda_\push}} {\psi_\push}$ ($\psi_\push$ was defined in \eqref{eq:fractional:psipush}),
	$\op_\pop = \popop{\tuple {\gamma, \lambda_\push}} {\restrict \psi {\lambda_\push, \lambda_\pop}}$, giving rise to a run of the form
	\begin{align*}
		&\tuple{p, \lambda_\push, \U}, \mu \reach {\op_\push} {} 
			\tuple{r, \lambda_\push, \U}, \mu, (\tuple {\gamma, \lambda_\push}, \rho) \\ &\qquad\reach u {}
				\tuple{s, \lambda_\pop, \V}, \nu, (\tuple {\gamma, \lambda_\push}, \rho + \delta_{\mu\nu}) \reach {\op_\pop} {}
					\tuple{q, \lambda_\pop, \V}, \nu,
	\end{align*}
	where we assume $\forall \x_i \in X \st \rho(\y_i) = \mu(\x_i)$.
	By the inductive assumption applied to the middle run labelled with $u$
	we obtain $\nu' \in R_{\lambda_\pop}(\nu)$ satisfying condition (C) and a run
	$r, \mu' \reach {} {} s, \nu'$.
	It remains to justify the existence of the run
	\begin{align*}
		p, \mu' \reach {\op_\push'} {} 
			r, \mu', (\gamma, \rho') \reach {} {}
				s, \nu', (\gamma, \rho' + \delta_{\mu'\nu'}) \reach {\op_\pop'} {}
					q, \nu',
	\end{align*}
	where $\op_\push' = \pushop \gamma {\psicopy}$,
	$\op_\pop' = \popop \gamma \psi$,
	and $\forall \x_i \in \X \st \rho'(\y_i) = \mu'(\x_i)$.
	Since $\fract{\mu'} = \fract \mu$, $\fract{\nu'} = \fract \nu$,
	we also have $\fract {\delta_{\mu'\nu'}} = \fract {\delta_{\mu\nu}}$.
	Since $(\nu, \rho + \delta_{\mu\nu}) \models \restrict \psi {\lambda_\push, \lambda_\pop}$
	and the latter formula has the same fractional constraints as $\psi$,
	$(\nu', \rho + \delta_{\mu'\nu'})$ satisfies the fractional constraints of $\psi$,
	and so does $(\nu', \rho' + \delta_{\mu'\nu'})$,
	because $\fract {\rho} = \fract {\rho'}$.
	Since $\psi$ does not have integral constraints (it is pop-integer-free),
	in order to have $(\nu', \rho' + \delta_{\mu'\nu'}) \models \psi$ as required,
	we need to show that the latter pair of valuations also satisfies the modular constraints in $\psi$,
	i.e., those of the form
	$$\floor {y_i} - \floor {x_j} \eqv M k.$$
	Since the modular constraint above was resolved to be $\true$ in $\QQ$,
	by definition we have
	$$\underbrace {\lambda_\push (\x_i) + (\lambda_\pop (\x_0) - \lambda_\push (\x_0) + \condone {\fract {\y_i} < \fract {\y_1}} - \condone {\fract {\x_0} < \fract {\y_1}})}_A - \underbrace{\lambda_\pop(\x_j)}_B \eqv M k.$$
	Since $\lambda(\nu') = \lambda_\pop$,
	the expression $B$ above is in the same residue class modulo $M$ as $\floor{\nu'(x_j)}$.
	%
	%
	Since $\lambda(\mu') = \lambda_\push$,
	by \Cref{fact:fractional} the expression $A$ above is in the same residue class modulo $M$ as $\floor{\rho'(x_i) + \delta_{\mu'\nu'}}$,
	%
	Consequently, $(\nu', \rho' + \delta_{\mu'\nu'}) \models \floor {y_i} - \floor {x_j} \eqv M k$, as required.
	This concludes the push-pop case, and the soundness proof.
\end{proof}

For a sequence of actions $\pi \in \Delta^*$,
let $\Resets \pi \subseteq \X$ be the set of clocks which are reset at least once in $\pi$.
\begin{restatable}{lemma}{lemFractionalCompletenessInductive}
	\label{lem:fractional:completeness:inductive}
	For every run $p, \mu \reach \pi {} q, \nu$,
	sets of clocks $\U, \V \subseteq \X$ \st $\Resets \pi \subseteq \X \setminus \U$ and $\U \subseteq \V \subseteq \U \cup \Resets \pi$,
	there exists a run
	\begin{align*}
		\tuple {p, \lambda(\mu), \U}, \mu \reach w {} \tuple {q, \lambda(\nu), \V}, \nu
			\textrm{ \st }
				\forall \x_i \in \V \st \floor{\nu(\x_i)} = \card w_{\checkmark_i} + \floor {\mu(\x_i)} \cdot \condone {\x_i \in \U}.
	\end{align*}
\end{restatable}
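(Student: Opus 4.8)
The plan is to prove the statement by structural induction on the derivation of $p, \mu \reach \pi {} q, \nu$ according to the characterisation of \Cref{lem:characterisation}, closely mirroring (in reverse) the soundness argument of \Cref{lem:fractional:soundness:inductive}. Throughout, the third control component records the clocks already forbidden to reset, and the hypotheses $\Resets \pi \subseteq \X \setminus \U$ and $\U \subseteq \V \subseteq \U \cup \Resets \pi$ express, respectively, that clocks frozen at the start are never reset along $\pi$, and that $\V$ is obtained from $\U$ by additionally freezing some clocks that do get reset in $\pi$ (after their last such reset). The base cases $\op \in \set{\readop a, \testop \varphi}$ are immediate: here $\Resets \pi = \emptyset$ forces $\V = \U$, we take unary abstraction $\lambda = \lambda(\mu) = \lambda(\nu)$ and set the middle set to $\U$; for the test, $\mu \models \varphi$ yields $\fract \mu \models \restrict \varphi {\lambda(\mu)}$ by \Cref{fact:unary}, so the corresponding fractional test transition of $\QQ$ is enabled. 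Since no tick is read, condition (C) reduces to $\floor{\nu(\x_i)} = \floor{\mu(\x_i)}$ for $\x_i \in \U$, which holds as $\mu = \nu$. For a reset $\resetop \Y$ we have $\Resets \pi = \Y$, the hypothesis $\Y \subseteq \X \setminus \U$ is exactly the side-condition of the reset transition \eqref{eq:fractional:reset}, and choosing the ``last reset'' set $\Y' := \V \setminus \U \subseteq \Y$ lands us in $\tuple{q, \lambda(\mu)[\Y \mapsto 0], \V}$; condition (C) follows by splitting $\x_i \in \V$ into $\x_i \in \U$ (not reset, value unchanged) and $\x_i \in \V \setminus \U \subseteq \Y$ (reset to $0$).

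The time elapse case $\elapse$ (where again $\Resets \pi = \emptyset$, hence $\V = \U$) is the most delicate of the base cases and requires building $w$ explicitly. I would decompose the elapse from $\mu$ to $\nu = \mu + t$ into the finite sequence of instants at which one or more clocks simultaneously cross an integer boundary; each such instant determines a set $\Y$ of overflowing clocks having maximal fractional value, simulated by one cycle $\ops_{\Y, \U}$ of \eqref{eq:fractional:ops}, bracketed by the silent $\readop \varepsilon$ transitions. By construction $\ops_{\Y, \U}$ reads exactly one $\tick i$ for each $\x_i \in \Y \cap \U$, so summing over all instants gives $\card w_{\checkmark_i} = \floor{\nu(\x_i)} - \floor{\mu(\x_i)}$ for every frozen clock $\x_i \in \U = \V$, which is precisely condition (C) since such clocks are never reset.

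For transitivity, with $\pi = \pi_1 \pi_2$ and $p, \mu \reach {\pi_1} {} r, \rho \reach {\pi_2} {} q, \nu$, the key choice is the intermediate frozen set. I would take
\begin{align*}
	\T \;:=\; \U \cup \bigl(\V \cap \Resets{\pi_1} \setminus \Resets{\pi_2}\bigr),
\end{align*}
i.e.\ freeze during $\pi_1$ exactly those clocks of $\V$ whose last reset already occurs in $\pi_1$, deferring the rest to $\pi_2$. A direct check shows that $(\U, \T)$ satisfies the hypotheses for $\pi_1$ and $(\T, \V)$ for $\pi_2$ (in particular $\T \cap \Resets{\pi_2} = \emptyset$ and $\V \subseteq \T \cup \Resets{\pi_2}$). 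Applying the induction hypothesis to both halves and composing by \eqref{eq:reachrel:E}, condition (C) for $\pi$ follows by the same three-way case analysis on $\x_i \in \U$, $\x_i \in \T \setminus \U$, $\x_i \in \V \setminus \T$ as in the soundness proof, using $\card{\pi_1}_{\checkmark_i} = 0$ for $\x_i \notin \T$ (\Cref{lem:fractional:basics}).

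The push-pop case is where the main work lies, and I expect it to be the principal obstacle. Here $\Resets \pi = \Resets u$, and push and pop touch no control clock, so I would apply the induction hypothesis to the middle run $r, \mu \reach u {} s, \nu$ with the same $\U, \V$ and then reinstate the outer push/pop transitions of $\QQ$. The push records the unary class $\lambda(\mu)$ on the stack and imposes only the fractional copy constraint $\psi_\push$ of \eqref{eq:fractional:psipush}, which holds because the automaton is push-copy. The delicate point is that the pop transition of $\QQ$ checks the \emph{resolved} constraint $\restrict \psi {\lambda_\push, \lambda_\pop}$ with $\lambda_\push = \lambda(\mu)$ and $\lambda_\pop = \lambda(\nu)$: its fractional part coincides with that of $\psi$ and is satisfied since $(\nu, \rho + \delta_{\mu\nu}) \models \psi$ holds in $\P$, while each modular constraint $\floor {\y_i} - \floor {\x_j} \eqv M k$ is resolved using \eqref{eq:floory}. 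To certify that it resolves to $\true$ I would invoke \Cref{fact:fractional}, which shows that the integral value computed from the fractional data lies in the same residue class modulo $M$ as the genuine value of $\rho(\y_i) + \delta_{\mu\nu}$; combined with $\lambda(\mu) = \lambda_\push$ and $\lambda(\nu) = \lambda_\pop$, this makes the resolved modular constraint agree with the true one, exactly as in the converse computation of the soundness lemma. This yields the outer run of $\QQ$ and completes the induction.
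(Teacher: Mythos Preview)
Your proposal is correct and follows the same inductive scheme as the paper's proof (structural induction on the derivation of $\reach{}{}$ via \Cref{lem:characterisation}, with the elapse case simulated by iterated overflow cycles $\ops_{\Y,\U}$ and the push--pop case discharged via \Cref{fact:fractional}).

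One small point of comparison: in the transitivity case you choose the intermediate frozen set as $\T = \U \cup (\V \cap \Resets{\pi_1} \setminus \Resets{\pi_2})$, whereas the paper takes $\T = \U \cup (\Resets{\pi_1} \setminus \Resets{\pi_2})$ without the intersection with $\V$. Your choice is in fact the more careful one: the second inductive call requires $\T \subseteq \V$, and without intersecting with $\V$ this can fail (e.g.\ take $\V = \U$ while some clock lies in $\Resets{\pi_1} \setminus \Resets{\pi_2}$). So your formulation closes a minor oversight in the paper's version of the argument; otherwise the two proofs are parallel. A tiny notational slip: when you write $\card{\pi_1}_{\checkmark_i} = 0$ you mean the simulating $\QQ$-run for $\pi_1$, since $\pi_1 \in \Delta^*$ carries no tick symbols.
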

\begin{proof}
	We proceed by induction on the characterisation of the reachability relation from Lemma~\ref{lem:characterisation}.
	Let $p, \mu \reach \pi {} q, \nu$ be a run in $\P$
	and let $\Resets \pi \subseteq \X \setminus \U$ and $\U \subseteq \V \subseteq \U \cup \Resets \pi$.

	If $\pi = \readop a$ is a read transition, then $\mu = \nu$, $\Resets \pi = \emptyset$, and thus $\V = \U$.
	By definition we have a run $\tuple {p, \lambda \mu, \U}, \mu \reach w {} \tuple{q, \lambda(\mu), \U}, \nu$ with $w = \readop a$,
	and $\forall \x_i \in \V \st \floor {\nu(\x_i)} = \floor {\mu(\x_i)}$ because $\card w_{\tick i} = 0$.

	If $\pi = \testop \varphi$ is a test transition, then $\mu = \nu$, $\Resets \pi = \emptyset$, $\V = \U$, and ${\mu \models \varphi}$.
	By \Cref{fact:unary}, $\fract{\mu} \models \restrict \varphi {\lambda(\mu)}$,
	and thus we have a run $\tuple {p, \lambda (\mu), \U}, \mu \reach w {} \tuple{q, \lambda(\mu), \U}, \nu$ with $w = \testop {\restrict \varphi {\lambda(\mu)}}$,
	and $\forall \x_i \in \V \st \floor {\nu(\x_i)} = \floor {\mu(\x_i)}$. 

	If $\pi = \resetop \Y$ is a reset transition, then $\nu = \mu[\Y \mapsto 0]$ and $\Resets \pi = \Y$.
	Let $\V$ be \st $\U \subseteq \V \subseteq \U \cup \Y$, and take $\Y' = \V \setminus \U \subseteq \Y$; thus $\V = \U \cup \Y'$.
	Since $\lambda(\nu) = \lambda(\mu[\Y \mapsto 0]) = \lambda(\mu)[\Y \mapsto 0]$
	and $\Y \subseteq \X \setminus \U$ by assumption,
	we have a run $\tuple {p, \lambda (\mu), \U}, \mu \reach w {} \tuple{q, \lambda(\mu)[\Y \mapsto 0], \U}, \nu$ with $w = \resetop \Y$.
	Let $\x_i \in \V$. Notice $\card w_{\tick i} = 0$.
	If $\x_i \in \U$, then $\x_i \in \X \setminus \Y$ is not reset and $\nu(\x_i) = \mu(\x_i)$.
	If $\x_i \in \V \setminus \U$, then $\x_i \in \Y' \subseteq \Y$ is reset and $\nu(\x_i) = 0$.
	Thus, $\forall \x_i \in \V \st \floor {\nu(\x_i)} = \floor {\mu(\x_i)} \cdot \condone {\x_i \in \U}$, as required.

	In the time elapse case $\pi = \elapse$, $\nu = \mu + \delta$ for some $\delta \in \Rgeq$, $\Resets \pi = \emptyset$, and $\U = \V$.
	The simulation of a single time elapse of $\P$ requires in general many time elapses of $\QQ$.
	The idea is that each time elapse of $\QQ$ advances the clocks with maximal fractional value to the next integral value.
	Formally, we decompose $\delta$ as
	\begin{align*}
		\delta = \delta_1 + \cdots + \delta_m, \quad \delta_1, \dots, \delta_m \in \Rgeq \cap (0, 1),
	\end{align*}
	where, for every $0 \leq j \leq m$,
	$\mu_j = \mu + \delta_1 + \cdots + \delta_j$
	and, for $0 \leq j < m$, $\delta_{j+1} = 1 - \max_{\x_i \in \X}(\fract{\mu_j})$;
	thus, $\mu_0 = \mu$ and $\mu_m = \mu + \delta = \nu$.
	Consequently, the set of clocks that become integral when going from phase $j$ to phase $j+1$ is
	$\Y_j = \setof{\x_i \in \X} {\fract{\mu_j(\x_i)} = \max_{\x_i \in \X}(\fract{\mu_j})}$.
	Thus $\forall \x_i \in \Y_j \st \fract{\mu_{j+1}(\x_i)} = 0$ and $\floor{\mu_{j+1}(\x_i)} = \floor{\mu_j(\x_i)} + 1$.
	Let $\ops_{j+1} = \ops_{\Y_j, \U}$, where the latter is defined in \eqref{eq:fractional:ops},
	and $\lambda_{j+1} = \lambda_j[\Y_j \mapsto \Y_j + 1]$;
	thus, $\lambda_j = \lambda(\mu_j)$.
	With these definitions in place, take $w = \readop\varepsilon \cdot \ops_1 \cdots \ops_m \cdot \readop\varepsilon$, yielding a run
	\begin{align*}
		&\tuple {p, \lambda(\mu), \U}, \mu \reach {\readop\varepsilon} {} \\
		&	\quad \tuple {p, \lambda_0, \U, q}, \mu_0 \reach {\ops_1} {}
				\tuple {p, \lambda_1, \U, q}, \mu_1 \reach {\ops_2} {} \cdots \reach {\ops_m} {}
					\tuple {p, \lambda_m, \U, q}, \mu_m \\ & \quad \quad \reach {\readop\varepsilon} {}
						\tuple {q, \lambda(\nu), \U}, \nu.
	\end{align*}
	For $\x_i \in \V$,
	since one symbol $\tick i$ is read by $\ops_j$ whenever $\floor {x_i}$ increases by one, 
	we have $\floor{\mu_m(\x_i)} = \card w_{\tick i} + \floor{\mu_0(\x_i)}$, as required.
	This concludes the time elapse case.

	In the transitivity case $\pi = \sigma \cdot \tau$ and we have two runs
	\begin{align*}
		p, \mu \reach \sigma {} r, \rho \reach \tau {} q, \nu.
	\end{align*}
	Let $\T = \U \cup (\Resets \sigma \setminus \Resets \tau)$ 
	contain all clocks which are reset in the first part of the run $\sigma$,
	but not in the second part $\tau$;
	thus, $\U \subseteq \T \subseteq \U \cup \Resets \sigma$
	and $\U \subseteq \X \setminus \Resets \tau$.
	By the inductive assumption applied twice we obtain two runs
	\begin{align*}
		\tuple {p, \lambda(\mu), \U}, \mu \reach u {} \tuple {r, \lambda(\rho), \T}, \rho \reach v {} \tuple {q, \lambda(\nu), \V}, \nu
	\end{align*}
	\st $\forall \x_i \in \T \st \floor{\rho(\x_i)} = \card u_{\checkmark_i} + \floor {\mu(\x_i)} \cdot \condone {\x_i \in \U}$
	and $\forall \x_i \in \V \st \floor{\nu(\x_i)} = \card v_{\checkmark_i} + \floor {\rho(\x_i)} \cdot \condone {\x_i \in \T}$.
	Therefore, by taking $w = uv$ we obtain $\tuple {p, \lambda(\mu), \U}, \mu \reach w {} \tuple {q, \lambda(\nu), \V}, \nu$.
	Let $\x_i \in \V$, and we need to show $\floor {\nu(\x_i)} = \card w_{\tick i} + \floor {\mu(\x_i)} \cdot \condone {\x_i \in \U}$.
	There are three cases to consider.
	Recall that $\U \subseteq \T \subseteq \V$.
	If $\x_i \in \U$, then $\floor{\nu(\x_i)} = \card v_{\checkmark_i} + \card u_{\checkmark_i} + \floor {\mu(\x_i)} = \card w_{\checkmark_i} + \floor {\mu(\x_i)}$;
	if $\x_i \in \T \setminus \U$, then $\floor{\nu(\x_i)} = \card v_{\checkmark_i} + \card u_{\checkmark_i} = \card w_{\checkmark_i}$;
	if $\x_i \in \V \setminus \T$, then $\floor{\nu(\x_i)} = \card v_{\checkmark_i} = \card w_{\checkmark_i}$,
	where the last equality follows from the fact that since $\x_i \in \X \setminus \T$,
	then $\card u_{\checkmark_i} = 0$ by \Cref{lem:fractional:basics}.
	This concludes the transitivity case.

	In the push-pop case, $\pi = \pushop \gamma \psicopy \cdot \sigma \cdot \popop \gamma \psi$
	and we have a run
	\begin{align*}
		p, \mu \reach {\pushop \gamma \psicopy} {}
			r, \mu, (\gamma, \rho) \reach \sigma {}
				s, \nu, (\gamma, \rho + \delta_{\mu\nu}) \reach {\popop \gamma \psi} {}
					q, \nu,
	\end{align*}
	where for all $\x_i \in \X$, $\rho(\y_i) = \mu(\x_i)$.
	By induction assumption, we have a run
	$\tuple{r, \lambda(\mu), \U}, \mu \reach u {} \tuple{s, \lambda(\nu), \V}$.
	Let the corresponding push/pop operations in $\QQ$ be
	$\op_\push = \pushop {\tuple {\gamma,\lambda(\mu)}} {\psi_\push}$
	and $\op_\pop = \popop {\tuple {\gamma,\lambda(\mu)}} {\restrict \psi {\lambda(\mu), \lambda(\nu)}}$,
	where $\psi_\push$ is defined in \eqref{eq:fractional:psipush}.
	We show that the following run exists in $\QQ$, as required:
	\begin{align*}
		&\tuple{p, \lambda_\push, \U}, \mu \reach {\op_\push} {}
			\tuple{r, \lambda_\push, \U}, \mu, (\gamma, \rho) \\ &\qquad \reach u {}
				\tuple{s, \lambda_\pop, \V}, \nu, (\gamma, \rho + \delta_{\mu\nu}) \reach {\op_\pop} {}
					\tuple{q, \lambda_\pop, \V}, \nu,
	\end{align*}
	where $\lambda_\push = \lambda(\mu)$, $\lambda_\pop = \lambda_\pop$, and we assume $\fract{\rho(\y_1)} = 0$.
	This amounts to showing that the pop constraint $\restrict \psi {\lambda_\push, \lambda_\pop}$ is satisfied in $\QQ$,
	i.e., $(\nu, \rho + \delta_{\mu\nu}) \models \restrict \psi {\lambda_\push, \lambda_\pop}$.
	Since $(\nu, \rho + \delta_{\mu\nu}) \models \psi$,
	the fractional constraints in $\restrict \psi {\lambda_\push, \lambda_\pop}$ are satisfied
	because they are the same as $\psi$'s fractional constraints.
	It remains to show that $\psi$'s modular constraints of the form $\floor {\y_i} - \floor {\x_j} \eqv M k$
	evaluate to $\true$ in $\psi_{\lambda_\push, \lambda_\pop}$
	when $\floor {\x_j}$ is replaced by $\lambda_\pop(\x_j)$ and $\floor {\y_i}$ by
	$$\lambda_\push (\x_i) + (\lambda_\pop (\x_0) - \lambda_\push (\x_0) + \condone {\fract {\y_i} < \fract {\y_1}} - \condone {\fract {\x_0} < \fract {\y_1}}).$$
	By assumption, $\floor {\rho(\y_i)} - \floor {\nu(\x_j)} \eqv M k$,
	and thus by \Cref{fact:fractional},
	$$\left(\floor {\mu(\x_i)} + \floor {\nu(\x_0)} -  \floor {\mu(\x_0)} +	\condone {\fract{\rho(\y_i)} < \fract {\rho(\y_1)}} - \condone {\fract {\nu(\x_0)} < \fract {\rho(\y_1)}}\right) -  \floor {\nu(\x_j)} \eqv M k.$$
	The result follows by replacing the integral values above with their residual modulo $M$ as given by $\lambda_\push$ and $\lambda_\pop$.
	This concludes the push-pop case, and the proof of the lemma.
\end{proof}

\subsection{Proofs for Sec.~\ref{sec:fractional:TPDA}}

\lemCFGsoundness
\begin{proof}
  We proceed by induction on the size of derivation trees showing $w \in L(p, \varphi, q)$.
  For the base case, $w = \delta \in \Delta$ is derived by rule \eqref{eq:CFG:base}.
  Since $(\mu, \nu) \models \varphi$, by \Cref{fact:onestep:CDR},
  $\mu \freach \delta {pq} \nu$ holds, as required.
  
  For the inductive step, there are two cases to consider. 
  In the first case,
  $w \in L(p, \varphi, q)$ is derived using 
  a production of the form $\tuple{p, \varphi, q} \from \tuple {p, \psi, r} \cdot \tuple {r, \xi, r}$ \eqref{eq:CFG:transitivity},
  where $\varphi \equiv \psi \circ \xi$,
  and thus there are words $u, v \in \Sigma^*$
  \st $w = uv$, $u \in L(p, \psi, r)$, and $v \in L(r, \xi, q)$.
  Since $(\mu, \nu) \models \varphi$, by definition \eqref{eq:CDR:composition}
  there exists a clock valuation $\rho \in \Rgeq^\X$ \st $(\mu, \rho) \models \psi$ and $(\rho, \nu) \models \xi$.
  By using the induction assumption twice,
  $\mu \freach u {pr} \rho$ and $\rho \freach v {rq} \nu$,
  and thus $\mu \freach w {pq} \nu$ by Fact~\ref{fact:freach:transitive}.
  
  In the second case, 
  $w = \delta_\push \cdot u \cdot \delta_\pop \in L(p, \varphi, q)$ is derived using 
  a production of the form $\tuple{p, \varphi, q} \from \tuple {r, \psi, s}$ \eqref{eq:CFG:push-pop}
  for transitions $\delta_\push = \trule p {\pushop \alpha {\psi_\push}} r$
  and $\delta_\pop = \trule s {\popop \alpha {\psi_\pop}} q$,
  where $\varphi$ is defined in \eqref{eq:CFG:push-pop}.
  Since $(\mu, \nu) \models \varphi$, by the definition of $\varphi$
  there exist stack clock valuations $\mu_\ZZ, \nu_\ZZ \in \Rgeq^\ZZ$ \st
  \begin{enumerate}[a)]
    \item $(\mu, \nu) \models \psi$,
    \item $(\mu, \mu_\ZZ) \models \psi_\push$,
    \item $(\nu, \nu_\ZZ) \models \psi_\pop$, and
    \item for every $i$, $\fract{\nu_\ZZ(\z_i)} = \fract{\mu_\ZZ(\z_i) + \nu(\x_0) - \mu(\x_0)}$.
  \end{enumerate}
  Point d) implies that we can think of $\nu_\ZZ$ to be of the form
  $\nu_\ZZ = \mu_\ZZ + \delta_{\mu\nu}$, where $\delta_{\mu\nu} = \nu(\x_0) - \mu(\x_0)$
  is the time elapsed between push and pop.
  The inductive assumption applied to point a) and $u \in L(r, \psi, s)$ yields a run $\mu \freach w {rs} \nu$.
  By the definition of fractional reachability,
  there are clock valuations $\tilde\mu, \tilde\nu \in \Rgeq^\X$ \st $\tilde\mu \reach w {rs} \tilde\nu$.
  From points b) and c) we obtain $(\tilde\mu, \mu_\ZZ) \models \psi_\push$ and, resp., $(\tilde\nu, \nu_\ZZ) \models \psi_\pop$,
  since the push and pop constraints are fractional.
  By equation \eqref{eq:reachrel:F}, $\tilde\mu \reach w {pq} \tilde\nu$,
  yielding $\mu \freach w {pq} \nu$ as required.
\end{proof}

\lemCFGcompleteness
\begin{proof}
  We proceed by induction on derivations establishing $\mu \reach w {pq} \nu$.
  In the base case, $w = \delta = \tuple {p, \op, q} \in \Delta$ is a single transition
  and $\mu \reach w {pq} \nu$ is obtained by one of the rules \Cref{eq:reachrel:A,eq:reachrel:B,eq:reachrel:C,eq:reachrel:D}.
  By the definition of fractional reachability we have $\mu \freach \delta {pq} \nu$.
  By \Cref{fact:onestep:CDR}, $(\mu, \nu) \models \varphi_\op$,
  and $w \in L(p, \varphi, q)$ holds by rule \eqref{eq:CFG:base}.
  
  For the inductive step, there are two cases to consider.
  In the first case, $\mu \reach w {pq} \nu$ is obtained by applying a transitivity step according to \eqref{eq:reachrel:E}.
  There exist words $u, v \in \Delta^*$,
  an intermediate clock valuation $\rho \in \Rgeq^\X$,
  and a control location $r \in \L$ \st
  $w = uv$ and $\mu \reach u {pr} \rho \reach v {rq} \nu$.
  By the inductive hypothesis there are \CDR's $\psi$ and $\xi$
  \st $u \in L(p, \psi, r)$, $v \in L(r, \xi, q)$, $(\mu, \rho) \models \psi$, and $(\rho, \nu) \models \xi$.
  Take $\varphi \equiv \psi \circ \xi$, and thus $(\mu, \nu) \models \psi \circ \xi$ as witnessed by $\rho$.
  By production \eqref{eq:CFG:transitivity},
  $w \in L(p, \varphi, q)$, as required.
  
  In the second case, $\mu \reach w {pq} \nu$ is obtained by applying a push-pop step according to \eqref{eq:reachrel:F}:
  There exist control locations $r, s \in \L$,
  push $\delta_\push = \trule p {\pushop \alpha {\psi_\push}} r$
  and pop $\delta_\pop = \trule s {\popop \alpha {\psi_\pop}} q \in \Delta$ transitions,
  and an initial stack clock valuation $\mu_\ZZ \in \Rgeq^\ZZ$
  \st $w = \delta_\push \cdot u \cdot \delta_\pop$ for some $u \in \Delta^*$,
  $(\mu, \mu_\ZZ) \models \psi_\push$,
  $(\nu, \mu_\ZZ + \nu(\x_0) - \mu(\x_0)) \models \psi_\pop$,
  and $\mu \reach u {rs} \nu$.
  By the inductive assumption there exists a \CDR $\psi$ \st $w \in L(r, \psi, s)$ and $(\mu, \nu) \models \psi$.
  Thus $(\mu, \nu) \models \varphi$ holds for $\varphi(\bar x, \bar x')$ from \eqref{eq:CFG:push-pop}
  (and thus for the actual unique \CDR equivalent to it formally used in the grammar),
  as witnessed by $\mu_\ZZ$ for variables $\bar z$ and $\mu_\ZZ + \nu(\x_0) - \mu(\x_0)$ for $\bar z'$.
  %
  %
  By applying production \eqref{eq:CFG:push-pop} to nonterminal $\tuple{r, \psi, s}$,
  we obtain $w \in L(p, \varphi, q)$, as required.
\end{proof}

\end{document}